\theoremstyle{plain}
\newtheorem{Th}{Theorem}[section]
\newtheorem{Lem}[Th]{Lemma}
\newtheorem{Prop}[Th]{Proposition}
\newtheorem{Conj}[Th]{Conjecture}
\newtheorem{Def}{Definition}[section]
\newtheorem{Ex}{Example}[section]
\newtheorem{Rem}{Remark}[section]
\theoremstyle{plain}
\newtheorem*{Main}{Main Theorem}
\numberwithin{equation}{section}
\DeclareMathOperator{\ima}{Im}
\newcommand{\derx}{\mathrm{Der}^\qp(\hm A)}
\newcommand{\der}{\mathrm{Der}(\hm A)}
\newcommand{\diff}[2]{\frac{\partial #1}{\partial #2}}
\newcommand{\qa}{\alpha}
\newcommand{\qb}{\beta}
\newcommand{\qd}{\delta}
\newcommand{\qg}{\gamma}
\newcommand{\qs}{\sigma}
\newcommand{\qt}{\tau}
\newcommand{\qth}{\theta}
\newcommand{\qe}{\varepsilon}
\newcommand{\qz}{\zeta}
\newcommand{\qp}{\partial}
\newcommand{\Qg}{\Gamma}
\newcommand{\ql}{\lambda}
\newcommand{\Qd}{\Delta}
\newcommand{\Qo}{\Omega}
\newcommand{\vard}[2]{\frac{\delta #1}{\delta #2}}
\newcommand{\hm}[1]{\hat{\mathcal #1}}
\newcommand{\kk}[1]{\left(#1\right)}
\newcommand{\fk}[2]{\left[#1, #2\right]}
\begin{document}

\title{Variational Bihamiltonian Cohomologies and Integrable Hierarchies II: Virasoro symmetries}

\author{Si-Qi Liu, Zhe Wang, Youjin Zhang}
\keywords{Virasoro symmetries, Super tau-covers, Variational Bihamiltonian Cohomologies, Integrable Hierarchies}

\begin{abstract}We prove that for any tau-symmetric bihamiltonian deformation of the tau-cover of the Principal Hierarchy associated with a semisimple Frobenius manifold, the deformed tau-cover admits an infinite set of Virasoro symmetries.
\end{abstract}

\date{\today}

\maketitle
\tableofcontents


\section{Introduction}\label{sec1}
This is the second one of the series of papers devoted to the study of the deformations of the Virasoro symmetries of bihamiltonian integrable hierarchies. In the first one \cite{liu2021variational}, we developed a cohomology theory on the space of differential forms of the infinite jet space of a super manifold for a given bihamiltonian structure of the hydrodynamic type, and we call such a cohomology theory the \textit{variational bihamiltonian cohomology}. It can be viewed as a generalization of the bihamiltonian cohomology introduced in \cite{dubrovin2001normal}, and it provides a suitable tool for us to study deformations of Virasoro symmetries of bihamiltonian integrable hierarchies.

The purpose of the present paper is to prove the following theorem.
\begin{Main}
For the Principal Hierarchy associated with a semisimple Frobenius manifold and any of its tau-symmetric bihamiltonian deformations, there exists a unique deformation of its Virasoro symmetries such that they are symmetries of the deformed integrable hierarchy. Moreover, the action of the Virasoro symmetries on the tau-function $Z$ of the deformed integrable hierarchy can be represented in the form 
\begin{equation}
\label{DN}
\diff{Z}{s_m} = L_mZ+O_mZ,\quad m\geq -1,
\end{equation}
where $L_m$ are the Virasoro operators constructed in \cite{dubrovin1999frobenius} and $O_m$ are some differential polynomials, and the flows $\diff{}{s_m}$ satisfy the Virasoro commutation relations
\[
\fk{\diff{}{s_k}}{\diff{}{ s_l}} = (l-k)\diff{}{ s_{k+l}},\quad k,l\geq -1.
\]
\end{Main}

Let us briefly explain the basic idea for proving this theorem. Consider the following system of evolutionary PDEs with time variable $t$ and spacial variable $x$:
\begin{equation}
\label{1-1}
\diff{u^i}{t} = A^i_j(u)u^j_x+\qe\kk{B^i_j(u)u^j_{xx}+C^i_{jk}(u)u^j_xu^k_x}+\cdots,\quad i = 1,\cdots, n,
\end{equation}
We assume that this system is bihamiltonian with respect to the bihamiltonian structure $(P_0,P_1)$ whose leading term is semisimple. We can associate with it a super extension by introducing odd unknown functions $\qth_i$ for $i = 1,\cdots,n$, and by adding odd flows $\diff{}{\qt_0}$ and $\diff{}{\qt_1}$ which correspond respectively to the Hamiltonian structure $P_0$ and $P_1$ (see \cite{liu2020super} and Sect.\,\ref{AV} given below for details). Thus the super extension of \eqref{1-1} consists of the flows $\diff{}{t}$, $\diff{}{\qt_0}$ and $\diff{}{\qt_1}$ for the unknown functions $u^i$ and $\qth_i$, and the fact that the system \eqref{1-1} is bihamiltonian with respect to $(P_0,P_1)$ is equivalent to the following commutation relations:
\[
\fk{\diff{}{t}}{\diff{}{\qt_i}} = 0,\quad \fk{\diff{}{\qt_i}}{\diff{}{\qt_j}} = 0,\quad i,j = 0,1.
\]
\begin{Ex}
Consider the following Korteweg-de Vries (KdV) equation:
\[
\diff{u}{t} = uu_x+\frac{\qe^2}{12}u_{xxx}.
\]
It admits a bihamiltonian structure given by the following Poisson brackets:
\begin{align*}
\{u(x),u(y)\}_0 &= \qd'(x-y),\\\{u(x),u(y)\}_1 &=u(x) \qd'(x-y)+\frac{u_x}{2}\qd(x-y)+\frac{\qe^2}{8}\qd'''(x-y).
\end{align*}
We introduce an odd unknown function $\qth$, and construct the following super extension of the KdV equation:
\begin{align}
\label{DI}
\diff{u}{t} &= uu_x+\frac{\qe^2}{12}u_{xxx},\quad \diff{\qth}{t} = u\qth_x+\frac{\qe^2}{12}\qth_{xxx},\\
\diff{u}{\qt_0} &= \qth_x,\quad \diff{u}{\qt_1} = v\qth_x+\frac{1}{2}v_x\qth+\frac{\qe^2}{8}\qth_{xxx},\\
\diff{\qth}{\qt_0}&=0,\quad \diff{\qth}{\qt_1}=\frac 12 \qth\qth_x.
\end{align}
It is easy to check directly that the flows in the extended system mutually commute.
\end{Ex}
\begin{Rem}
The flow \eqref{DI} also appeared in \cite{becker1993non}.
\end{Rem}

According to the theory of bihamiltonian cohomology given in \cite{DLZ-1}, we know that if there is another system of evolutionary PDEs given by the flow
$\diff{}{\hat t}$ satisfying the commutation relation
\begin{equation}
\label{1-2}
\fk{\diff{}{\hat t}}{\diff{}{\qt_0}} = \fk{\diff{}{\hat t}}{\diff{}{\qt_1}} =0,
\end{equation}
then it gives a symmetry of the system \eqref{1-1}, i.e.,
\[
\fk{\diff{}{t}}{\diff{}{\hat t}} = 0.
\]
We note that the above results is for a flow $\diff{}{\hat t}$ given by differential polynomials. In order to use the above results to consider symmetries of more general forms such as Virasoro symmetries, we introduce the notion of super tau-cover and develop the theory of variational bihamiltonian cohomology \cite{liu2021variational}. Then we are able to consider the commutation relation between the Virasoro symmetries $\diff{}{s_m}$ and the odd flows $\diff{}{\qt_0}$ and $\diff{}{\qt_1}$. By applying the results of variational bihamiltonian cohomology proved in \cite{liu2021variational}, we manage to prove the main theorem.

This paper is organized as follows. In Sect.\,\ref{sec2}, we construct the super tau-cover for a given tau-symmetric bihamiltonian deformation of the Principal Hierarchy associated with a semisimple Frobenius manifold. This construction builds a bridge which relates the Virasoro symmetries to the bihamiltonian structures. In Sect.\,\ref{sec3}, we explain how the deformation problem of the Virasoro symmetries can be solved via the theory of the variational bihamiltonian cohomology. Sect.\,\ref{sec4} is devoted to the proof of the main theorem. Finally some concluding remarks are made in Sect.\,\ref{sec5}.

\section{Super tau-covers of bihamiltonian integrable hierarchies}
\label{sec2}
\subsection{Bihamiltonian structures on the infinite jet space}
Let us start by recalling the basic constructions of bihamiltonian structures as local functionals on infinite jet spaces. One may refer to \cite{liu2018lecture} for a detailed introduction to this topic. 

Let $M$ be a smooth manifold of dimension $n$ and $\hat M$ be the super manifold of dimension $(n|n)$ obtained by reversing the parity of the fibers of the cotangent bundle of $M$. In another word, if we choose a local canonical coordinate system $(u^1,\cdots,u^n;\qth_1,\cdots,\qth_n)$ on $T^*M$, then $\hat M$ can be described locally by the same chart while regarding the fiber coordinates as odd variables:
\[
\qth_i\qth_j+\qth_j\qth_i = 0,\quad i,j=1,\cdots,n.
\]
We call that the odd coordinates $\qth_i$ are dual to $u^i$. The transition functions between two local trivializations $(u^1,\cdots,u^n;\qth_1,\cdots,\qth_n)$ and $(w^1,\cdots,w^n;\phi_1,\cdots,\phi_n)$ are given by the same formula as those of the cotangent bundle:
\begin{equation*}
\phi_\qa = \diff{u^\qb}{w^\qa}\qth_\qb,\quad \qa = 1,\cdots,n.
\end{equation*}
Here and henceforth, summation of the repeated upper and lower Greek indices is assumed.

Denote by $J^\infty(\hat M)$ the infinite jet bundle of $\hat M$. It is a fiber bundle over $\hat M$ with fiber $\mathbb R^\infty$. If we choose a local chart $(u^1,\cdots,u^n;\qth_1,\cdots,\qth_n)$ of $\hat M$, a trivialization can be realized by choosing the fiber coordinate being $(u^{\qa,s};\qth_\qa^s)$ for $\qa = 1,\cdots,n$ and $s\geq 1$. The transition functions between different charts are given by the chain rule
\[
w^{\qa,1} = \diff{w^\qa}{u^\qb}u^{\qb,1},\quad w^{\qa,2} = \diff{w^\qa}{u^\qb}u^{\qb,2}+\frac{\qp^2w^\qa}{\qp u^\qb \qp u^\qg}u^{\qb,1} u^{\qg,1},\quad\cdots
\]
\[
\phi_\qa^1 = \diff{u^\qb}{w^\qa}\qth_\qb^1+\frac{\qp^2 u^\qb}{\qp w^\qg\qp w^\qa}\diff{w^\qg}{u^\ql}u^{\ql,1}\qth_\qb,\quad \cdots
\]
Denote by $\hm A$ the ring of differential polynomials, locally it is given by
\[\hat{\mathcal A} = C^\infty(u)[[u^{i,s+1}, \theta^s_i\mid i = 1,\cdots,n;s\geq 0]].\]
It is graded with respect to the super degree $\deg_\qth$ defined by
\[
\deg_{\qth}u^{\qa,s} = 0,\quad \deg_{\qth}\qth_\qa^s = 1;\quad\qa = 1,\cdots,n,\ s\geq 0.
\]
Here and henceforth we use the notation $u^{\qa,0} = u^{\qa}$, $\qth_\qa^0 = \qth_\qa$. The set of homogeneous elements with super degree $p$ is denoted by $\hm A^p$.

Introduce a global vector filed $\qp_x$ on $J^\infty(\hat M)$ which is locally described by
\[
\qp_x = \sum_{s\geq 0}u^{\qa,s+1}\diff{}{u^{\qa,s}}+\qth_\qa^{s+1}\diff{}{\qth_\qa^s}.
\]
Therefore we see that $u^{\qa,s} = \qp_x^su^\qa$ and $\qth_\qa^s = \qp_x^s\qth_\qa$. Hence we can grade the ring $\hm A$ with respected to the differential degree $\deg_{\qp_x}$ defined by
\[
\deg_{\qp_x}u^{\qa,s} = s,\quad\deg_{\qp_x}\qth_\qa^s = s;\quad \qa = 1,\cdots,n,\ s\geq 0.
\]
We use the notation $\hm A_d$ to denote the set of homogeneous elements with differential degree $d$, and $\hm A^p_d = \hm A^p\cap \hm A_d$.

Using the vector field $\qp_x$, one can construct the space $
\hm F$ of local functionals via the quotient $\hm F:=\hm A/\qp_x\hm A$. Since the vector field $\qp_x$ is homogeneous with respect to both the super degree and the differential degree, the quotient space $\hm F$ admits natural gradations induced from $\hm A$ and we will use the notation $\hm F^p$, $\hm F_d$ and $\hm F^p_d$ to denote the subspace of homogeneous elements. For any element $f\in\hm A$, we will use $
\int f\in\hm F$ to denote its image of the natural projection $\hm A\to\hm F$.

For a differential polynomial $f\in \hm A$, one may define the variational derivatives by
\[
\vard{f}{u^\qa} = \sum_{s\geq 0}(-\qp_x)^s\diff{f}{u^{\qa,s}},\quad \vard{f}{\qth_\qa} = \sum_{s\geq 0}(-\qp_x)^s\diff{f}{\qth_\qa^s}.
\]
It is easy to verify that the variational derivatives annihilate the elements in $\qp_x\hm A$, hence the variational derivatives are also well-defined on the quotient space $\hm F$. For any $F\in\hm F$, we have
\[
\vard{F}{u^\qa} = \sum_{s\geq 0}(-\qp_x)^s\diff{f}{u^{\qa,s}},\quad \vard{F}{\qth_\qa} = \sum_{s\geq 0}(-\qp_x)^s\diff{f}{\qth_\qa^s},
\]
with $f\in\hm A$ being an arbitrary lift of $F$ such that $F = \int f$.
With the help of the notion of the variational derivatives, one can define the so-called Schouten-Nijenhuis bracket, which is a bilinear map $[-,-]:\hm F\times\hm F\to\hm F$ given by
\[
\fk{P}{Q} = \int \left(\frac{\qd P}{\qd\qth_\qa}\frac{\qd Q}{\qd u^\qa}+(-1)^p\frac{\qd P}{\qd u^\qa}\frac{\qd Q}{\qd \qth_\qa}\right),\quad P\in \mathcal{\hat F}^p,\, Q\in \mathcal{\hat F}^q.
\]
This bracket satisfies the graded commuting relation
\[\fk{P}{Q} = (-q)^{pq}\fk{Q}{P},\quad P\in \mathcal{\hat F}^p,\, Q\in \mathcal{\hat F}^q,\]
and the graded Jacobi identity:
\[
(-1)^{rp}[[P,Q],R]+(-1)^{pq}[[Q,R],P]+(-1)^{qr}[[R,P],Q]=0,\quad P\in \mathcal{\hat F}^p,\, Q\in \mathcal{\hat F}^q,\, R\in \mathcal{\hat F}^r.
\]
Any local functional $P\in\hm F^p$ gives rise to a graded derivation $D_P\in \der^{p-1}$ by 
\begin{equation}
\label{2-2}
D_P = \sum_{s\geq 0}\qp_x^s\kk{\vard{P}{\qth_\qa}}\diff{}{u^{\qa,s}}+(-1)^p\qp_x^s\kk{\vard{P}{u^\qa}}\diff{}{\qth^{s}_\qa}.
\end{equation}
Here the space $\der^p\ (p\in\mathbb Z)$ is the space of linear maps $D:\hm A^q\to \hm A^{q+p}$ satisfying the graded Leibniz rule:
\[
D(fg) = D(f)\,g+(-1)^{kp}f\,D(g),\quad f\in\hm A^k,\ g\in\hm A.
\]
and we denote $\der = \oplus_{p\in\mathbb Z}\der^p$, which is a graded Lie algebra with the graded commutator
\[
[D_1,D_2] = D_1\comp D_2-(-1)^{kl}D_2\comp D_1,\quad D_1\in\der^k,\ D_2\in\der^l.
\]
$\der$ is also graded by the differential degree
\[
\der_d = \{D\in\der\mid D(\hm A_k)\subseteq \hm A_{k+d}\}.
\]
For $P\in\hm F^p$ and $Q\in\hm F^q$, the derivation defined in \eqref{2-2} satisfies the identity
\begin{equation}
\label{dc}
(-1)^{p-1}D_{[P,Q]} = \fk{D_P}{D_Q}.
\end{equation}
The following identities are useful:
\begin{equation}
\label{a2-3}
\vard{}{u^\qa}[P,Q] = D_P\kk{\vard{Q}{u^\qa}}+(-1)^{pq}D_Q\kk{\vard{P}{u^\qa}};
\end{equation}
\begin{equation}
\label{a2-4}
(-1)^{p-1}\vard{}{\qth_\qa}[P,Q] = D_P\kk{\vard{Q}{\qth_\qa}}-(-1)^{(p-1)(q-1)}D_Q\kk{\vard{P}{\qth_\qa}}.
\end{equation}

A Hamiltonian structure is defined as a local functional $P\in\hm F^2$ such that $[P,P] = 0$. We can associate a matrix valued differential operator $\mathcal P = (\mathcal P^{\qa\qb})$ with $\mathcal P^{\qa\qb} = \sum_{s\geq 0}\mathcal P^{\qa\qb}_s\qp_x^s$ to a bivector $P\in\hm F^2$. Here $\mathcal P^{\qa\qb}_s\in\hm A$ is defined by 
\[
\vard{P}{\qth_\qa} = \sum_{s\geq 0}\mathcal P^{\qa\qb}_s\qth_\qb^s,\quad \qa = 1,\cdots,n.
\]
If $P$ is a Hamiltonian structure, then we call $\mathcal P$ the Hamiltonian operator of $P$.
\begin{Th}[\cite{dubrovin1996hamiltonian}]
\label{hydro-flat}
Let $P\in\hm F^2_1$  whose associated differential operator is given by
\[
\mathcal P^{\qa\qb} = g^{\qa\qb}(u)\qp_x+\Qg^{\qa\qb}_\qg(u) u^{\qg,1},\quad \det(g^{\qa\qb})\neq 0.
\]
Then $P$ is a Hamiltonian structure if and only if $g = (g_{\qa\qb}) = (g^{\qa\qb})^{-1}$ defines a flat (pseudo-)Rimeannian metric on $M$ and the Christoffel symbols of the Levi-Civita connection of $g$ are given by $\Qg_{\qa\qb}^\qg = -g_{\qa\ql}\Qg^{\ql\qg}_\qb$.
\end{Th}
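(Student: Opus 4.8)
The plan is to convert the condition $\fk{P}{P}=0$ into an equivalent system of PDEs for the functions $g^{\qa\qb}(u)$ and $\Qg^{\qa\qb}_\qg(u)$, to read off from that system that $g$ is flat with Christoffel symbols given by the stated formula, and then to obtain the converse by passing to flat coordinates. First I would extract the constraints that follow already from $P$ being a genuine bivector, i.e. from $P\in\hm F^2$: this forces the associated operator $\mathcal P=(\mathcal P^{\qa\qb})$ to be formally skew-symmetric, $(\mathcal P^{\qa\qb})^\dagger=-\mathcal P^{\qb\qa}$. Expanding $(g^{\qa\qb}\qp_x+\Qg^{\qa\qb}_\qg u^{\qg,1})^\dagger=-g^{\qa\qb}\qp_x+(\Qg^{\qa\qb}_\qg-\qp_\qg g^{\qa\qb})u^{\qg,1}$ and comparing with $-\mathcal P^{\qb\qa}$ yields, without using $\fk{P}{P}=0$ at all, the relations $g^{\qa\qb}=g^{\qb\qa}$ and $\qp_\qg g^{\qa\qb}=\Qg^{\qa\qb}_\qg+\Qg^{\qb\qa}_\qg$. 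Since $\det(g^{\qa\qb})\neq0$, the matrix $g=(g_{\qa\qb})$ is a bona fide (pseudo-)metric on $M$, so $\Qg^\qg_{\qa\qb}:=-g_{\qa\ql}\Qg^{\ql\qg}_\qb$ is a well-defined affine connection.

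Next I would unfold $\fk{P}{P}=0$. By \eqref{dc} applied with $Q=P$ one gets $D_{\fk{P}{P}}=-2\,D_P\circ D_P$, and since $D_R=0$ forces $R$ to be a constant, $\fk{P}{P}=0$ is equivalent to $D_P\circ D_P=0$; as $D_P\circ D_P$ is a derivation commuting with $\qp_x$, this is in turn equivalent to its vanishing on the generators $u^\qa,\qth_\qa$, i.e.
\[
D_P\kk{\vard{P}{\qth_\qa}}=0,\qquad D_P\kk{\vard{P}{u^\qa}}=0,\quad \qa=1,\dots,n,
\]
where $\vard{P}{\qth_\qa}=g^{\qa\qb}\qth_\qb^1+\Qg^{\qa\qb}_\qg u^{\qg,1}\qth_\qb$ (equivalently, this is the Jacobi identity for the local Poisson bracket $\{u^\qa(x),u^\qb(y)\}=\mathcal P^{\qa\qb}\qd(x-y)$, the classical Dubrovin--Novikov computation). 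Expanding $D_P$ by the Leibniz rule and collecting the coefficients of the independent jet monomials $u^{\qg,2}$, $u^{\qg,1}u^{\qd,1}$ and $u^{\qg,1}$ (tensored with the relevant $\qth$-monomials; on the Poisson side, the coefficients of $\qd'''$, $\qd''$, $\qd'$, $\qd$) splits the system into groups ordered by the number of $u$-derivatives.

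The top group reduces to the symmetry $g^{\qa\qe}\Qg^{\qb\qg}_\qe=g^{\qb\qe}\Qg^{\qa\qg}_\qe$, which is exactly torsion-freeness of $\Qg$; combined with the relation $\qp_\qg g^{\qa\qb}=\Qg^{\qa\qb}_\qg+\Qg^{\qb\qa}_\qg$ from the first step, a short computation then shows that $\Qg$ is metric, hence is the Levi-Civita connection of $g$ — this is precisely the asserted formula $\Qg^\qg_{\qa\qb}=-g_{\qa\ql}\Qg^{\ql\qg}_\qb$. Substituting these relations back, the remaining groups collapse to the single tensorial identity $R^\qd{}_{\qg\qa\qb}=0$, where $R$ is the Riemann curvature tensor of $\Qg$; that is, $g$ is flat. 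Conversely, if $g$ is flat and $\Qg^{\ql\qg}_\qb=-g^{\ql\qa}\Qg^\qg_{\qa\qb}$ with $\Qg^\qg_{\qa\qb}$ the Levi-Civita symbols of $g$, pick a flat coordinate system: there all Christoffel symbols vanish, so $\Qg^{\ql\qg}_\qb=0$ and $\mathcal P^{\qa\qb}=\eta^{\qa\qb}\qp_x$ with constant $\eta^{\qa\qb}$, whence $P=\tfrac12\int\eta^{\qa\qb}\qth_\qa\qth_\qb^1$ and $\fk{P}{P}=0$ by a one-line check; since the Schouten--Nijenhuis bracket is defined intrinsically on $\hm F$, this vanishing is independent of the chart, so $P$ is a Hamiltonian structure.

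The main obstacle is the third step: organising the cubic expansion of $D_P\circ D_P$ (equivalently the Jacobiator) and, after systematically feeding in the skew-symmetry, metric-compatibility and torsion-free relations already established, recognising the leftover coefficients as (a contraction of) the Riemann tensor of $\Qg$. This is where all the commuting of $\qp_x$ past the coefficient functions and the repeated re-use of $\qp_\qg g^{\qa\qb}=\Qg^{\qa\qb}_\qg+\Qg^{\qb\qa}_\qg$ takes place; by contrast, the adjoint computation and the flat-coordinate converse are routine.
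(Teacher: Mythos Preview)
The paper does not actually prove this theorem: it is stated with attribution to Dubrovin--Novikov \cite{dubrovin1996hamiltonian} and then immediately used, with no proof given in the text. So there is no ``paper's own proof'' to compare against.

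Your sketch is the classical Dubrovin--Novikov argument and is essentially correct. Two small remarks. First, your justification that $D_{\fk{P}{P}}=0$ implies $\fk{P}{P}=0$ via ``$D_R=0$ forces $R$ to be a constant'' is slightly loose: what you actually need is that the map $R\mapsto D_R$ from $\hm F$ to $\derx$ has kernel consisting only of $\hm F^0_0$ (the constants), and since $\fk{P}{P}\in\hm F^3_2$ this suffices. Alternatively, and more directly, one can simply compute $\fk{P}{P}\in\hm F^3_2$ from the Schouten--Nijenhuis formula and set the resulting density (modulo $\qp_x$-exact terms) to zero, which avoids this detour entirely. Second, the bookkeeping of which relations emerge at which order (the split into ``top group'' giving torsion-freeness versus lower groups giving flatness) is correct in spirit but a bit compressed; in the actual computation the metric-compatibility condition $\nabla g=0$ and the symmetry $g^{\qa\qe}\Qg^{\qb\qg}_\qe=g^{\qb\qe}\Qg^{\qa\qg}_\qe$ both appear among the higher-order coefficients of the Jacobiator, and one typically uses all of them together before the remaining equations reduce to $R=0$. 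None of this affects the validity of the argument.
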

The Hamiltonian structure $P$ satisfying the conditions of the above theorem is called of hydrodynamic type. It follows from the above theorem that for each  Hamiltonian structure $P$ of hydrodynamic type, there exists a local coordinate system $(v^\qa;\qs_\qa)$ on $\hat M$ such that 
\[
P = \frac 12\int\eta^{\qa\qb}\qs_\qa\qs_\qb^1,
\]
where $\eta^{\qa\qb}$ is a constant non-degenerate matrix. The coordinates $v^\qa$ and $\qs_\qa$ are called  flat coordinates of $P$.

A bihamiltonian structure $(P_0,P_1)$ is a pair of Hamiltonian structure satisfying an additional compatibility condition $[P_0,P_1] = 0$. Assume that the bihamiltonian structure is of hydrodynamic type, then according to Theorem \ref{hydro-flat}, we have two flat contravariant metrics $g^{\qa\qb}_0$ and $g^{\qa\qb}_1$. We say that this bihamiltonian structure is semisimple if the roots of the characteristic equation 
\[
\det\kk{g_1^{\qa\qb}-\ql g_0^{\qa\qb}} = 0
\]
are distinct and not constant. In this case, the roots $\ql^1(u),\cdots,\ql^n(u)$ can serve as local coordinates of $M$ and it is called the canonical coordinate of the semisimple bihamiltonian structure. It is proved in \cite{ferapontov2001compatible} that in canonical coordinate,
\[
P_0 = \frac 12\int \sum_{i,j=1}^n \kk{\qd_{i,j}f^i(\ql)\qth_i\qth_i^1+ A^{ij}\qth_i\qth_j},\quad P_1 = \frac 12\int  \sum_{i,j=1}^n\kk{\qd_{i,j} g^i(\ql)\qth_i\qth_i^1+ B^{ij}\qth_i\qth_j},
\]
where $f^i$ are non-vanishing functions, $g^i = \ql^if^i$ and the functions $A^{ij}$ and $B^{ij}$ are given by
\begin{equation}
\label{DQ}
A^{ij} = \frac 12\left(\frac{f^i}{f^j}\diff{f^j}{\ql^i}\ql^{j,1}-\frac{f^j}{f^i}\diff{f^i}{\ql^j}\ql^{i,1}\right),\quad B^{ij} = \frac 12\left(\frac{g^i}{f^j}\diff{f^j}{\ql^i}\ql^{j,1}-\frac{g^j}{f^i}\diff{f^i}{\ql^j}\ql^{i,1}\right).
\end{equation}
Here by abusing the notation, we still use $\qth_i$ to denote the fiber coordinates of $\hat M$ dual to $\ql^i$. We also call $\ql^i$ and $\qth_i$ the canonical coordinates of $(P_0,P_1)$.

From now on, for a semisimple hydrodynamic bihamiltonian structure $(P_0,P_1)$, we will use $(v^\qa;\qs_\qa)$ to denote the flat coordinate of $P_0$ such that
\[
P_0 = \frac 12\int\eta^{\qa\qb}\qs_\qa\qs_\qb^1,\quad P_1 = \frac 12\int g^{\qa\qb}(v)\qs_\qa\qs_\qb^1+\Qg^{\qa\qb}_\qg(v)v^{\qg,1}\qs_\qa\qs_\qb,
\]
and we will use $(u^i;\qth_i)$ to denote the canonical coordinate for $(P_0,P_1)$ such that 
\begin{equation}
\label{CW}
P_0 = \frac 12\int \sum_{i,j=1}^n\kk{\qd_{i,j} f^i(u)\qth_i\qth_i^1+ A^{ij}\qth_i\qth_j},\quad P_1 = \frac 12\int \sum_{i,j=1}^n\kk{\qd_{i,j} u^if^i(u)\qth_i\qth_i^1+ B^{ij}\qth_i\qth_j}.
\end{equation}
Note that we will not adopt Einstein summation for Latin indices.

In terms the notations introduced above, a system of evolutionary PDEs
\[
\diff{u^\qa}{t} = X^\qa,\quad X^\qa\in\hm A^0
\]
can be represented as a local functional $X = \int X^\qa\qth_{\qa}$, and it is said to be bihamiltonian if there exists a bihamiltonian structure $(P_0,P_1)$ and two Hamiltonians $G,H\in\hm F^0$ such that
\[
X = -[G,P_0] =-[H,P_1]. 
\]
\begin{Ex}
The KdV equation
\begin{equation}
\label{DJ}
\diff{u}{t} = uu_x+\frac{\qe^2}{12}u_{xxx}
\end{equation}
can be represented by $X = \int (uu_x+\frac{\qe^2}{12}u_{xxx})\qth$. Its bihamiltonian structure is given by
\[
P_0 = \frac 12\int \qth\qth_x,\quad P_1 = \frac 12\int u\qth\qth^1+\frac{\qe^2}{8}\qth\qth^3.
\]
The two Hamiltonians with respect to the bihamiltonian structure are given by
\[
X = -\fk{\int \frac{u^3}{6}-\frac{\qe^2}{24}u_x^2}{P_0} = -\fk{\int \frac{u^2}{3}}{P_1}.
\]
\end{Ex}

\subsection{Frobenius manifolds and super tau-covers of the Principal Hierarchies}
In this section, we will recall some basic facts of Frobenius manifolds and the construction of the associated Principal Hierarchies following the work \cite{dubrovin1993integrable,dubrovin1996geometry,dubrovin1999painleve,dubrovin2001normal}. Then we recall the construction of the super tau-covers of the Principal Hierarchies given in \cite{liu2020super}.

The notion of Frobenius manifold is a geometric description of genus zero 2D topological field theories. An $n$-dimensional Frobenius manifold $M$ can be locally described by a solution $F(v^1,\cdots,v^n)$ of the following Witten-Dijkgraaf-Verlinde-Verlinde (WDVV) associativity equation \cite{dijkgraaf1991topological,witten1990structure}:
\begin{equation}
\label{WDVV}
\qp_\qa\qp_\qb\qp_\ql F\eta^{\ql\mu}\qp_\mu\qp_\qg\qp_\qd F = \qp_\qd\qp_\qb\qp_\ql F\eta^{\ql\mu}\qp_\mu\qp_\qg\qp_\qa F.
\end{equation}
Here $\qp_\qa = \diff{}{v^\qa}$ and we require $(\eta_{\qa\qb}) := (\qp_1\qp_\qa\qp_\qb F)$ is a constant non-degenerate matrix with inverse $(\eta^{\qa\qb})$. The function $F(v^1,\cdots,v^n)$ is called the potential of $M$ and it defines a Frobenius algebra structure on $TM$:
\[
\langle \qp_\qa,\qp_\qb\rangle = \eta_{\qa\qb},\quad \qp_\qa\cdot\qp_\qb = c^{\qg}_{\qa\qb}\qp_\qg,
\]
where the functions $c^{\qg}_{\qa\qb}$ are defined by
\[
\quad c^{\qg}_{\qa\qb} = \eta^{\qg\ql}c_{\ql\qa\qb},\quad c_{\ql\qa\qb} = \qp_\ql\qp_\qa\qp_\qb F.
\]
The potential $F$ is required to be quasi-homogeneous in the sense that there exists a vector field
\[
E = \sum_{\qa=1}^n \left(\left(1-\frac{d}{2}-\mu_\qa\right)v^\qa+r_\qa\right)\qp_\qa,
\]
called the Euler vector field, such that
\[ E(F)=(3-d)F+\frac12 A_{\qa\qb} v^\qa v^\qb+B_\qa v^\qa+C.\]
Here the diagonal matrix $\mu = \mathrm{diag}(\mu_1,\cdots,\mu_n)$ is part of the isomonodromy data of $M$ and it is assumed that $\mu_1 = -d/2$ and $r_1 = 0$. The matrix $\mu$ satisfies the property
\begin{equation}
\label{2-3}
(\mu_\qa+\mu_\qb)\eta_{\qa\qb} = 0,\quad\forall\, \qa,\qb.
\end{equation}

An important property of Frobenius manifolds is that the affine connection
\[
\tilde\nabla_X Y = \nabla_X Y+zX\cdot Y,\quad \forall\, X,Y\in\Qg(TM),\quad z\in\mathbb C
\] 
is flat for arbitrary $z$, here $\nabla$ is the Levi-Civita connection of the flat metric $(\eta_{\qa\qb})$. It can be extended to be a flat connection on $M\times \mathbb C^*$ by viewing $z$ as the coordinate on $\mathbb C^*$ and defining
\[
\tilde\nabla_{\qp_z}X = \qp_z X+E\cdot X-\frac 1z \mu X,\quad \tilde\nabla_{\qp_z}\qp_z = \tilde\nabla_{X}\qp_z = 0.
\]
The connection $\tilde\nabla$ is called the deformed flat connection or the Dubrovin connection. For such a flat connection, one can find a system of flat coordinates of the form
\[
(\tilde v^1(v,z),\cdots,\tilde v^n(v,z)) = (h_1(v,z),\cdots,h_n(v,z))z^\mu z^R.
\]
Here $R$ is a constant matrix. The constant matrices $\eta$, $\mu$ and $R$ form the monodromy data of $M$ at $z=0$. The matrix $R$ can be decomposed into a finite sum $R = R_1+\cdots+R_m$, and they satisfy the condition
\begin{equation}
\label{2-4}
\fk{\mu}{R_k} = kR_k,\quad \eta_{\qa\qg}(R_k)^\qg_\qb =(-1)^{k+1}\eta_{\qb\qg}(R_k)^\qg_\qa.
\end{equation} 
The functions $h_\qa(v,z)$ are analytic at $z=0$ and has the expansion $h_\qa(v,z) = \sum_{p\geq 0}h_{\qa,p}(v)z^p$. The coefficients $h_{\qa,p}$ satisfy the following recursion condition:
\[
h_{\qa,0} = \eta_{\qa\qb}v^\qb,\quad\qp_\qb\qp_\qg h_{\qa,p+1} = c^\ql_{\qb\qg}\qp_\ql h_{\qa,p},\quad p\geq 0.
\]
They also satisfy the following quasi-homogeneous condition
\[
E(\qp_\qb h_{\qa,p}) = (p+\mu_\qa+\mu_\qb)\qp_\qb h_{\qa,p}+\sum_{k=1}^p(R_k)^\qg_\qa\qp_\qb h_{\qg,p-k},
\]
and the following normalization condition
\[
\langle\nabla h_\qa(v,z),\nabla h_\qb(v,-z)\rangle = \eta_{\qa\qb}.
\]
A choice of the functions $h_{\qa,p}$ satisfying all the conditions above is called a calibration of $M$ and a Frobenius manifold $M$ is called calibrated if such a choice is fixed. In what follows, we assume that all the Frobenius manifolds are calibrated.

The Principal Hierarchy associated with a Frobenius manifold $M$ is a bihamiltonian integrable hierarchy of hydrodynamic type. Denote by $\qs_\qa$ the odd variables dual to the flat coordinates $v^\qa$, then the Principal Hierarchy can be described by the local functionals $X_{\qa,p}\in\hm F^1$ of the form
\begin{equation}
\label{DK}
X_{\qa,p} = \int \eta^{\ql\qg}\qp_x(\qp_\qg h_{\qa,p+1})\qs_\ql,\quad \qa = 1,\cdots,n,\quad p\geq 0,
\end{equation}
or equivalently, we can represent the hierarchy as follows:
\[
\diff{v^\ql}{t^{\qa,p}} = \eta^{\ql\qg}\qp_x(\qp_\qg h_{\qa,p+1}).
\]
Define two local functionals
\begin{equation}
\label{biham-frob}
P_0 = \frac 12\int\eta^{\qa\qb}\qs_\qa\qs_\qb^1,\quad P_1 = \frac 12\int g^{\qa\qb}\qs_\qa\qs_\qb^1+\Qg^{\qa\qb}_\qg v^{\qg,1}\qs_\qa\qs_\qb,
\end{equation}
where the functions $g^{\qa\qb}$ and $\Qg^{\qa\qb}_\qg$ are given by
\[
g^{\qa\qb} = E^\qe c^{\qa\qb}_\qe,\quad \Qg^{\qa\qb}_\qg = \left(\frac{1}{2}-\mu_\qb\right)c^{\qa\qb}_\qg;\quad c^{\qa\qb}_\qg = \eta^{\qa\ql}c^\qb_{\ql\qg}.
\]
Then we have the following theorem.
\begin{Th}[\cite{dubrovin1996geometry}]
\label{AG}
Let $M$ be a Frobenius manifold, then
\begin{enumerate}
\item The local functionals $P_0$, $P_1$ defined in \eqref{biham-frob} form a bihamiltonian structure which is exact in the sense that
\[
P_0 = [Z,P_1],\quad Z = \int\qs_{1}.
\]
\item The Principal Hierarchy $X_{\qa,p}$ associated with $M$ is bihamiltonian with respect to the bihamiltonian structure $(P_0,P_1)$ and 
\[
X_{\qa,p} = -\fk{H_{\qa,p}}{P_0},\quad H_{\qa,p} = \int h_{\qa,p+1}.
\]
\item The following bihamiltonian recursion relation holds true:
\[
[H_{\qa,p-1},P_1]=\left(p+\frac12+\mu_\qa\right)[H_{\qa,p},P_0]+\sum_{k=1}^p \left(R_k\right)^\qg_\qa [H_{\qg, p-k},P_0],\quad p\geq 0.
\]
\end{enumerate}
\end{Th}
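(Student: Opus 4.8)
The plan is to handle the three parts in the order: (1); the $P_0$-Hamiltonian half of (2); then (3), from which the $P_1$-Hamiltonian half of (2) follows by induction on $p$. Everything reduces to Theorem \ref{hydro-flat} and to explicit computations with the Schouten--Nijenhuis bracket in the flat coordinates $(v^\qa;\qs_\qa)$.

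\emph{Part (1).} The bracket $P_0=\frac12\int\eta^{\qa\qb}\qs_\qa\qs_\qb^1$ has constant coefficients, so $[P_0,P_0]=0$ is immediate. By Theorem \ref{hydro-flat}, $P_1$ in \eqref{biham-frob} is a Hamiltonian structure if and only if the contravariant metric $g^{\qa\qb}=E^\qe c^{\qa\qb}_\qe$ is flat with Levi-Civita Christoffel symbols $-g_{\qa\ql}\Qg^{\ql\qg}_\qb$; this is exactly the statement that the intersection form of a Frobenius manifold is flat, which follows from the WDVV equation \eqref{WDVV} together with the quasi-homogeneity of $F$ and \eqref{2-3}, so I would invoke it. For exactness I would compute $[Z,P_1]$ directly: with $Z=\int\qs_1\in\hm F^1$ one has $\vard{Z}{v^\qg}=0$ and $\vard{Z}{\qs_\qg}=\qd^\qg_1$, so $[Z,P_1]=\int\vard{P_1}{v^1}$; using that $\qp_1$ is the unit of the Frobenius algebra ($c^\qg_{1\qb}=\qd^\qg_\qb$) and $\mu_1=-d/2$, $r_1=0$, one gets $\qp_1 g^{\qa\qb}=\eta^{\qa\qb}$ and $\qp_1\Qg^{\qa\qb}_\qg=\qp_\qg\Qg^{\qa\qb}_1$ with $\Qg^{\qa\qb}_1$ constant, so $\vard{P_1}{v^1}$ equals $\frac12\eta^{\qa\qb}\qs_\qa\qs_\qb^1$ up to a total $\qp_x$-derivative, i.e.\ $[Z,P_1]=P_0$. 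Compatibility is then free: applying the graded Jacobi identity to $Z,P_1,P_1$ and using $[P_1,P_1]=0$ yields $[P_0,P_1]=[[Z,P_1],P_1]=0$.

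\emph{Part (2), first half.} Since $H_{\qa,p}=\int h_{\qa,p+1}\in\hm F^0$ contains no $x$-derivatives, $\vard{H_{\qa,p}}{v^\qg}=\qp_\qg h_{\qa,p+1}$ and $\vard{H_{\qa,p}}{\qs_\qg}=0$; together with $\vard{P_0}{\qs_\qg}=\eta^{\qg\qb}\qs_\qb^1$ the bracket formula gives $[H_{\qa,p},P_0]=\int(\qp_\qg h_{\qa,p+1})\eta^{\qg\qb}\qs_\qb^1=-\int\eta^{\ql\qg}\qp_x(\qp_\qg h_{\qa,p+1})\qs_\ql=-X_{\qa,p}$ after one integration by parts, which establishes $X_{\qa,p}=-[H_{\qa,p},P_0]$ with $X_{\qa,p}$ as in \eqref{DK}. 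That $X_{\qa,p}$ is also Hamiltonian with respect to $P_1$ I would then obtain by induction on $p$ from the recursion of part (3), which expresses $[H_{\qa,p},P_0]$ (and hence $X_{\qa,p}$) in terms of $[H_{\qa,p-1},P_1]$ and the already-treated lower flows $[H_{\qg,p-k},P_0]$.

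\emph{Part (3).} I would compute $[H_{\qa,p-1},P_1]=\int(\qp_\qg h_{\qa,p})\,\vard{P_1}{\qs_\qg}$ from the explicit $P_1$ in \eqref{biham-frob}, integrate by parts, and then eliminate all second derivatives of $h$ via the recursion $\qp_\qb\qp_\qg h_{\qa,p}=c^\ql_{\qb\qg}\qp_\ql h_{\qa,p-1}$. The terms coming from $g^{\qa\qb}=E^\qe c^{\qa\qb}_\qe$ then reassemble into $E^\qe$ applied to $\qp_\qb h_{\qa,p}$, to which I would apply the quasi-homogeneity identity $E(\qp_\qb h_{\qa,p})=(p+\mu_\qa+\mu_\qb)\qp_\qb h_{\qa,p}+\sum_{k=1}^p(R_k)^\qg_\qa\qp_\qb h_{\qg,p-k}$; after reorganizing the constants using $\Qg^{\qa\qb}_\qg=(1/2-\mu_\qb)c^{\qa\qb}_\qg$, $\mu_1=-d/2$, and \eqref{2-3}--\eqref{2-4}, the result becomes $(p+\tfrac12+\mu_\qa)[H_{\qa,p},P_0]+\sum_{k=1}^p(R_k)^\qg_\qa[H_{\qg,p-k},P_0]$, which is the claimed recursion. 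The main obstacle I anticipate is the bookkeeping in this last step: one must show that the contributions of the curved part $\Qg^{\qa\qb}_\qg v^{\qg,1}\qs_\qa\qs_\qb$ of $P_1$ and of the $R_k$-terms in the quasi-homogeneity relation combine into precisely the stated shape, with nothing left over. Controlling this requires careful use of the flatness of the intersection form, of the symmetry relations in \eqref{2-4}, and of the property \eqref{2-3}, and is the place where signs and the (non-)summation conventions for Greek versus Latin indices must be tracked most carefully.
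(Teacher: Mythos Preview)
The paper does not give its own proof of this theorem: it is stated with the attribution \cite{dubrovin1996geometry} and used as a known input, so there is no in-paper argument to compare your proposal against. Your outline is essentially the standard route taken in Dubrovin's work---reduce (1) to flatness of the intersection form via Theorem~\ref{hydro-flat}, get exactness by differentiating $P_1$ in the unit direction, derive the $P_0$-Hamiltonian half of (2) by a one-line variational computation, and obtain (3) from the deformed-connection recursion together with the quasi-homogeneity identity for $\qp_\qb h_{\qa,p}$---and that is the right shape of argument.

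One caution on the exactness step: your claim that $\qp_1 g^{\qa\qb}=\eta^{\qa\qb}$ is not a purely pointwise identity of the structure constants (in general $\qp_1 c^{\qa\qb}_\qg\neq 0$); it follows instead from combining $c^{\qa\qb}_1=\eta^{\qa\qb}$ with the quasi-homogeneity of $c^{\qa\qb}_\qg$ under $E$, which also kills the extra $\Qg$-contributions in $\vard{P_1}{v^1}$. If you flesh out the sketch, that is the place where you will need to be most careful; the rest of your bookkeeping plan for (3) using \eqref{2-3}--\eqref{2-4} is accurate.
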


Another important property satisfied by the Principal Hierarchy is that it is tau-symmetric. Let functions $\Qo_{\qa,p;\qb,q}$ for $\qa,\qb = 1,\cdots,n$ and $p,q\geq 0$ be defined by the generating function
\[
\sum_{p\geq 0,q\geq 0}\Qo_{\qa,p;\qb,q}(v)z_1^pz_2^q = \frac{\langle\nabla h_\qa(v,z_1),\nabla h_\qb(v,z_2)\rangle - \eta_{\qa\qb}}{z_1+z_2}.
\]
They have the following properties \cite{dubrovin1996geometry}:
\[
\Qo_{\qa,p;1,0} = h_{\qa,p},\quad \Qo_{\qa,p;\qb,0} = \qp_\qb h_{\qa,p+1};
\]
\[
\Qo_{\qa,p;\qb,q} = \Qo_{\qb,q;\qa,p},\quad \diff{\Qo_{\qa,p;\qb,q}}{t^{\ql,k}}= \diff{\Qo_{\ql,k;\qb,q}}{t^{\qa,p}}.
\]
It follows from these identities that one can extend the Principal Hierarchy by introducing another family of unknown functions $f_{\qa,p}$ satisfying the following equations:
\begin{equation}
\label{g0tau}
\diff{f_{\qa,p}}{t^{\qb,q}} = \Qo_{\qa,p;\qb,q}, \quad \diff{v^\qa}{t^{\qb,q}} = \eta^{\qa\ql}\qp_x\Qo_{\ql,0;\qb,q}.
\end{equation}
The system \eqref{g0tau} is called the tau-cover of the Principal Hierarchy. 

In order to study the relations between the bihamiltonian structures and the Virasoro symmetries, the notion of super tau-covers was introduced in \cite{liu2020super}. Let us briefly recall the construction of the super tau-covers since it provides the main motivation of our work presented in the next subsection. We first introduce a family of odd unknown functions $\qs_{\qa,k}^s$ for $s,k\geq 0$ with $\qs_{\qa,0}^s = \qs^s_\qa$. In what follows we will also use $\qs_{\qa,k}$ to denote $\qs_{\qa,k}^0$. We extend the action of $\qp_x$ to include these new odd variables as follows:
\[
\qp_x = \sum_{s\geq 0}v^{\qa,s+1}\diff{}{v^{\qa,s}}+\sum_{k,s\geq 0}\qs_{\qa,k}^{s+1}\diff{}{\qs_{\qa,k}^s}.
\] 
These odd variables are required to satisfy the following bihamiltonian recursion relation:
\begin{equation}
\label{2-8}
\eta^{\qa\qb}\qs_{\qb,k+1}^1 = g^{\qa\qb}\qs_{\qb,k}^1+\Qg_\qg^{\qa\qb}v^{\qg,1}\qs_{\qb,k},\quad \qa=1,\dots, n;\, k\geq 0.
\end{equation}
We also introduce a family of odd flows $\diff{}{\qt_m}$ for $m\geq 0$. The first two odd flows are determined by the bihamiltonian structure $(P_0,P_1)$:
\[
\diff{v^\qa}{\qt_i} = \vard{P_i}{\qs_\qa},\quad \diff{\qs_\qa}{\qt_i} = \vard{P_i}{v^\qa},\quad i = 0,1.
\]
Note that $\diff{}{\qt_i} = D_{P_i}$ for $i  =0,1$, where $D_{P_i}$ is defined by \eqref{2-2}. The actions of $\diff{}{\qt_i}$ can be extended to all the other odd variables $\qs_{\qa,k}$ such that the flows $\diff{}{\qt_i}$ are compatible with the recursion relations \eqref{2-8}.
Furthermore we can define infinitely many odd flows $\diff{}{\qt_m}$ for $m\geq 2$ which can be viewed as flows corresponding to certain non-local Hamiltonian structures.

We have the following theorem.
\begin{Th}[\cite{liu2020super}]
\label{g0super1}
We have the following mutually commuting flows associated with any given Frobenius manifold $M$:
\begin{align*}
&\diff{v^\qa}{t^{\qb,p}} = \eta^{\qa\qg}{(\qp_\ql\qp_\qg h_{\qb,p+1})v^{\ql,1}},\quad \diff{\qs_{\qa,k}}{t^{\qb,p}} = \eta^{\qg\qe}(\qp_\qa\qp_\qe h_{\qb,p+1})\qs_{\qg,k}^1,\\
&\diff{v^\qa}{\qt_m} = \eta^{\qa\qb}\qs_{\qb,m}^1,\quad
\diff{\qs_{\qa,k}}{\qt_m} = -\diff{\qs_{\qa,m}}{\qt_k} = \Qg^{\qg\qb}_\qa\sum_{i = 0}^{m-k-1}\qs_{\qb,k+i}\qs_{\qg,m-i-1}^1,\quad 0\le k\leq m.
\end{align*}
where $\qa,\qb=1,\dots,n$, and $m, p\ge 0$. These flows are well-defined in the sense that they are compatible with the recursion relations \eqref{2-8}.
\end{Th}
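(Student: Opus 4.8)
The theorem is a verification, and the plan is to establish it in two stages — first the well-definedness of each listed flow, then pairwise commutativity — organizing both around the generating function $h_\qa(v,z)=\sum_{p\ge0}h_{\qa,p}(v)z^p$ of the calibration. A useful preliminary move is to pack the odd variables into the series $\Sigma_\qa(z)=\sum_{k\ge0}\qs_{\qa,k}z^k$: summing \eqref{2-8} against $z^{k+1}$ shows that this recursion is equivalent to the single $z$-dependent linear equation
\[
(\eta^{\qa\qb}-z\,g^{\qa\qb})\,\qp_x\Sigma_\qb(z)-z\,\Qg^{\qa\qb}_\qg\,v^{\qg,1}\,\Sigma_\qb(z)=\eta^{\qa\qb}\,\qp_x\qs_\qb ,
\]
governed by the flat pencil $(\eta^{\qa\qb},g^{\qa\qb})$ of the Frobenius manifold. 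With \eqref{2-8} in this form, all the constraints defining a calibration (the recursion $\qp_\qb\qp_\qg h_{\qa,p+1}=c^\ql_{\qb\qg}\qp_\ql h_{\qa,p}$, the quasi-homogeneity of $\qp_\qb h_{\qa,p}$, the WDVV equation, and the symmetry and closedness of the two-point functions $\Qo_{\qa,p;\qb,q}$), together with the structural identities $g^{\qa\qb}=E^\qe c^{\qa\qb}_\qe$ and $\Qg^{\qa\qb}_\qg=(\tfrac12-\mu_\qb)c^{\qa\qb}_\qg$, are directly available.

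\emph{Well-definedness.} For each flow in the list I would apply it to \eqref{2-8} (equivalently to the $z$-identity above) and check that an identity results. For $\diff{}{t^{\qb,p}}$ this says the Principal Hierarchy preserves the flat pencil, and it reduces, via the calibration recursion and the structural identities for $g^{\qa\qb},\Qg^{\qa\qb}_\qg$, to associativity together with the quasi-homogeneity of $\qp_\qb h_{\qa,p}$. For $\diff{}{\qt_0}$ and $\diff{}{\qt_1}$ there is nothing to do: these are $D_{P_0}$ and $D_{P_1}$, whose extension to the $\qs_{\qa,k}$ is defined so as to respect \eqref{2-8}. For $\diff{}{\qt_m}$ with $m\ge2$ I would substitute the explicit quadratic formula $\diff{\qs_{\qa,k}}{\qt_m}=\Qg^{\qg\qb}_\qa\sum_{i=0}^{m-k-1}\qs_{\qb,k+i}\qs_{\qg,m-i-1}^1$ (extended to $k>m$ by the stated antisymmetry) into \eqref{2-8}; after one application of $\qp_x$ the inner sum telescopes, and the leftover terms cancel because $(\eta^{\qa\qb},g^{\qa\qb})$ is a flat pencil with the Christoffel data fixed by Theorem \ref{hydro-flat} and $[P_0,P_1]=0$.

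\emph{Commutativity.} It suffices to check the brackets on the generators $v^\qa$ and $\qs_{\qa,k}$. The relation $\fk{\diff{}{t^{\qa,p}}}{\diff{}{t^{\qb,q}}}=0$ on the $v^\qa$ is the integrability of the Principal Hierarchy, Theorem \ref{AG}(2)--(3); on the $\qs_{\qg,k}$ the flow is linear in $\qs$ with coefficients $\eta^{\qg\qe}\qp_\qa\qp_\qe h_{\qb,p+1}$, and the needed identity is the linearized version of the even commutation relation (again WDVV). The relation $\fk{\diff{}{\qt_i}}{\diff{}{\qt_j}}=0$ for $i,j\in\{0,1\}$ follows from $[P_i,P_j]=0$ via \eqref{dc} on the basic variables, and then on the $\qs_{\qa,k}$ by differentiating \eqref{2-8}. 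For the mixed brackets $\fk{\diff{}{t^{\qb,p}}}{\diff{}{\qt_m}}$ and the remaining $\fk{\diff{}{\qt_k}}{\diff{}{\qt_l}}$ I would compute directly: the $v^\qa$-components are short and collapse under the calibration recursion and the definition of $g^{\qa\qb},\Qg^{\qa\qb}_\qg$, while the $\qs_{\qg,k}$-components are finite sums that cancel by the same telescoping mechanism, now combined with the symmetry $c^{\qa\qb}_\qg=c^{\qb\qa}_\qg$ and WDVV.

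\emph{Main obstacle.} The hard part is the family $\diff{}{\qt_m}$ with $m\ge2$: these flows are quadratic in the odd variables and are given by a delicate summation, so a head-on verification of both their compatibility with \eqref{2-8} and their mutual commutativity is where the computation actually lives. The route I would pursue is to recognize $\diff{}{\qt_m}$ as the Hamiltonian flow $D_{P_m}$ of a sequence of bivectors $P_m$ — local for $m=0,1$ and generally non-local for $m\ge2$ — produced from $(P_0,P_1)$ and the vector field $Z=\int\qs_1$ by the bihamiltonian recursion, so that exactness (Theorem \ref{AG}(1)) together with $[P_0,P_1]=0$ forces $[P_k,P_l]=0$ for all $k,l$; then \eqref{dc} gives $\fk{\diff{}{\qt_k}}{\diff{}{\qt_l}}=\pm D_{[P_k,P_l]}=0$, and similarly the even--odd brackets reduce to $[X_{\qa,p},P_m]=0$, which follows from the bihamiltonian recursion of Theorem \ref{AG}(3). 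The price is that the Schouten bracket, the map $D_{(\cdot)}$ and the identity \eqref{dc} must first be extended to a setting that accommodates the non-locality of the $P_m$; carrying that out rigorously — or, absent it, pushing the explicit telescoping identities all the way through — is the core of the proof.
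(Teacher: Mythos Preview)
The paper does not prove this theorem; it is quoted from \cite{liu2020super}. What the paper \emph{does} contain is a proof of the generalization Theorem~\ref{AC} in Sect.\,\ref{AV}, for an arbitrary deformation $(P_0,P_1)$ of $(P_0^{[0]},P_1^{[0]})$, and that argument specializes to the undeformed case. So the relevant comparison is with the methods of Sect.\,\ref{AV}.

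Your plan is not wrong, but the route taken there is different and considerably more economical. The paper does not pass through a generating series $\Sigma_\qa(z)$, nor does it attempt to realize $\diff{}{\qt_m}$ as $D_{P_m}$ for a non-local bivector and then extend the Schouten calculus. Instead it introduces the shift operators $T_k$ and $T_{k,l}$ and \emph{defines} the odd flows by \eqref{AH}; compatibility with the recursion \eqref{2-12} is then the short Lemma~\ref{lem2-8} plus one application of \eqref{a2-4}, using only $[P_0,P_1]=0$ and $[P_1,P_1]=0$. For the commutativity of the odd flows the paper exploits precisely the antisymmetry $\diff{\qs_{\qa,k}}{\qt_m}=-\diff{\qs_{\qa,m}}{\qt_k}$ that is built into the formulas: one starts from the trivial relation $\bigl[\diff{}{\qt_0},\diff{}{\qt_0}\bigr]\qs_{\qa,0}=0$, propagates it along \eqref{2-11} to all $\qs_{\qa,k}$, uses the antisymmetry to read this as $\bigl[\diff{}{\qt_0},\diff{}{\qt_k}\bigr]\qs_{\qa,0}=0$, propagates again, and one more application of antisymmetry gives $\bigl[\diff{}{\qt_k},\diff{}{\qt_l}\bigr]\qs_{\qa,0}=0$. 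This is the content of Proposition~2.6, and it sidesteps entirely the non-locality problem you flag as the main obstacle.

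What each approach buys: your non-local-$P_m$ viewpoint is conceptually the ``right'' explanation (the paper itself alludes to it just before Theorem~\ref{g0super1}), but making it rigorous requires building a Schouten calculus for non-local functionals, which is substantial extra work and is exactly what you identify as the hard part. The shift-operator/induction argument avoids that entirely, uses only the standard local identities \eqref{a2-3}--\eqref{a2-4}, and carries over verbatim to the deformed setting of Theorem~\ref{AC}. If you want to complete your proof as stated, the honest path is the direct telescoping you mention; but the induction-via-antisymmetry trick is both shorter and the one actually used.
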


The system described in Theorem \ref{g0super1} is a super extension of the Principal Hierarchy, since the reduction obtained by setting all the odd variables to be zero yields the original Principal Hierarchy. The super extension of the tau-cover \eqref{g0tau} can be constructed by introducing another family of odd variables $\Phi^{m}_{\qa,p}$ for $p,m\geq 0$ and we call it the super tau-cover of the Principal Hierarchy. It is given in the following theorem.
\begin{Th}[\cite{liu2020super}]
\label{g0super2}
The mutually commuting flows
\begin{align*}
\diff{f_{\qa,p}}{t^{\qb,q}} &= \Qo_{\qa,p;\qb,q},\\
\diff{f_{\qa,p}}{\qt_m} &= \Phi_{\qa,p}^m,\\
\diff{\Phi_{\qa,p}^m}{t^{\qb,q}} &= \diff{\Qo_{\qa,p;\qb,q}}{\qt_m},\\
\diff{\Phi_{\qa,p}^m}{\qt_k} &= \Qd_{\qa,p}^{k,m},
\end{align*}
together with the ones presented in Theorem \ref{g0super1}, give the super tau-cover of the Principal Hierarchy associated with $M$, where $\Qd_{\qa,p}^{k,m}$ are defined by the formula
\begin{equation*}
\Qd_{\qa,p}^{k,m}=-\Qd_{\qa,p}^{m,k}=\eta^{\qg\ql}\qp_\ql h_{\qa,p}\Qg_\qg^{\qd\mu}\left(\sum_{i=0}^{k-m-1}\qs_{\mu,m+i}\qs_{\qd,k-i-1}^1\right),\quad k\ge m.
\end{equation*}
\end{Th}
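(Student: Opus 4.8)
The plan is to exploit the triangular structure of the system. The only place where $f_{\qa,p}$ or $\Phi_{\qa,p}^m$ occurs on a right-hand side is the term $\Phi_{\qa,p}^m$ in $\diff{f_{\qa,p}}{\qt_m}$; thus the $(v^\qa;\qs_{\qa,k})$-sector is closed, the evolutions of $\Phi_{\qa,p}^m$ are prescribed in terms of $(v;\qs)$ alone, and the evolutions of $f_{\qa,p}$ are prescribed in terms of $(v;\qs;\Phi)$. By Theorem \ref{g0super1} the $(v;\qs)$-flows are mutually commuting and compatible with \eqref{2-8}, and since $f_{\qa,p},\Phi_{\qa,p}^m$ do not enter \eqref{2-8}, compatibility with those recursion relations is inherited. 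It therefore remains to verify that every graded commutator of the even flows $\diff{}{t^{\qb,q}}$ and the odd flows $\diff{}{\qt_m}$ annihilates $f_{\qa,p}$ and $\Phi_{\qa,p}^m$ --- six families of identities in all.

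For $f_{\qa,p}$ these are essentially formal. The $t$-$t$ bracket produces $\diff{}{t^{\qg,r}}\Qo_{\qa,p;\qb,q}=\diff{}{t^{\qb,q}}\Qo_{\qa,p;\qg,r}$, which is the tau-symmetry of $\Qo$ recalled before the statement (the $v$-flows here being the Principal Hierarchy flows). The $t$-$\qt$ bracket produces $\diff{}{t^{\qb,q}}\Phi_{\qa,p}^m=\diff{}{\qt_m}\Qo_{\qa,p;\qb,q}$, which is the definition of $\diff{\Phi_{\qa,p}^m}{t^{\qb,q}}$. The $\qt$-$\qt$ bracket of $\diff{}{\qt_k}$ and $\diff{}{\qt_m}$ produces $\Qd_{\qa,p}^{k,m}+\Qd_{\qa,p}^{m,k}=0$, which is the asserted antisymmetry of $\Qd$, the diagonal case $\Qd_{\qa,p}^{k,k}=0$ coming from the empty summation range.

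For $\Phi_{\qa,p}^m$ the $t$-$t$ bracket again follows by the chain rule from the commutativity of $\diff{}{t^{\qb,q}},\diff{}{t^{\qg,r}}$ and $\diff{}{\qt_m}$ on $(v^\qa;\qs_{\qa,k})$ combined with the tau-symmetry of $\Qo$. The two remaining identities,
\[
\diff{}{t^{\qb,q}}\Qd_{\qa,p}^{k,m}=\diff{}{\qt_k}\diff{}{\qt_m}\Qo_{\qa,p;\qb,q},\qquad
\diff{}{\qt_k}\Qd_{\qa,p}^{l,m}+\diff{}{\qt_l}\Qd_{\qa,p}^{k,m}=0,
\]
carry the actual content. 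I would verify them by substituting the explicit formula for $\Qd$ and expanding both sides using: the $t$- and $\qt$-flows on $v,\qs$ from Theorem \ref{g0super1}; the recursion $\qp_\qb\qp_\qg h_{\qa,p+1}=c^\ql_{\qb\qg}\qp_\ql h_{\qa,p}$ together with $\Qg^{\qa\qb}_\qg=\kk{\tfrac12-\mu_\qb}c^{\qa\qb}_\qg$; the recursion relations \eqref{2-8}; and the flatness of $g^{\qa\qb}_1$ with the symmetry $\Qg_{\qa\qb}^\qg=-g_{\qa\ql}\Qg^{\ql\qg}_\qb$ of its Christoffel symbols from Theorem \ref{hydro-flat}. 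Alternatively, one can \emph{derive} the formula for $\Qd_{\qa,p}^{k,m}$ rather than postulate it: the first identity together with the antisymmetry pins $\Qd$ down up to a differential polynomial killed by all $\diff{}{t^{\qb,q}}$, which degree counting forces to vanish, after which only the second identity need be checked.

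The hardest part will be the $\qt$-$\qt$ identity $\diff{}{\qt_k}\Qd_{\qa,p}^{l,m}+\diff{}{\qt_l}\Qd_{\qa,p}^{k,m}=0$. Applying $\diff{}{\qt_k}$ to $\Qd_{\qa,p}^{l,m}$ generates three kinds of terms --- from $\qp_\ql h_{\qa,p}$ and from $\Qg^{\qd\mu}_\qg$ (both via $\diff{v^\qe}{\qt_k}=\eta^{\qe\qz}\qs_{\qz,k}^1$), and from the bilinear expression $\sum_i\qs_{\mu,m+i}\qs_{\qd,l-i-1}^1$ in the odd variables (via the explicit $\diff{\qs_{\qa,j}}{\qt_k}$ of Theorem \ref{g0super1}) --- each of which is itself quadratic in the $\qs$'s. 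After antisymmetrizing in $k\leftrightarrow l$ one must reorganize the resulting nested sums $\sum_i\sum_j$ over $\qs$-products and show they cancel; this cancellation rests on using \eqref{2-8} to trade the shifted variables $\qs_{\cdot,j+1}^1$ for lower-index combinations weighted by $g_1$ and $\Qg$, and on the vanishing of the curvature of $g_1$. The $t$-$\qt$ identity relies on the same ingredients but is milder, only one of the two derivations being odd; in both cases the real labor is the bookkeeping of indices and summation ranges rather than any conceptual difficulty.
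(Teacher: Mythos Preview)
The paper does not prove this theorem: it is quoted verbatim from \cite{liu2020super} (note the bracketed citation in the theorem heading) and no argument is supplied here. So there is nothing in the present paper to compare your attempt against.

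That said, your outline is structurally sound. The triangular observation --- that the $(v;\qs)$-sector is closed, the $\Phi$-sector depends only on $(v;\qs)$, and the $f$-sector depends only on $(v;\qs;\Phi)$ --- is exactly the right way to reduce the problem, and your identification of the three $f$-commutators as formal consequences of tau-symmetry and the definitions is correct. You are also right that the substance lies in the two identities
\[
\diff{}{t^{\qb,q}}\Qd_{\qa,p}^{k,m}=\diff{}{\qt_k}\diff{}{\qt_m}\Qo_{\qa,p;\qb,q},
\qquad
\diff{}{\qt_k}\Qd_{\qa,p}^{l,m}+\diff{}{\qt_l}\Qd_{\qa,p}^{k,m}=0,
\]
and that the second is the harder of the two. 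However, your proposal stops at the point where the real work begins: you describe the ingredients (the recursion \eqref{2-8}, the formulae for $\Qg^{\qa\qb}_\qg$, flatness of $g_1$) but do not actually execute the cancellation. As written this is a plan, not a proof; to complete it you would need to carry out the nested-sum manipulation you sketch, and in particular make precise how the curvature-vanishing of $g_1$ enters. If you want the original argument, it is in \cite{liu2020super}.
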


It was shown in \cite{liu2020super} that the odd variables $\Phi_{\qa,p}^m$ satisfy the recursion relations
\[
-\left(\frac{2p-1}{2}+\mu_\qa\right)\Phi_{\qa,p}^m = \left(\frac 12+\mu_\ql\right)\eta^{\ql\qe}(\qp_\ql h_{\qa,p})\qs_{\qe,m}+\sum_{k=1}^p(R_k)^\xi_\qa\Phi_{\xi,p-k}^m-\Phi_{\qa,p-1}^{m+1}
\]
with the initial condition $\Phi^m_{\qa,0} = \qs_{\qa,m}$. So when the diagonal matrix $\mu$ of a Frobenius manifold $M$ satisfies the condition $\frac{1-2k}{2}\notin\mathrm{Spec}(\mu)$ for any $k = 1,2,\cdots$, all the variables $\Phi^m_{\qa,p}$ are linear combinations of $\qs_{\qe,k}$ with coefficients being smooth functions of $v^1,\cdots,v^n$.

For an arbitrary tau-symmetric bihamiltonian deformation of the Principal Hierarchy associated with a semisimple Frobenius manifold, we are to construct in the remaining part of this section its super extension and super tau-cover by generalizing the constructions given in Theorem \ref{g0super1} and Theorem \ref{g0super2}.

\subsection{Super extensions of bihamiltonian integrable hierarchies}
\label{AV}
In this subsection, we construct a super extension for a given bihamiltonian integrable hierarchy with hydrodynamic limit.

We fix an $n$-dimensional smooth manifold $M$ and a semisimple hydrodynamic bihamiltonian structure $(P_0^{[0]},P_1^{[0]})$ defined on $J^\infty(\hat M)$. Let us choose $(w^\qa;\phi_{\qa})$ as local coordinates on $\hat M$.
Recall that a Miura type transformation is a choice of $n$ differential polynomials $\tilde w^1,\cdots \tilde w^n\in\hm A^0$ such that
\[
\det\kk{\diff{\tilde w^\qa_0}{w^\qb}}\neq 0,
\]
where $\tilde w^\qa_0$ is the differential degree zero component of $\tilde w^\qa$. By defining $\tilde w^{\qa,s} = \qp_x^s\tilde w^\qa$, it is easy to see that we can represent any differential polynomial in $w^{\qa,s}$ by a differential polynomial in $\tilde w^{\qa,s}$. Therefore a Miura type transformation can be viewed as a special type of coordinate transformation on $J^\infty(M)$. The extension of the Miura type transformations on $J^\infty(\hat M)$ is given by the following theorem \cite{liu2011jacobi}.
\begin{Th}[\cite{liu2011jacobi}]
\label{local-miura}
A Miura type transformation induces a change of coordinates from $(w^{\qa,s};\phi_\qa^s)$ to $(\tilde w^{\qa,s};\tilde \phi_\qa^s)$ given by
\[
\phi_\qa^s = \qp_x^s\sum_{t\geq 0}(-\qp_x)^t\kk{\diff{\tilde w^{\qb}}{w^{\qa,t}}\tilde\phi_\qb}.
\]
\end{Th}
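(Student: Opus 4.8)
The plan is to observe that the extension of the Miura type transformation to the odd fibre coordinates is completely determined by the requirement that it be an automorphism of the Schouten--Nijenhuis bracket, equivalently that each $\phi_\qa$ transform under the same chain rule as the variational derivative $\vard{}{w^\qa}$. Granting this, the $s=0$ case of the displayed formula is forced, and the content of the theorem is that the resulting assignment is a genuine invertible change of coordinates on $J^\infty(\hat M)$ whose inverse is again given by differential polynomials, and that it is compatible with $\qp_x$.

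Concretely, I would first introduce the Fréchet derivative $D^\qb_\qa:=\sum_{t\geq 0}\diff{\tilde w^\qb}{w^{\qa,t}}\qp_x^t$ and its formal adjoint $(D^\qb_\qa)^\ast=\sum_{t\geq 0}(-\qp_x)^t\circ\diff{\tilde w^\qb}{w^{\qa,t}}$. Since $\tilde w^\qb\in\hm A^0$ is a differential polynomial, these are honest differential operators of finite order with coefficients in $\hm A^0$, and the differential degree zero part of $(D^\qb_\qa)$ is the matrix $\diff{\tilde w^\qb_0}{w^\qa}$, which is invertible by the definition of a Miura type transformation. The chain rule for variational derivatives gives $\vard{F}{w^\qa}=\sum_\qb(D^\qb_\qa)^\ast\vard{F}{\tilde w^\qb}$ for every local functional $F$, so declaring $\phi_\qa=\sum_\qb(D^\qb_\qa)^\ast\tilde\phi_\qb$ is exactly the $s=0$ instance of the formula; setting $\tilde\phi^s_\qa:=\qp_x^s\tilde\phi_\qa$ and applying $\qp_x^s$ (which commutes with every $(-\qp_x)^t$) recovers the general $s$, and this $\qp_x^s$ prescription is precisely what makes the extension commute with $\qp_x$. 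One then checks from the definitions of $\vard{}{w^\qa}$, $\vard{}{\phi_\qa}$ and $[-,-]$ that the transformation so defined preserves the Schouten--Nijenhuis bracket.

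It remains to prove invertibility. As recalled before the theorem, the even part of a Miura type transformation is an invertible change of coordinates on $J^\infty(M)$, the inverse transformation $w^\qa=w^\qa(\tilde w,\tilde w^1,\dots)$ being obtained by solving $\tilde w^\qa=\tilde w^\qa(w,w^1,\dots)$ recursively on the differential degree and using invertibility of $\diff{\tilde w^\qa_0}{w^\qb}$. For the odd variables I would apply the same recipe to this inverse transformation, i.e.\ put $\tilde\phi_\qa:=\sum_{t\geq 0}(-\qp_x)^t\kk{\diff{w^\qb}{\tilde w^{\qa,t}}\phi_\qb}$, and then check it is a two-sided inverse. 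If $\tilde D$ denotes the operator-matrix attached to $w=w(\tilde w)$, then by the chain rule for Fréchet derivatives $\tilde D D=D\tilde D=\mathrm{id}$ (the derivative of the identity map), whence $D^\ast\tilde D^\ast=(\tilde D D)^\ast=\mathrm{id}$ and $\tilde D^\ast D^\ast=(D\tilde D)^\ast=\mathrm{id}$; substituting the two formulas into one another therefore returns $\phi_\qa$ and $\tilde\phi_\qa$ respectively. This gives the desired change of coordinates on $J^\infty(\hat M)$.

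I expect the main obstacle to be the bookkeeping rather than any single identity: one must check that the recursion producing the inverse even transformation terminates, so that $w^\qa(\tilde w,\dots)$ is a genuine differential polynomial and not a formal series of unbounded order, and that the operator-matrices $D,\tilde D,D^\ast,\tilde D^\ast$ and all their composites have coefficients in $\hm A^0$ and finite order on each homogeneous component $\hm A_d$ — in particular that the identity $\tilde D D=\mathrm{id}$ holds at the level of operators on $\hm A$ once $\tilde D$ is re-expressed in the $w$-coordinates. Once the group property of Miura transformations and this chain rule are in hand, the algebraic identities above are purely formal and the remaining verifications are routine unwindings of the definitions.
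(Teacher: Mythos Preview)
The paper does not prove this theorem: it is quoted from \cite{liu2011jacobi} and stated without argument, so there is no ``paper's own proof'' to compare against. Your proposal is a reasonable reconstruction of the standard argument and is essentially correct in outline: the key point is indeed that the odd fibre coordinates $\phi_\qa$ are designed to transform exactly as the variational derivatives $\vard{}{w^\qa}$ do, so the $s=0$ formula is forced by the chain rule $\vard{F}{w^\qa}=(D^\qb_\qa)^\ast\vard{F}{\tilde w^\qb}$, and applying $\qp_x^s$ gives the rest. Your invertibility argument via $(\tilde D D)^\ast=\mathrm{id}$ is also the right one.

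One correction to your list of anticipated obstacles: you worry that the recursion for the inverse even transformation must terminate so that $w^\qa(\tilde w,\dots)$ is a genuine differential polynomial ``and not a formal series of unbounded order''. In fact the ring $\hm A$ in this paper is $C^\infty(u)[[u^{i,s+1},\qth_i^s]]$, a formal power series ring in the jet variables, so formal series of unbounded differential order are allowed and expected. The inverse of a Miura type transformation is in general such a series (graded by differential degree, with invertible degree-zero part), and all your operator identities hold in that graded/formal sense; there is no termination issue to resolve.
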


Now let $(P_0,P_1)$ be any given deformation of $(P_0^{[0]},P_1^{[0]})$. Denote by $\mathcal P_0$ and $\mathcal P_1$ the Hamiltonian operators of $P_0$ and $P_1$ under the coordinate $(w^{\qa,s};\phi_\qa^s)$. We introduce another family of odd variables $\phi^s_{\qa,m}$ for $m\geq 0$ and extend the vector field $\qp_x$ to the following one:
\[
\qp_x = \sum_{s\geq 0}w^{\qa,s+1}\diff{}{w^{\qa,s}}+\sum_{k,s\geq 0}\phi_{\qa,k}^{s+1}\diff{}{\phi_{\qa,k}^s}.
\]
In what follows we also use the notations $\phi^s_{\qa,0} = \phi^s_\qa$ and $\phi^0_{\qa,m} = \phi_{\qa,m}$. Inspired by  \eqref{2-8}, we require that these new odd variables satisfy the recursion relations
\begin{equation}
\label{2-9}
\mathcal P_0^{\qa\qb}\phi_{\qb,m+1} = \mathcal P_1^{\qa\qb}\phi_{\qb,m},\quad m\geq 0.
\end{equation}
We first show that \eqref{2-9} is well defined in the sense that it is invariant under Miura type transformations.
\begin{Prop}
The Miura type transformation from $(w^{\qa,s};\phi_\qa^s)$ to $(\tilde w^{\qa,s};\tilde \phi_\qa^s)$ induces a transformation for the new odd variables $\phi^s_{\qa,m}$ given by
\begin{equation}
\label{2-10}
\phi_{\qa,m}^s = \qp_x^s\sum_{t\geq 0}(-\qp_x)^t\kk{\diff{\tilde w^{\qb}}{w^{\qa,t}}\tilde\phi_{\qb,m}},\quad m\geq 1
\end{equation}
such that the recursion relation \eqref{2-9} is invariant.
\end{Prop}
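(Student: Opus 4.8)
The plan is to reduce the statement to the standard transformation rule for Hamiltonian operators under Miura type transformations and then check the invariance of \eqref{2-9} by a single substitution. Throughout I abbreviate the matrix-valued differential operators attached to the Miura transformation by
\[
L^\qg_\qa := \sum_{t\geq 0}\diff{\tilde w^\qg}{w^{\qa,t}}\,\qp_x^t,\qquad (L^\dagger)_\qa^\qb := \sum_{t\geq 0}(-\qp_x)^t\circ\diff{\tilde w^\qb}{w^{\qa,t}},
\]
so that Theorem \ref{local-miura} reads $\phi_\qa = (L^\dagger)_\qa^\qb\tilde\phi_\qb$, and formula \eqref{2-10} is precisely the assertion that one and the same operator relates the odd variables at every level: $\phi_{\qa,m} = (L^\dagger)_\qa^\qb\tilde\phi_{\qb,m}$ with $\phi_{\qa,m}^s = \qp_x^s\phi_{\qa,m}$. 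Since the differential-degree-zero part of $L$ is the Jacobian matrix with entries $\diff{\tilde w^\qg_0}{w^\qa}$, which is non-degenerate by the definition of a Miura type transformation, $L$ is an invertible operator.

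The key input is the identity
\[
\tilde{\mathcal P}_i^{\qg\qd} = L^\qg_\qa\circ\mathcal P_i^{\qa\qb}\circ(L^\dagger)_\qb^\qd,\qquad i = 0,1,
\]
relating the Hamiltonian operators of $P_i$ written in the two coordinate systems. This is standard (see \cite{liu2018lecture}); for completeness I would derive it from the coordinate-independence of the derivation $D_{P_i}$ of \eqref{2-2}. Namely, since $\tilde w^\qg$ contains no odd variables, evaluating $D_{P_i}(\tilde w^\qg)$ in the old chart gives $L^\qg_\qa\big(\vard{P_i}{\phi_\qa}\big)$, while in the new chart it equals $\vard{P_i}{\tilde\phi_\qg}$; substituting $\vard{P_i}{\phi_\qa} = \mathcal P_i^{\qa\qb}\phi_\qb = \mathcal P_i^{\qa\qb}(L^\dagger)_\qb^\qd\tilde\phi_\qd$ and $\vard{P_i}{\tilde\phi_\qg} = \tilde{\mathcal P}_i^{\qg\qd}\tilde\phi_\qd$, and using that the odd symbols $\tilde\phi_\qd,\tilde\phi_\qd^1,\dots$ are algebraically independent over the differential polynomials in $\tilde w$, one obtains the displayed equality of operators.

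With this in hand the verification is a one-line computation. Assume \eqref{2-9} holds in the coordinates $(w^{\qa,s};\phi^s_{\qa,m})$; substituting $\phi_{\qb,m} = (L^\dagger)_\qb^\qd\tilde\phi_{\qd,m}$ and $\phi_{\qb,m+1} = (L^\dagger)_\qb^\qd\tilde\phi_{\qd,m+1}$ turns it into $\mathcal P_0^{\qa\qb}(L^\dagger)_\qb^\qd\tilde\phi_{\qd,m+1} = \mathcal P_1^{\qa\qb}(L^\dagger)_\qb^\qd\tilde\phi_{\qd,m}$; applying $L^\qg_\qa$ to both sides and invoking the transformation rule gives exactly $\tilde{\mathcal P}_0^{\qg\qd}\tilde\phi_{\qd,m+1} = \tilde{\mathcal P}_1^{\qg\qd}\tilde\phi_{\qd,m}$, which is \eqref{2-9} in the new coordinates; invertibility of $L$ yields the converse implication, so \eqref{2-9} is genuinely invariant. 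I do not foresee a real obstacle here: the only points needing care are the bookkeeping of the formal adjoint $(L^\dagger)$ and the index positions in the operator transformation rule, together with the trivial check that \eqref{2-10} respects $\qp_x\phi^s_{\qa,m} = \phi^{s+1}_{\qa,m}$. The substantive work of the paper lies in the cohomological arguments that come later, not in this compatibility verification.
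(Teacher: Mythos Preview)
Your proposal is correct and follows essentially the same approach as the paper: both arguments rest on the standard transformation law $\tilde{\mathcal P}_i = L\circ\mathcal P_i\circ L^\dagger$ for Hamiltonian operators under Miura type transformations, together with the substitution $\phi_{\qb,m} = (L^\dagger)_\qb^\qd\tilde\phi_{\qd,m}$ into \eqref{2-9}. The paper simply quotes this transformation law as well known and writes out the resulting chain of equalities, whereas you additionally sketch a derivation of the law from the coordinate-independence of $D_{P_i}$ and note the invertibility of $L$ for the converse direction; these are welcome clarifications but do not change the substance of the argument.
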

\begin{proof}
Denote by $\mathcal P_i$ and $\tilde{\mathcal P_i}$ the Hamiltonian operator of $P_i$ in the coordinates $(w^{\qa,s};\phi_\qa^s)$ and $(\tilde w^{\qa,s};\tilde \phi_\qa^s)$ respectively for $i = 0,1$. Then it is well known that
\[
\tilde{\mathcal P_i}^{\qa\qb} = \sum_{s\geq 0}\diff{\tilde w^\qa}{w^{\ql,s}}\qp_x^s\comp\mathcal P_i^{\ql\qe}\comp\sum_{s\geq 0}(-\qp_x)^s\comp\diff{\tilde w^\qb}{w^{\qe,s}}.
\]
Therefore by using the relation \eqref{2-9}, it is easy to see that:
\begin{align*}
\tilde{\mathcal P_0}^{\qa\qb}\tilde\phi_{\qb,m+1}&=\sum_{s\geq 0}\diff{\tilde w^\qa}{w^{\ql,s}}\qp_x^s\comp\mathcal P_0^{\ql\qe}\comp\sum_{s\geq 0}(-\qp_x)^s\comp\diff{\tilde w^\qb}{w^{\qe,s}}(\tilde\phi_{\qb,m+1})\\
&=\sum_{s\geq 0}\diff{\tilde w^\qa}{w^{\ql,s}}\qp_x^s\comp\mathcal P_0^{\ql\qe}\phi_{\qe,m+1}\\
&=\sum_{s\geq 0}\diff{\tilde w^\qa}{w^{\ql,s}}\qp_x^s\comp\mathcal P_1^{\ql\qe}\phi_{\qe,m}\\
&=\sum_{s\geq 0}\diff{\tilde w^\qa}{w^{\ql,s}}\qp_x^s\comp\mathcal P_1^{\ql\qe}\comp\sum_{s\geq 0}(-\qp_x)^s\comp\diff{\tilde w^\qb}{w^{\qe,s}}(\tilde\phi_{\qb,m})\\
&=\tilde{\mathcal P_1}^{\qa\qb}\tilde\phi_{\qb,m}.
\end{align*}
Thus we see that the recursion relations \eqref{2-9} are preserved under the change of variables \eqref{2-10}. The proposition is proved.
\end{proof}

By using the theory of bihamiltonian cohomology \cite{DLZ-1} for $(P_0^{[0]},P_1^{[0]})$, we can choose a coordinate system $(v^{\qa,s};\qs_\qa^s)$ such that 
\[
P_0 = P_0^{[0]} = \frac 12\int\eta^{\qa\qb}\qs_\qa\qs_\qb^1,
\]
and the $\hm F^2_2$ component of $P_1$ vanishes. From now on, we will always use $(v^{\qa,s};\qs_\qa^s)$ to denote the coordinate system described above. We use the notation $\hm A^+$ to denote the extension of $\hm A$ by including the odd variables $\qs_{\qa,m}^s$ for $m\geq 1$ satisfying the recursion relations
\begin{equation}
\label{2-11}
\eta^{\qa\qb}\qs_{\qb,m+1}^1 = \mathcal P_1^{\qa\qb}\qs_{\qb,m},\quad m\geq 0.
\end{equation}
As before we use the notation $\qs^s_{\qa,0} = \qs^s_\qa$ and $\qs^0_{\qa,m} = \qs_{\qa,m}$. We will still use $\qp_x$ to denote the vector field on $\hm A^+$ given by
\[
\partial_x = \sum_{s\geq 0}v^{\qa,s+1}\diff{}{v^{\qa,s}}+\sum_{s,m\geq 0}\qs_{\qa,m}^{s+1}\diff{}{\qs_{\qa,m}^s}.
\]
For any element $f\in\hm A^+$, we say $f$ is \textit{local} if it can be represented by an element in $\hm A$ and we say $f$ is \textit{non-local} if it is not local. Note that on the space $\hm A^+$, the differential degree is not well-defined but the super degree is still well defined by setting the super degree of $\qs_{\qa,m}^s$ being 1. We will use $\hm A^{+,p}$ to denote the set of homogeneous elements with super degree $p$.
\begin{Ex}
Consider the following bihamiltonian structure of the KdV equation \eqref{DJ}:
\[
\mathcal P_0 = \qp_x,\quad \mathcal P_1 = v\qp_x+\frac 12 v_x+\frac{\qe^2}{8}\qp_x^3.
\]
We introduce odd variables $\qs_m^s$ for $s, m\geq 0$ such that they satisfy the recursion relations
\[
\qs_{m+1}^1 = v\qs_{m}^1+\frac 12 v_x\qs_{m}+\frac{\qe^2}{8}\qs_{m}^3,\quad m\geq 0.
\]
Then the ring $\hm A^+$ is given by the quotient
\[
\hm A^+ = C^\infty(v)[[v^{(s+1)},\qs_m^s\mid m,s\geq 0]]/J,
\]
where $J$ is the differential ideal generated by
\[
v\qs_{m}^1+\frac 12 v_x\qs_{m}+\frac{\qe^2}{8}\qs_{m}^3-\qs_{m+1}^1,\quad m\geq 0.
\]
Then we see that $\qs_1^1$ is local but $\qs_2^1$ is non-local.
\end{Ex}

\begin{Def}
For $k,l\geq 0$, we define the shift operators
\[
T_k:\hm A^1\to\hm A^{+,1},\quad T_{k,l}:\hm A^2\to\hm A^{+,2}
\]
to be the linear operators given by
\[
T_k(f\qs_{\qa,0}^s) = f\qs_{\qa,k}^s,\quad f\in\hm A^0,
\]
\[
T_{k,l} = -T_{l,k};\quad T_{k,l}(f\qs_{\qa,0}^t\qs_{\qb,0}^s) = f\sum_{i=0}^{l-k-1}\qs_{\qa,k+i}^t\qs_{\qb,l-i-1}^s,\quad k\leq l,\ f\in\hm A^0.
\]
In particular, $T_{k,k} = 0$.
\end{Def}
The following lemmas are obvious from the definition.
\begin{Lem}
The shift operators $T_k$ and $T_{k,l}$ commute with $\qp_x$.
\end{Lem}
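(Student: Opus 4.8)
The plan is to verify the two commutation identities directly on a spanning set. Since $\hm A^1$ is free over $\hm A^0$ on the odd generators $\qs_{\qa,0}^s$, and $\hm A^2$ is free (after fixing an ordering convention) on the $\qs_{\qa,0}^t\qs_{\qb,0}^s$, the operators $T_k$ and $T_{k,l}$ are obtained by extending the formulas in the definition $\hm A^0$-linearly; and since $\qp_x$ is a derivation preserving $\hm A^0$, it suffices to check $\qp_x\circ T_k = T_k\circ\qp_x$ on monomials $f\qs_{\qa,0}^s$ and $\qp_x\circ T_{k,l}=T_{k,l}\circ\qp_x$ on monomials $f\qs_{\qa,0}^t\qs_{\qb,0}^s$ with $f\in\hm A^0$. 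The only inputs are the Leibniz rule for $\qp_x$ and the identity $\qp_x\qs_{\qa,m}^s = \qs_{\qa,m}^{s+1}$ built into the definition of the extended vector field.

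For $T_k$ I would expand $\qp_x(f\qs_{\qa,0}^s) = (\qp_x f)\qs_{\qa,0}^s + f\qs_{\qa,0}^{s+1}$ by Leibniz, apply $T_k$ to obtain $(\qp_x f)\qs_{\qa,k}^s + f\qs_{\qa,k}^{s+1}$, and compare with $\qp_x(T_k(f\qs_{\qa,0}^s)) = \qp_x(f\qs_{\qa,k}^s) = (\qp_x f)\qs_{\qa,k}^s + f\qs_{\qa,k}^{s+1}$; the two agree. For $T_{k,l}$, using $T_{k,l} = -T_{l,k}$ and $T_{k,k}=0$ I may assume $k<l$; expanding $\qp_x(f\qs_{\qa,0}^t\qs_{\qb,0}^s)$ into three terms by Leibniz and applying $T_{k,l}$ produces
\[
(\qp_x f)\sum_{i=0}^{l-k-1}\qs_{\qa,k+i}^t\qs_{\qb,l-i-1}^s + f\sum_{i=0}^{l-k-1}\big(\qs_{\qa,k+i}^{t+1}\qs_{\qb,l-i-1}^s + \qs_{\qa,k+i}^t\qs_{\qb,l-i-1}^{s+1}\big),
\]
and expanding $\qp_x\big(f\sum_{i=0}^{l-k-1}\qs_{\qa,k+i}^t\qs_{\qb,l-i-1}^s\big)$ by Leibniz, with $\qp_x\qs_{\qa,k+i}^t = \qs_{\qa,k+i}^{t+1}$ and $\qp_x\qs_{\qb,l-i-1}^s = \qs_{\qb,l-i-1}^{s+1}$, yields the identical expression, since the summation index is untouched by $\qp_x$.

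I do not expect any genuine obstacle: the statement is purely formal, as the paper indicates. The only point deserving a word is the passage from monomials to all of $\hm A^1$ and $\hm A^2$, i.e. the freeness used above to make $T_k$ and $T_{k,l}$ well defined in the first place; this is already implicit in the definition, so nothing further is needed.
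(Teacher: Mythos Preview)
Your proposal is correct and is precisely the direct verification the paper has in mind: the paper simply declares the lemma ``obvious from the definition'' without writing out any details, and your Leibniz-rule computation on monomials is exactly how one substantiates that claim.
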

\begin{Lem}
The shift operators $T_k$ and $T_{k,l}$ are globally defined, i.e. they are invariant under Miura type transformations. 
\end{Lem}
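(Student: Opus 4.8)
The plan is to prove that any Miura type transformation intertwines the shift operators constructed in the two coordinate systems it relates; this intertwining is exactly the asserted coordinate‑independence. So I would fix two coordinate systems $(v^{\qa,s};\qs_\qa^s)$ and $(\tilde v^{\qa,s};\tilde\qs_\qa^s)$ related by a Miura type transformation, and denote by $\mu^*$ the induced substitution on $\hm A^+$: on the even jet variables it is the usual change of coordinates $\mu^*\colon\hm A^0\to\tilde{\hm A}^0$ (which commutes with $\qp_x$), and on the odd variables, by Theorem~\ref{local-miura} and the transformation law \eqref{2-10}, it is $\mu^*(\qs_{\qa,m}^s)=\qp_x^s\kappa_\qa^{(m)}$ with $\kappa_\qa^{(m)}:=\sum_{t\geq0}(-\qp_x)^t\kk{\diff{\tilde v^\qb}{v^{\qa,t}}\tilde\qs_{\qb,m}}$. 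The argument then rests on one structural observation, which I would record first: $\kappa_\qa^{(0)}$ is local, i.e.\ belongs to $\tilde{\hm A}^1$, it is linear in the odd variables with coefficients in $\tilde{\hm A}^0$, and for every $m$ the element $\kappa_\qa^{(m)}$ is obtained from $\kappa_\qa^{(0)}$ merely by the relabeling $\tilde\qs_\qb^p\mapsto\tilde\qs_{\qb,m}^p$; equivalently, $\kappa_\qa^{(m)}=\tilde T_m\kappa_\qa^{(0)}$, uniformly in $m$.

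Granting this, the case of $T_k$ is immediate. Writing an arbitrary $f\in\hm A^1$ as a finite sum $f=\sum_j g_j\,\qs_{\qa_j}^{s_j}$ with $g_j\in\hm A^0$, one has $\mu^*(T_kf)=\sum_j\mu^*(g_j)\,\qp_x^{s_j}\kappa_{\qa_j}^{(k)}=\sum_j\mu^*(g_j)\,\qp_x^{s_j}\tilde T_k\kappa_{\qa_j}^{(0)}$ on one side, and since $\mu^*(f)=\sum_j\mu^*(g_j)\,\qp_x^{s_j}\kappa_{\qa_j}^{(0)}\in\tilde{\hm A}^1$ while $\tilde T_k$ is $\tilde{\hm A}^0$-linear and commutes with $\qp_x$ (the latter by the preceding lemma), $\tilde T_k\mu^*(f)$ yields the same expression; hence $\mu^*\circ T_k=\tilde T_k\circ\mu^*$. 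For $T_{k,l}$ I would run the parallel computation with $f=\sum_j g_j\,\qs_{\qa_j}^{s_j}\qs_{\qb_j}^{t_j}$: applying $\mu^*$ to $T_{k,l}f=\sum_j g_j\sum_{i=0}^{l-k-1}\qs_{\qa_j,k+i}^{s_j}\qs_{\qb_j,l-i-1}^{t_j}$ produces $\sum_j\mu^*(g_j)\sum_{i=0}^{l-k-1}\bigl(\qp_x^{s_j}\kappa_{\qa_j}^{(k+i)}\bigr)\bigl(\qp_x^{t_j}\kappa_{\qb_j}^{(l-i-1)}\bigr)$, while applying $\tilde T_{k,l}$ to $\mu^*(f)=\sum_j\mu^*(g_j)\bigl(\qp_x^{s_j}\kappa_{\qa_j}^{(0)}\bigr)\bigl(\qp_x^{t_j}\kappa_{\qb_j}^{(0)}\bigr)$ yields the same nested sum, because $\tilde T_{k,l}$ acts $\tilde{\hm A}^0$-linearly in each of the two odd slots and commutes with $\qp_x$, and each factor $\qp_x^{s_j}\kappa_{\qa_j}^{(0)}$, $\qp_x^{t_j}\kappa_{\qb_j}^{(0)}$ is again linear over $\tilde{\hm A}^0$ in the odd variables.

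Verifying the structural observation is routine, being a direct unwinding of Theorem~\ref{local-miura} together with the transformation law \eqref{2-10}. The one point I expect to require care is in the $T_{k,l}$ computation: one must apply $\tilde T_{k,l}$ with a fixed convention for which of the two odd factors counts as the first, keep that convention throughout the manipulation, and check that the outcome is independent of this choice thanks to the antisymmetry $T_{l,k}=-T_{k,l}$, so that no spurious signs from the anticommutativity of the odd variables creep in. This is essentially the same consistency already underlying the well-definedness of $T_{k,l}$ on $\hm A^2$, so I do not anticipate a genuine obstruction, only careful bookkeeping. Once both intertwining identities are established the lemma follows immediately.
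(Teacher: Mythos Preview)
Your proposal is correct and is essentially the only reasonable approach: verify directly that the substitution $\mu^*$ induced by a Miura type transformation intertwines the shift operators in the two coordinate systems, using the transformation law \eqref{2-10} for the extended odd variables. The paper itself gives no proof at all, simply declaring the lemma ``obvious from the definition''; what you have written is a careful unpacking of exactly that observation. Your key identity $\kappa_\qa^{(m)}=\tilde T_m\kappa_\qa^{(0)}$ is immediate from \eqref{2-10} and the commutation of $\tilde T_m$ with $\qp_x$, and from it both intertwining relations follow as you describe. The bookkeeping point you flag for $T_{k,l}$ is real but harmless: the well-definedness of $T_{k,l}$ on $\hm A^2$ (i.e.\ its compatibility with $\qs_\qa^s\qs_\qb^t=-\qs_\qb^t\qs_\qa^s$) follows from the reindexing $i\mapsto l-k-1-i$ in the defining sum, and the same check shows the computation is insensitive to which odd factor is labeled ``first''.
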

\begin{Ex}
Using the shift operator, the recursion relation \eqref{2-11} can be represented by the following formula
\begin{equation}
\label{2-12}
T_{m+1}\vard{P_0}{\qs_{\qa,0}} = T_m \vard{P_1}{\qs_{\qa,0}},\quad m\geq 0.
\end{equation}
\end{Ex}
\begin{Ex}
\label{ex2-4}
The recursion relation \eqref{2-11} can also be represented by the following formulae:
\[
T_{m+1}\kk{D_{P_0}f} = T_m\kk{D_{P_1}f},\quad m\geq 0,\ f\in\hm A^0,
\]
here $D_{P_i}$ are the derivations defined in \eqref{2-2}. Indeed, when $f = v^\qa$ we recover the relation \eqref{2-12}; for general $f\in\hm A^0$, by definition \eqref{2-2}, we have
\begin{align*}
T_{m+1}\kk{D_{P_0}f} = T_{m+1}\sum_{s\geq 0}\diff{f}{v^{\qa,s}}\qp_x^s\vard{P_0}{\qs_{\qa,0}} = T_{m}\sum_{s\geq 0}\diff{f}{v^{\qa,s}}\qp_x^s\vard{P_1}{\qs_{\qa,0}} = T_m\kk{D_{P_1}f}.
\end{align*}
\end{Ex}

With the help of the shift operator, we can generalize the construction given in the previous subsection. We first introduce the following notation.
\begin{Def}

We define a family of odd derivations $\diff{}{\qt_m}$ on $\hm A^+$ by
\begin{equation}\label{AH}
\diff{v^\qa}{\qt_m} = T_m\vard{P_0}{\qs_{\qa,0}},\quad \diff{\qs_{\qa,k}}{\qt_m} = T_{k,m}\vard{P_1}{v^\qa},\quad \fk{\diff{}{\qt_m}}{\qp_x} = 0.
\end{equation}
\end{Def}
We need to check that that this definition is well-defined, i.e., it is compatible with the recursion relations \eqref{2-12}. 
\begin{Lem}
\label{lem2-8}
The following identity holds true for any $X\in\hm A^1$ and $m,k\geq 0$:
\[
\diff{}{\qt_k}T_m(X) = T_{m,k}\kk{D_{P_1}(X)}-T_{m+1,k}\kk{D_{P_0}(X)}.
\]
\end{Lem}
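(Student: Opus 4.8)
The plan is to verify the identity by a direct computation that reduces everything to the defining recursion relation \eqref{2-12} of the odd variables $\qs_{\qa,m}^s$. It suffices to check the identity on elements of the form $X = f\,\qs_{\qa,0}^s$ with $f\in\hm A^0$, since both sides are linear over the ground field and every element of $\hm A^1$ is a finite sum of such monomials. For such an $X$ we have $T_m(X) = f\,\qs_{\qa,m}^s$, so the left-hand side is
\[
\diff{}{\qt_k}\kk{f\,\qs_{\qa,m}^s} = \kk{\diff{f}{\qt_k}}\qs_{\qa,m}^s \pm f\,\qp_x^s\diff{\qs_{\qa,m}}{\qt_k},
\]
where the sign records the parity of $f$ (it is $+$ since $f$ is even), and we have used $\fk{\diff{}{\qt_k}}{\qp_x}=0$ to pull $\qp_x^s$ outside. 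The two terms are then expanded using the definitions in \eqref{AH}: $\diff{f}{\qt_k}$ is obtained from $\diff{v^\qb}{\qt_k}=T_k\vard{P_0}{\qs_{\qb,0}}$ by the Leibniz rule, and $\diff{\qs_{\qa,m}}{\qt_k}=T_{m,k}\vard{P_1}{v^\qa}$ by definition.

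First I would rewrite the right-hand side in the same monomial basis. By Example \ref{ex2-4} (or rather its analogue for general $X$, which follows by the same Leibniz-rule argument used there), $D_{P_i}(X) = D_{P_i}(f)\,\qs_{\qa,0}^s + (-1)^{p_i}f\,\qp_x^s D_{P_i}(\qs_{\qa,0})$, where $D_{P_i}(\qs_{\qa,0})$ is the $\qth$-component of the Hamiltonian derivation, equal to $(-1)^{p_i}\qp_x$ applied appropriately to $\vard{P_i}{v^\qa}$; concretely $D_{P_i}(f) = \sum_t \diff{f}{v^{\qb,t}}\qp_x^t\vard{P_i}{\qs_{\qb,0}}$ and the $\qs$-derivative piece produces $\vard{P_i}{v^\qa}$. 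Applying $T_{m,k}$ or $T_{m+1,k}$ and collecting, the right-hand side $T_{m,k}(D_{P_1}X) - T_{m+1,k}(D_{P_0}X)$ splits into a ``coefficient part'' (terms where the shift operator acts on the $\vard{P_i}{\qs_{\qb,0}}$ coming from $D_{P_i}(f)$) and a ``tail part'' (terms involving $\vard{P_i}{v^\qa}$). The coefficient part should match $\kk{\diff{f}{\qt_k}}\qs_{\qa,m}^s$ precisely because the recursion \eqref{2-12} says $T_{m+1}\vard{P_0}{\qs_{\qb,0}} = T_m\vard{P_1}{\qs_{\qb,0}}$: when one expands $\diff{f}{\qt_k}=\sum_t\diff{f}{v^{\qb,t}}\qp_x^t T_k\vard{P_0}{\qs_{\qb,0}}$ and multiplies by $\qs_{\qa,m}^s$, the product $T_k(\cdot)\cdot\qs_{\qa,m}$ is exactly what $T_{m,k}$ and $T_{m+1,k}$ produce on the corresponding terms, up to the shift by one that the recursion absorbs. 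The tail part should reassemble, again via the recursion relation and the definition $\diff{\qs_{\qa,m}}{\qt_k}=T_{m,k}\vard{P_1}{v^\qa}$, into $\pm f\,\qp_x^s\diff{\qs_{\qa,m}}{\qt_k}$.

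The step I expect to be the main obstacle is bookkeeping the shift operators $T_k$, $T_{m,k}$, $T_{m+1,k}$ consistently: one must track carefully how $T_{m,k}$ acts on a product of the form $(\text{something in }\hm A^1)\cdot(\text{odd coefficient})$ versus on a genuine degree-two monomial, and how the antisymmetry $T_{k,l}=-T_{l,k}$ and the telescoping in the definition of $T_{k,l}$ interact with the index shift $m\mapsto m+1$ forced by \eqref{2-9}. The key algebraic input making it all close is \eqref{2-12} together with the commutation $\fk{\diff{}{\qt_m}}{\qp_x}=0$ and the fact (Lemma, stated just above) that the $T$'s commute with $\qp_x$; once the monomial reduction is set up, these let every mismatched index-shift be traded against a use of the recursion, and the two sides coincide. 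As an optional sanity check one can specialize to $X = v^\qa_x\,\qth$ in the KdV example to see the telescoping explicitly.
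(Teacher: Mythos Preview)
Your approach is essentially the paper's: reduce to monomials $X = f\,\qs_{\qa,0}^s$, expand both sides via the Leibniz rule, and use the recursion \eqref{2-12} to telescope the ``coefficient part.'' That part of your plan is correct and mirrors the paper's computation line for line.

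However, your handling of the ``tail part'' has a gap. After the split, the tail contribution on the right-hand side is
\[
T_{m,k}\bigl(f\,D_{P_1}\qs_{\qa,0}^s\bigr) - T_{m+1,k}\bigl(f\,D_{P_0}\qs_{\qa,0}^s\bigr),
\]
and you want this to equal $f\,\qp_x^s\,\diff{\qs_{\qa,m}}{\qt_k} = T_{m,k}\bigl(f\,D_{P_1}\qs_{\qa,0}^s\bigr)$, so the second term must vanish. You attribute this to ``the recursion relation,'' but \eqref{2-12} is a statement about $\vard{P_i}{\qs_{\qa,0}}$, not about $\vard{P_i}{v^\qa}$; it says nothing about $D_{P_0}\qs_{\qa,0}$. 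The actual reason this term vanishes is the standing choice of coordinates in which $P_0 = P_0^{[0]} = \tfrac12\int\eta^{\qa\qb}\qs_\qa\qs_\qb^1$, so that $D_{P_0}\qs_{\qa,0} = \vard{P_0}{v^\qa} = 0$ identically. The paper invokes this fact explicitly, both in the case $k=m$ and in the final step of the telescoping for $k\neq m$. Without it your argument does not close.
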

\begin{proof}
Since all the operators are linear, we may assume $X = f\qs_{\qb,0}^l$ for some $f\in\hm A^0$ and $l\geq 0$. We first assume that $k\geq m$. The case $k=m$ can be easily verified as follows:
\[
\diff{}{\qt_m}(f\qs_{\qb,m}^l) = T_m\kk{D_{P_0}f}\qs_{\qb,m}^l = T_{m,m+1}\kk{D_{P_0}(f\qs_{\qb,0}^l)}.
\]
Here we use the fact that $P_0 = P_0^{[0]}$ and $D_{P_0}\qs_{\qb,0} = 0$. Now we assume $k\geq m+1$, then by using the definition of the shift operators and the recursion relations  \eqref{2-12} we obtain the following identities:
\begin{align*}
\diff{}{\qt_k}(f\qs_{\qb,m}^l) &= T_k\kk{D_{P_0}f}\qs_{\qb,m}^l+T_{m,k}\kk{fD_{P_1}\qs_{\qb,0}^l}\\
&= T_{k-1}\kk{D_{P_1}f}\qs_{\qb,m}^l+T_{m,k}\kk{fD_{P_1}\qs_{\qb,0}^l}\\
&=T_{m,k}\kk{\kk{D_{P_1}f}\qs_{\qb,0}^l}-\sum_{i=0}^{k-m-2}T_{m+i}\kk{D_{P_1}f}\qs_{\qb,k-i-1}^l+T_{m,k}\kk{fD_{P_1}\qs_{\qb,0}^l}\\
&= T_{m,k}\kk{D_{P_1}(f\qs_{\qb,0}^l)}-\sum_{i=0}^{k-m-2}T_{m+i+1}\kk{D_{P_0}f}\qs_{\qb,k-i-1}^l\\
&=T_{m,k}\kk{D_{P_1}(f\qs_{\qb,0}^l)}-T_{m+1,k}\kk{D_{P_0}(f\qs_{\qb,0}^l)}.
\end{align*}
The case $k<m$ is proved in exactly the same way. The lemma is proved.
\end{proof}
\begin{Prop}
The flows $\diff{}{\qt_k}$ are compatible with the recursion relations \eqref{2-12}, i.e.,
\[
\diff{}{\qt_k}T_{m+1}\vard{P_0}{\qs_{\qa,0}} = \diff{}{\qt_k}T_{m}\vard{P_1}{\qs_{\qa,0}},\quad m,k\geq 0.
\]
\end{Prop}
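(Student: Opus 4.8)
The plan is to deduce this directly from Lemma~\ref{lem2-8} together with the compatibility conditions of the bihamiltonian structure $(P_0,P_1)$. Since $P_0$ and $P_1$ are local Hamiltonian structures, the densities $\vard{P_0}{\qs_{\qa,0}}$ and $\vard{P_1}{\qs_{\qa,0}}$ lie in $\hm A^1$, so Lemma~\ref{lem2-8} applies to them. Applying that lemma with $X=\vard{P_0}{\qs_{\qa,0}}$ and with its index $m$ replaced by $m+1$, and then with $X=\vard{P_1}{\qs_{\qa,0}}$ and its index equal to $m$, I would rewrite the two sides of the asserted identity as
\[
\diff{}{\qt_k}T_{m+1}\vard{P_0}{\qs_{\qa,0}}=T_{m+1,k}\kk{D_{P_1}\vard{P_0}{\qs_{\qa,0}}}-T_{m+2,k}\kk{D_{P_0}\vard{P_0}{\qs_{\qa,0}}},
\]
\[
\diff{}{\qt_k}T_{m}\vard{P_1}{\qs_{\qa,0}}=T_{m,k}\kk{D_{P_1}\vard{P_1}{\qs_{\qa,0}}}-T_{m+1,k}\kk{D_{P_0}\vard{P_1}{\qs_{\qa,0}}}.
\]
Thus the proposition reduces to matching these two right-hand sides.

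The second ingredient is to evaluate the derivations appearing above using the identity \eqref{a2-4} with $\qth_\qa=\qs_{\qa,0}$. Feeding $P=Q=P_0$ into \eqref{a2-4} and using $\fk{P_0}{P_0}=0$ gives $D_{P_0}\vard{P_0}{\qs_{\qa,0}}=0$; likewise $\fk{P_1}{P_1}=0$ gives $D_{P_1}\vard{P_1}{\qs_{\qa,0}}=0$; and taking $P=P_0$, $Q=P_1$ with $\fk{P_0}{P_1}=0$ gives the relation $D_{P_1}\vard{P_0}{\qs_{\qa,0}}=-D_{P_0}\vard{P_1}{\qs_{\qa,0}}$. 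Substituting these three facts, the first right-hand side collapses to $T_{m+1,k}\kk{D_{P_1}\vard{P_0}{\qs_{\qa,0}}}=-T_{m+1,k}\kk{D_{P_0}\vard{P_1}{\qs_{\qa,0}}}$, and the second collapses to exactly $-T_{m+1,k}\kk{D_{P_0}\vard{P_1}{\qs_{\qa,0}}}$; the two coincide, which is the desired identity.

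I do not anticipate a genuine obstacle here: all the real work has already been packaged into Lemma~\ref{lem2-8} and into the definition of a bihamiltonian structure. The only point requiring care is the sign bookkeeping in \eqref{a2-4} — the factors $(-1)^{p-1}$ and $(-1)^{(p-1)(q-1)}$ — but since $P_0,P_1\in\hm F^2$ one always has $p=q=2$, so both signs equal $-1$ and the three auxiliary identities come out as stated. It is also worth double-checking the index shifts ($m+1$ versus $m$, and the $m+2$ that appears and then drops out because $D_{P_0}\vard{P_0}{\qs_{\qa,0}}=0$) when invoking Lemma~\ref{lem2-8}, but this is routine.
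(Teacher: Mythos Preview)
Your proof is correct and follows essentially the same approach as the paper: reduce both sides via Lemma~\ref{lem2-8} and then use identity~\eqref{a2-4} together with the vanishing of $[P_0,P_0]$, $[P_0,P_1]$, $[P_1,P_1]$. The only cosmetic difference is that the paper computes the left-hand side directly from the explicit form $P_0=\frac12\int\eta^{\qa\qb}\qs_\qa\qs_\qb^1$ (so that $\vard{P_0}{\qs_{\qa,0}}=\eta^{\qa\qb}\qs_{\qb,0}^1$ and the definition of $\diff{}{\qt_k}$ applies immediately) rather than invoking Lemma~\ref{lem2-8} on that side; your more symmetric treatment via Lemma~\ref{lem2-8} on both sides, with the extra term killed by $[P_0,P_0]=0$, is equally valid.
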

\begin{proof}
Using the fact that
\[
P_0 = P_0^{[0]} = \frac 12\int\eta^{\qa\qb}\qs_\qa\qs_\qb^1,
\]
it is easy to obtain the following identities:
\[
\diff{}{\qt_k}T_{m+1}\vard{P_0}{\qs_{\qa,0}} = \diff{}{\qt_k}\eta^{\qa\qb}\qs_{\qb,m+1}^1 = T_{m+1,k}\kk{\eta^{\qa\qb}\qp_x\vard{P_1}{v^\qb}} = T_{m+1,k}\kk{D_{P_1}\vard{P_0}{\qs_{\qa,0}}} .
\]
Since $[P_0,P_1] = 0$, it follows from the identity \eqref{a2-4} that
\[
\diff{}{\qt_k}T_{m+1}\vard{P_0}{\qs_{\qa,0}} = -T_{m+1,k}\kk{D_{P_0}\vard{P_1}{\qs_{\qa,0}}}.
\]
Thus by using $[P_1,P_1] = 0$ and Lemma \ref{lem2-8} we finish the proof of the proposition.
\end{proof}
\begin{Prop}
The odd flows $\diff{}{\qt_m}$ mutually commute, i.e.,
\[
\fk{\diff{}{\qt_m}}{\diff{}{\qt_k}} = 0,\quad m,\,k\geq 0.
\]
\end{Prop}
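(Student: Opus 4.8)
The plan is to exploit the Jacobi identity for graded derivations together with the recursion characterization of the flows $\diff{}{\qt_m}$, so that the commutativity $\fk{\diff{}{\qt_m}}{\diff{}{\qt_k}}=0$ is reduced to a statement at the two lowest levels $m,k\in\{0,1\}$, where it follows directly from $[P_0,P_0]=[P_0,P_1]=[P_1,P_1]=0$ via the identity \eqref{dc}. Since $\diff{}{\qt_0}=D_{P_0}$ and $\diff{}{\qt_1}=D_{P_1}$ on the local variables, and since $\fk{D_{P_i}}{D_{P_j}}=(-1)^{p_i-1}D_{[P_i,P_j]}=0$, the base cases $\fk{\diff{}{\qt_0}}{\diff{}{\qt_0}}$, $\fk{\diff{}{\qt_0}}{\diff{}{\qt_1}}$, $\fk{\diff{}{\qt_1}}{\diff{}{\qt_1}}$ all vanish once one checks they act correctly on the extra odd variables $\qs_{\qa,k}$ as well — this is exactly the content guaranteed by the preceding proposition on compatibility with the recursion \eqref{2-12}.

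The induction would be carried out as follows. First I would use Lemma \ref{lem2-8} (applied with $X=\vard{P_0}{\qs_{\qa,0}}$ and $X=\vard{P_1}{\qs_{\qa,0}}$) to express $\diff{}{\qt_k}$ acting on $\qs_{\qa,m}^s=\qp_x^s T_m\vard{P_0}{\qs_{\qa,0}}$ entirely in terms of $T_{\bullet,k}\circ D_{P_0}$ and $T_{\bullet,k}\circ D_{P_1}$. Then, to compute $\fk{\diff{}{\qt_m}}{\diff{}{\qt_k}}$, it suffices to evaluate it on the generators $v^\qa$ and $\qs_{\qa,0}$, since both sides are derivations commuting with $\qp_x$ and the ring $\hm A^+$ is generated over $C^\infty(v)$ by $\qp_x^s v^\qa$ and $\qp_x^s\qs_{\qa,0}$ together with the recursion relations (which the flows respect by the previous proposition, so the bracket descends to $\hm A^+$). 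On $v^\qa$ one gets $\diff{}{\qt_m}\diff{v^\qa}{\qt_k}-(-1)\diff{}{\qt_k}\diff{v^\qa}{\qt_m}$, and using \eqref{AH} and Lemma \ref{lem2-8} this becomes a combination of $T_{k,m}\bigl(D_{P_i}\vard{P_j}{v^\qa}\bigr)$ terms, which are handled by the variational-derivative identities \eqref{a2-3}, \eqref{a2-4} for $[P_i,P_j]$; on $\qs_{\qa,0}$ one argues similarly using $\diff{\qs_{\qa,k}}{\qt_m}=T_{k,m}\vard{P_1}{v^\qa}$ and the fact that $T_{k,m}$ is a fixed linear operator intertwining $\qp_x$.

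The structural point that makes everything collapse is the antisymmetry built into the shift operators: $T_{k,l}=-T_{l,k}$ and the "telescoping" identity $T_{m,k}X - T_{m+1,k}X'$ of Lemma \ref{lem2-8}. I expect the main obstacle to be the bookkeeping of the double sums that appear when two shift operators $T_{\bullet,\bullet}$ are composed — specifically, showing that the quadratic-in-$\qs$ terms coming from $\diff{}{\qt_m}$ hitting the $\qs_{\qb,k}$ inside $\diff{\qs_{\qb,k}}{\qt_k}$-type expressions cancel in pairs under the index reflection $m\leftrightarrow k$. This is the analogue of the Jacobi-identity computation that appears in the proof of Theorem \ref{g0super1} in \cite{liu2020super}, and the cleanest route is probably to reduce to that known case by a deformation/filtration argument: filter $\hm A^+$ by the order of deformation parameter, observe that the leading term of $\diff{}{\qt_m}$ is exactly the flow of Theorem \ref{g0super1}, which already satisfies the commutation relations, and then show inductively on the filtration degree that the correction terms commute because their contribution to $\fk{\diff{}{\qt_m}}{\diff{}{\qt_k}}$ is a $D_{[P_i,P_j]}$-type expression that vanishes. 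Alternatively — and this is the route I would try first — one observes that by \eqref{dc} the whole identity is equivalent to $D_{[\,\cdot\,,\,\cdot\,]}$-type vanishing statements that follow formally from $[P_0,P_0]=[P_0,P_1]=[P_1,P_1]=0$ once the recursion \eqref{2-11} is used to trade $\qs_{\qa,m+1}$ for $\mathcal P_1$-action on $\qs_{\qa,m}$, reducing inductively to the base level where only $P_0$ and $P_1$ appear.
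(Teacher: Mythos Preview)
Your outline identifies the right reduction and several correct ingredients, but it does not actually close the argument on the odd generators, and the route the paper takes to do so is one you have not found.

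For $v^\qa$ your approach via Lemma~\ref{lem2-8} and \eqref{a2-3}--\eqref{a2-4} would work, but is more elaborate than necessary: the paper simply computes
\[
\fk{\diff{}{\qt_m}}{\diff{}{\qt_k}}v^\qa=\eta^{\qa\qb}\qp_x\Bigl(T_{k,m}+T_{m,k}\Bigr)\vard{P_1}{v^\qb}=0
\]
directly from the antisymmetry $T_{k,m}=-T_{m,k}$. (A small correction: your claim that ``$\hm A^+$ is generated by $\qp_x^s v^\qa$ and $\qp_x^s\qs_{\qa,0}$ together with the recursion'' is not literally true---the recursion fixes only $\qs_{\qa,m+1}^1$, not $\qs_{\qa,m+1}$ itself---so the reduction to $\qs_{\qa,0}$ requires an additional injectivity-of-$\qp_x$ step which both you and the paper leave implicit.)

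The real gap is on $\qs_{\qa,0}$. You correctly anticipate that a direct computation would produce double sums of $T_{\bullet,\bullet}$'s whose cancellation is not obvious, and you do not carry out either your direct or your filtration route. The paper avoids this entirely by a short bootstrap that you are missing. The key identity is $\diff{\qs_{\qa,k}}{\qt_m}=-\diff{\qs_{\qa,m}}{\qt_k}$, which follows immediately from $T_{k,m}=-T_{m,k}$. Using it, the paper argues:
\[
\fk{\diff{}{\qt_0}}{\diff{}{\qt_0}}\qs_{\qa,0}=0\ \Rightarrow\ \fk{\diff{}{\qt_0}}{\diff{}{\qt_0}}\qs_{\qa,k}=0\ \Leftrightarrow\ \fk{\diff{}{\qt_0}}{\diff{}{\qt_k}}\qs_{\qa,0}=0\ \Rightarrow\ \fk{\diff{}{\qt_0}}{\diff{}{\qt_k}}\qs_{\qa,l}=0,
\]
each $\Rightarrow$ being induction via the recursion, and each $\Leftrightarrow$ using the antisymmetry identity above together with $\diff{\qs_{\qa,0}}{\qt_0}=0$. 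Then for general $k,l$ one writes
\[
\diff{}{\qt_k}\diff{\qs_{\qa,0}}{\qt_l}=-\diff{}{\qt_k}\diff{\qs_{\qa,l}}{\qt_0}=\diff{}{\qt_0}\diff{\qs_{\qa,l}}{\qt_k},
\]
and observes that the right-hand side is antisymmetric in $k,l$ by the same identity, giving $\fk{\diff{}{\qt_k}}{\diff{}{\qt_l}}\qs_{\qa,0}=0$. No $D_{[P_i,P_j]}$-computation, no filtration, and no double-sum bookkeeping is needed: the whole weight is carried by the single antisymmetry relation $\diff{\qs_{\qa,k}}{\qt_m}=-\diff{\qs_{\qa,m}}{\qt_k}$, which converts a commutator at level $(k,l)$ into one at level $(0,\cdot)$ already handled. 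This is the missing idea in your proposal.
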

\begin{proof}
By the definition of the flows $\diff{}{\qt_m}$, it is easy to see that
\[
\fk{\diff{}{\qt_m}}{\diff{}{\qt_k}}v^\qa = \eta^{\qa\qb}\qp_xT_{k,m}\vard{P_1}{v^\qb}+\eta^{\qa\qb}\qp_xT_{m,k}\vard{P_1}{v^\qb} = 0.
\]
To show the commutation relation
\[
\fk{\diff{}{\qt_l}}{\diff{}{\qt_k}}\qs_{\qa,m} = 0,
\]
it suffices to verify the case $m = 0$ due to the recursion relations \eqref{2-11}. By using the trivial relation
\[\fk{\diff{}{\qt_0}}{\diff{}{\qt_0}}\qs_{\qa,0} = 0,\]
the recursion relations \eqref{2-11}, and by induction on $k$, we arrive at
\[\fk{\diff{}{\qt_0}}{\diff{}{\qt_0}}\qs_{\qa,k} = 0,\quad k\geq 0.\]
This commutation relation is equivalent to 
\[\fk{\diff{}{\qt_0}}{\diff{}{\qt_k}}\qs_{\qa,0} = 0\]
due to the definition of the odd flows. Using induction again we arrive at the identity
\[\fk{\diff{}{\qt_0}}{\diff{}{\qt_k}}\qs_{\qa,l} = 0\]
for any $l\geq 0$. Therefore we have
\[
\diff{}{\qt_k}\diff{\qs_{\qa,0}}{\qt_l} = -\diff{}{\qt_k}\diff{\qs_{\qa,l}}{\qt_0} = \diff{}{\qt_0}\diff{\qs_{\qa,l}}{\qt_k}.
\]
It follows from the definition of the odd flows that the right hand side is anti-symmetric with respect to the indices $k,l$, hence we prove that
\[
\fk{\diff{}{\qt_l}}{\diff{}{\qt_k}}\qs_{\qa,0} = 0.
\]
The proposition is proved.
\end{proof}

Now let $X_i\in\hm F^1$, $i\in I$ be a family of bihamiltonian vector fields with respect to an index set $I$, i.e., each $X_i$ satisfies the equations $[X_i,P_0] = [X_i,P_1] = 0$. Recall that the family $\{X_i\}$ corresponds to a bihamiltonian integrable hierarchy given by
\begin{equation}
\diff{v^\qa}{t_i} = \vard{X_i}{\qs_{\qa,0}},\quad i\in I.
\end{equation}
In what follows we will extend this integrable hierarchy such that it becomes a system of mutually commuting vector fields on $\hm A^+$.
\begin{Def}
For a bihamiltonian vector field $X\in\hm F^1$ we associate it with the following system of PDEs on $\hm A^+$:
\[
\diff{v^\qa}{t_X} = D_Xv^\qa,\quad \diff{\qs_{\qa,m}}{t_X} = T_m D_X\qs_{\qa,0},\quad \fk{\diff{}{t_X}}{\qp_x} = 0.
\]
It is called the super extended flow given by $X$.
\end{Def}

\begin{Prop}
The super extended flow $\diff{}{t_X}$ given by a bihamiltonian vector field $X$ is compatible with the recursion relations \eqref{2-12}, i.e.,
\begin{equation}
\label{2-15}
\diff{}{t_X}T_{m+1}\vard{P_0}{\qs_{\qa,0}} = \diff{}{t_X}T_{m}\vard{P_1}{\qs_{\qa,0}}.
\end{equation}
\end{Prop}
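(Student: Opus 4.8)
The plan is to reduce the claim \eqref{2-15} to the recursion-compatibility of the odd flows $\diff{}{\qt_m}$ (just proved) together with the bihamiltonian property $[X,P_0]=[X,P_1]=0$. First I would rewrite the left-hand side using $P_0=P_0^{[0]}=\frac12\int\eta^{\qa\qb}\qs_\qa\qs_\qb^1$, so that $T_{m+1}\vard{P_0}{\qs_{\qa,0}}=\eta^{\qa\qb}\qs_{\qb,m+1}^1$, and then compute $\diff{}{t_X}\qs_{\qb,m+1}^1 = T_{m+1}\kk{D_X\qp_x\qs_{\qb,0}}=\qp_x T_{m+1}\kk{D_X\qs_{\qb,0}}$, using that $\diff{}{t_X}$ commutes with $\qp_x$ and acts on the $\qs_{\qb,m+1}$ via the shift operator by definition. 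So the left side equals $\qp_x T_{m+1}\kk{D_X\qs_{\qb,0}}\eta^{\qa\qb}$, i.e. $T_{m+1}\kk{D_X D_{P_0}\vard{P_0}{\qs_{\qa,0}}}$-type expression — more precisely $T_{m+1}\kk{\eta^{\qa\qb}\qp_x D_X\qs_{\qb,0}}$.

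Next I would prove the analogue of Lemma \ref{lem2-8} for $\diff{}{t_X}$ in place of $\diff{}{\qt_k}$: namely, for any $Y\in\hm A^1$ and $m\geq 0$,
\[
\diff{}{t_X}T_m(Y) = T_m\kk{D_X(Y)}.
\]
This follows directly from the definition of $\diff{}{t_X}$ on the $\qs_{\qa,m}$ (it is given by $T_m D_X$ on the generators $\qs_{\qa,0}$), the Leibniz rule, and the fact that $\diff{}{t_X}$ commutes with $\qp_x$ and acts on $v^{\qa,s}$ the same way $D_X$ does; one checks it on $Y=f\qs_{\qb,0}^l$ with $f\in\hm A^0$ exactly as in the proof of Lemma \ref{lem2-8}, but without the splitting-into-shifted-pieces complication since $\diff{}{t_X}$ preserves the index $m$. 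Using this, the left-hand side of \eqref{2-15} becomes $T_{m+1}\kk{D_X\vard{P_0}{\qs_{\qa,0}}}$ and the right-hand side becomes $T_m\kk{D_X\vard{P_1}{\qs_{\qa,0}}}$.

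Now I would invoke the bihamiltonian identity \eqref{a2-4}: since $[X,P_i]=0$ and $X\in\hm F^1$, applying \eqref{a2-4} with $P=X$, $Q=P_i$ gives $D_X\vard{P_i}{\qs_{\qa,0}} = (-1)^{\cdots} D_{P_i}\vard{X}{\qs_{\qa,0}}$ (the signs work out because $D_{[X,P_i]}=0$ forces $D_X D_{P_i}\vard{\cdots}{} = \pm D_{P_i}D_X\vard{\cdots}{}$ on the relevant component). Concretely, write $\Xi_\qa := \vard{X}{\qs_{\qa,0}}\in\hm A^0$, the right-hand side of the hierarchy. Then $D_X\vard{P_0}{\qs_{\qa,0}} = D_{P_0}\Xi_\qa$ and $D_X\vard{P_1}{\qs_{\qa,0}} = D_{P_1}\Xi_\qa$ up to the sign bookkeeping. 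Hence \eqref{2-15} reduces to
\[
T_{m+1}\kk{D_{P_0}\Xi_\qa} = T_m\kk{D_{P_1}\Xi_\qa},
\]
which is exactly the identity established in Example \ref{ex2-4}, valid for any $f\in\hm A^0$ with $f=\Xi_\qa$. That closes the argument.

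The main obstacle I anticipate is the sign bookkeeping in the step that uses \eqref{a2-4}: one must be careful that $X$ has super degree $1$, that $\vard{P_i}{\qs_{\qa,0}}$ is extracted correctly, and that the identity $D_{[X,P_i]}=0$ genuinely yields the commutation $D_X D_{P_i}=\pm D_{P_i}D_X$ in the precise form needed — alternatively one can sidestep \eqref{a2-4} entirely by differentiating the hierarchy equations directly, i.e. using $\diff{}{t_X}\diff{v^\qa}{\qt_i} = \diff{}{\qt_i}\diff{v^\qa}{t_X}$, which is just the statement that $\diff{}{t_X}$ and $\diff{}{\qt_i}$ commute on $\hm A$ (a consequence of $[X,P_i]=0$ via \eqref{dc}), and then extend to $\qs_{\qa,m}$ by the shift-operator lemma above. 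A secondary technical point is checking that the shift-operator lemma $\diff{}{t_X}T_m = T_m D_X$ really holds on all of $\hm A^1$, but this is routine given that $\diff{}{t_X}$ is defined precisely so as to intertwine with the shift operators on generators.
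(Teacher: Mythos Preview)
Your proposal is correct and follows essentially the same route as the paper: both reduce \eqref{2-15} to $T_{m+1}\bigl(D_X\vard{P_0}{\qs_{\qa,0}}\bigr)=T_m\bigl(D_X\vard{P_1}{\qs_{\qa,0}}\bigr)$ via the identity $\diff{}{t_X}T_m = T_m D_X$ on $\hm A^1$ (which the paper states as ``easy to see'' and you spell out as a lemma), then use \eqref{a2-4} with $[X,P_i]=0$ to swap $D_X\vard{P_i}{\qs_{\qa,0}}$ for $D_{P_i}\vard{X}{\qs_{\qa,0}}$, and finish by Example~\ref{ex2-4}. Your sign worry is unfounded: with $p=1$, $q=2$ in \eqref{a2-4} both sign factors are $+1$, so $D_X\vard{P_i}{\qs_{\qa,0}}=D_{P_i}\vard{X}{\qs_{\qa,0}}$ exactly.
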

\begin{proof}
By definition of the flow $\diff{}{t_X}$, it is easy to see that
\[
\diff{}{t_X}T_{m+1}\vard{P_0}{\qs_{\qa,0}} = T_{m+1}D_X\vard{P_0}{\qs_{\qa,0}},\quad  \diff{}{t_X}T_{m}\vard{P_1}{\qs_{\qa,0}}= T_{m}D_X\vard{P_1}{\qs_{\qa,0}}.
\]
Using the fact that $[X,P_0] = [X,P_1] = 0$ and the identity \eqref{a2-4}, we see that \eqref{2-15} is equivalent to the following identity
\[
T_{m+1}D_{P_0}\vard{X}{\qs_{\qa,0}} = T_{m}D_{P_1}\vard{X}{\qs_{\qa,0}},
\]
which is equivalent to the recursion relations \eqref{2-12} according to Example \ref{ex2-4}. The proposition is proved.
\end{proof}
\begin{Prop}
Let $X$ and $Y$ be two bihamiltonian vector fields, then their super extended flows commute.
\end{Prop}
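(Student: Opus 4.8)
The plan is to reduce the commutation of the two super extended flows $\diff{}{t_X}$ and $\diff{}{t_Y}$ to the already-known commutation of the corresponding bihamiltonian vector fields $X$ and $Y$ on the local jet space, together with the recursion relations \eqref{2-11}. First I would recall that for two bihamiltonian vector fields $X,Y\in\hm F^1$, the bracket $[X,Y]\in\hm F^1$ is again a vector field which commutes with both $P_0$ and $P_1$; by the theory of bihamiltonian cohomology for $(P_0^{[0]},P_1^{[0]})$ (as used in \cite{DLZ-1} and invoked earlier in this section), such a vector field is determined by its action on the $v^\qa$, and in the integrable-hierarchy setting one has in fact $[D_X,D_Y]=0$ on $\hm A$, i.e. $\diff{}{t_X}$ and $\diff{}{t_Y}$ already commute when restricted to the even variables $v^{\qa,s}$. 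This gives immediately
\[
\fk{\diff{}{t_X}}{\diff{}{t_Y}}v^\qa = 0.
\]

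The remaining and main point is to show $\fk{\diff{}{t_X}}{\diff{}{t_Y}}\qs_{\qa,m}=0$ for all $m\geq 0$. By the recursion relations \eqref{2-11}, which are preserved by both flows (by the preceding Proposition), it suffices to treat the case $m=0$, where $\diff{\qs_{\qa,0}}{t_X}=T_0 D_X\qs_{\qa,0}=D_X\qs_{\qa,0}$ since $T_0$ is the identity. Thus on the local variables $(v^{\qa,s};\qs_\qa^s)$ the super extended flow $\diff{}{t_X}$ coincides with the derivation $D_X$, and $[D_X,D_Y]$ vanishes on $\hm A$ (this is precisely the statement that $X$ and $Y$ commute as elements of $\hm F^1$ in the presence of the bihamiltonian structure: from $[X,P_0]=[X,P_1]=0$ and $[Y,P_0]=[Y,P_1]=0$ one deduces $[X,Y]=0$ via the standard bihamiltonian cohomology argument, and then \eqref{dc} gives $[D_X,D_Y]=\pm D_{[X,Y]}=0$). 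Hence
\[
\fk{\diff{}{t_X}}{\diff{}{t_Y}}\qs_{\qa,0} = [D_X,D_Y]\qs_{\qa,0} = 0.
\]
Applying $T_m$ and using that $T_m$ commutes with $\qp_x$ and that both flows are compatible with the recursion, the vanishing propagates to all $\qs_{\qa,m}$, $m\geq 0$, and to all their $x$-derivatives. Since $\hm A^+$ is generated as a differential algebra by the $v^\qa$ and $\qs_{\qa,m}$, and all three quantities $\fk{\diff{}{t_X}}{\diff{}{t_Y}}$ annihilates on these generators and commutes with $\qp_x$, the commutator flow vanishes identically.

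The step I expect to be the main obstacle is making the propagation argument from $m=0$ to general $m$ fully rigorous: one must check that $\fk{\diff{}{t_X}}{\diff{}{t_Y}}$ applied to $\qs_{\qa,m+1}$ is forced to vanish by its vanishing on $\qs_{\qa,m}$ via \eqref{2-11}, which requires knowing that the operator $\eta^{\qa\qb}\qp_x$ appearing on the left of \eqref{2-11} is ``invertible enough'' — concretely, one inverts $\eta^{\qa\qb}\qs_{\qb,m+1}^1 = \mathcal P_1^{\qa\qb}\qs_{\qb,m}$ by integrating in $x$, which is legitimate because both sides lie in $\qp_x\hm A^+$ and the flows commute with $\qp_x$; this is exactly the style of induction already carried out in the proof that the $\diff{}{\qt_m}$ mutually commute, so I would model the argument on that proof. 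Alternatively, and perhaps more cleanly, one can argue directly from the definition $\diff{\qs_{\qa,m}}{t_X}=T_m D_X\qs_{\qa,0}$ together with Lemma \ref{lem2-8}-type identities describing how $\diff{}{t_Y}$ interacts with $T_m$, reducing everything to the single local identity $[D_X,D_Y]\qs_{\qa,0}=0$ and the bookkeeping of shift operators.
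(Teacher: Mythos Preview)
Your proof is correct and your ``alternative'' approach (b) is exactly what the paper does: from $[X,Y]=0$ (via bihamiltonian cohomology) and \eqref{dc} one has $[D_X,D_Y]=0$ on $\hm A$, giving the vanishing on $v^\qa$ and $\qs_{\qa,0}$; then the propagation to $\qs_{\qa,m}$ is a one-liner using Lemma~\ref{AB} (not Lemma~\ref{lem2-8}, which is about the odd flows $\diff{}{\qt_k}$), since the super extended flow $\diff{}{t_X}$ is precisely an even derivation commuting with $\qp_x$ and extended by $\diff{\qs_{\qa,m}}{t_X}=T_m D_X\qs_{\qa,0}$, so Lemma~\ref{AB} gives $\diff{}{t_X}\comp T_m = T_m\comp D_X$ on $\hm A^1$ and hence $\fk{\diff{}{t_X}}{\diff{}{t_Y}}\qs_{\qa,m}=T_m[D_X,D_Y]\qs_{\qa,0}=0$. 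Your primary approach (a) via induction on the recursion relations \eqref{2-11} would also work, but the direct shift-operator argument makes it unnecessary.
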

\begin{proof}
From the theory of the bihamiltonian cohomology \cite{DLZ-1} we know that $[X,Y] = 0$, hence it follows from \eqref{dc} that
\[
\fk{\diff{}{t_X}}{\diff{}{t_Y}}v^\qa = 0,\quad \fk{\diff{}{t_X}}{\diff{}{t_Y}}\qs_{\qa,0} = 0.
\]
Using the definition of the super extended flow, it is easy to see that
\[
\fk{\diff{}{t_X}}{\diff{}{t_Y}}\qs_{\qa,m} = T_m\fk{\diff{}{t_X}}{\diff{}{t_Y}}\qs_{\qa,0} = 0.
\]
The proposition is proved.
\end{proof}

Now let us prove that the super extended flow given by a bihamiltonian vector field commutes with the odd flows $\diff{}{\qt_m}$.
\begin{Lem}\label{AB}
For any $\mathcal D\in\der^0$ satisfying the condition $[\mathcal D,\qp_x] = 0$, we extend its action to $\hm A^+$ by setting
\[
\mathcal D\qs_{\qa,m} = T_m\mathcal D\qs_{\qa,0},\quad m\geq 0.
\]
Then the following identities hold true:
\[
T_k\comp\kk{\mathcal D|_{\hm A^1}} = \kk{\mathcal D|_{\hm A^{+,1}}}\comp T_k;\quad T_{k,l}\comp\kk{\mathcal D|_{\hm A^2}} = \kk{\mathcal D|_{\hm A^{+,2}}}\comp T_{k,l},\quad k,l\geq 0.
\]
\end{Lem}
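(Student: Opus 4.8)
The plan is to reduce everything to monomials and then use that $\mathcal D$ is a derivation. Two ingredients carry the argument. First, by the two preceding lemmas the shift operators $T_k$ and $T_{k,l}$ are $\hm A^0$-linear and commute with $\qp_x$. Second, since $\mathcal D\in\der^0$ and $[\mathcal D,\qp_x]=0$, the prescribed extension of $\mathcal D$ to $\hm A^+$ satisfies, on the generators,
\[
\mathcal D\,\qs_{\qa,m}^s = \qp_x^s\,\mathcal D\,\qs_{\qa,m} = \qp_x^s\,T_m\kk{\mathcal D\,\qs_{\qa,0}} = T_m\kk{\qp_x^s\,\mathcal D\,\qs_{\qa,0}} = T_m\kk{\mathcal D\,\qs_{\qa,0}^s};
\]
in other words the identity $\mathcal D\circ T_m = T_m\circ\mathcal D$ already holds on the ``pure'' elements $\qs_{\qa,0}^s$, and this is the heart of the matter. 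Finally, because $\deg_{\qth}\mathcal D=0$, the graded Leibniz rule for $\mathcal D$ never produces a sign. These are the only facts needed.

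For the first identity I would argue on a monomial $X=f\,\qs_{\qa,0}^s$ with $f\in\hm A^0$: Leibniz gives $\mathcal D X=\mathcal D(f)\,\qs_{\qa,0}^s+f\,\mathcal D(\qs_{\qa,0}^s)$, hence $T_k(\mathcal D X)=\mathcal D(f)\,\qs_{\qa,k}^s+f\,T_k\kk{\mathcal D(\qs_{\qa,0}^s)}$ by $\hm A^0$-linearity, while $\mathcal D(T_kX)=\mathcal D(f\,\qs_{\qa,k}^s)=\mathcal D(f)\,\qs_{\qa,k}^s+f\,\mathcal D(\qs_{\qa,k}^s)$; the two agree by the displayed relation (with $m=k$). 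For the second identity I would first record, for $X_1,X_2\in\hm A^1$ and $k\le l$, the factorisation
\[
T_{k,l}(X_1X_2)=\sum_{i=0}^{l-k-1}\kk{T_{k+i}X_1}\kk{T_{l-i-1}X_2},
\]
which follows by $\hm A^0$-bilinearity once it is checked on pairs of monomials $X_1=\qs_{\qa,0}^t$, $X_2=\qs_{\qb,0}^s$ (the cases $k=l$ and $k>l$ being covered by $T_{k,k}=0$ and $T_{k,l}=-T_{l,k}$). Since any element of $\hm A^2$ is a finite sum of products $X_1X_2$ with $X_1,X_2\in\hm A^1$, it then suffices to apply $\mathcal D$ to $T_{k,l}(X_1X_2)$, expand by the sign-free Leibniz rule, and invoke the first identity on each factor $T_{k+i}X_1$ and $T_{l-i-1}X_2$; this produces $\sum_i\kk{T_{k+i}(\mathcal D X_1)}\kk{T_{l-i-1}X_2}+\kk{T_{k+i}X_1}\kk{T_{l-i-1}(\mathcal D X_2)}$, which by the factorisation equals $T_{k,l}\kk{(\mathcal D X_1)X_2+X_1(\mathcal D X_2)}=T_{k,l}\kk{\mathcal D(X_1X_2)}$, as wanted.

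I do not expect a genuine obstacle: once the relation $\mathcal D\circ T_m=T_m\circ\mathcal D$ on pure generators is established, the rest is bookkeeping. The one place that rewards attention is the factorisation of $T_{k,l}$ and the index reshuffling built into its defining sum; this matches automatically on the two sides because both the shift operator and the extended $\mathcal D$ distribute the same index shift over the two odd factors. All manipulations are carried out on representatives and respect the recursion relations \eqref{2-11}, so the identities make sense on the quotient $\hm A^+$.
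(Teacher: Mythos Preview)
Your argument is correct and follows essentially the same approach as the paper's own proof, which simply says the first identity is obvious from the defining relation $\mathcal D\qs_{\qa,k}=T_k\mathcal D\qs_{\qa,0}$ and the second is easy from the definition of $T_{k,l}$. You have supplied the details the paper omits: the key observation $\mathcal D\qs_{\qa,m}^s=T_m(\mathcal D\qs_{\qa,0}^s)$ via $[\mathcal D,\qp_x]=0$, the Leibniz expansion on monomials, and the factorisation $T_{k,l}(X_1X_2)=\sum_{i}(T_{k+i}X_1)(T_{l-i-1}X_2)$ that reduces the second identity to the first.
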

\begin{proof}
The first identity is obvious from the definition $\mathcal D\qs_{\qa,k} = T_k\mathcal D\qs_{\qa,0}$. The second one is also easy to verify by using the definition of the shift operator $T_{k,l}$. The lemma is proved.
\end{proof}
\begin{Prop}
The odd flows $\diff{}{\qt_m}$ commute with the super extended flow given by a bihamiltonian vector field $X$.
\end{Prop}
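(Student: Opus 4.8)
The plan is to verify the commutation relation $\fk{\diff{}{t_X}}{\diff{}{\qt_m}} = 0$ on generators of $\hm A^+$, namely on $v^\qa$ and on $\qs_{\qa,0}$ (the higher $\qs_{\qa,m}$ then follow, since both flows commute with $\qp_x$ and are compatible with the recursion relations, so it suffices to control the $m=0$ odd variables together with $v^\qa$). First I would compute $\fk{\diff{}{t_X}}{\diff{}{\qt_m}}v^\qa$. Writing $\diff{v^\qa}{\qt_m} = T_m\vard{P_0}{\qs_{\qa,0}}$ and $\diff{v^\qa}{t_X} = D_X v^\qa$, one side gives $\diff{}{t_X}\bigl(T_m\vard{P_0}{\qs_{\qa,0}}\bigr)$, which by the definition of the super extended flow and Lemma \ref{AB} applied to $\mathcal D = D_X$ equals $T_m\bigl(D_X\vard{P_0}{\qs_{\qa,0}}\bigr)$; the other side is $\diff{}{\qt_m}(D_Xv^\qa)$, and since $D_X$ acts on $\hm A^0$ through the chain rule (cf.\ Example \ref{ex2-4}), this reorganizes into an expression involving $\diff{}{\qt_m}$ applied to $v^{\qb,s}$, i.e.\ into $T_m$ of a variational derivative. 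The difference of the two sides is then $T_m$ applied to $\vard{}{\qth}$ of $[X,P_0]$ via identity \eqref{a2-4}, which vanishes because $X$ is bihamiltonian. The hypothesis $P_0 = P_0^{[0]}$ with $D_{P_0}\qs_{\qa,0}=0$ keeps this clean.

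Next I would treat $\fk{\diff{}{t_X}}{\diff{}{\qt_m}}\qs_{\qa,0}$. Here $\diff{\qs_{\qa,0}}{\qt_m} = T_{0,m}\vard{P_1}{v^\qa}$ and $\diff{\qs_{\qa,0}}{t_X} = T_0 D_X\qs_{\qa,0} = D_X\qs_{\qa,0}$. Computing $\diff{}{t_X}T_{0,m}\vard{P_1}{v^\qa}$, I would use Lemma \ref{AB} again (second identity) to pull $D_X$ through $T_{0,m}$, obtaining $T_{0,m}\bigl(D_X\vard{P_1}{v^\qa}\bigr)$; computing $\diff{}{\qt_m}(D_X\qs_{\qa,0})$ requires Lemma \ref{lem2-8} (or its analogue) to express how $\diff{}{\qt_m}$ interacts with the $T_0$ implicitly present, producing terms in $T_{0,m}D_{P_1}$ and $T_{1,m}D_{P_0}$. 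Combining, the obstruction to commutativity becomes $T_{0,m}$ and $T_{1,m}$ applied to $\vard{}{v^\qa}[X,P_1]$ and $\vard{}{v^\qa}[X,P_0]$ through identity \eqref{a2-3}, both of which vanish since $X$ is bihamiltonian. I expect I will also need $\fk{\diff{}{\qt_m}}{\qp_x}=0$ and $\fk{\diff{}{t_X}}{\qp_x}=0$ so that, once the relation holds on $v^\qa,\qs_{\qa,0}$, it holds on all $v^{\qa,s},\qs_{\qa,0}^s$, and then the recursion relations \eqref{2-11} propagate it to every $\qs_{\qa,m}^s$.

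The main obstacle will be the bookkeeping in the second step: unlike $\diff{}{\qt_m}$, the super extended flow $\diff{}{t_X}$ acts on $\qs_{\qa,m}$ by $T_m D_X\qs_{\qa,0}$, so when I apply $\diff{}{\qt_m}$ afterward I must carefully track that $\diff{}{\qt_m}$ does not commute with $T_m$ in a naive way — Lemma \ref{lem2-8} shows the commutator picks up a difference $T_{m,k}D_{P_1} - T_{m+1,k}D_{P_0}$, and I will need the analogous statement for $T_{k,l}$ (shift operators on two-forms), which should follow from the same recursion argument. Getting the signs right in \eqref{a2-3} and \eqref{a2-4} for the relevant super degrees ($X\in\hm F^1$, $P_i\in\hm F^2$) is the only other delicate point; everything else is a routine reduction to the bihamiltonian property $[X,P_0]=[X,P_1]=0$ together with $[P_0,P_1]=0$ and the two compatibility Propositions already proved.
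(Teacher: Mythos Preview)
Your approach is correct and essentially identical to the paper's proof: both use Lemma \ref{AB} to commute $D_X$ past the shift operators, Lemma \ref{lem2-8} to compute the odd-flow action on shifted quantities, and then identify the resulting obstructions with $\vard{}{\qs_\qa}[X,P_0]$, $\vard{}{v^\qa}[X,P_1]$, and $D_{P_0}\vard{X}{v^\qa}$ (the last vanishing via \eqref{a2-3} because $\vard{P_0}{v^\qa}=0$ and $[P_0,X]=0$). One small point: the paper checks the commutator on $\qs_{\qa,k}$ directly for every $k$ rather than reducing to $k=0$ via the recursion relations. Your reduction argument, as stated, only yields $\qp_x\bigl(C\qs_{\qa,k}\bigr)=0$ inductively (since the relations \eqref{2-11} constrain $\qs_{\qa,k+1}^1$, not $\qs_{\qa,k+1}$), and concluding $C\qs_{\qa,k}=0$ from that would require knowing $\ker\qp_x$ is trivial on $\hm A^{+,2}$, which is not established. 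This is moot in practice: the general-$k$ computation is literally your $k=0$ computation with $T_{0,m}$, $T_{1,m}$ replaced by $T_{k,m}$, $T_{k+1,m}$, using the same lemmas. Finally, your anticipated ``analogue of Lemma \ref{lem2-8} for $T_{k,l}$'' is not needed: Lemma \ref{lem2-8} (for one-form shifts $T_m$) is exactly what is applied to $D_X\qs_{\qa,0}\in\hm A^1$, and Lemma \ref{AB} handles the $\diff{}{t_X}$-side on both $T_k$ and $T_{k,l}$.
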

\begin{proof}
Using Lemma \ref{AB} and the definition of the odd flows, it is easy to see that
\[
\diff{}{t_X}\diff{v^\qa}{\qt_m} = \diff{}{t_X}T_m\vard{P_0}{\qs_{\qa,0}} = T_mD_X\kk{\vard{P_0}{\qs_{\qa,0}}},\quad \diff{}{\qt_m}\diff{v^\qa}{t_X} = T_mD_{P_0}\kk{\diff{v^\qa}{t_X}}.
\]
Therefore it follows from $[X,P_0] = 0$ and the identity \eqref{a2-4} that
\[
\fk{\diff{}{\qt_m}}{\diff{}{t_X}}v^\qa = 0.
\]
Similarly by using Lemma \ref{AB} again we have
\[
\diff{}{t_X}\diff{\qs_{\qa,k}}{\qt_m} = \diff{}{t_X}T_{k,m}\vard{P_1}{v^\qa} = T_{k,m}D_X\kk{\vard{P_1}{v^\qa}}.
\]
On the other hand, by using Lemma \ref{lem2-8} we obtain
\[
\diff{}{\qt_m}\diff{\qs_{\qa,k}}{t_X} = -\diff{}{\qt_m}T_k\kk{D_X\kk{\vard{X}{v^\qa}}} = -T_{k,m}\kk{D_{P_1}\kk{\vard{X}{v^\qa}}}.
\]
Here we use the fact that $[P_0,X] = 0$ and $\vard{P_0}{v^\qa} = 0$. Now by using \eqref{a2-3} and $[X,P_1] = 0$, it is easy to conclude that
\[
\fk{\diff{}{\qt_m}}{\diff{}{t_X}}\qs_{\qa,k} = 0.
\]
The proposition is proved.
\end{proof}

Let us summarize the constructions given in this subsection in the following theorem.
\begin{Th}
\label{AC}
Let $(P_0,P_1)$ be a bihamiltonian structure with semisimple hydrodynamic leading terms and $\{X_i\}$ be a family of bihamiltonian vector fields. Then we have the following super extended integrable hierarchy:
\[\diff{v^\qa}{t_i} = D_{X_i}v^\qa,\quad \diff{\qs_{\qa,m}}{t_i} = T_m D_{X_i}\qs_{\qa,0};\]
\[
\diff{v^\qa}{\qt_m} = T_m\vard{P_0}{\qs_{\qa,0}},\quad \diff{\qs_{\qa,k}}{\qt_m} = T_{k,m}\vard{P_1}{v^\qa}.
\]
The flows in this super extended hierarchy mutually commute.
\end{Th}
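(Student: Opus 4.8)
The plan is to assemble the theorem from the collection of propositions already established in this subsection. All the pieces are in place: we have defined the odd flows $\diff{}{\qt_m}$ in \eqref{AH}, we have defined the super extended flow $\diff{}{t_X}$ given by a bihamiltonian vector field $X$, and we have shown that each of these flows is compatible with the recursion relations \eqref{2-12}. What remains is to package the pairwise commutativity statements into a single assertion about the whole family. So the first step is to recall that the flows $\diff{}{t_i} := \diff{}{t_{X_i}}$ and $\diff{}{\qt_m}$ are well-defined derivations on $\hm A^+$ (they respect \eqref{2-12}) and that each commutes with $\qp_x$, so that they genuinely act on the jet space of the super extension.

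The second step is to invoke the three commutativity propositions in turn. First, the odd flows mutually commute, $\fk{\diff{}{\qt_m}}{\diff{}{\qt_k}} = 0$, which is the Proposition proved just above. Second, the super extended flows of two bihamiltonian vector fields $X_i$ and $X_j$ commute, $\fk{\diff{}{t_i}}{\diff{}{t_j}} = 0$; here the input is the vanishing $[X_i,X_j]=0$ supplied by the theory of bihamiltonian cohomology in \cite{DLZ-1}, together with the elementary observation (from the definition of the super extended flow and the fact that $T_m$ commutes with these flows on $\hm A^{+,1}$) that the bracket on the odd variables $\qs_{\qa,m}$ is $T_m$ applied to the bracket on $\qs_{\qa,0}$. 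Third, each odd flow $\diff{}{\qt_m}$ commutes with each super extended flow $\diff{}{t_i}$, which is exactly the content of the last Proposition, whose proof rests on $[X_i,P_0]=[X_i,P_1]=0$ and the variational identities \eqref{a2-3}, \eqref{a2-4} together with Lemma \ref{lem2-8} and Lemma \ref{AB}. Combining these three facts, every pair of flows among $\{\diff{}{t_i}\}_{i\in I}\cup\{\diff{}{\qt_m}\}_{m\geq 0}$ commutes, which is precisely the claim.

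A subtle point worth spelling out is that commutativity must be checked on both the even coordinates $v^\qa$ and on \emph{all} the odd coordinates $\qs_{\qa,m}$, $m\geq 0$, not merely on $\qs_{\qa,0}$; but this is handled uniformly by the structural identity $[\mathcal D_1,\mathcal D_2]\qs_{\qa,m} = T_m\,[\mathcal D_1,\mathcal D_2]\qs_{\qa,0}$ whenever $\mathcal D_1,\mathcal D_2$ are among our flows, which follows from Lemma \ref{AB} applied repeatedly (each flow acts on $\qs_{\qa,m}$ by first shifting, i.e.\ via $T_m$ composed with its action on $\qs_{\qa,0}$). Thus the only genuinely new content of the theorem beyond the preceding propositions is the bookkeeping that reduces a statement about the infinite family to the three pairwise cases, and the verification that these flows are simultaneously well-defined on the quotient $\hm A^+$, i.e.\ that imposing one recursion relation does not obstruct another — but this too was already settled, since each flow was individually shown to preserve \eqref{2-12} and the recursion relations \eqref{2-11} are the only relations defining $\hm A^+$. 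I do not anticipate a serious obstacle here; the theorem is essentially a summary, and the work has been done in the lemmas and propositions above. If anything, the point requiring the most care is making sure the odd flows $\diff{}{\qt_m}$ for $m\geq 2$, which correspond to non-local Hamiltonian structures, are legitimately derivations of $\hm A^+$ rather than of $\hm A$ — but this is exactly why $\hm A^+$ was introduced, and the formula \eqref{AH} makes sense on $\hm A^+$ by construction.
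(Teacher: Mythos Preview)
Your proposal is correct in substance and matches the paper's approach: the paper presents Theorem \ref{AC} explicitly as a summary of the preceding propositions, with no separate proof, and you have correctly identified and assembled those pieces.

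One small correction: your claim that ``each flow acts on $\qs_{\qa,m}$ by first shifting, i.e.\ via $T_m$ composed with its action on $\qs_{\qa,0}$'' and hence $[\mathcal D_1,\mathcal D_2]\qs_{\qa,m} = T_m\,[\mathcal D_1,\mathcal D_2]\qs_{\qa,0}$ is true for the even flows $\diff{}{t_i}$ (this is Lemma \ref{AB}) but not for the odd flows $\diff{}{\qt_m}$, since $\diff{\qs_{\qa,k}}{\qt_m} = T_{k,m}\vard{P_1}{v^\qa}$ is not of the form $T_k(\cdot)$. This does not harm your argument, because the propositions you cite already verify commutativity on all $\qs_{\qa,k}$ directly (e.g.\ the proof that $\fk{\diff{}{\qt_m}}{\diff{}{\qt_k}}\qs_{\qa,m}=0$ proceeds by induction using the recursion relations, and the proof that $\fk{\diff{}{\qt_m}}{\diff{}{t_X}}\qs_{\qa,k}=0$ is checked for arbitrary $k$). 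So your ``subtle point'' paragraph can simply be dropped or replaced by a reference back to those propositions.
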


\subsection{Deformation of the super tau-covers}
In this section, we will construct super tau-covers for tau-symmetric bihamiltonian  deformations of the Principal Hierarchy associates with a semisimple Frobenius manifold. Let us first recall how to construct the deformations of the tau-cover \eqref{g0tau} of the Principal Hierarchy following \cite{dubrovin2018bihamiltonian}.

We fix a semisimple Frobenius manifold $M$ and use $(P_0^{[0]},P_1^{[0]})$ to denote the bihamiltonian structure \eqref{biham-frob}. The two-point functions in the tau-cover \eqref{g0tau} will be denoted by $\Qo^{[0]}_{\qa,p;\qb,q}$ and similarly we denote $h_{\qa,p}^{[0]} = \Qo^{[0]}_{\qa,p;1,0}$.  Let $(P_0,P_1)$ be a deformation of $(P_0^{[0]},P_1^{[0]})$, then it determines a unique deformation of the Principal Hierarchy according to \cite{DLZ-1,liu2013bihamiltonian}. By using the results proved in \cite{falqui2012exact}, we know that a bihamiltonian deformation of the Principal Hierarchy is tau-symmetric if and only if the central invariants of the deformation of the bihamiltonian structure are constants. In such a case, after an appropriate Miura type transformation the Hamiltonian structure $P_0$ can be represented in the form 
\[
P_0 = P_0^{[0]} = \frac 12\int\eta^{\qa\qb}\qs_\qa\qs_\qb^1,
\]
and $P_1$ has no $\hm F^2_2$ components. Moreover, we can also require that the condition of exactness of the bihamiltonian structure is preserved \cite{dubrovin2018bihamiltonian,falqui2012exact}:
\[
P_0 = [Z,P_1],\quad Z = \int \qs_1.
\]
In what follows, we will always assume that $P_0$ and $Z$ take the above forms. We still use the same notation $X_{\qa,p}\in\hm F^1$, as we have already used in \eqref{DK}, to denote the unique deformed flows of the Principal Hierarchy and we will also use $\diff{}{t^{\qa,p}}$ to denote the vector field $D_{X_{\qa,p}}$. Let $H_{\qa,p}\in\hm F^0$ be the unique deformation of the Hamiltonian of the Principal Hierarchy such that 
\begin{equation}
\label{AD}
X_{\qa,p} = -\fk{H_{\qa,p}}{P_0},\quad \qa = 1,\cdots,n,\quad p\geq 0,
\end{equation}
and
\begin{equation}
\label{DL}
H_{\qa,-1}:=\int \eta_{\qa\qb}v^\qb.
\end{equation}
Let us define the following differential polynomials:
\begin{equation}
\label{AE}
h_{\qa,p} = D_ZH_{\qa, p},\quad \qa = 1,\cdots,n,\quad p\geq 0.
\end{equation}
Note we use an index convention that is different from the one used in \cite{dubrovin2018bihamiltonian}.
\begin{Prop}[\cite{dubrovin2018bihamiltonian}]
\label{AF}
We have the following results:
\begin{enumerate}
\item $D_{X_{1,0}} = \qp_x$.
\item The functionals $H_{\qa,p}$ defined in \eqref{AD},  \eqref{DL} and the differential polynomials defined in \eqref{AE} satisfy the relations
\[
H_{\qa,p} = \int h_{\qa,p+1},\quad p\geq -1.
\] 
\end{enumerate}
\end{Prop}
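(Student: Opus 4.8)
The plan is to deduce both statements from the exactness relation $P_0 = \fk{Z}{P_1}$ with $Z = \int\qs_1$, together with the rigidity of bihamiltonian vector fields established in \cite{DLZ-1,liu2013bihamiltonian}. I would first record three facts used throughout. Since $\vard{Z}{\qs_\qa} = \qd_{\qa 1}$ and $\vard{Z}{v^\qa} = 0$, one has $D_Z = \diff{}{v^1}$ and hence $\fk{Z}{F} = D_Z F$ for every $F\in\hm F^0$; writing $H_{\qa,p} = \int\rho_{\qa,p}$, the density $h_{\qa,p} = D_Z\rho_{\qa,p}$ of \eqref{AE} then represents the functional $\fk{Z}{H_{\qa,p}}$, so part (2) is equivalent to the statement that $\fk{Z}{H_{\qa,p+1}} = H_{\qa,p}$ for all $p\ge -1$. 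Next, $\fk{Z}{Z} = 0$, so $\mathrm{ad}_Z$ squares to zero and $\fk{Z}{P_0} = \fk{Z}{\fk{Z}{P_1}} = 0$. Finally, the vector field $\qp_x$ is realized by $X_x := \int v^{\qa,1}\qs_\qa$, which is bihamiltonian: $\fk{X_x}{P_0} = \int\eta^{\qa\qb}\qs_\qa^1\qs_\qb^1 = 0$, and $\fk{X_x}{P_1} = 0$ by $x$-translation invariance of $P_1$.

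For part (1): the flow $X_{1,0} = -\fk{H_{1,0}}{P_0}$ of the deformed Principal Hierarchy is bihamiltonian, and its leading term is the $t^{1,0}$-flow of the undeformed Principal Hierarchy, namely $\int\eta^{\ql\qg}\qp_x(\qp_\qg h^{[0]}_{1,1})\qs_\ql$; using $\qp_\qb\qp_\qg h^{[0]}_{1,1} = c_{1\qb\qg} = \eta_{\qb\qg}$ this equals $X_x$. Since $X_x$ is itself a bihamiltonian vector field with the same leading term, rigidity forces $X_{1,0} = X_x$, i.e.\ $D_{X_{1,0}} = \qp_x$.

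For part (2) I would fix $p\ge -1$, set $G := \fk{Z}{H_{\qa,p+1}}\in\hm F^0$, and prove $G = H_{\qa,p}$ in four steps. (i) Applying the graded Jacobi identity to $Z$, $H_{\qa,p+1}$, $P_0$, and using $\fk{Z}{P_0} = 0$ together with $\fk{H_{\qa,p+1}}{P_0} = -X_{\qa,p+1}$, gives $\fk{G}{P_0} = \fk{X_{\qa,p+1}}{Z}$. (ii) A second application of the Jacobi identity, now with $\fk{Z}{P_1} = P_0$, $\fk{Z}{P_0} = 0$ and the bihamiltonian property of $X_{\qa,p+1}$, shows that $\fk{X_{\qa,p+1}}{Z}$ is again a bihamiltonian vector field; its leading term is computed from the identity $\qp_1 h^{[0]}_{\qa,q} = h^{[0]}_{\qa,q-1}$ — immediate from $\Qo^{[0]}_{\qa,q-1;1,0} = h^{[0]}_{\qa,q-1}$ and $\Qo^{[0]}_{\qa,q-1;\qb,0} = \qp_\qb h^{[0]}_{\qa,q}$ with $\qb = 1$ — to be $-X^{[0]}_{\qa,p}$, so rigidity yields $\fk{X_{\qa,p+1}}{Z} = -X_{\qa,p}$, where $X_{\qa,-1} := -\fk{H_{\qa,-1}}{P_0} = 0$ because $H_{\qa,-1}$ is a Casimir of $P_0$. (iii) Combining (i) and (ii), $\fk{G - H_{\qa,p}}{P_0} = 0$, so $G - H_{\qa,p}$ lies in the space of Casimirs of $P_0 = \tfrac12\int\eta^{\qa\qb}\qs_\qa\qs_\qb^1$, which consists of the functionals $\int(c' + c_\qa v^\qa)$; in particular it is of differential degree $\le 0$ and carries no dispersion. (iv) Its leading term equals $\int(\qp_1 h^{[0]}_{\qa,p+2} - h^{[0]}_{\qa,p+1}) = 0$ when $p\ge 0$, and $\int(\qp_1 h^{[0]}_{\qa,1} - \eta_{\qa\qb}v^\qb) = 0$ when $p = -1$ by \eqref{DL}; since $G - H_{\qa,p}$ coincides with its own leading term, it vanishes, giving $G = H_{\qa,p}$.

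The Jacobi-identity bookkeeping and the dispersionless computations are routine. The one substantive ingredient is the rigidity (uniqueness) of bihamiltonian vector fields from \cite{DLZ-1,liu2013bihamiltonian}, which is precisely what forces the dispersive corrections in both parts to vanish; the point requiring care is step (iv), where one must check, via the calibration identities, that the dispersionless residue of the Casimir $G - H_{\qa,p}$ is exactly zero rather than merely $\qp_x$-exact.
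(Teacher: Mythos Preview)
The paper does not supply its own proof of this proposition; it is quoted from \cite{dubrovin2018bihamiltonian}. Your argument is correct and is precisely in the spirit of that reference and of the present paper: both parts are reduced, via the exactness relation $P_0=[Z,P_1]$ and graded Jacobi identities, to the rigidity of bihamiltonian vector fields from \cite{DLZ-1,liu2013bihamiltonian}. Indeed, the very next proposition in the paper (the bihamiltonian recursion \eqref{AN}) is proved by the same mechanism---form the discrepancy, observe it is a bihamiltonian vector field in $\hm F^1_{\geq 2}$, and conclude it vanishes---so your approach is exactly aligned with the paper's methodology.

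One small remark on step (iii)--(iv): your claim that a Casimir of $P_0$ ``carries no dispersion'' is correct but deserves one more sentence. Since $P_0$ is homogeneous of differential degree $1$, each graded piece of $G-H_{\qa,p}$ is separately a Casimir; in degree $d\ge 1$ the condition $\vard{F}{v^\qa}=\mathrm{const}$ forces $\vard{F}{v^\qa}=0$ and hence $F=0$, so only the degree-zero piece survives, which you then kill with the dispersionless identity $\qp_1 h^{[0]}_{\qa,q}=h^{[0]}_{\qa,q-1}$. With that clarification the proof is complete.
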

We also hove the following properties for the Hamiltonians and two-point functions of the deformed Principal Hierarchy.
\begin{Prop}
The following bihamiltonian recursion relations hold true:
\begin{equation}
\label{AN}
[H_{\qa,p-1},P_1]=\left(p+\frac12+\mu_\qa\right)[H_{\qa,p},P_0]+\sum_{k=1}^p \left(R_k\right)^\qg_\qa [H_{\qg, p-k},P_0],\quad p\geq 0.
\end{equation}
\end{Prop}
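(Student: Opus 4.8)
The plan is to prove the bihamiltonian recursion relation \eqref{AN} by deformation-theoretic bootstrapping from its dispersionless counterpart, which is part (3) of Theorem \ref{AG}. At the leading (hydrodynamic) order the relation holds for the Principal Hierarchy by Theorem \ref{AG}(3), so it suffices to show that the deformation $(P_0,P_1)$ and the deformed Hamiltonians $H_{\qa,p}$ — which are themselves uniquely determined by $(P_0,P_1)$ via \eqref{AD}, \eqref{DL} — preserve it. First I would introduce the ``defect'' functionals
\[
W_{\qa,p}:=[H_{\qa,p-1},P_1]-\left(p+\frac12+\mu_\qa\right)[H_{\qa,p},P_0]-\sum_{k=1}^p\left(R_k\right)^\qg_\qa[H_{\qg,p-k},P_0]\in\hm F^2,
\]
and the goal becomes $W_{\qa,p}=0$ for all $\qa$ and $p\geq 0$. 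The leading term of $W_{\qa,p}$ vanishes by Theorem \ref{AG}(3), so $W_{\qa,p}$ is a deformation term of positive differential degree.

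The key steps, in order, are as follows. Step one: exploit the exactness $P_0=[Z,P_1]$ together with the definition $h_{\qa,p}=D_ZH_{\qa,p}$ and Proposition \ref{AF}(2) to rewrite everything in terms of $Z$ and $P_1$; applying $[Z,-]$ and using the graded Jacobi identity, the recursion \eqref{AN} is equivalent (modulo lower $p$) to a statement about how $[H_{\qa,p},P_1]$ acts, which can be closed by induction on $p$. Step two: use that $[W_{\qa,p},P_0]=0$, which follows from the Jacobi identity, $[P_0,P_0]=[P_0,P_1]=0$, $[X_{\qg,q},P_0]=0$, and the definition of $W_{\qa,p}$ as a combination of Schouten brackets of Hamiltonians with $P_0$ and $P_1$; thus each $W_{\qa,p}$ is a $P_0$-cocycle. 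Step three: invoke the (variational) bihamiltonian cohomology vanishing results for the semisimple hydrodynamic pair $(P_0^{[0]},P_1^{[0]})$ from \cite{DLZ-1} (and the companion paper \cite{liu2021variational}) to conclude that a $P_0$-cocycle in $\hm F^2$ of positive differential degree which also vanishes to leading order must be $P_0$-exact, i.e. $W_{\qa,p}=[K_{\qa,p},P_0]$ for some $K_{\qa,p}\in\hm F^1$; one then shows $K_{\qa,p}$ can be absorbed or is forced to be zero by the normalization that fixes $H_{\qa,p}$ uniquely, possibly after checking the quasi-homogeneity constraint coming from the Euler vector field and the properties \eqref{2-3}, \eqref{2-4} of $\mu$ and $R_k$. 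Step four: run the induction on $p$, using $H_{\qa,-1}=\int\eta_{\qa\qb}v^\qb$ and $D_{X_{1,0}}=\qp_x$ from Proposition \ref{AF}(1) as the base case.

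The main obstacle I expect is Step three: controlling the cohomological ambiguity. The uniqueness of the deformed Hamiltonians $H_{\qa,p}$ (from \cite{DLZ-1,liu2013bihamiltonian}) is what pins down the solution, but translating ``$W_{\qa,p}$ is a $P_0$-cocycle vanishing to leading order'' into ``$W_{\qa,p}=0$'' requires either that the relevant bihamiltonian cohomology group $BH^2$ (or its variational refinement) vanishes in the appropriate bidegree, or a direct argument that any exact correction would contradict the defining normalization \eqref{AD}–\eqref{AE} of $H_{\qa,p}$ and $h_{\qa,p}$. A secondary technical point is keeping track of the quasi-homogeneity weights so that the combination with the $\left(p+\frac12+\mu_\qa\right)$ and $\left(R_k\right)^\qg_\qa$ coefficients is consistent degree by degree; here the constraints \eqref{2-3} and \eqref{2-4}, together with the fact that the deformation preserves exactness and has $P_1$ with no $\hm F^2_2$ component, should make the bookkeeping go through. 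If the cohomology does not vanish outright, the fallback is to observe that both sides of \eqref{AN} are determined by $(P_0,P_1)$ through the same recursive construction that produces $H_{\qa,p}$, so the defect $W_{\qa,p}$ satisfies a homogeneous linear recursion with zero initial data, forcing $W_{\qa,p}=0$ by induction.
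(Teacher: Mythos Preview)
Your overall strategy---form the defect, show its leading term vanishes by Theorem \ref{AG}(3), then kill it with cohomology---is exactly the paper's approach, but you over-engineer the cohomological step and in doing so miss the one observation that makes the proof immediate.

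First, a degree slip: since $H_{\qa,p}\in\hm F^0$ and $P_0,P_1\in\hm F^2$, each bracket $[H_{\qa,p-1},P_1]$, $[H_{\qa,p},P_0]$ lies in $\hm F^1$, not $\hm F^2$. So your defect $W_{\qa,p}$ (the paper calls it $Y_{\qa,p}$) is an element of $\hm F^1_{\geq 2}$, i.e.\ a vector field with no dispersionless part.

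The key point you are missing is that $Y_{\qa,p}$ is not merely a $P_0$-cocycle but a \emph{bihamiltonian vector field}: one has both $[Y_{\qa,p},P_0]=0$ and $[Y_{\qa,p},P_1]=0$. These follow from the graded Jacobi identity together with $[P_0,P_0]=[P_0,P_1]=[P_1,P_1]=0$ and the fact that each $X_{\qg,q}=-[H_{\qg,q},P_0]$ is already known to be bihamiltonian (so $[[H_{\qg,q},P_0],P_i]=0$ for $i=0,1$, and $[[H_{\qa,p-1},P_1],P_i]$ reduces via Jacobi to these). Once you know $Y_{\qa,p}\in\hm F^1_{\geq 2}$ is bihamiltonian, the result of \cite{DLZ-1} that bihamiltonian vector fields of differential degree $\geq 2$ vanish gives $Y_{\qa,p}=0$ in one stroke.

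So your Steps one and four (exactness via $Z$, induction on $p$) are unnecessary, and your Step three---getting only $P_0$-exactness $W_{\qa,p}=[K_{\qa,p},P_0]$ and then trying to argue $K_{\qa,p}$ away by normalization and quasi-homogeneity---is where you would get stuck: there is no clean way to force $K_{\qa,p}=0$ from the $P_0$-cohomology alone. The missing ingredient is precisely the second closedness condition $[Y_{\qa,p},P_1]=0$, which upgrades the problem from $H^1$ to $BH^1$ and lets the bihamiltonian cohomology do all the work.
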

\begin{proof}
Denote the following by $Y_{\qa,p}\in\hm F^1$ for $p\geq 0$:
\[
Y_{\qa,p} = [H_{\qa,p-1},P_1]-\left(p+\frac12+\mu_\qa\right)[H_{\qa,p},P_0]-\sum_{k=1}^p \left(R_k\right)^\qg_\qa [H_{\qg, p-k},P_0].
\]
According to Theorem \ref{AG}, we know that $Y_{\qa,p}\in\hm F^1_{\geq 2}$. Since the flows $X_{\qa,p} = -[H_{\qa,p},P_0]$ is bihamiltonian, we easily conclude that $Y_{\qa,p}$ is a bihamiltonian vector field. Therefore by using the results of bihamiltonian cohomology given in \cite{DLZ-1}, we arrive at $Y_{\qa,p} = 0$.
\end{proof}
\begin{Th}[\cite{dubrovin2018bihamiltonian}]
\label{AI}
There exist differential polynomials $\Qo_{\qa,p;\qb,q}$ such that they are deformations of $\Qo_{\qa,p;\qb,q}^{[0]}$, and satisfy the following properties:
\begin{enumerate}
\item $\qp_x\Qo_{\qa,p;\qb,q} =\diff{h_{\qb,q}}{t^{\qa,p}}$.
\item $\Qo_{\qa,p;\qb,q} = \Qo_{\qb,q;\qa,p}$ and $\Qo_{\qa,p;1,0} = h_{\qa,p}$.
\item $\diff{\Qo_{\qa,p;\qb,q}}{t^{\ql,k}}= \diff{\Qo_{\ql,k;\qb,q}}{t^{\qa,p}}$.
\end{enumerate}
\end{Th}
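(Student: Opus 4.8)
The plan is to construct the differential polynomials $\Qo_{\qa,p;\qb,q}$ order by order in the deformation parameter $\qe$, with Dubrovin's classical two-point functions $\Qo^{[0]}_{\qa,p;\qb,q}$ as the leading term. Grade $\hm A$ by $\deg_{\qp_x}$ with the convention $\deg\qe=-1$; then $h_{\qa,p}$ is homogeneous of degree $0$, the flow $\diff{}{t^{\qa,p}}=D_{X_{\qa,p}}$ is homogeneous of degree $1$, and we look for $\Qo_{\qa,p;\qb,q}$ homogeneous of degree $0$, so that its $\qe^j$-correction has $\deg_{\qp_x}=j$. At leading order the assertion is exactly Dubrovin's classical construction, so only the corrections are at stake.

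For property (1), one first checks that $\diff{h_{\qb,q}}{t^{\qa,p}}=D_{X_{\qa,p}}h_{\qb,q}$ always lies in $\qp_x\hm A$. By Proposition \ref{AF} we have $H_{\qb,q-1}=\int h_{\qb,q}$, and $H_{\qb,q-1}$ is conserved along $\diff{}{t^{\qa,p}}$ because the deformed Hamiltonians mutually Poisson-commute with respect to $P_0$---this is the integrability of the deformed Principal Hierarchy furnished by the bihamiltonian cohomology of \cite{DLZ-1}. Hence $\int\diff{h_{\qb,q}}{t^{\qa,p}}=0$, and there is $\Qo_{\qa,p;\qb,q}\in\hm A$, unique up to an additive constant, with $\qp_x\Qo_{\qa,p;\qb,q}=\diff{h_{\qb,q}}{t^{\qa,p}}$. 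A $\qp_x$-primitive of a homogeneous element of positive degree is, up to a constant, homogeneous of one lower degree, and a constant has degree $0$; so fixing $\Qo^{[0]}_{\qa,p;\qb,q}$ to be the classical two-point function and taking the corrections homogeneous removes the ambiguity, and $\Qo_{\qa,p;\qb,q}$ is a canonically determined deformation of $\Qo^{[0]}_{\qa,p;\qb,q}$. Since $\diff{}{t^{1,0}}=\qp_x$, the case $(\qa,p)=(1,0)$ reads $\qp_x\Qo_{1,0;\qb,q}=\qp_x h_{\qb,q}$, and comparison of leading terms gives $\Qo_{1,0;\qb,q}=h_{\qb,q}$; together with the symmetry proved next this yields $\Qo_{\qa,p;1,0}=h_{\qa,p}$, the second half of (2).

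The core of the proof is the symmetry $\diff{h_{\qb,q}}{t^{\qa,p}}=\diff{h_{\qa,p}}{t^{\qb,q}}$: once this holds, $\Qo_{\qa,p;\qb,q}$ and $\Qo_{\qb,q;\qa,p}$ have equal $\qp_x$-image and equal (classical, symmetric) leading term, hence coincide, which is property (2). I would derive the symmetry of the mixed derivatives from the hypothesis that the deformation is tau-symmetric, equivalently (by \cite{falqui2012exact}) that the central invariants of $(P_0,P_1)$ are constant. The order-by-order obstruction to deforming the classical symmetric tau-structure is a class in a bihamiltonian cohomology group that, by the computations of \cite{DLZ-1,dubrovin2018bihamiltonian}, depends only on the variation of the central invariants; their constancy kills the obstruction, so $\Qo^{[0]}_{\qa,p;\qb,q}$ admits a symmetric deformation, and one verifies that the normalization $h_{\qa,p}=D_ZH_{\qa,p}$ dictated by the exactness $P_0=\fk{Z}{P_1}$ selects precisely this deformation. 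Concretely, using the bihamiltonian recursion \eqref{AN} to replace $\diff{}{t^{\qa,p}}$ by the second flow $D_{\fk{H_{\qa,p-1}}{P_1}}$ together with the lower flows $\diff{}{t^{\qg,p-k}}$ reduces the antisymmetric defect $\diff{h_{\qb,q}}{t^{\qa,p}}-\diff{h_{\qa,p}}{t^{\qb,q}}$ to base cases in which one of the two index pairs equals $(1,0)$, where it vanishes identically; keeping control of the second-flow contributions and recognizing the residual obstruction as a cohomology class known to vanish is, I expect, the principal difficulty of the whole argument.

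Property (3) is then formal: since the flows commute, $\qp_x\bigl(\diff{\Qo_{\qa,p;\qb,q}}{t^{\ql,k}}-\diff{\Qo_{\ql,k;\qb,q}}{t^{\qa,p}}\bigr)=\diff{}{t^{\ql,k}}\diff{h_{\qb,q}}{t^{\qa,p}}-\diff{}{t^{\qa,p}}\diff{h_{\qb,q}}{t^{\ql,k}}=0$, so $\diff{\Qo_{\qa,p;\qb,q}}{t^{\ql,k}}-\diff{\Qo_{\ql,k;\qb,q}}{t^{\qa,p}}$ is annihilated by $\qp_x$; as each of its homogeneous components has positive differential degree, where $\qp_x$ is injective, the difference vanishes. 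This completes the construction of the deformed two-point functions.
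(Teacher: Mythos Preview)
The paper does not prove this theorem: it is stated as a result of \cite{dubrovin2018bihamiltonian} and immediately used without argument, so there is no proof in the present paper to compare against. Your proposal is therefore an attempt to reconstruct a proof that lives elsewhere.

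On its own merits, your outline is reasonable and correctly identifies the architecture: existence of a $\qp_x$-primitive from conservation of $H_{\qb,q-1}$, the symmetry $\diff{h_{\qb,q}}{t^{\qa,p}}=\diff{h_{\qa,p}}{t^{\qb,q}}$ as the essential step, and the compatibility (3) as a formal consequence of commuting flows plus injectivity of $\qp_x$ in positive degree. The one place where your sketch is genuinely incomplete is the symmetry itself. You gesture at a bihamiltonian-cohomological obstruction argument and at an inductive reduction via the recursion \eqref{AN}, but you do not actually carry either out, and the phrase ``I expect'' signals that you have not closed this gap. In the original reference the symmetry is obtained from the tau-symmetry condition on the Hamiltonian densities (equivalently, constancy of the central invariants) together with the specific normalization $h_{\qa,p}=D_Z H_{\qa,p}$ coming from exactness $P_0=[Z,P_1]$; the point is that $D_Z$ intertwines the two Hamiltonian descriptions in a way that forces $\diff{h_{\qb,q}}{t^{\qa,p}}$ to be symmetric. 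Your recursion idea can be made to work, but you would need to show explicitly that the ``second-flow contributions'' you mention are themselves symmetric, which again comes down to the tau-symmetry of the densities rather than to a vanishing cohomology class per se.
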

To construct the tau-cover of the deformed Principal Hierarchy, we introduce the following \textit{normal coordinates} as in \cite{dubrovin2001normal}:
\begin{equation}
\label{BJ}
w_\qa = h_{\qa,0},\quad w^\qa = \eta^{\qa\qb}w_\qb,
\end{equation}
then we see that the differential polynomials $w^\qa$ and $v^\qa$ are related by a Miura type transformation. In particular, it follows from Proposition \ref{AF} that
\begin{equation}
\label{AK}
H_{\qa,-1} =\int w_\qa =  \int \eta_{\qa\qb}v^\qb.
\end{equation}
In terms of the normal coordinates, the tau-cover of the deformed Principal Hierarchy can be represented in the form (c.p.  \eqref{g0tau}) 
\begin{equation}
\label{AW}
\diff{f_{\qa,p}}{t^{\qb,q}} = \Qo_{\qa,p;\qb,q}, \quad \diff{w^\qa}{t^{\qb,q}} = \eta^{\qa\ql}\qp_x\Qo_{\ql,0;\qb,q}.
\end{equation}

Let us proceed to construct its super tau-cover. To this end we introduce odd variables $\Phi_{\qa,p}^m$, as we do for the super tau-cover of the Principal Hierarchy, such that
\begin{equation}
\label{AJ}
\qp_x\Phi_{\qa,p}^m = \diff{h_{\qa,p}}{\qt_m},\quad m\geq 0.
\end{equation}
Here the odd flows $\diff{}{\qt_m}$ are defined by \ref{AH}. By using Theorem \ref{AI} we obtain the following identity:
\[
\diff{}{t^{\qb,q}}\diff{h_{\qa,p}}{\qt_m} = \qp_x\diff{\Qo_{\qa,p;\qb,q}}{\qt_m}.
\]
Therefore we conclude that the following definitions of the evolutions of the odd variables $\Phi_{\qa,p}^m$ along the flows $\diff{}{t^{\qb,q}}$ are compatible with \eqref{AJ}:
\[
\diff{\Phi_{\qa,p}^m}{t^{\qb,q}} = \diff{\Qo_{\qa,p;\qb,q}}{\qt_m}.
\]
To define the evolutions of $\Phi_{\qa,p}^m$ along the odd flows $\diff{}{\qt_k}$, we need the following lemma.
\begin{Lem}
There exist differential polynomials $F_{\qa,p}\in\hm A^2$ such that
\[
\diff{}{\qt_k}\diff{h_{\qa,p}}{\qt_m} = T_{m,k}\qp_xF_{\qa,p},\quad \qa = 1,\cdots,n;\ p,m,k\geq 0.
\]
\end{Lem}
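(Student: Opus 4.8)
The plan is to compute $\diff{}{\qt_k}\diff{h_{\qa,p}}{\qt_m}$ directly using the definition \eqref{AJ} together with Lemma \ref{lem2-8}. By \eqref{AJ} we have $\diff{h_{\qa,p}}{\qt_m} = \qp_x\Phi_{\qa,p}^m$, but it is cleaner to go back one step further: since $h_{\qa,p} = D_ZH_{\qa,p}$ and $\diff{}{\qt_m} = D_{P_0}$ acting on local functionals corresponds to the shift construction, I would first express $\diff{h_{\qa,p}}{\qt_m}$ using the identity from Example \ref{ex2-4}, namely $T_{m+1}(D_{P_0}f) = T_m(D_{P_1}f)$, to rewrite $\diff{h_{\qa,p}}{\qt_m}$ in a form to which Lemma \ref{lem2-8} applies. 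Concretely, $\diff{h_{\qa,p}}{\qt_m}$ is $D_{\qt_m}$ applied to the differential polynomial $h_{\qa,p}$, and by the definition \eqref{AH} of the odd flows this equals $\sum_{s\geq 0}\diff{h_{\qa,p}}{v^{\qa,s}}\qp_x^s T_m\vard{P_0}{\qs_{\qa,0}} = T_m\kk{D_{P_0}h_{\qa,p}}$, using Lemma \ref{AB} with $\mathcal D$ the derivation $\sum_s \diff{h_{\qa,p}}{v^{\qa,s}}\qp_x^s(-)$; so $\diff{h_{\qa,p}}{\qt_m} = T_m(G_{\qa,p})$ for the local one-form $G_{\qa,p} = D_{P_0}h_{\qa,p}\in\hm A^1$.

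Next I would apply Lemma \ref{lem2-8} with $X = G_{\qa,p}$ to get
\[
\diff{}{\qt_k}\diff{h_{\qa,p}}{\qt_m} = \diff{}{\qt_k}T_m(G_{\qa,p}) = T_{m,k}\kk{D_{P_1}G_{\qa,p}} - T_{m+1,k}\kk{D_{P_0}G_{\qa,p}}.
\]
The crucial point is to show that the combination $D_{P_1}G_{\qa,p} - T_{m+1,k}/T_{m,k}\cdot(\dots)$ — more precisely, that after using the recursion $T_{m+1}\comp D_{P_0} = T_m\comp D_{P_1}$ at the level of the operators defining $\qs_{\qa,m+1}$ via \eqref{2-11} — the two terms combine into a single $T_{m,k}$ applied to something that is a total $x$-derivative. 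The mechanism should be: $D_{P_0}G_{\qa,p} = D_{P_0}D_{P_0}h_{\qa,p}$, and since $[P_0,P_0]=0$ gives $D_{P_0}^2 = \frac12[D_{P_0},D_{P_0}] = \pm\frac12 D_{[P_0,P_0]} = 0$ by \eqref{dc}, the second term $T_{m+1,k}\kk{D_{P_0}G_{\qa,p}}$ vanishes identically. Wait — one must be careful: $D_{P_0}$ is an odd derivation, so $D_{P_0}^2 = \frac12[D_{P_0},D_{P_0}]$ indeed equals zero by \eqref{dc}. So we are left with $\diff{}{\qt_k}\diff{h_{\qa,p}}{\qt_m} = T_{m,k}\kk{D_{P_1}D_{P_0}h_{\qa,p}}$, and it remains to show $D_{P_1}D_{P_0}h_{\qa,p}$ is a total $x$-derivative of a differential polynomial $F_{\qa,p}\in\hm A^2$.

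For the last step I would observe that $D_{P_0}h_{\qa,p} = \qp_x(\text{something})$ is essentially the content of $\diff{h_{\qa,p}}{\qt_0}$ being $\qp_x$ of an odd differential polynomial; indeed from \eqref{AJ} with $m=0$ we already have $\diff{h_{\qa,p}}{\qt_0} = \qp_x\Phi_{\qa,p}^0$ with $\Phi_{\qa,p}^0\in\hm A^{+,1}$ which is in fact local (equal to $\qs_{\qa,0}$-type combination for the undeformed part, and locality of the deformation follows since $D_{P_0}$ raises differential degree but the recursion \eqref{2-11} for $m=0$ involves only $\mathcal P_1^{\qa\qb}\qs_{\qb,0}$ which is local). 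Then $D_{P_1}D_{P_0}h_{\qa,p} = D_{P_1}\qp_x\Phi_{\qa,p}^0 = \qp_x D_{P_1}\Phi_{\qa,p}^0$ since $D_{P_1}$ commutes with $\qp_x$, and $F_{\qa,p} := D_{P_1}\Phi_{\qa,p}^0\in\hm A^2$ does the job — provided $D_{P_1}\Phi_{\qa,p}^0$ is genuinely a local differential polynomial, which is where I would need to check that applying $D_{P_1}$ (whose Hamiltonian operator $\mathcal P_1$ has only local coefficients) to the local one-form $\Phi_{\qa,p}^0$ stays in $\hm A^2$ rather than landing in $\hm A^{+,2}$. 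I expect this locality verification — ensuring $\Phi_{\qa,p}^0$ and then $D_{P_1}\Phi_{\qa,p}^0$ never produce the non-local $\qs_{\qa,m}$ with $m\geq 1$ — to be the main obstacle; the algebraic manipulations with $D_{P_0}^2=0$ and Lemma \ref{lem2-8} are routine, but pinning down that $F_{\qa,p}$ lies in $\hm A^2$ (not merely $\hm A^{+,2}$) requires tracking the differential-polynomial structure of the deformed $h_{\qa,p}$ and the Hamiltonian operators carefully, likely by induction on the deformation parameter $\qe$.
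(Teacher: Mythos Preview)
Your argument is correct and matches the paper's up through the identity
\[
\diff{}{\qt_k}\diff{h_{\qa,p}}{\qt_m} = T_{m,k}\kk{D_{P_1}D_{P_0}h_{\qa,p}},
\]
obtained via Lemma~\ref{lem2-8} together with $D_{P_0}^2=0$. The gap is in your last step, where you try to exhibit $D_{P_1}D_{P_0}h_{\qa,p}$ as a total $x$-derivative by first writing $D_{P_0}h_{\qa,p}=\qp_x\Phi_{\qa,p}^0$ with $\Phi_{\qa,p}^0\in\hm A^1$. This is false in general: the obstruction to $D_{P_0}h_{\qa,p}\in\hm A^1$ being $\qp_x$ of a \emph{local} element is precisely $\int D_{P_0}h_{\qa,p}=[P_0,H_{\qa,p-1}]=-X_{\qa,p-1}\in\hm F^1$, which is the (nonzero) Hamiltonian vector field of the hierarchy. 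The symbols $\Phi_{\qa,p}^m$ introduced in \eqref{AJ} are new formal odd variables, not elements of $\hm A$ or even of $\hm A^+$; Proposition~\ref{BM} later shows only that certain rescalings lie in $\hm A^+$, and even that comes after the lemma you are proving. So the factorization route through a local primitive of $D_{P_0}h_{\qa,p}$ does not exist.

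The paper closes the argument differently and more simply: it checks directly that $\int D_{P_1}D_{P_0}h_{\qa,p}=[P_1,[P_0,H_{\qa,p-1}]]=0$, because $[P_0,H_{\qa,p-1}]$ is (up to sign) the bihamiltonian vector field $X_{\qa,p-1}$ and hence commutes with $P_1$. Vanishing of this local functional is exactly what is needed to conclude $D_{P_1}D_{P_0}h_{\qa,p}=\qp_xF_{\qa,p}$ for some $F_{\qa,p}\in\hm A^2$. The bihamiltonian property of the flows is the missing ingredient in your proposal; once you invoke it, the locality worry you flagged disappears, and no induction on $\qe$ is required.
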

\begin{proof}
By using the definition \eqref{2-2} of $D_{P_i}$ and the fact that the vector fields $X_{\qa,p}$ are bihamiltonian, it is easy to see that
\[
\int D_{P_1}D_{P_0}h_{\qa,p} = \int[P_1,[P_0,H_{\qa,p-1}]] = 0.
\]
Therefore there exists $F_{\qa,p}\in\hm A^2$ such that
\[
\diff{}{\qt_1}\diff{h_{\qa,p}}{\qt_0} = \qp_xF_{\qa,p}.	
\]
Now from Lemma \ref{lem2-8} it follows that
\[
\diff{}{\qt_k}\diff{h_{\qa,p}}{\qt_m} = T_{m,k}\qp_xF_{\qa,p}.
\]
The lemma is proved.
\end{proof}

Now we are ready to construct the following super tau-cover of the deformed Principal Hierarchy:
\begin{Th}
Let $M$ be a semisimple Frobenius manifold and $(P_0,P_1)$ be a deformation of the bihamiltonian structure \eqref{biham-frob} with constant central invariants. Then we have the following  deformation of the super tau-cover of the Principal Hierarchy:
\begin{align*}
\diff{f_{\qa,p}}{t^{\qb,q}} &= \Qo_{\qa,p;\qb,q},\\
\diff{f_{\qa,p}}{\qt_m} &= \Phi_{\qa,p}^m,\\
\diff{\Phi_{\qa,p}^m}{t^{\qb,q}} &= \diff{\Qo_{\qa,p;\qb,q}}{\qt_m},\\
\diff{\Phi_{\qa,p}^m}{\qt_k} &= T_{m,k}F_{\qa,p}.
\end{align*}
\end{Th}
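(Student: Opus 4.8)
The plan is to verify that the four families of odd-flow equations listed in the theorem are mutually consistent (well-defined as a system of evolutionary PDEs on the extended ring generated by the $v^{\qa,s}$, $\qs^s_{\qa,m}$, $f_{\qa,p}$, $\Phi^m_{\qa,p}$) and that, together with the super extended flows of Theorem \ref{AC} applied to the family $\{X_{\qa,p}\}$, they all mutually commute; this is precisely what it means to call the system a super tau-cover, following the pattern of Theorem \ref{g0super2}. The new content beyond Theorem \ref{AC} concerns only the evolutions of $f_{\qa,p}$ and $\Phi^m_{\qa,p}$, so I would organize the verification into the cross-bracket relations involving these two families.

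First I would record the compatibility of the $f_{\qa,p}$-equations. The relation $\fk{\diff{}{t^{\qb,q}}}{\diff{}{t^{\qg,r}}}f_{\qa,p}=0$ is exactly part (3) of Theorem \ref{AI} ($\diff{\Qo_{\qa,p;\qb,q}}{t^{\ql,k}}=\diff{\Qo_{\ql,k;\qb,q}}{t^{\qa,p}}$), while $\fk{\diff{}{t^{\qb,q}}}{\diff{}{\qt_m}}f_{\qa,p}=0$ holds by the very definition $\diff{\Phi^m_{\qa,p}}{t^{\qb,q}}=\diff{\Qo_{\qa,p;\qb,q}}{\qt_m}$ together with $\diff{}{\qt_m}\Qo_{\qa,p;\qb,q}$ being computed consistently on both sides. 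For $\fk{\diff{}{\qt_k}}{\diff{}{\qt_m}}f_{\qa,p}$ I would use $\diff{f_{\qa,p}}{\qt_m}=\Phi^m_{\qa,p}$ and $\diff{\Phi^m_{\qa,p}}{\qt_k}=T_{m,k}F_{\qa,p}$, and exploit the antisymmetry $T_{m,k}=-T_{k,m}$: since $\diff{}{\qt_k}\diff{f_{\qa,p}}{\qt_m}=T_{m,k}F_{\qa,p}$ is manifestly antisymmetric in $(k,m)$ (and here one must be careful with the sign coming from the odd derivations acting on the odd object $\Phi^m_{\qa,p}$, but the definition $\diff{}{\qt_k}\diff{h_{\qa,p}}{\qt_m}=T_{m,k}\qp_xF_{\qa,p}$ already carries exactly the right antisymmetry), the bracket closes. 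The key point being reused throughout is the preceding lemma, which produces $F_{\qa,p}\in\hm A^2$ with $\diff{}{\qt_k}\diff{h_{\qa,p}}{\qt_m}=T_{m,k}\qp_x F_{\qa,p}$; from $\qp_x\Phi^m_{\qa,p}=\diff{h_{\qa,p}}{\qt_m}$ this gives $\qp_x\diff{\Phi^m_{\qa,p}}{\qt_k}=\qp_x T_{m,k}F_{\qa,p}$, and one then checks the integration-constant ambiguity is fixed compatibly (e.g.\ by a degree or normalization argument as in \cite{liu2020super,dubrovin2018bihamiltonian}).

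Next I would treat the mixed brackets between the $\Phi$-flows and the already-constructed flows. For $\fk{\diff{}{t^{\qg,r}}}{\diff{}{\qt_m}}\Phi^m_{\qa,p}$ and $\fk{\diff{}{\qt_k}}{\diff{}{\qt_l}}\Phi^m_{\qa,p}$, the cleanest route is to apply $\qp_x$ and use that $\qp_x$ is injective on differential polynomials with no constant term, reducing each identity to a statement about $\diff{h_{\qa,p}}{\qt_m}$; then $\fk{\diff{}{t^{\qg,r}}}{\diff{}{\qt_m}}h_{\qa,p}=0$ follows from Theorem \ref{AI}(1) and the commutation of $\diff{}{t^{\qg,r}}$ with $\diff{}{\qt_m}$ established in Theorem \ref{AC}, and $\fk{\diff{}{\qt_k}}{\diff{}{\qt_l}}h_{\qa,p}=0$ follows from the commutation of the odd flows (Proposition in Sect.\,\ref{AV}). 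One must also verify the evolutions of $\Phi^m_{\qa,p}$ are compatible with the recursion $\qp_x\Phi^m_{\qa,p}=\diff{h_{\qa,p}}{\qt_m}$ under each flow $\diff{}{t^{\qg,r}}$ and $\diff{}{\qt_k}$ — for the former this is the displayed identity $\diff{}{t^{\qb,q}}\diff{h_{\qa,p}}{\qt_m}=\qp_x\diff{\Qo_{\qa,p;\qb,q}}{\qt_m}$ deduced from Theorem \ref{AI}, and for the latter it is the content of the lemma just proved.

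The main obstacle I anticipate is not any single bracket computation but the bookkeeping of signs and of integration constants: every $\qp_x^{-1}$ used to pass from $\diff{h_{\qa,p}}{\qt_m}$ to $\Phi^m_{\qa,p}$ introduces a constant of integration, and one must argue (by a quasi-homogeneity/degree count inherited from the Frobenius structure, as in the recursion $-\left(\tfrac{2p-1}{2}+\mu_\qa\right)\Phi^m_{\qa,p}=\cdots$ recalled above) that these can be fixed once and for all so that all the flows act consistently; simultaneously, since $\Phi^m_{\qa,p}$ is odd and the $\diff{}{\qt_k}$ are odd derivations, the graded Leibniz rule must be tracked carefully so that the antisymmetry $T_{m,k}=-T_{k,m}$ lands with the correct overall sign. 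Once these are settled, the commutation of the whole system is a direct assembly of the pieces above with the flows of Theorem \ref{AC}.
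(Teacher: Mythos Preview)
Your proposal is correct and follows the same route as the paper: the theorem is stated there without a separate proof, as a summary of the preceding constructions (Theorem~\ref{AC}, Theorem~\ref{AI}, the compatibility check for $\diff{\Phi^m_{\qa,p}}{t^{\qb,q}}$ displayed just before the lemma, and the lemma producing $F_{\qa,p}$). Your plan to verify the remaining cross-brackets by applying $\qp_x$ and reducing to identities for $h_{\qa,p}$ is exactly the intended mechanism.

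One minor clarification on your ``main obstacle'': in this setup the $\Phi^m_{\qa,p}$ are introduced as \emph{new formal generators} subject to the constraint \eqref{AJ}, and their evolutions along $\diff{}{t^{\qb,q}}$ and $\diff{}{\qt_k}$ are \emph{defined} by the formulae in the theorem rather than obtained by integrating; so there is no integration-constant ambiguity to fix. The only thing to check is that these defined evolutions preserve the constraint $\qp_x\Phi^m_{\qa,p}=\diff{h_{\qa,p}}{\qt_m}$, which you already identified and which is precisely what the paragraph before the lemma (for the even flows) and the lemma itself (for the odd flows) establish.
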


Recall that when the diagonal matrix $\mu$ of a Frobenius manifold $M$ satisfies the condition $\frac{1-2k}{2}\notin\mathrm{Spec}(\mu)$ for any $k = 1,2,\cdots$, the odd variables $\Phi_{\qa,p}^m$ for the super tau-cover of the Principal Hierarchy are redundant since they can be represented by elements in $\hm A^+$. We want to show that the deformed super tau-cover have the same property which will play an important role in our consideration of the deformation of the Virasoro symmetries.

We start with the definition of generalized shift operators.
\begin{Def}
We define the shift operators $\hat T_k$ for $k\geq 0$ to be linear operators $\hm A^{+,1}\to\hm A^{+,1}$ such that
\[
\hat T_k(f\qs_{\qa,m}^l) = f\qs_{\qa,m+k}^l,\quad f\in\hm A^0,\ m\geq 0.
\]
\end{Def}
The following lemma is easy to verify.
\begin{Lem}
The operators $\hat T_k$ commute with $\qp_x$ and are compatible with the recursion relations \eqref{2-11}.
\end{Lem}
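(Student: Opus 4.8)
The plan is to check both assertions directly from the defining formula $\hat T_k(f\qs_{\qa,m}^l)=f\qs_{\qa,m+k}^l$ (with $f\in\hm A^0$), treating $\hat T_k$ first as a linear operator on the ambient free algebra and then verifying that it descends to the quotient $\hm A^+$. For the commutation with $\qp_x$ it is enough, by linearity, to evaluate both sides on a monomial $f\qs_{\qa,m}^l$: applying $\qp_x$ and then $\hat T_k$ gives $(\qp_x f)\qs_{\qa,m+k}^l+f\qs_{\qa,m+k}^{l+1}$, while applying $\hat T_k$ and then $\qp_x$ gives the same expression, using $\qp_x f\in\hm A^0$. Hence $\hat T_k\circ\qp_x=\qp_x\circ\hat T_k$ on $\hm A^{+,1}$.

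For compatibility with \eqref{2-11} I would argue that $\hat T_k$ sends the relation ideal into itself, so that it is well defined on $\hm A^+$. Recall that $\hm A^+$ is the quotient of the ambient free algebra by the differential ideal $J$ generated by $r_m^\qa:=\eta^{\qa\qb}\qs_{\qb,m+1}^1-\mathcal P_1^{\qa\qb}\qs_{\qb,m}$, $m\geq 0$. Since $J$ is homogeneous for the super degree and $\hat T_k$ acts only on the super-degree-one part, it suffices to treat $J\cap\hm A^{+,1}$, whose elements are of the form $\sum_i g_i\,\qp_x^{s_i}r_{m_i}^{\qa_i}$ with $g_i\in\hm A^0$. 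From the defining formula $\hat T_k$ is $\hm A^0$-linear, so together with the first step $\hat T_k\bigl(\sum_i g_i\,\qp_x^{s_i}r_{m_i}^{\qa_i}\bigr)=\sum_i g_i\,\qp_x^{s_i}\hat T_k(r_{m_i}^{\qa_i})$. Since the coefficients of the differential operator $\mathcal P_1^{\qa\qb}$ belong to $\hm A^0$, the same linearity gives $\hat T_k(\mathcal P_1^{\qa\qb}\qs_{\qb,m})=\mathcal P_1^{\qa\qb}\qs_{\qb,m+k}$ and $\hat T_k(\eta^{\qa\qb}\qs_{\qb,m+1}^1)=\eta^{\qa\qb}\qs_{\qb,m+k+1}^1$, whence $\hat T_k(r_m^\qa)=r_{m+k}^\qa\in J$. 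Therefore $\hat T_k(J\cap\hm A^{+,1})\subseteq J$, which is precisely the required compatibility with \eqref{2-11}.

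There is no genuine difficulty here; the only point demanding a little care is the description of the super-degree-one part of the relation ideal — namely that its coefficients $g_i$ are purely even — so that the $\qp_x$-commutation established in the first step can be applied term by term. With this in place the two assertions fit together cleanly: compatibility with \eqref{2-11} is essentially a consequence of $\hat T_k$ commuting with $\qp_x$ together with the fact that $\hat T_k$ permutes the generators of the relation ideal among themselves by $r_m^\qa\mapsto r_{m+k}^\qa$.
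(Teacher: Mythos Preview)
Your argument is correct and is precisely the straightforward verification the paper has in mind when it states that the lemma ``is easy to verify'' without further proof. Your extra care in phrasing compatibility as ``$\hat T_k$ preserves the relation ideal'' is a slightly more rigorous formulation of the same computation, namely that $\hat T_k$ applied to the relation at level $m$ yields the relation at level $m+k$.
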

\begin{Lem}
\label{AL}
If $\Phi_{\qa,p}^0$ can be represented by an element in $\hm A^{+,1}$, then so does $\Phi_{\qa,p}^m$ and $\Phi_{\qa,p}^m = \hat T_m\Phi_{\qa,p}^0$ for $m\geq 1$.
\end{Lem}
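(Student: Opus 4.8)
The plan is to reduce the claim for general $m$ to the case $m=0$ by differentiating the two defining relations \eqref{AJ} for $\Phi_{\qa,p}^m$ with respect to $x$ and exploiting the compatibility of the shift operators $T_k$, $T_{k,l}$ and $\hat T_k$ with $\qp_x$ and with the recursion relations \eqref{2-11}. Concretely, suppose $\Phi_{\qa,p}^0\in\hm A^{+,1}$. Set $\Psi_{\qa,p}^m:=\hat T_m\Phi_{\qa,p}^0$ for $m\geq 1$; this is manifestly an element of $\hm A^{+,1}$ because $\hat T_m$ maps $\hm A^{+,1}$ to itself. I want to show that $\Psi_{\qa,p}^m$ satisfies the same relation $\qp_x\Psi_{\qa,p}^m=\diff{h_{\qa,p}}{\qt_m}$ that characterizes $\Phi_{\qa,p}^m$ up to a constant; since both $\Phi_{\qa,p}^m$ and $\Psi_{\qa,p}^m$ are genuine (local, in the extended sense) differential polynomials with no $x$-independent ambiguity surviving after one checks the differential-degree zero component, this forces $\Phi_{\qa,p}^m=\Psi_{\qa,p}^m=\hat T_m\Phi_{\qa,p}^0$.

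The first key step is to relate $\diff{}{\qt_m}$ to $\diff{}{\qt_0}$ via the shift operators. From the definition \eqref{AH} of the odd flows and Lemma \ref{lem2-8} (or directly from Example \ref{ex2-4} and the recursion \eqref{2-12}), one obtains for any $f\in\hm A^0$ an identity expressing $\diff{f}{\qt_m}$ in terms of $T_m$, $T_{0,m}$ applied to $D_{P_0}f$, $D_{P_1}f$; more precisely, applying $\qp_x$ and using $\qp_x\Phi_{\qa,p}^0=\diff{h_{\qa,p}}{\qt_0}$ together with the recursion relations, the second key step is to check $\diff{h_{\qa,p}}{\qt_m}=\hat T_m\diff{h_{\qa,p}}{\qt_0}$ — here $h_{\qa,p}\in\hm A^0$ so the relevant shift is $\hat T_m$ acting after one differentiation, which is legitimate because $\hat T_m$ commutes with $\qp_x$ (the preceding Lemma). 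Granting this, the third step is immediate: $\qp_x\Psi_{\qa,p}^m=\qp_x\hat T_m\Phi_{\qa,p}^0=\hat T_m\qp_x\Phi_{\qa,p}^0=\hat T_m\diff{h_{\qa,p}}{\qt_0}=\diff{h_{\qa,p}}{\qt_m}=\qp_x\Phi_{\qa,p}^m$, so $\Phi_{\qa,p}^m-\Psi_{\qa,p}^m\in\ker\qp_x$. Since any element of $\hm A^{+,1}$ killed by $\qp_x$ is a constant-coefficient combination of the $\qth$-type variables of differential degree zero, and $\Phi_{\qa,p}^0$ itself is built from the recursion starting at $\Phi_{\qa,0}^m=\qs_{\qa,m}$ (inspection of low differential degree forces the constant to vanish), we conclude $\Phi_{\qa,p}^m=\hat T_m\Phi_{\qa,p}^0$.

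The step I expect to be the main obstacle is verifying $\diff{h_{\qa,p}}{\qt_m}=\hat T_m\diff{h_{\qa,p}}{\qt_0}$, i.e.\ that the flow $\diff{}{\qt_m}$ acting on the local density $h_{\qa,p}$ is obtained from $\diff{}{\qt_0}$ by the clean shift $\hat T_m$. This is not formal: $\diff{}{\qt_0}=D_{P_0}$ is a local derivation but $\diff{}{\qt_m}$ for $m\geq 2$ corresponds to a nonlocal Hamiltonian structure, so one must use the recursion relations \eqref{2-11} (equivalently \eqref{2-12}) to propagate from $m=0,1$ to all $m$, and keep track of the fact that $D_{P_0}\qs_{\qb,0}=0$ while $D_{P_1}\qs_{\qb,0}\neq 0$. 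The cleanest route is induction on $m$: write $\diff{h_{\qa,p}}{\qt_{m+1}}$ using \eqref{AH}, substitute $\diff{v^\qa}{\qt_{m+1}}=T_{m+1}\vard{P_0}{\qs_{\qa,0}}$ via the recursion in the form $T_{m+1}\vard{P_0}{\qs_{\qa,0}}=T_m\vard{P_1}{\qs_{\qa,0}}$, and compare with the expansion of $\hat T_{m+1}\diff{h_{\qa,p}}{\qt_0}$ using $\hat T_{m+1}=\hat T_1\circ\hat T_m$; the discrepancy terms coming from the $T_{k,l}$-part acting on $\qs$-variables cancel exactly as in the proof of Lemma \ref{lem2-8}. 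Once this identity is in hand the rest of the argument is the short kernel-of-$\qp_x$ comparison described above.
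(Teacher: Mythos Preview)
Your overall strategy coincides with the paper's: show $\qp_x\bigl(\hat T_m\Phi_{\qa,p}^0\bigr)=\diff{h_{\qa,p}}{\qt_m}$ by commuting $\hat T_m$ past $\qp_x$ and reducing to the identity $\hat T_m\diff{h_{\qa,p}}{\qt_0}=\diff{h_{\qa,p}}{\qt_m}$. The paper's entire proof is this one line.

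Where you diverge is in the effort you plan to invest in that identity. You call it ``the main obstacle,'' claim it ``is not formal,'' and propose an induction on $m$ using the recursion relations \eqref{2-11}--\eqref{2-12} and the structure of $D_{P_1}$. None of this is needed. The density $h_{\qa,p}$ lies in $\hm A^0$, so by \eqref{AH} the flow $\diff{}{\qt_m}$ acts on it only through $\diff{v^\qa}{\qt_m}=T_m\vard{P_0}{\qs_{\qa,0}}$; since $T_m$ commutes with $\qp_x$ and with multiplication by elements of $\hm A^0$, one gets
\[
\diff{h_{\qa,p}}{\qt_m}=\sum_{s\geq 0}\diff{h_{\qa,p}}{v^{\qb,s}}\,\qp_x^s T_m\vard{P_0}{\qs_{\qb,0}}
= T_m\Bigl(\sum_{s\geq 0}\diff{h_{\qa,p}}{v^{\qb,s}}\,\qp_x^s\vard{P_0}{\qs_{\qb,0}}\Bigr)
= T_m\diff{h_{\qa,p}}{\qt_0}.
\]
On $\hm A^1$ the operators $T_m$ and $\hat T_m$ coincide by definition, so $\diff{h_{\qa,p}}{\qt_m}=\hat T_m\diff{h_{\qa,p}}{\qt_0}$ immediately. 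Neither $P_1$ nor the recursion relations enter; the ``nonlocal Hamiltonian structure for $m\geq 2$'' is irrelevant here because the $\qs$-part of $\diff{}{\qt_m}$ never touches an element of $\hm A^0$.

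Your remaining concern about $\ker\qp_x$ is legitimate in principle, but in the paper's setting the $\Phi_{\qa,p}^m$ are deformations of the genus-zero objects of Theorem~\ref{g0super2}, and the leading terms of $\hat T_m\Phi_{\qa,p}^0$ and $\Phi_{\qa,p}^m$ agree by construction, which fixes the integration constant; the paper simply takes this for granted.
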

\begin{proof}
By using the relation \eqref{AJ}, it is easy to see that
\[
\qp_x\hat T_m\Phi_{\qa,p}^0 = \hat T_m \diff{h_{\qa,p}}{\qt_0} = \diff{h_{\qa,p}}{\qt_m}. 
\]
The lemma is proved.
\end{proof}
We will use the notation $\Phi_{\qa,p}^m\in\hm A^+$ to mean that $\Phi_{\qa,p}^m$ can be represented by an element in $\hm A^+$.
\begin{Prop}
\label{BM}
We have $\Phi_{\qa,0}^m\in\hm A^{+}$ and
\begin{equation}
\label{DM}
\kk{\prod_{k=1}^p\kk{k-\frac 12+\mu_\qa}}\Phi_{\qa,p}^m\in\hm A^+,\quad p\geq 1.
\end{equation}
\end{Prop}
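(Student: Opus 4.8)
The plan is to prove the statement by induction on $p$, using the recursion relations for $\Phi_{\qa,p}^m$ that hold at the leading (undeformed) order and then lifting them to the deformed setting via the cohomological machinery of \cite{liu2021variational}. First I would establish the base case $p=0$: by the very construction of the super tau-cover we have $\diff{f_{\qa,0}}{\qt_m} = \Phi_{\qa,0}^m$, and since $h_{\qa,0} = w_\qa$ is a normal coordinate (see \eqref{BJ}), the relation \eqref{AJ} reads $\qp_x\Phi_{\qa,0}^m = \diff{h_{\qa,0}}{\qt_m} = \diff{w_\qa}{\qt_m}$. Using $P_0 = \frac12\int\eta^{\qa\qb}\qs_\qa\qs_\qb^1$ and the definition \eqref{AH} of the odd flows, $\diff{w_\qa}{\qt_m}$ is a $\qp_x$-total derivative of an explicit element of $\hm A^{+,1}$ (essentially $\eta_{\qa\qb}\eta^{\qb\qg}\qs_{\qg,m} = \qs_{\qa,m}$ up to corrections coming from the Miura transformation relating $w^\qa$ and $v^\qa$, which are differential polynomials), hence $\Phi_{\qa,0}^m \in \hm A^{+}$. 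By Lemma~\ref{AL} it then suffices to treat $\Phi_{\qa,p}^0$ for each $p$, and to show $\Phi_{\qa,p}^m = \hat T_m\Phi_{\qa,p}^0$.

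Next I would derive the deformed analogue of the recursion relation satisfied by $\Phi_{\qa,p}^m$ at leading order, namely something of the shape
\[
-\kk{p-\tfrac12+\mu_\qa}\Phi_{\qa,p}^m = \kk{\tfrac12+\mu_\ql}\eta^{\ql\qe}(\qp_\ql h_{\qa,p})\qs_{\qe,m} - \Phi_{\qa,p-1}^{m+1} + \sum_{k=1}^p(R_k)^\xi_\qa\Phi_{\xi,p-k}^m + (\text{higher corrections}).
\]
The key input here is the bihamiltonian recursion relation \eqref{AN} for the Hamiltonians $H_{\qa,p}$, together with the exactness property $P_0 = [Z,P_1]$ and the identity $h_{\qa,p} = D_ZH_{\qa,p}$ from \eqref{AE}. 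Applying $\diff{}{\qt_m} = D_{P_0}$ (or rather the $T_m$-shifted versions via Lemma~\ref{lem2-8}) to \eqref{AN}, integrating by parts, and peeling off $\qp_x$ using the defining relation \eqref{AJ}, should produce a recursion for the $\Phi_{\qa,p}^m$ modulo terms that are manifestly in $\hm A^{+}$. The crucial point is that at every stage the "new" non-polynomial contribution enters only through $\Phi_{\qa,p-1}^{m+1}$, whose coefficient in the recursion is $1$, while $\Phi_{\qa,p}^m$ appears with coefficient $-(p-\frac12+\mu_\qa)$; thus inverting this coefficient is exactly what forces the product $\prod_{k=1}^p(k-\frac12+\mu_\qa)$ in \eqref{DM}. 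Running the induction, at step $p$ one multiplies through by $(p-\frac12+\mu_\qa)$, uses the inductive hypothesis that $\prod_{k=1}^{p-1}(k-\frac12+\mu_\xi)\,\Phi_{\xi,p-1}^{m+1}\in\hm A^+$ and similarly for the $R_k$-terms, and absorbs the leading factor; a short bookkeeping argument on the diagonal matrix $\mu$ and the nilpotent matrices $R_k$ (using $[\mu,R_k]=kR_k$ from \eqref{2-4}, so that $R_k$ shifts the $\mu$-eigenvalue by $k$) shows the products telescope correctly across the sum.

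The main obstacle I anticipate is establishing the deformed recursion relation itself with full control over the "higher corrections" — i.e., showing that all the error terms produced by deforming \eqref{AN} and by commuting $\diff{}{\qt_m}$ past the various operators are genuinely local, lying in $\hm A^{+}$ rather than merely in $\hm A^{+,1}$. This is where the variational bihamiltonian cohomology of \cite{liu2021variational} does the real work: one argues that the obstruction to locality is a cocycle in the relevant cohomology group, and the vanishing (or known structure) of that group forces the correction to be a coboundary, hence expressible within $\hm A^+$ after multiplication by the stated product of spectral factors. A secondary technical nuisance is handling the Miura transformation between $v^\qa$ and the normal coordinates $w^\qa$ uniformly in $p$, but since that transformation is by differential polynomials it only contributes local terms and does not affect the non-local structure. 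Once the recursion is in hand, the induction is essentially formal.
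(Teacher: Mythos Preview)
Your overall plan---induct on $p$ via the bihamiltonian recursion \eqref{AN}, reduce to $m=0$ via Lemma~\ref{AL}, and track the factors $(k-\tfrac12+\mu_\qa)$ with the $[\mu,R_k]=kR_k$ bookkeeping---is exactly the paper's approach. But you overcomplicate the crucial step and invoke machinery that is not needed.

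The point you miss is this: \eqref{AN} (equivalently \eqref{AO}) is already an identity in $\hm F=\hm A/\qp_x\hm A$ for the \emph{deformed} Hamiltonians, and the bracket $[H,P_i]$ is nothing but $\int D_{P_i}(\text{density of }H)$ up to sign. So taking \eqref{AO} at level $p-1$ gives directly
\[
\kk{p-\tfrac12+\mu_\qa}\int\diff{h_{\qa,p}}{\qt_0}=\int\diff{h_{\qa,p-1}}{\qt_1}-\sum_{k=1}^{p-1}(R_k)^\qg_\qa\int\diff{h_{\qg,p-k}}{\qt_0},
\]
and equality in $\hm F$ means the integrands agree up to $\qp_x$ of some $p_{\qa,p}\in\hm A^1$. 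Stripping the $\qp_x$ via \eqref{AJ} then yields
\[
\kk{p-\tfrac12+\mu_\qa}\Phi_{\qa,p}^0=\Phi_{\qa,p-1}^1-\sum_{k=1}^{p-1}(R_k)^\qg_\qa\Phi_{\qg,p-k}^0+p_{\qa,p},
\]
with $p_{\qa,p}\in\hm A$ manifestly local. There is nothing to ``deform'' in \eqref{AN}, no need to commute $\diff{}{\qt_m}$ past anything, and in particular no cohomological obstruction to analyze: the locality of the correction term is automatic from the definition of $\hm F$. Your anticipated ``main obstacle'' evaporates once you read \eqref{AN} correctly. The induction then proceeds exactly as you describe, using $\Phi_{\qa,p-1}^1=\hat T_1\Phi_{\qa,p-1}^0$ and the eigenvalue shift from \eqref{2-4}.
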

\begin{proof}
It follows from Lemma \ref{AL} that it is enough to prove this lemma for $m = 0$. For $\Phi_{\qa,0}^0$, it is easy to see from \eqref{AK} that there exist differential polynomials $g_{\qa}\in\hm A^0_{
\geq 1}$ such that
\begin{equation}
\label{AM}
h_{\qa,0} = \eta_{\qa\qb}v^\qb+\qp_xg_{\qa}.
\end{equation}
Therefore we arrive at
\begin{equation}
\label{BL}
\Phi_{\qa,0}^0 = \qs_{\qa,0}+\diff{g_\qa}{\qt_0}\in\hm A.
\end{equation}
We proceed to consider $\Phi_{\qa,p}^0$ for $p\geq 1$. By using Proposition \ref{AF} we can rewrite \eqref{AN} as follows:
\begin{equation}
\label{AO}
\fk{\int h_{\qa,p}}{P_1}=\left(p+\frac12+\mu_\qa\right)\fk{\int h_{\qa,p+1}}{P_0}+\sum_{k=1}^p \left(R_k\right)^\qg_\qa\fk{\int h_{\qg,p-k+1}}{P_0}.
\end{equation}
Take $p=0$ in\eqref{AO} and rewrite it in the form
\[
\kk{\frac 12+\mu_\qa}\int\diff{h_{\qa,1}}{\qt_0} = \int\diff{h_{\qa,0}}{\qt_1},
\]
which means that there exists a differential polynomial $p_{\qa,1}\in\hm A^1$ such that
\[
\kk{\frac 12+\mu_\qa}\diff{h_{\qa,1}}{\qt_0} = \diff{h_{\qa,0}}{\qt_1}+\qp_x p_{\qa,1}.
\]
Therefore we have
\[
\kk{\frac 12+\mu_\qa}\Phi_{\qa,1}^0 = \Phi_{\qa,0}^1+p_{\qa,1}\in\hm A^+.
\]
For general $p\geq 1$, we can prove \eqref{DM} by using \eqref{AO} and induction on $p$. The proposition is proved.
\end{proof}

\section{Deformation of Virasoro symmetries: formulation}
\label{sec3}
In this section, we first recall the theory of variational bihamiltonian cohomology given in \cite{liu2021variational} and then explain how to use it to study Virasoro symmetries of the deformed Principal Hierarchies. We also use the example of the  deformation of the Riemann hierarchy to illustrate our approach to the study of Virasoro symmetries.

\subsection{Variational bihamiltonian cohomologies}
In \cite{liu2021variational}, we established a cohomology theory on the space $\derx$ consisting of derivations on $\hm A$ that commute with $\qp_x$. This theory provides us suitable tools to study Virasoro symmetries of deformations of the Principal Hierarchies. We recall the basic definitions and results in this subsection.

Let us define the space  $\derx$ by
\[
\derx = \{X\in\der\mid [X,\qp_x] = 0\}.
\]
We grade the space $\derx$ by the differential gradation and the super gradation, and use the notation ${\derx}^p_d$ to denote the subspace of homogeneous elements with super degree $p$ and differential degree $d$.
\begin{Lem}
${\derx}^p_d = 0$ for $p\leq -2$ or $d<0$.
\end{Lem}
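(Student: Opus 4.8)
The plan is to analyze an arbitrary homogeneous $X\in{\derx}^p_d$ directly through its action on the generators $u^{\qa,s}$ and $\qth_\qa^s$ of $\hm A$, using that $X$ is completely determined by the values $X(u^\qa)$ and $X(\qth_\qa)$ because $X$ commutes with $\qp_x$ and satisfies the graded Leibniz rule. First I would record the numerology: if $X$ has super degree $p$, then $X(u^\qa)\in\hm A^p$ and $X(\qth_\qa)\in\hm A^{p+1}$ (applying $X$ raises $\deg_\qth$ by $p$, and $\qth_\qa$ has super degree $1$). Similarly, if $X$ has differential degree $d$, then $X(u^\qa)\in\hm A_d$ and $X(\qth_\qa)\in\hm A_d$, since $u^\qa$ and $\qth_\qa$ both have differential degree $0$.

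The case $d<0$ is then immediate: $\hm A_d=0$ for $d<0$, so $X(u^\qa)=0$ and $X(\qth_\qa)=0$ for all $\qa$, whence $X=0$ on all generators and therefore $X=0$. For the super degree bound, suppose $p\leq -2$. Then $X(u^\qa)\in\hm A^p$; but $\hm A$ is nonnegatively graded by $\deg_\qth$ (the super degree of any monomial is the number of $\qth$-factors, which is $\geq 0$), so $\hm A^p=0$ for $p<0$, forcing $X(u^\qa)=0$. Likewise $X(\qth_\qa)\in\hm A^{p+1}$ with $p+1\leq -1<0$, so $X(\qth_\qa)=0$ as well. Hence $X$ annihilates every generator, and since any element of $\derx$ is determined by its values on $u^\qa,\qth_\qa$ (extend by Leibniz and by commuting with $\qp_x$ to handle $u^{\qa,s},\qth_\qa^s$), we get $X=0$.

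I do not anticipate a genuine obstacle here; the only point that needs a clean statement is the claim that a derivation in $\derx$ is determined by its action on the degree-zero generators. This follows because $X(u^{\qa,s})=X(\qp_x^s u^\qa)=\qp_x^s X(u^\qa)$ and similarly $X(\qth_\qa^s)=\qp_x^s X(\qth_\qa)$, using $[X,\qp_x]=0$, and then $X$ on an arbitrary differential polynomial is fixed by the Leibniz rule together with the fact that $X$ vanishes on constants (being a derivation). So the whole argument reduces to the observation that both gradings of $\hm A$ are bounded below in the relevant way — $\hm A_d=0$ for $d<0$ and $\hm A^p=0$ for $p<0$ — combined with the degree shifts $\hm A^q\to\hm A^{q+p}$ and $\hm A_k\to\hm A_{k+d}$ built into the definitions of ${\derx}^p$ and ${\derx}_d$.
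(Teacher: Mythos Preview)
Your proof is correct and follows essentially the same approach as the paper: show that $X(u^\qa)$ and $X(\qth_\qa)$ must lie in graded pieces of $\hm A$ that vanish under the stated hypotheses, then propagate via $[X,\qp_x]=0$ and the Leibniz rule to conclude $X=0$. Your version is slightly more explicit about the degree-shift numerology (in particular the observation that $X(\qth_\qa)\in\hm A^{p+1}$, which is why the bound is $p\leq -2$ rather than $p\leq -1$), but the argument is the same.
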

\begin{proof}
Let us choose a local coordinate system $(w^\qa,\phi_\qa)$ on $\hat M$. Assume $X\in\derx$ with super degree $p\leq -2$ or $d< 0$. Then by definition this means 
\[
X(w^\qa) = X(\phi_\qa) = 0.
\]
Since $[X,\qp_x] = 0$, we immediately see that $X({w^{\qa,s}}) = 0$ and $X(\phi_\qa^s) = 0$ for $s\geq 0$. Hence $X = 0$ and the lemma is proved.
\end{proof}
Let $P^{[0]}$ be a Hamiltonian structure of hydrodynamic type and $(P_0^{[0]},P_1^{[0]})$ be a semisimple bihamiltonian structure of hydrodynamic type. Then by using \eqref{dc} we have a complex $(\derx,D_{P^{[0]}})$ and a double complex $(\derx,D_{P_0^{[0]}},D_{P_1^{[0]}})$. We define the following cohomology groups:
\begin{equation}
\label{AQ}
H^p_d\bigl(\derx,P^{[0]}\bigr) = \frac{{\derx}^p_d\cap \ker D_{P^{[0]}}}{{\derx}^p_d\cap\ima D_{P^{[0]}}},\quad p,\,d\geq 0,
\end{equation}
\begin{equation}
\label{AP}
BH^p_d\bigl(\derx,P_0^{[0]},P_1^{[0]}\bigr) = \frac{{\derx}^p_d\cap \ker D_{P_0^{[0]}}\cap\ker D_{P_1^{[0]}}}{{\derx}^p_d\cap\ima D_{P_0^{[0]}}D_{P_1^{[0]}}}, \quad p,\,d\geq 0.
\end{equation}
Note that the spaces ${\derx}^{-1}_d \neq 0$ for $d\geq 0$ and they must be taken into account while computing the cohomology groups. For example, the space $H^0_d(\derx,P^{[0]})$ is given by
\[
H^0_d\bigl(\derx,P^{[0]}\bigr) = \frac{\ker(D_{P^{[0]}}:{\derx}^{0}_d\to {\derx}^{1}_{d+1})}{\ima(D_{P^{[0]}}:{\derx}^{-1}_{d-1}\to {\derx}^{0}_{d})}.
\]

By using the canonical symplectic structure on $\hat M$, we can identify the space $\derx$ with the space $\bar\Qo$ of local functionals of variational 1-forms, and the vector fields $D_{P^{[0]}}$,  $D_{P_0^{[0]}}$ and $D_{P_1^{[0]}}$ induce Lie derivatives on $\bar\Qo$. This is the reason why we call the above cohomology groups the variational cohomology groups. In \cite{liu2021variational}, the cohomology groups \eqref{AQ} and \eqref{AP} are computed by converting them to the cohomology groups on the space $\bar\Qo$. The details of the computation of these cohomology groups are not used in the present paper,  so we omit them and refer the readers to \cite{liu2021variational}. The following result plays an essential role in the present paper.
\begin{Th}[\cite{liu2021variational}]
\label{AR}
We have the following results on the cohomology groups \eqref{AQ} and \eqref{AP}:
\begin{enumerate}
\item $H^p_d\bigl(\derx,P^{[0]}\bigr) = 0$ for $p\geq 0$, $d> 0$.
\item $BH^0_{\geq 2}\bigl(\derx,P_0^{[0]},P_1^{[0]}\bigr) = 0$. \item $BH^1_{\geq 4}\bigl(\derx,P_0^{[0]},P_1^{[0]}\bigr) = 0$.
\item $BH^1_3\bigl(\derx,P_0^{[0]},P_1^{[0]}\bigr) \cong \oplus_{i=1}^nC^\infty(\mathbb R)$, and if we denote the action  of a cocycle $X\in {\derx}^1_3$ on the $i$-th canonical coordinate $u^i$ by
\[
X(u^i) = \sum_{j=1}^n\sum_{k=0}^3 X_{i,j}^k\qth_j^{3-k},\quad  X_{i,j}^k\in\hm A^0_k,
\]
then the cohomology class $[X]$ is determined by the following  functions:
\[
c_1 = \frac{X_{1,1}^0}{(f^1)^2},\quad c_2 = \frac{X_{2,2}^0}{(f^2)^2},\quad\cdots,\quad c_n = \frac{X_{n,n}^0}{(f^n)^2}.
\]
Here each function $c_i$  depends only on the $i$-th canonical coordinate $u^i$, and $f^i$ is the function defined in \eqref{CW}.
\end{enumerate}
\end{Th}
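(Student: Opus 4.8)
The plan is to pass from $\derx$ to the space $\bar\Qo$ of local functionals of variational $1$-forms via the canonical symplectic identification mentioned above, under which $D_{P^{[0]}}$, $D_{P_0^{[0]}}$ and $D_{P_1^{[0]}}$ become Lie derivatives. The statement then breaks into two essentially independent problems: the triviality of the single-Hamiltonian complex in positive differential degree (item (1)), and a reduction of the double-complex cohomologies $BH^p_d$ to data concentrated on the line $d=0$, which one then computes by hand in canonical coordinates (items (2)--(4)).

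For item (1), the first step is to apply a Miura-type transformation bringing $P^{[0]}$ to the constant form $\frac12\int\eta^{\qa\qb}\qs_\qa\qs_\qb^1$; this is harmless because the cohomology groups \eqref{AQ} are Miura-invariant. In these coordinates $D_{P^{[0]}}$ acts by $v^\qa\mapsto\eta^{\qa\qb}\qs_\qb^1$, $\qs_\qa\mapsto 0$, so $(\derx,D_{P^{[0]}})$ becomes, after the symplectic identification, a (twisted) variational analogue of the de Rham complex of the formal loop space. I would then exhibit an explicit contracting homotopy, built from the scaling (Euler) vector field $\sum_{s\geq 0}\bigl(v^{\qa,s}\frac{\partial}{\partial v^{\qa,s}}+\qs_\qa^s\frac{\partial}{\partial\qs_\qa^s}\bigr)$ composed with a $\qp_x$-antiderivative, reproducing in this setting the classical vanishing, in positive differential degree, of the cohomology of the complex defined by a single Poisson bracket of hydrodynamic type. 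The one point requiring care is that $\derx^{-1}_d\neq 0$, so the homotopy must be arranged so that $H^0_d$ is computed against $\ima(D_{P^{[0]}}\colon\derx^{-1}_{d-1}\to\derx^{0}_d)$, precisely as in the displayed formula for $H^0_d$.

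For items (2)--(4) I would run the spectral sequence of the double complex, equivalently the long exact sequence relating $BH^p_d$ to the $D_{P_1^{[0]}}$-cohomology of $H^\bullet(\derx,P_0^{[0]})$ (the operator $D_{P_1^{[0]}}$ being a cochain map for $D_{P_0^{[0]}}$ since $D_{P_0^{[0]}}D_{P_1^{[0]}}+D_{P_1^{[0]}}D_{P_0^{[0]}}=0$). By item (1) that cohomology is concentrated on the line $d=0$, where $H^p_0(\derx,P_0^{[0]})$ is a module over $C^\infty(M)$ that can be written down explicitly (with a genuine contribution from $\derx^{-1}_0$). Hence $BH^p_d$ vanishes automatically whenever the bidegree $(p,d)$ cannot be reached from the $d=0$ line, which already gives $BH^0_{\geq 2}=0$. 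The remaining cases need the local computation: pass to the canonical coordinates $(u^i;\qth_i)$ in which $(P_0^{[0]},P_1^{[0]})$ takes the diagonal form \eqref{CW} with structure functions \eqref{DQ}, write a general cocycle $X\in\derx^1_3$ as $X(u^i)=\sum_{j,k}X^k_{i,j}\qth_j^{3-k}$, impose the two cocycle equations $D_{P_0^{[0]}}X=0$ and $D_{P_1^{[0]}}X=0$, and solve the resulting overdetermined system degree by degree in $\deg_{\qp_x}$. One expects to find that the off-diagonal components $X^k_{i,j}$ $(i\neq j)$ and all components with $k\geq 1$ can be removed by adding a coboundary $D_{P_0^{[0]}}D_{P_1^{[0]}}Y$, that the surviving leading diagonal component is forced into the form $X^0_{i,i}=(f^i)^2c_i$ with $c_i$ a function of $u^i$ alone, and that no further relations remain; this gives item (4), while the same analysis with the differential degree raised shows there is no room at all, giving item (3).

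The main obstacle is this last local computation, above all item (4). Because the structure functions $A^{ij}$ and $B^{ij}$ in \eqref{DQ} are non-constant and couple the different canonical directions, proving that every cocycle in $\derx^1_3$ is cohomologous to one with a purely diagonal, lowest-order leading term---and that the residual functions $c_i$ genuinely depend on the single variable $u^i$---requires a careful induction on the differential degree, with the $p=-1$ part of the complex tracked throughout. The vanishing statement $BH^1_{\geq 4}=0$ is of the same nature but lighter, since there one only has to check that the analogous overdetermined system has no nonzero solutions modulo coboundaries.
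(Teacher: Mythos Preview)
The paper does not prove this theorem: it is quoted from the companion paper \cite{liu2021variational}, and the authors state explicitly that ``the details of the computation of these cohomology groups are not used in the present paper, so we omit them and refer the readers to \cite{liu2021variational}.'' There is therefore no proof in the present paper to compare your proposal against.

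That said, your outline is consistent with the one hint the paper does give, namely that the computation proceeds by identifying $\derx$ with the space $\bar\Qo$ of local functionals of variational $1$-forms via the canonical symplectic structure, so that $D_{P^{[0]}}$, $D_{P_0^{[0]}}$, $D_{P_1^{[0]}}$ become Lie derivatives. Your plan---a homotopy argument for item (1), then a spectral-sequence reduction to the $d=0$ line followed by a local computation in canonical coordinates for items (2)--(4)---is the natural strategy and is plausibly what \cite{liu2021variational} carries out, but to verify the details (in particular your claimed normal form for cocycles in $\derx^1_3$ and the precise role of the $p=-1$ part) you would need to consult that reference directly.
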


\subsection{Virasoro symmetries of the Principal Hierarchies}
In this subsection, we recall the construction of  Virasoro symmetries of the super tau-cover of the Principal Hierarchy following \cite{liu2020super}. In \cite{dubrovin1999frobenius}, a family of infinitely many symmetries $\diff{}{s_m^{even}}$ for $m\geq -1$ of the tau-cover of the Principal Hierarchy associated with a Frobenius manifold $M$ was constructed. This family of symmetries are called the Virasoro symmetries due to the property
\[
\fk{\diff{}{s_k^{even}}}{\diff{}{s_l^{even}}} = (l-k)\diff{}{s_{k+l}^{even}},\quad k,l\geq -1.
\]
These symmetries can be represented by a family of quadratic differential operators $L_m^{even}$ of the form 
\[L^{even}_m=a_m^{\qa,p;\qb,q}\frac{\qp^2}{\qp t^{\qa,p}\qp t^{\qb,q}}+{b}_{m;\qa,p}^{\qb,q} t^{\qa,p}\frac{\qp}{\qp t^{\qb,q}}+c_{m;\qa,p;\qb,q} t^{\qa,p} t^{\qb,q}+\frac{1}{4}\qd_{m,0}\mathrm{tr}\left(\frac{1}{4}-\mu^2\right),
\]
where $a_m^{\qa,p;\qb,q}$, ${b}_{m;\qa,p}^{\qb,q}$, $c_{m;\qa,p;\qb,q}$ are some constants determined from the monodromy data of $M$ and one may refer to \cite{dubrovin1999frobenius} for details. These operators satisfy the Virasoro commutation relation
\[
[L_k^{even},L_l^{even}] = (k-l)L_{l+k}^{even}.
\]
In this paper, we only need the explicit expressions for $L_{-1}^{even}$ and $L_2^{even}$ which are given by
\begin{align}\label{L-1}
L_{-1}^{even} &= \frac 12 \eta_{\qa\qb}t^{\qa,0}t^{\qb,0}+\sum_{p\geq 1}t^{\qa,p}\diff{}{t^{\qa,p-1}},\\
\label{L2}
L_2^{even} &= a^{\qa\qb}\frac{\qp^2}{\qp t^{\qa,1}\qp t^{\qb,0}}+b^{\qa\qb}\frac{\qp^2}{\qp t^{\qa,0}\qp t^{\qb,0}}+\mathcal L_2^{even} + c_{2;\qa,p;\qb,q}t^{\qa,p}t^{\qb,q},
\end{align}
where the constants have the expressions
\begin{equation}
\label{BT}
a^{\qa\qb} = \eta^{\qa\qb}\kk{\frac 12+\mu_\qb}\kk{\frac 12+\mu_\qa}\kk{\frac 32+\mu_\qa},
\end{equation}
\begin{equation}
\label{BU}
b^{\qa\qb} = \frac 12 \eta^{\qb\qg}(R_1)^\qa_\qg\kk{\frac 14+3\mu_\qb-3\mu_{\qb}^2},
\end{equation}
and the operator $\mathcal L_2^{even}$ is given by
\begin{align}
\label{AS}
& \sum_{p\geq 0}\kk{p+\frac 12+\mu_\qa}\kk{p+\frac 32+\mu_\qa}\kk{p+\frac 52+\mu_\qa}t^{\qa,p}\diff{}{t^{\qa,p+2}}\\
\notag
&+\sum_{p\geq 0}\sum_{1\leq r\leq p+2}\kk{3\kk{p+\frac 12+\mu_\qa}^2+6\kk{p+\frac 12+\mu_\qa}+2}\kk{R_r}^\qb_\qa t^{\qa,p}\diff{}{t^{\qb,p-r+2}}\\
\notag
&+\sum_{p\geq 0}\sum_{2\leq r\leq p+2}\kk{3p+\frac 92+3\mu_\qa}\kk{R_{r,2}}^\qb_\qa t^{\qa,p}\diff{}{t^{\qb,p-r+2}}\\
\notag
&+\sum_{p\geq 1}\sum_{3\leq r\leq p+2}\kk{R_{r,3}}^\qb_\qa t^{\qa,p}\diff{}{t^{\qb,p-r+2}}.
\end{align}
The explicit expressions for the matrices $R_{k,l}$ and constants $c_{2;\qa,p;\qb,q}$ are not used in this paper, so we omit them.

We have the following theorem for the Virasoro symmetries of the tau-cover of the Principal Hierarchy.
\begin{Th}[\cite{dubrovin1999frobenius}]
\label{AT}
Let us define the following time-dependent flows for $m\geq -1$:
\begin{align*}
\diff{f_{\ql,k}}{s_m^{even}} =& \diff{}{t^{\ql,k}}\kk{\sum a_m^{\qa,p;\qb,q}f_{\qa,p}f_{\qb,q}+\sum b_{m;\qa,p}^{\qb,q} t^{\qa,p}f_{\qb,q}+\sum c_{m;\qa,p;\qb,q} t^{\qa,p} t^{\qb,q}},\\
\diff{v^{\ql}}{s_m^{even}} =& \eta^{\ql\qg}\diff{}{t^{1,0}}\diff{f_{\qg,0}}{s_m^{even}}.
\end{align*}
Then the following commutation relations hold true:
\[
\fk{\diff{}{s_k^{even}}}{\diff{}{t^{\qa,p}}} = 0,\quad \fk{\diff{}{s_k^{even}}}{\diff{}{s_m^{even}}} = (m-k)\diff{}{s_{k+l}^{even}},\quad k,m\geq -1.
\]
\end{Th}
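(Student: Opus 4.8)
\medskip
\noindent\textbf{Proof proposal.}
The plan is to move the whole question to the level of the (genus zero) tau-function and reduce it to identities in the monodromy data of $M$. Because of the relations $\Qo_{\qa,p;\qb,q}=\Qo_{\qb,q;\qa,p}$ and $\diff{\Qo_{\qa,p;\qb,q}}{t^{\ql,k}}=\diff{\Qo_{\ql,k;\qb,q}}{t^{\qa,p}}$, the unknowns $f_{\qa,p}$ of the tau-cover \eqref{g0tau} are the first partials of a single potential $\mathcal{F}=\log\qt$: we have $f_{\qa,p}=\diff{\mathcal{F}}{t^{\qa,p}}$ with $\diff{}{t^{\qa,p}}\diff{\mathcal{F}}{t^{\qb,q}}=\Qo_{\qa,p;\qb,q}$, and, since $\qp_x=\diff{}{t^{1,0}}$ and $\Qo_{\qg,0;1,0}=h_{\qg,0}=\eta_{\qg\qb}v^\qb$, also $v^\ql=\eta^{\ql\qg}\diff{f_{\qg,0}}{t^{1,0}}$. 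Hence the two formulas defining $\diff{}{s_m^{even}}$ are equivalent to the single prescription that $\diff{\mathcal{F}}{s_m^{even}}$ be the quadratic expression $\Xi_m:=a_m^{\qa,p;\qb,q}f_{\qa,p}f_{\qb,q}+b_{m;\qa,p}^{\qb,q}t^{\qa,p}f_{\qb,q}+c_{m;\qa,p;\qb,q}t^{\qa,p}t^{\qb,q}$, everything else following by applying $\diff{}{t^{\ql,k}}$. Using $\diff{}{t^{\qa,p}}\diff{\mathcal{F}}{t^{\qb,q}}=\Qo_{\qa,p;\qb,q}$ one moreover identifies this flow with the action of the operator $L_m^{even}$ on $\qt$ itself, $\diff{\qt}{s_m^{even}}=L_m^{even}\qt$: indeed $\frac1\qt L_m^{even}\qt-\Xi_m$ equals $a_m^{\qa,p;\qb,q}\Qo_{\qa,p;\qb,q}+\tfrac14\qd_{m,0}\mathrm{tr}(\tfrac14-\mu^2)$, which one checks to be independent of $v$ using the quasi-homogeneity of the $\qp_\qb h_{\qa,p}$ and the shape \eqref{BT}--\eqref{BU} of the constants, hence an inessential additive constant for $\mathcal{F}$.

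The heart of the matter, and what I expect to be the main obstacle, is the symmetry property $\fk{\diff{}{s_m^{even}}}{\diff{}{t^{\qa,p}}}=0$. Since the Principal Hierarchy is of hydrodynamic type and recovers the $f_{\qg,k}$ from $v$ up to additive constants, it suffices to verify this on $v^\ql$, that is, $\diff{}{t^{\qa,p}}\diff{v^\ql}{s_m^{even}}=\diff{}{s_m^{even}}\diff{v^\ql}{t^{\qa,p}}$. Expanding both sides by the definitions and eliminating $\diff{\Qo_{\qg,0;\qa,p}}{s_m^{even}}$ through the chain rule $\diff{\Qo_{\qg,0;\qa,p}}{s_m^{even}}=\diff{\Qo_{\qg,0;\qa,p}}{v^\mu}\diff{v^\mu}{s_m^{even}}$ (legitimate because $\Qo$ depends on $v$ alone), the required equality turns into a relation among $a_m,b_m,c_m$, the metric $\eta$, the structure constants $c^\qg_{\qa\qb}$, the matrices $\mu,R$ and the two-point functions; I would prove it from the recursion $\qp_\qb\qp_\qg h_{\qa,p+1}=c^\ql_{\qb\qg}\qp_\ql h_{\qa,p}$, the generating identity for $\Qo_{\qa,p;\qb,q}$, and the quasi-homogeneity constraints \eqref{2-3}--\eqref{2-4}. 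Rather than doing this for all $m$ at once, I would check it only for the two generators $\diff{}{s_{-1}^{even}}$ and $\diff{}{s_2^{even}}$, whose operators are written out explicitly in \eqref{L-1} and \eqref{L2}--\eqref{AS}: for $L_{-1}^{even}$ this reduces to the string equation, while for $L_2^{even}$ it is a direct manipulation using \eqref{BT}--\eqref{AS} and the recursion. A more conceptual alternative, encoding the same identities, is to deduce the symmetry from the flatness of the deformed flat connection $\tilde\nabla$ on $M\times\mathbb{C}^*$ together with the loop-group action on its fundamental solution out of which the operators $L_m^{even}$ are built.

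It remains to obtain the Virasoro commutation relations and to propagate the symmetry property to all $m\ge-1$. The first is now formal: $L_m^{even}$ is a fixed second order operator in the variables $t^{\qa,p}$ with no dependence on the $s_k^{even}$, so it commutes with $\diff{}{s_k^{even}}$, and from $\diff{\qt}{s_m^{even}}=L_m^{even}\qt$ we get $\diff{}{s_k^{even}}\diff{\qt}{s_m^{even}}=L_m^{even}L_k^{even}\qt$, hence $\fk{\diff{}{s_k^{even}}}{\diff{}{s_m^{even}}}\qt=\fk{L_m^{even}}{L_k^{even}}\qt=(m-k)L_{k+m}^{even}\qt=(m-k)\diff{\qt}{s_{k+m}^{even}}$ by the operator identity $[L_k^{even},L_l^{even}]=(k-l)L_{l+k}^{even}$ of \cite{dubrovin1999frobenius}; since the two sides of the claimed relation are flows that agree on $\qt$, hence on $\log\qt$ and therefore on all $f_{\qa,p}$ and on $v^\ql$, they coincide. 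Finally, feeding this back through the Jacobi identity, from $\fk{\diff{}{s_{-1}^{even}}}{\diff{}{t^{\qa,p}}}=\fk{\diff{}{s_2^{even}}}{\diff{}{t^{\qa,p}}}=0$ and $\diff{}{s_1^{even}}=\tfrac13\fk{\diff{}{s_{-1}^{even}}}{\diff{}{s_2^{even}}}$ one gets $\fk{\diff{}{s_1^{even}}}{\diff{}{t^{\qa,p}}}=0$, and then $\fk{\diff{}{s_{m+1}^{even}}}{\diff{}{t^{\qa,p}}}=0$ inductively from $\diff{}{s_{m+1}^{even}}=\tfrac1{m-1}\fk{\diff{}{s_1^{even}}}{\diff{}{s_m^{even}}}$ for $m\ge2$ (the missing generator $\diff{}{s_0^{even}}=\tfrac12\fk{\diff{}{s_{-1}^{even}}}{\diff{}{s_1^{even}}}$ being handled the same way); this establishes the symmetry property for every $m\ge-1$ and completes the proof.
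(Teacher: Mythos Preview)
The paper does not give a proof of this theorem; it is quoted from \cite{dubrovin1999frobenius} as background, so there is nothing to compare against and your proposal must stand on its own.

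There is a genuine gap. Your argument for the Virasoro commutation relations rests on the identification $\partial\tau/\partial s_m^{even}=L_m^{even}\tau$, which you justify by claiming that $a_m^{\qa,p;\qb,q}\Qo_{\qa,p;\qb,q}$ is independent of $v$. This is false. Take the one-dimensional Frobenius manifold ($\eta=1$, $\mu=0$, $R=0$): then $a_2^{1,1;1,0}=\tfrac38$ by \eqref{BT} while $\Qo_{1,1;1,0}=\qp_v h_{1,2}=v^2/2$, so the extra term is $3v^2/16$, not a constant. In general $\Qo_{\qa,p;\qb,0}=\qp_\qb h_{\qa,p+1}$ is a nontrivial function on $M$, and the quasi-homogeneity relation for $\qp_\qb h_{\qa,p}$ tells you its weight under $E$, not that it vanishes. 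So at the genus-zero level the flow $\partial\mathcal F/\partial s_m^{even}=\Xi_m$ is \emph{not} equal to $\tau^{-1}L_m^{even}\tau$ modulo an additive constant, and your deduction of $\bigl[\partial/\partial s_k^{even},\partial/\partial s_m^{even}\bigr]=(m-k)\,\partial/\partial s_{k+m}^{even}$ straight from the operator identity $[L_k^{even},L_l^{even}]=(k-l)L_{k+l}^{even}$ does not go through.

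Your overall architecture is fine: prove commutation with $\partial/\partial t^{\qa,p}$ for the generators $m=-1,2$, establish the Virasoro relations among the $\partial/\partial s_m^{even}$, then propagate via Jacobi. What needs a real argument is the middle step. One has to compute $\partial_{s_k}\Xi_m-\partial_{s_m}\Xi_k$ directly: when $\partial_{s_k}$ hits the $a_m^{\qa,p;\qb,q}f_{\qa,p}f_{\qb,q}$ piece it produces terms involving $\Qo_{\qa,p;\qg,r}$, and these do not cancel by themselves; they must combine with the analogous terms on the other side and with the cross terms from $b_m,b_k$ to reproduce $(m-k)\Xi_{k+m}$. That this happens is precisely the content of the construction of $a_m,b_m,c_m$ from the monodromy data $(\eta,\mu,R)$ in \cite{dubrovin1999frobenius}, and it cannot be short-circuited by the operator algebra alone at the dispersionless level.
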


We also have the following theorem for the Virasoro symmetries of the super tau-cover of the Principal Hierarchy.

\begin{Th}[\cite{liu2020super}]
\label{BA}
Let us denote
\[
\diff{}{s_m^{odd}} = \sum_{p\geq 0}(p+c_0)\qt_p\diff{}{\qt_{p+m}},\quad m \geq -1,
\]
where $c_0$ is an arbitrary constant. Then the following flows are symmetries of the super tau-cover of the Principal Hierarchy  associated with a Frobenius manifold:
\begin{align*}
\frac{\qp f_{\qa, p}}{\qp s_m}&=
\frac{\qp f_{\qa, p}}{\qp s^{even}_m}+\frac{\qp f_{\qa, p}}{\qp s^{odd}_m},
\quad \diff{\Phi^n_{\qa,p}}{s_m}=\frac{\qp}{\qp \tau_n}\left(\frac{\qp f_{\qa,p}}{\qp s_m}\right),\\
\frac{\qp v^\qa}{\qp s_m}&=
\frac{\qp v^\qa}{\qp s^{even}_m}+\frac{\qp v^\qa}{\qp s^{odd}_m},\quad \frac{\qp \sigma_{\qa,p}}{\qp s_m}=\frac{\qp}{\qp \tau_p}\left(\frac{\qp f_{\qa,0}}{\qp s_m}\right).
\end{align*}
Moreover, these flows satisfy the commutation relation
\[\left[\frac{\qp}{\qp s_k},\frac{\qp}{\qp s_m}\right]=(m-k) \frac{\qp}{\qp s_{k+m}},\quad k, m\ge -1.\]
\end{Th}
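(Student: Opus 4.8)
The plan is to realize each Virasoro flow as $\diff{}{s_m}=\diff{}{s_m^{even}}+\diff{}{s_m^{odd}}$ and to reduce the assertion to separate properties of the two summands, leaning on Theorem \ref{AT} for the even part and on the structural results of Theorems \ref{g0super1} and \ref{g0super2} for the super tau-cover. The even summand is the one of \cite{dubrovin1999frobenius}: it acts on the primary covering variables $v^\qa$, $f_{\qa,p}$ as in Theorem \ref{AT} and on the remaining variables through $\diff{\Phi^n_{\qa,p}}{s^{even}_m}=\diff{}{\qt_n}\bigl(\diff{f_{\qa,p}}{s^{even}_m}\bigr)$, and it carries no explicit dependence on the odd times $\qt_j$. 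The odd summand $\diff{}{s^{odd}_m}=\sum_{p\geq 0}(p+c_0)\qt_p\diff{}{\qt_{p+m}}$ is assembled from the mutually commuting odd flows of Theorem \ref{g0super1}.

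First I would dispose of the odd summand in isolation. A direct computation --- keeping careful track of the signs produced by the anticommuting odd times $\qt_p$, since $\qt_p\diff{}{\qt_{p+m}}$ is an \emph{even} operator while $\diff{}{\qt_j}$ is odd --- shows that the operators $\diff{}{s^{odd}_m}$ close a copy of the same Virasoro algebra, that $\fk{\diff{}{s^{odd}_m}}{\diff{}{t^{\qa,p}}}=0$ (because $\diff{}{t^{\qa,p}}$ commutes with every $\diff{}{\qt_j}$ by Theorem \ref{g0super1} and annihilates the odd times $\qt_p$), and that $\fk{\diff{}{s^{odd}_m}}{\diff{}{\qt_k}}=-(k+c_0)\diff{}{\qt_{k+m}}$.

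Next I would assemble the full flow: on $v^\qa$ and $f_{\qa,p}$ the action of $\diff{}{s_m}$ is the sum of the two contributions, with $\diff{f_{\qa,p}}{s^{odd}_m}=\sum_q(q+c_0)\qt_q\Phi^{q+m}_{\qa,p}$, and on $\Phi^n_{\qa,p}$ and $\qs_{\qa,p}=\Phi^p_{\qa,0}$ it is $\frac{\qp}{\qp\tau_n}\bigl(\frac{\qp f_{\qa,p}}{\qp s_m}\bigr)$. Then I would check, in turn: (i) that this prescription is \emph{well-defined}, i.e.\ compatible with every structural relation of the super tau-cover --- the recursion relations \eqref{2-8}, the relations $\qp_x\Phi^n_{\qa,p}=\diff{h_{\qa,p}}{\qt_n}$, and $\diff{v^\qa}{\qt_n}=\eta^{\qa\qb}\qs^1_{\qb,n}$; (ii) that $\fk{\diff{}{s_m}}{\diff{}{t^{\qa,p}}}=0$, which is then immediate from Theorem \ref{AT} together with step one; and (iii) the Virasoro relation, obtained by expanding $\fk{\diff{}{s_k}}{\diff{}{s_m}}$ into its even--even, odd--odd and cross parts --- the first two covered by Theorem \ref{AT} and step one, and the cross parts $\fk{\diff{}{s^{even}_k}}{\diff{}{s^{odd}_m}}$ vanishing because $\diff{}{s^{even}_k}$ has no explicit $\qt$-dependence and, as extended, commutes with each $\diff{}{\qt_j}$, which is clear on the generators $v^\qa$, $f_{\qa,p}$ from the extension prescription together with $\diff{v^\qa}{s^{even}_m}=\eta^{\qa\qg}\qp_x\diff{f_{\qg,0}}{s^{even}_m}$.

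The step I expect to be the real obstacle is (i): showing that the declared action of $\diff{}{s_m}$ on the odd variables $\Phi^n_{\qa,p}$ is consistent with all the relations these variables obey. Because $\diff{}{s_m}$ and $\diff{}{\qt_n}$ do not commute --- their bracket being the nonzero flow $-(n+c_0)\diff{}{\qt_{n+m}}$ --- this discrepancy must be tracked carefully through $\qp_x\Phi^n_{\qa,p}=\diff{h_{\qa,p}}{\qt_n}$ and $\diff{\Phi^n_{\qa,p}}{\qt_k}=\Qd^{k,n}_{\qa,p}$, and the explicit forms of the two-point functions $\Qo_{\qa,p;\qb,q}$ and of $\Qd^{k,n}_{\qa,p}$ provided by Theorem \ref{g0super2} enter here in an essential way. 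Once this compatibility is established, the bracket computations in (ii) and (iii) reduce to routine bookkeeping.
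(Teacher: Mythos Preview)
The paper does not prove this theorem; it is quoted verbatim from \cite{liu2020super} and carries no proof here, so there is nothing in the present paper to compare your argument against. Your outline is in the spirit of the original proof in \cite{liu2020super}: split $\diff{}{s_m}$ into its even and odd pieces, use Theorem~\ref{AT} for the even part, and handle the odd part and the cross terms by direct computation together with the commutation relations of Theorems~\ref{g0super1} and~\ref{g0super2}.

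One point deserves more care than you give it. You assert that the cross brackets $\bigl[\diff{}{s^{even}_k},\diff{}{s^{odd}_m}\bigr]$ vanish because the extended even flow commutes with each $\diff{}{\qt_j}$, and you say this is ``clear on the generators''. But the extension of $\diff{}{s^{even}_k}$ to the odd variables is \emph{defined} by the formula $\diff{\Phi^n_{\qa,p}}{s^{even}_k}=\diff{}{\qt_n}\bigl(\diff{f_{\qa,p}}{s^{even}_k}\bigr)$, so the commutation with $\diff{}{\qt_j}$ on $f_{\qa,p}$ is tautological; the nontrivial content is that this extension is consistent with the relations among the odd variables --- exactly your step~(i) --- and that on $v^\qa$ and $\qs_{\qa,k}$ the even Virasoro flow respects the odd evolutions of Theorem~\ref{g0super1}. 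In \cite{liu2020super} this is where the explicit leading-order identities for $\diff{v^\qa}{s^{even}_m}$ in canonical and flat coordinates, and the bihamiltonian recursion \eqref{2-8}, are actually used. So your instinct that step~(i) is the real work is correct, but the cross-bracket vanishing is not a separate easy step: it is part of the same compatibility verification and should not be dismissed as routine.
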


\subsection{Formulation of the deformation problem}
\label{AU}
In this subsection, we  first state the main problem of this paper, then explain the motivation and strategy of our proof. 

From now on, we fix a semisimple Frobenius manifold $M$ of dimension $n$ and let $(P_0,P_1)$ be a deformation of the bihamiltonian structure \eqref{biham-frob} with constant central invariants. Then $(P_0,P_1)$ determines a unique deformation of the Principal Hierarchy associated with $M$. After a suitable Miura type transformation we may assume
\[
P_0  = \frac 12\int\eta^{\qa\qb}\qs_\qa\qs_\qb^1 = [Z,P_1],\quad Z = \int \qs_1,
\]
and $P_1$ has no $\hm F^2_2$ components. We also introduce odd variables $\qs_{\qa,m}$ for $m\geq 0$ as we explained in Sect.\,\ref{AV}.

Our problem can be stated as follows: to find a unique derivation $X\in\derx^0_{\geq 0}$ such that the flow defined by
\begin{align}
\label{BH}
\diff{v_\ql}{s_2} =\ & a^{\qa\qb}\kk{f_{\qb,0}\diff{v_\ql}{t^{\qa,1}}+f_{\qa,1}\diff{v_\ql}{t^{\qb,0}}}+ b^{\qa\qb}\kk{f_{\qb,0}\diff{v_\ql}{t^{\qa,0}}+f_{\qa,0}\diff{v_\ql}{t^{\qb,0}}}\\\notag&+Xv_{\ql}+\mathcal L_2v_\ql,\\
\label{BI}
\diff{\qs_{\ql,0}}{s_2} =\ & a^{\qa\qb}\kk{f_{\qb,0}\diff{\qs_{\ql,0}}{t^{\qa,1}}+f_{\qa,1}\diff{\qs_{\ql,0}}{t^{\qb,0}}}+ b^{\qa\qb}\kk{f_{\qb,0}\diff{\qs_{\ql,0}}{t^{\qa,0}}+f_{\qa,0}\diff{\qs_{\ql,0}}{t^{\qb,0}}}\\\notag&+X\qs_{\ql,0}
+M^\qz_{\ql}\qs_{\qz,2}+N^\qz_{\ql}\qs_{\qz,1}+\mathcal L_2\qs_{\ql,0},
\end{align}
satisfies the conditions
\begin{equation}
\label{AX}
\fk{\diff{}{s_2}}{\diff{}{\qt_0}} = \fk{\diff{}{s_2}}{\diff{}{\qt_1}} = 0,
\end{equation}
and we require that the leading term of $X$ is determined by Virasoro symmetry of the super tau-cover of the Principal Hierarchy.
Here the operator $\mathcal L_2$ is given by
\[
\mathcal L_2 = \mathcal L_2^{even}+\sum_{p\geq 0}(p+c_0)\qt_p\diff{}{\qt_{p+2}},
\]
and $M^\qz_\ql,\ N^\qz_\ql\in\hm A^0$ are some differential polynomials whose definitions will be given later.

If we find such a derivation $X$, then we can prove that the flow $\diff{}{s_2}$ is a symmetry of the deformed Principal Hierarchy, i.e., 
\[
\fk{\diff{}{s_2}}{\diff{}{t^{\qa,p}}} = 0,\quad \qa = 1,\cdots, n,\quad p\geq 0.
\]
To prove this fact, we only need to show that
\begin{equation}\label{AY}
\fk{\diff{}{s_2}}{\diff{}{t^{\qa,p}}}\in\der^0_{\geq 2},
\end{equation}
since this commutator is a cocycle in $(\derx^0_{\geq 2},D_{P_0},D_{P_1})$, and by using the property $BH^0_{\geq 2}\bigl(\derx,D_{P_0^{[0]}},D_{P_1^{[0]}}\bigr) = 0$ it is easy to conclude that this commutator vanishes. Note that a priori we have
\[
\fk{\diff{}{s_2}}{\diff{}{t^{\qa,p}}}v_\ql\in\hm A,\quad \fk{\diff{}{s_2}}{\diff{}{t^{\qa,p}}}\qs_{\ql,0}\in\hm A^+,
\]
so the condition \eqref{AY} is actually a \textit{locality condition}, i.e., it is equivalent to
\begin{equation}
\label{AZ}
\fk{\diff{}{s_2}}{\diff{}{t^{\qa,p}}}v_\ql\in\hm A,\quad \fk{\diff{}{s_2}}{\diff{}{t^{\qa,p}}}\qs_{\ql,0}\in\hm A,
\end{equation}
which is the condition we actually need to check.

Let us explain how to find a unique $X\in \derx$ such that the conditions \eqref{AX} hold true. To this end, we first rewrite the conditions \eqref{AX} into the equations for $X$ as follows:
\begin{equation}
\label{BD}
\fk{\diff{}{\qt_0}}{X} = I_0,\quad \fk{\diff{}{\qt_1}}{X} = I_1,
\end{equation}
where $I_0$ and $I_1$ are some derivations that will be given later. The uniqueness of $X$ is a consequence of the equations \eqref{BD} and the fact that $BH^0_{\geq 2}\bigl(\derx\bigr) = 0$, since the leading term of $X$ is fixed. We will prove the existence of $X$ by using the following steps.

\textbf{Step 1.} To check the \textit{locality condition} \eqref{AZ} and to prove that
\begin{equation}
\label{BB}
I_0,\  I_1\in\der.
\end{equation}

\textbf{Step 2.} To check the \textit{closedness condition}
\begin{equation}
\label{BC}
\fk{\diff{}{\qt_0}}{I_0} = 0.
\end{equation}

When Step 2 is finished, we can find a derivation $X^\circ\in\derx$ by using the property $H^1_{>0}\bigl(\derx\bigr) = 0$ such that
\[
I_0 = \fk{\diff{}{\qt_0}}{X^\circ},
\]
and the leading term of $X^\circ$ is given by the Virasoro symmetry of the super tau-cover of the Principal Hierarchy.
We define $\mathcal C = X-X^\circ\in\derx^0_{\geq 2}$, then the equations \eqref{BD} for $X$ are transformed to the following equations for $\mathcal C$:
\begin{equation}
\label{BE}
\fk{\diff{}{\qt_0}}{\mathcal C} = 0,\quad \fk{\diff{}{\qt_1}}{\mathcal C} = I_1-\fk{\diff{}{\qt_1}}{X^\circ}.
\end{equation}

\textbf{Step 3.} To check the closedness conditions
\begin{equation}
\label{BF}
\fk{\diff{}{\qt_0}}{I_1-\fk{\diff{}{\qt_1}}{X^\circ}} = 0,\quad \fk{\diff{}{\qt_1}}{I_1} = 0. 
\end{equation}

\textbf{Step 4.} To check that the differential degree $3$ component of the derivation $I_1-\fk{\diff{}{\qt_1}}{X^\circ}$ vanishes in the cohomology group $BH^1_{3}\bigl(\derx\bigr)$.

We call the above fact the \textit{vanishing of the genus one obstruction} for the following reason. By using the first equation in \eqref{BE} and the vanishing of $H^0_{\geq 2}\bigl(\derx\bigr)$, we see that there exists a unique $\mathcal T\in\derx^{-1}_{\geq 1}$ such that
\[
\mathcal C = \fk{\diff{}{\qt_0}}{\mathcal T}.
\]
The derivation $\mathcal T$ must also satisfy the second equation in \eqref{BE}
\begin{equation}
\label{BG}
 \fk{\diff{}{\qt_1}}{ \fk{\diff{}{\qt_0}}{\mathcal T}} = I_1-\fk{\diff{}{\qt_1}}{X^\circ}.
\end{equation}
If the differential degree $3$ component of the derivation $I_1-\fk{\diff{}{\qt_1}}{X^\circ}$ does not vanish in the cohomology group $BH^1_{3}\bigl(\derx\bigr)$, then such $\mathcal T$ does not exist. 

However if the genus one obstruction vanishes, the derivation $\mathcal T$ exists upto differential degree $1$. Then by using $BH^1_{\geq 4}\bigl(\derx\bigr) = 0$ and the closedness conditions \eqref{BF}, we can solve $\mathcal T$ from \eqref{BG} degree by degree and hence we can solve $X$ such that \eqref{AX} holds true.

\textbf{Step 5.} To lift the symmetry $\diff{v_\ql}{s_2}$ to a symmetry of the tau-cover of the deformed Principal Hierarchy, and to define all the other flows $\diff{}{s_m}$ for $m\geq 0$ for the Virasoro symmetries. Note that the symmetry $\diff{}{s_{-1}}$ is constructed in \cite{dubrovin2018bihamiltonian}. We remark that we can also lift the symmetry \eqref{BH} and \eqref{BI} to a symmetry of the super tau-cover of the deformed Principal Hierarchy, but it is not necessary for the consideration for our problem.

\vskip 1em
In the remaining part of this subsection, we explain how the equations \eqref{BH} and \eqref{BI} are derived from the equation \eqref{DN} of the main theorem. Let $\mathcal Z$ be a tau-function of the tau-cover \eqref{AW} of the deformed Principal Hierarchy, i.e., 
\begin{equation}\label{BK}
f_{\qa,p} = \diff{\log\mathcal  Z}{t^{\qa,p}},\quad w^\qa = \eta^{\qa\qb}\frac{\qp^2\log\mathcal Z}{\qp t^{1,0}\qp t^{\qb,0}}
\end{equation}
give a solution of the tau-cover \eqref{AW}. Our goal is to find  a symmetry of the following form
\begin{equation}
\label{BN}
\diff{\mathcal Z}{s_2} = L_2^{even}\mathcal Z+ O_2\mathcal Z,
\end{equation}
where $L_2^{even}$ is the operator \eqref{L2} and $O_2$ is a differential polynomial. If we assume that this is indeed a symmetry, then by using \eqref{BK} we obtain the flow
\begin{align*}
\diff{f_{\ql,k}}{s_2} =& a^{\qa\qb}\kk{f_{\qa,1}\Qo_{\qb,0;\ql,k}+f_{\qb,0}\Qo_{\qa,1;\ql,k}+\diff{\Qo_{\qa,1;\qb,0}}{t^{\ql,k}}}\\
&+b^{\qa\qb}\kk{f_{\qa,0}\Qo_{\qb,0;\ql,k}+f_{\qb,0}\Qo_{\qa,0;\ql,k}+\diff{\Qo_{\qa,0;\qb,0}}{t^{\ql,k}}}\\
&+b^{\qb,q}_{2;\ql,k}f_{\qb,q}+c_{2;\ql,k;\qb,q}t^{\qb,q}+c_{2;\qb,q;\ql,k}t^{\qb,q}+\diff{O_2}{t^{\ql,k}}+\mathcal L_2^{even} f_{\ql,k};\\
\diff{w_{\ql}}{s_2} =& a^{\qa\qb}\kk{f_{\qb,0}\diff{w_\ql}{t^{\qa,1}}+f_{\qa,1}\diff{w_\ql}{t^{\qb,0}}}+ b^{\qa\qb}\kk{f_{\qb,0}\diff{w_\ql}{t^{\qa,0}}+f_{\qa,0}\diff{w_\ql}{t^{\qb,0}}}\\+&W_{\ql}+\mathcal L_2^{even}w_\ql,
\end{align*}
here $W_\ql$ are some differential polynomials. Now recall that $v_\ql$ and $w_\qz$ are related by a Miura type transformation, hence by using the equation
\[
\diff{v_\ql}{s_2} = \sum_{s\geq 0}\qp_x^s\diff{w_{\qz}}{s_2}\diff{v_\ql}{w_\qz^{(s)}}
\]
we know that there exist differential polynomials $X^0_\ql\in\hm A^0$ such that
\begin{align}
\label{DO}
\diff{v_\ql}{s_2} =&\,a^{\qa\qb}\kk{f_{\qb,0}\diff{v_\ql}{t^{\qa,1}}+f_{\qa,1}\diff{v_\ql}{t^{\qb,0}}}+ b^{\qa\qb}\kk{f_{\qb,0}\diff{v_\ql}{t^{\qa,0}}+f_{\qa,0}\diff{v_\ql}{t^{\qb,0}}}\\
\notag
&+X_{\ql}^0+\mathcal L_2^{even}v_\ql.
\end{align}
To write down the evolutions of the odd variables, we must replace $\mathcal L_2^{even}$ by $\mathcal L_2$. By using \eqref{BL} it is easy to see that 
\[
\diff{\boldsymbol{f_0}}{\qt_0} = \boldsymbol A\boldsymbol{\qs_0},
\]
where $\boldsymbol{f_0} = (f_{1,0},\cdots,f_{n,0})^T$, $\boldsymbol{\qs_0} = (\qs_{1,0},\cdots,\qs_{n,0})^T$ and $\boldsymbol A$ is a matrix of differential operator of the form
\[
\boldsymbol A = \sum_{g\geq 0}\sum_{k=0}^{2g}\boldsymbol {A_{g,k}}\qp_x^k,\quad \boldsymbol {A_{g,k}} \in M_n(\hm A^0_{2g-k}).
\]
Note that $\boldsymbol {A_{0,0}}$ is the identity matrix,  therefore $\boldsymbol A$ is invertible as a differential operator, i.e., there exists $\boldsymbol B = \sum_{g\geq 0}\sum_{k=0}^{2g}\boldsymbol {B_{g,k}}\qp_x^k$ such that $\boldsymbol{AB} = \boldsymbol{BA} = \boldsymbol{I}$, and in particular $\boldsymbol {B_{0,0}}$ is the identity matrix. Thus we can represent the odd variable $\qs_{\ql,0}$ in the form
\[
\qs_{\ql,0} = \diff{f_{\ql,0}}{\qt_0}+\sum_{g\geq 1}\sum_{k=0}^{2g} {\boldsymbol {B_{g,k}}}_\ql^\qz\qp_x^k\diff{f_{\qz,0}}{\qt_0}.
\]
This identity leads us to the following definition of the evolutions of the odd variables $\qs_{\ql,0}$ along the flow $\diff{}{s_2}$:
\[
\diff{\qs_{\ql,0}}{s_2} = \diff{}{\qt_0}\diff{f_{\ql,0}}{s_2}+\sum_{g\geq 1}\sum_{k=0}^{2g} \diff{}{s_2}\kk{{\boldsymbol {B_{g,k}}}_\ql^\qz}\qp_x^k\diff{f_{\qz,0}}{\qt_0}+\sum_{g\geq 1}\sum_{k=0}^{2g} {\boldsymbol {B_{g,k}}}_\ql^\qz\qp_x^k \diff{}{\qt_0}\diff{f_{\qz,0}}{s_2}.
\]

By using Proposition \ref{BM} and Lemma \ref{AL}, we can see that the odd variables $\Phi_{\qa,p}^m$ appearing in $ \diff{}{\qt_0}\diff{f_{\ql,0}}{s_2}$ can be represented by elements of $\hm A^+$. So there exist differential polynomials $M^\qz_\ql$, $N^\qz_\ql\in\hm A^0$ and $X_\ql^1\in\hm A^1$ such that
\begin{align*}
\diff{\qs_{\ql,0}}{s_2} = &\, a^{\qa\qb}\kk{f_{\qb,0}\diff{\qs_{\ql,0}}{t^{\qa,1}}+f_{\qa,1}\diff{\qs_{\ql,0}}{t^{\qb,0}}}+ b^{\qa\qb}\kk{f_{\qb,0}\diff{\qs_{\ql,0}}{t^{\qa,0}}+f_{\qa,0}\diff{\qs_{\ql,0}}{t^{\qb,0}}}\\&+X^1_{\ql}
+M^\qz_{\ql}\qs_{\qz,2}+N^\qz_{\ql}\qs_{\qz,1}+\mathcal L_2\qs_{\ql,0}.
\end{align*}
Finally we define a derivation $X\in\derx^0$ such that $Xv_{\ql} = X_\ql^0$, where $X_\ql^0$ is the differential polynomial introduced in \eqref{DO}, and $X\qs_{\ql,0} = X^1_\ql$. Therefore our problem of finding such a derivation $X$ is a necessary condition of the main theorem.

\subsection{Example: one-dimensional Frobenius manifold}
\label{B0}
In this subsection we present an example to illustrate how the general framework described in the previous subsection works. We consider the one-dimensional Frobenius manifold $M$, it has the following potential and Euler vector field:
\[
F = \frac 16 v^3,\quad E = v\qp_v.
\]
Due to the dimension reason, we will omit the Greek indices, for example, we will use $v^{(s)}$ and $\qs_m^s$ instead of $v^{1,s}$ and $\qs_{1,m}^s$. The Principal Hierarchy associated with $M$ is the Riemann hierarchy
\[
\diff{v}{t_p} = \frac{v^p}{p!}v_x,\quad p\geq 0,
\] 
whose bihamiltonian structure is given by
\[
P_0^{[0]} = \frac 12\int\qs_0\qs_0^1,\quad P_1^{[0]} = \frac 12\int v\qs_0\qs_0^1.
\]

It is proved in \cite{DLZ-1,liu2005deformations} that every deformation $(P_0,P_1)$ of $(P_0^{[0]},P_1^{[0]})$ with a constant central invariant is equivalent to the bihamiltonian structure given by
\[
P_0 = \frac 12\int\qs_0\qs_0^1,\quad P_1 = \frac 12\int v\qs_0\qs_0^1 + \qe^2c\qs_0\qs_0^3
\]
via a certain  Miura type transformation. Here the dispersion parameter $\qe$ is added for clearness, and the central invariant of $(P_0,P_1)$ is $\frac{c}{3}$. In particular, when $c = \frac 18$ the corresponding deformed Riemann Hierarchy is the KdV hierarchy that controls the 2D  gravity \cite{kontsevich1992intersection,witten1990two}.

We introduce the odd variables $\qs_m$ satisfying the recursion relations
\[
\qs_{m+1}^1 = v\qs_{m}^1+\frac 12v_x\qs_m+\qe^2 c \qs_m^3,\quad m\geq 0.
\]
The following useful results can be obtained directly:
\begin{align*}
h_1 &= \frac{v^2}{2}+\frac 23 \qe^2cv_{xx},\quad h_2 = \frac{v^3}{6}+\qe^2c\kk{\frac 13 v_x^2+\frac 23vv_{xx}}+\frac{4}{15}\qe^4c^2v^{(4)};\\
\diff{v}{t_1} &= vv_x+\frac 23\qe^2cv^{(3)},\quad \diff{\qs_0}{t_1} = v\qs_0^1+\frac 23\qe^2c\qs_0^3;\\
\diff{v}{t_2}&= \frac 12 v^2v_x+\qe^2c\kk{\frac 43 v_xv_{xx}+\frac 23 vv^{(3)}}+\frac{4}{15}\qe^4c^2v^{(5)},\\
\diff{\qs_0}{t_2}& = \frac 12 v^2\qs_0^1+\qe^2c\kk{\frac 23 v_{xx}\qs_0^1+\frac23 v_x\qs_0^2+\frac 23 v\qs_0^3}+\frac{4}{15}\qe^4c^2\qs_0^5;\\
\qe\diff{f_0}{\qt_0} &= \qs_0,\quad \qe\diff{f_1}{\qt_0} = 2\qs_1-v\qs_0-\frac 43\qe^2c\qs_0^2;\\
\qe\diff{f_2}{\qt_0} &= \frac 43\qs_2-\frac 23v\qs_1\frac 16 v^2\qs_0-\qe^2c\kk{\frac 23v_{xx}\qs_0+\frac 43v_x\qs_0^2+\frac 43v\qs_0^2}-\frac{16}{15}\qe^4c^2\qs_0^4.
\end{align*}

We first note that
\[
L_2 = \frac 38\qe^2\frac{\qp^2}{\qp t_1\qp t_0}+\mathcal L_2,\quad \mathcal L_2 = \sum_{p\geq 0}\frac{\Qg\kk{\frac 72+p}}{\Qg\kk{\frac 12+p}}t_p\diff{}{t_{p+2}}+(p+c_0)\qt_p\diff{}{\qt_{p+2}}.
\]
Then the equations \eqref{BH} and \eqref{BI} for this example have the following form: 
\begin{align*}
\diff{v}{s_2} =&\ \frac 38\qe\kk{v_xf_1+\diff{v}{t_1}f_0}+Xv+\mathcal L_2v,\\
\diff{\qs_0}{s_2} =&\ \frac 38\qe\kk{\qs_0^1f_1+\diff{\qs_0}{t_1}f_0}+\kk{\frac 52+c_0}\qs_2-\frac v2\qs_1+X\qs_0+\mathcal L_2\qs_0.
\end{align*}
We are to find the derivation $X\in\derx^0$ such that the flow $\diff{}{s_2}$ commutes with $\diff{}{\qt_0}$ and $\diff{}{\qt_1}$. These conditions yield the following equations for $X$:
\[
\fk{\diff{}{\qt_0}}{X} = I_0,\quad \fk{\diff{}{\qt_1}}{X} = I_1,
\]
where the derivations $I_0$ and $I_1$ are defined by
\begin{align}
\label{BP}
I_0v=&\ vv_x\qs_0+\frac 72v^2\qs_0^1+\qe^2c\kk{v^{(3)}\qs_0+\frac{13}{2}v_{xx}\qs_0^1+8v_x\qs_0^2+6v\qs_0^3}+3\qe^4c^2\qs_0^5,\\
\label{BQ}
I_0\qs_0=&\ v\qs_0\qs_0^1-\qe^2c\kk{\frac 12 \qs_0^1\qs_0^2-\qs_0\qs_0^3},
\end{align}
\begin{align}
\label{BR}
I_1v =&\ \frac 54v^2v_x\qs_0+\frac 52v^3\qs_0^1+\qe^2c\kk{\frac 72 v_xv_{xx}\qs_0+2vv^{(3)}\qs_0+\frac{45}{4}v_x^2\qs_0^1+\frac{31}{2}vv_{xx}\qs_0^1}\\
\notag &+\qe^2c\kk{26vv_x\qs_0^2+\frac{19}{2}v^2\qs_0^3}+\qe^4c^2\kk{v^{(5)}\qs_0+\frac{17}{2}v^{(4)}\qs_0^1+\frac{45}{2}v^{(3)}\qs_0^2}\\
\notag &+\qe^4c^2\kk{\frac{59}{2}v_{xx}\qs_0^3+\frac{43}{2}v_x\qs_0^4+9v\qs_0^5}+3\qe^6c^3\qs_0^7,\\
\label{BS}
I_1\qs_0 =&\ \frac 54v^2\qs_0\qs_0^1+\qe^2c\kk{\frac 52 v_{xx}\qs_0\qs_0^1+\frac 52 v_x\qs_0\qs_0^2-\frac 12v\qs_0^1\qs_0^2+2v\qs_0\qs_0^3}\\
\notag &-\qe^4c^2\kk{\frac 12\qs_0^1\qs_0^4-\qs_0\qs_0^5}.
\end{align}

It follows from \eqref{BP} and \eqref{BQ} that we can choose  $X^\circ\in\derx$ such that $[\diff{}{\qt_0},X^\circ] = I_0$,  whose actions on $v$ and $\qs_0$ are given by
\begin{align*}
X^\circ v &= v^3+\qe^2c\kk{\frac 54 v_x^2+3vv_{xx}}+\qe^4c^2v^{(4)},\\
X^\circ\qs_0 &= -\frac 12 v^2\qs_0-\qe^2c\kk{v_{xx}\qs_0+\frac 52 v_x\qs_0^1+3v\qs_0^2}-2\qe^4c^2\qs_0^4. 
\end{align*}
Then according to the general discussion given in the previous subsection, the derivation $\mathcal C = X-X^\circ$ satisfies the equations
\[
\fk{\diff{}{\qt_0}}{\mathcal C} = 0,\quad \fk{\diff{}{\qt_1}}{\mathcal C} = I_1-\fk{\diff{}{\qt_1}}{X^\circ}.
\]
Finally we can solve the above equations and obtain the unique derivation $\mathcal C$ that is defined by
\begin{align*}
\mathcal Cv &= \qe^2c\kk{3v_x^2+3vv_{xx}}+2\qe^4c^2v^{(4)},\\
\mathcal C\qs_0 &= \qe^2c\kk{3v_x\qs_0^1+v\qs_0^2}+2\qe^4c^2\qs_0^4.
\end{align*}
By forgetting all the odd variables we obtain the following  symmetry for the deformed Riemann hierarchy:
\[
\diff{v}{s_2} = \frac 38\qe\kk{v_xf_1+\diff{v}{t_1}f_0}+v^3+\qe^2c\kk{\frac{17}{4}v_x^2+6vv_{xx}}+3\qe^4c^2v^{(4)}+\mathcal L_2^{even} v.
\]
It is easy to check that the action of this symmetry on
the tau function $\mathcal Z$ of the deformed Riemann hierarchy  can be represented by
\begin{equation}
\label{DE}
\diff{\mathcal Z}{s_2} = L_2^{even}\mathcal Z+\kk{3c-\frac 38}\kk{\frac{v^2}{2}+\frac 23\qe^2cv_{xx}}\mathcal Z.
\end{equation}
In particular, when $c = \frac 18$, this symmetry is given by a linear action on $\mathcal Z$.

\section{Deformation of Virasoro symmetries: existence and uniqueness}
\label{sec4}
In this section, we present details of the proof of the main theorem following the framework described in Sect.\,\ref{AU}.
\subsection{Locality conditions}
We start by verifying the locality conditions \eqref{BB}. 

Let us first find the differential polynomials $M_\ql^\qz$ and $N_\ql^\qz$ in \eqref{BI} to ensure that $I_0$ is local. By using the fact \eqref{BL} and the equations \eqref{BH} and \eqref{BI}, we arrive at
\begin{equation}
\label{BX}
\fk{\diff{}{\qt_0}}{\diff{}{s_2}}v_\ql = c_0\qs_{\ql,2}^1+a^{\qa\qb}\diff{f_{\qa,1}}{\qt_0}\diff{v_\ql}{t^{\qb,0}}-\qp_x\kk{M_\ql^\qz\qs_{\qz,2}+N_\ql^\qz\qs_{\qz,1}}+loc.
\end{equation}
Here and henceforth we will use $loc$ to denote the local terms, i.e., terms belonging to $\hm A$. We need to find $M_\ql^\qz$ and $N_\ql^\qz$ such that the right hand side of  \eqref{BX} is local.
\begin{Lem}\label{BY}
The following identities hold true:
\[
\diff{}{\qs_{\qb,0}}\vard{P_1}{\qs_{\ql,0}} = \eta^{\qa\qb}\kk{\frac 12+\mu_\qa}\diff{v^\ql}{t^{\qa,0}},\quad \diff{}{\qs_{\qb,0}}\vard{P_1}{v^\ql} = \eta^{\qa\qb}\kk{\frac 12+\mu_\qa}\diff{\qs_{\ql,0}}{t^{\qa,0}}.
\]
\end{Lem}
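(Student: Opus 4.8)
The plan is to recognize both identities as the $p=0$ case of the bihamiltonian recursion relation \eqref{AN}, combined with the formulae \eqref{a2-3} and \eqref{a2-4} for the variational derivatives of a Schouten--Nijenhuis bracket, applied to the pair $\bigl(H_{\qa,-1},P_1\bigr)$.

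First I would observe that the derivation attached to $H_{\qa,-1}=\int\eta_{\qa\qb}v^\qb$ is a constant-coefficient odd vector field: since $\vard{H_{\qa,-1}}{\qs_\ql}=0$ and $\vard{H_{\qa,-1}}{v^\ql}=\eta_{\qa\ql}$, formula \eqref{2-2} gives $D_{H_{\qa,-1}}=\eta_{\qa\ql}\diff{}{\qs_{\ql,0}}$ on $\hm A$, that is $\diff{}{\qs_{\qb,0}}=\eta^{\qa\qb}D_{H_{\qa,-1}}$. Both $\vard{P_1}{\qs_\ql}$ and $\vard{P_1}{v^\ql}$ lie in $\hm A$ and involve only the variables $\qs_{\qg,0}$, so this elementary description of $D_{H_{\qa,-1}}$ is all that is needed.

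For the first identity I would apply \eqref{a2-4} with $P=H_{\qa,-1}$ (super degree $0$) and $Q=P_1$ (super degree $2$); the term containing $\vard{H_{\qa,-1}}{\qs_\ql}=0$ drops, leaving $-\vard{}{\qs_\ql}[H_{\qa,-1},P_1]=D_{H_{\qa,-1}}\bigl(\vard{P_1}{\qs_\ql}\bigr)$. Taking $p=0$ in \eqref{AN} gives $[H_{\qa,-1},P_1]=\bigl(\tfrac12+\mu_\qa\bigr)[H_{\qa,0},P_0]$, and since $X_{\qa,0}=-[H_{\qa,0},P_0]$ with $\vard{X_{\qa,0}}{\qs_\ql}=D_{X_{\qa,0}}v^\ql=\diff{v^\ql}{t^{\qa,0}}$, the left side equals $\bigl(\tfrac12+\mu_\qa\bigr)\diff{v^\ql}{t^{\qa,0}}$. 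Contracting with $\eta^{\qa\qb}$ and using $\diff{}{\qs_{\qb,0}}=\eta^{\qa\qb}D_{H_{\qa,-1}}$ yields exactly $\diff{}{\qs_{\qb,0}}\vard{P_1}{\qs_{\ql,0}}=\eta^{\qa\qb}\bigl(\tfrac12+\mu_\qa\bigr)\diff{v^\ql}{t^{\qa,0}}$. The second identity is obtained in the same way from \eqref{a2-3} in place of \eqref{a2-4}: the surviving boundary term $D_{P_1}\bigl(\vard{H_{\qa,-1}}{v^\ql}\bigr)=D_{P_1}(\eta_{\qa\ql})$ vanishes because $\eta_{\qa\ql}$ is constant, and one uses $\vard{X_{\qa,0}}{v^\ql}=-D_{X_{\qa,0}}\qs_{\ql,0}=-\diff{\qs_{\ql,0}}{t^{\qa,0}}$ before contracting with $\eta^{\qa\qb}$.

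There is no serious obstacle here; the only points requiring care are the bookkeeping of signs and parity factors in \eqref{a2-3}--\eqref{a2-4} when one argument has super degree $0$, and the observation that the two terms in those identities involving $H_{\qa,-1}$ either vanish outright (variational derivative in $\qs$) or are annihilated by $D_{P_1}$ (variational derivative in $v$, a constant). Once these are checked, the lemma follows by a one-line contraction with $\eta^{\qa\qb}$.
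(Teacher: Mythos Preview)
Your proposal is correct and follows essentially the same route as the paper: both use the $p=0$ case of the bihamiltonian recursion \eqref{AN} to identify $[H_{\qa,-1},P_1]$ with $\bigl(\tfrac12+\mu_\qa\bigr)[H_{\qa,0},P_0]$, and then apply the variational-derivative identities \eqref{a2-3}--\eqref{a2-4}. Your explicit identification $\diff{}{\qs_{\qb,0}}=\eta^{\qa\qb}D_{H_{\qa,-1}}$ spells out the mechanism the paper leaves implicit, but the argument is the same.
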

\begin{proof}
By using the identity \eqref{AN} and the fact that $H_{\qa,-1}$ are Casimirs of $P_0$, we obtain the identity
\[
\fk{P_1}{H_{\qa,-1}} = \kk{\frac 12+\mu_\qa}\fk{P_0}{H_{\qa,0}}.
\]
By taking the variational derivatives of both sides of the identity above and by using \eqref{a2-3} and \eqref{a2-4}, it is easy to prove the desired identities. The lemma is proved.
\end{proof}
\begin{Lem}
We have the following relation:
\begin{equation}
\label{BV}
\qs_{\ql,2}^1 = \kk{\frac 12+\mu_\qe}\eta^{\qz\qe}\diff{v_\ql}{t^{\qe,0}}\qs_{\qz,1}+loc.
\end{equation}
\end{Lem}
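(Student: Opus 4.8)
The plan is to read $\qs_{\ql,2}^1$ off directly from the recursion relations \eqref{2-11} for the odd variables, and then, working modulo $\hm A$, to reduce everything to the leading ($\qp_x^0$) coefficient of the Hamiltonian operator $\mathcal P_1$, which is exactly the quantity computed in Lemma \ref{BY}.

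First I would write \eqref{2-11} for $m=1$:
\[
\eta^{\qa\qz}\qs_{\qz,2}^1 = \mathcal P_1^{\qa\qz}\qs_{\qz,1} = \sum_{s\geq 0}(\mathcal P_1)^{\qa\qz}_s\qs_{\qz,1}^s,
\]
so that after contracting with $\eta_{\ql\qa}$ it remains only to understand the right-hand side modulo $\hm A$. The point is that every summand with $s\geq 1$ is already local: writing \eqref{2-11} for $m=0$ gives $\eta^{\qa\qz}\qs_{\qz,1}^1 = \mathcal P_1^{\qa\qz}\qs_{\qz,0}\in\hm A$, hence $\qs_{\qz,1}^s = \qp_x^{s-1}\qs_{\qz,1}^1\in\hm A$ for every $s\geq 1$, and since the coefficients $(\mathcal P_1)^{\qa\qz}_s$ are differential polynomials we get $\sum_{s\geq 1}(\mathcal P_1)^{\qa\qz}_s\qs_{\qz,1}^s = loc$. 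Therefore $\eta^{\qa\qz}\qs_{\qz,2}^1 = (\mathcal P_1)^{\qa\qz}_0\,\qs_{\qz,1}+loc$. Next I would identify the surviving coefficient: by the definition of the Hamiltonian operator of $P_1$ one has $(\mathcal P_1)^{\qa\qz}_0 = \diff{}{\qs_{\qz,0}}\vard{P_1}{\qs_{\qa,0}}$, which is precisely what Lemma \ref{BY} evaluates, so $(\mathcal P_1)^{\qa\qz}_0 = \eta^{\qe\qz}\kk{\frac 12+\mu_\qe}\diff{v^\qa}{t^{\qe,0}}$. Substituting this, contracting with $\eta_{\ql\qa}$, and using $\eta^{\qe\qz} = \eta^{\qz\qe}$ gives \eqref{BV}.

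The computation presents no real obstacle; the only point that needs care is the locality reduction in the middle step, where one must invoke the $m=0$ instance of the recursion to conclude that $\qs_{\qz,1}^s$ is local for $s\geq 1$ even though $\qs_{\qz,1}$ itself (the $s=0$ case) is genuinely non-local. Everything else is bookkeeping around Lemma \ref{BY}.
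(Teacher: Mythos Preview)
Your proof is correct and follows essentially the same route as the paper: expand $\mathcal P_1$ as $\sum_{s\geq 0}(\mathcal P_1)_s\qp_x^s$, apply the recursion \eqref{2-11} at $m=1$, drop the $s\geq 1$ terms as local, identify $(\mathcal P_1)^{\qa\qz}_0=\diff{}{\qs_{\qz,0}}\vard{P_1}{\qs_{\qa,0}}$, and invoke Lemma \ref{BY}. The only difference is that you spell out explicitly, via the $m=0$ recursion, why $\qs_{\qz,1}^s\in\hm A$ for $s\geq 1$; the paper absorbs this into an ``it is easy to see that''.
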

\begin{proof}
Let us denote by $\mathcal P_1$ the Hamiltonian operator of $P_1$ and represent it in the form
\[
\mathcal P_1 = \sum_{k\geq 0}\mathcal P_{1,k}\qp_x^k,\quad \mathcal P_{1,k}\in M_n(\hm A).
\]
Then according to the recursion relation
\[
\eta^{\qg\ql}\qs_{\ql,2}^1 = \mathcal P_1^{\qg\ql}\qs_{\ql,1} = \sum_{k\geq 0}\mathcal P_{1,k}^{\qg\ql}\qs_{\ql,1}^k,
\]
it is easy to see that 
\[
\qs_{\ql,2}^1 = \eta_{\ql\qg}\mathcal P_{1,0}^{\qg\qz}\qs_{\qz,1}+loc.
\]
Now it follows from the definition of the Hamiltonian operator that
\[
\mathcal P_{1,0}^{\qg\qz} = \diff{}{\qs_{\qz,0}}\vard{P_1}{\qs_{\qg,0}}.
\]
Therefore by using Lemma \ref{BY} we arrive at the identity \eqref{BV} and the lemma is proved.
\end{proof}
\begin{Prop}
\label{DT}
There exist unique differential polynomials $M^\qz_\ql$ and $N^\qz_\ql$  such that \eqref{BX} is local. More explicitly, we have
\[
M^\qz_\ql = \kk{\frac 52+\mu_\ql+c_0}\qd^\qz_\ql,\quad \qp_xN^\qz_{\ql} = \kk{\frac 12+\mu_\qa}\eta^{\qa\qz}\kk{\mu_\qz-\mu_\ql-1}\diff{v_\ql}{t^{\qa,0}},
\]
here $c_0$ is the arbitrary constant appearing in the operator $\mathcal L_2$.
\end{Prop}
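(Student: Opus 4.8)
The plan is to prove the Proposition by computing, working modulo $\hm A$, the non-local part of the right-hand side of \eqref{BX} and reading off $M^\qz_\ql$ and $\qp_x N^\qz_\ql$ from the condition that this part vanish. The only non-local generators that can occur on the right-hand side of \eqref{BX} are $\qs_{\qz,1}$ and $\qs_{\qz,2}$: anything with a larger shift, or with a $\qp_x$ applied to it, is local by the recursion relations \eqref{2-11} (in particular $\qs^1_{\qz,1}=\eta_{\qz\qg}\mathcal P_1^{\qg\qb}\qs_{\qb,0}\in\hm A$). Since $\qs_{\qz,1}$ and $\qs_{\qz,2}$ are independent over $\hm A$, locality of \eqref{BX} is equivalent to the separate vanishing of the $\qs_{\qz,1}$- and $\qs_{\qz,2}$-coefficients, and together with the fact that $M^\qz_\ql,N^\qz_\ql$ are the (uniquely determined) non-local content of $\diff{\qs_{\ql,0}}{s_2}$ as prescribed in \eqref{BI}, this yields both the explicit formulas and the uniqueness.

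Two ingredients feed the computation. The first is the relation \eqref{BV}, already established: $\qs_{\ql,2}^1=\kk{\frac 12+\mu_\qe}\eta^{\qz\qe}\diff{v_\ql}{t^{\qe,0}}\qs_{\qz,1}+loc$. The second is a mod-$\hm A$ expression for $\Phi_{\qa,1}^0=\diff{f_{\qa,1}}{\qt_0}$. By Proposition \ref{BM} (the case $p=1$) one has $\kk{\frac 12+\mu_\qa}\Phi_{\qa,1}^0=\Phi_{\qa,0}^1+p_{\qa,1}$ with $p_{\qa,1}\in\hm A^1$; by Lemma \ref{AL} and \eqref{BL}, $\Phi_{\qa,0}^1=\hat T_1\kk{\qs_{\qa,0}+\diff{g_\qa}{\qt_0}}$; and since $\diff{g_\qa}{\qt_0}=D_{P_0}g_\qa=\sum_{s\geq 0}\eta^{\qb\qg}\qs_\qg^{s+1}\diff{g_\qa}{v^{\qb,s}}$ contains only the differentiated odd variables $\qs_\qg^{s+1}$, the element $\hat T_1\diff{g_\qa}{\qt_0}=\sum_{s\geq 0}\eta^{\qb\qg}\qp_x^s(\qs_{\qg,1}^1)\diff{g_\qa}{v^{\qb,s}}$ lies in $\hm A$. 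Hence $\kk{\frac 12+\mu_\qa}\Phi_{\qa,1}^0=\qs_{\qa,1}+loc$. The crucial observation is that the constant $a^{\qa\qb}=\eta^{\qa\qb}\kk{\frac 12+\mu_\qb}\kk{\frac 12+\mu_\qa}\kk{\frac 32+\mu_\qa}$ carries precisely the factor $\frac 12+\mu_\qa$, so that $a^{\qa\qb}\Phi_{\qa,1}^0=\eta^{\qa\qb}\kk{\frac 12+\mu_\qb}\kk{\frac 32+\mu_\qa}\qs_{\qa,1}+loc$ with no spurious denominator; this identity holds unconditionally, and in particular when $\mu_\qa=-\frac 12$ (both sides then vanish), so no division by a possibly-zero quantity is ever needed.

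Substituting these into \eqref{BX} and expanding $\qp_x\kk{M^\qz_\ql\qs_{\qz,2}}=\kk{\qp_x M^\qz_\ql}\qs_{\qz,2}+M^\qz_\ql\qs_{\qz,2}^1$ and $\qp_x\kk{N^\qz_\ql\qs_{\qz,1}}=\kk{\qp_x N^\qz_\ql}\qs_{\qz,1}+N^\qz_\ql\qs_{\qz,1}^1$ (the last term local), the $\qs_{\qz,2}$-coefficient of the right-hand side of \eqref{BX} is $-\qp_x M^\qz_\ql$, so $M^\qz_\ql$ must be $\qp_x$-closed, i.e. a constant; its value $M^\qz_\ql=\kk{\frac 52+\mu_\ql+c_0}\qd^\qz_\ql$ is then obtained by tracing the genuine $\qs_{\qz,2}$-content of $\diff{\qs_{\ql,0}}{s_2}$ in \eqref{BI} — from the odd-Virasoro contribution $c_0\,\qt_0\,\Phi^2_{\ql,0}=c_0\qs_{\ql,2}+loc$ together with the even contributions, reduced via Proposition \ref{BM} and Lemma \ref{AL}. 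With $M^\qz_\ql$ a constant, applying \eqref{BV} to $M^\qw_\ql\qs_{\qw,2}^1$ and the boxed formula to $a^{\qa\qb}\Phi_{\qa,1}^0$, the $\qs_{\qz,1}$-coefficient of the right-hand side of \eqref{BX} collapses, after the arithmetic $c_0+\kk{\frac 32+\mu_\qz}-\kk{\frac 52+\mu_\ql+c_0}=\mu_\qz-\mu_\ql-1$, to $\kk{\frac 12+\mu_\qe}\eta^{\qz\qe}\kk{\mu_\qz-\mu_\ql-1}\diff{v_\ql}{t^{\qe,0}}-\qp_x N^\qz_\ql$; setting it to zero gives the stated formula for $\qp_x N^\qz_\ql$. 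That $N^\qz_\ql$ can be chosen in $\hm A^0$ follows from $\diff{v_\ql}{t^{\qe,0}}=\qp_x\Qo_{\ql,0;\qe,0}$, which exhibits the right-hand side as a total $x$-derivative, so one may take $N^\qz_\ql=\kk{\frac 12+\mu_\qe}\eta^{\qz\qe}\kk{\mu_\qz-\mu_\ql-1}\Qo_{\ql,0;\qe,0}$.

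The main obstacle is the second paragraph: establishing the mod-$\hm A$ identity $\kk{\frac 12+\mu_\qa}\Phi_{\qa,1}^0=\qs_{\qa,1}+loc$ from Proposition \ref{BM} together with the locality of $\hat T_1$ applied to $D_{P_0}$ of a function, and recognising the cancellation of $\frac 12+\mu_\qa$ against $a^{\qa\qb}$; everything else is a careful but routine reduction using the recursion relations \eqref{2-11}, Lemma \ref{lem2-8}, and the identity $\diff{v_\ql}{t^{\qe,0}}=\qp_x\Qo_{\ql,0;\qe,0}$.
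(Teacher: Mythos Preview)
Your approach is essentially that of the paper: isolate the non-local part of \eqref{BX}, read off that $M^\qz_\ql$ must be constant (hence fixed by the leading-order Virasoro symmetry), then use \eqref{BV} and the identity $\kk{\frac 12+\mu_\qa}\diff{f_{\qa,1}}{\qt_0}=\qs_{\qa,1}+loc$ (your derivation of this via $\hat T_1$ is a clean elaboration of what the paper records as \eqref{CB}) to collapse the $\qs_{\qz,1}$-coefficient and obtain $\qp_xN^\qz_\ql$. The arithmetic $c_0+\kk{\frac 32+\mu_\qz}-\kk{\frac 52+\mu_\ql+c_0}=\mu_\qz-\mu_\ql-1$ is exactly right.

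One small slip: your displayed identity $\diff{v_\ql}{t^{\qe,0}}=\qp_x\Qo_{\ql,0;\qe,0}$ is not correct beyond leading order; by Theorem \ref{AI} one has $\qp_x\Qo_{\ql,0;\qe,0}=\diff{h_{\qe,0}}{t^{\ql,0}}=\diff{w_\qe}{t^{\ql,0}}$, which differs from $\diff{v_\ql}{t^{\qe,0}}$ by the $g_\ql$-corrections of \eqref{AM}. What is true (and all you need) is $\diff{v_\ql}{t^{\qe,0}}=\qp_x\,\vard{h_{\qe,1}}{v^\ql}$, coming directly from \eqref{AD} with $P_0=P_0^{[0]}$. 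This still exhibits the right-hand side as a total $x$-derivative, so the existence of $N^\qz_\ql$ stands; only your explicit candidate $N^\qz_\ql=(\ldots)\Qo_{\ql,0;\qe,0}$ should be replaced by $(\ldots)\vard{h_{\qe,1}}{v^\ql}$. Finally, for uniqueness of $N^\qz_\ql$ you should say explicitly, as the paper does, that the integration constant is pinned down by the degree-zero (genus-zero) Virasoro symmetry; your reference to ``the non-local content \ldots\ as prescribed in \eqref{BI}'' is the right idea but a touch elliptical.
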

\begin{proof}
To ensure the vanishing of the coefficient of $\qs_{\qz,2}$ in the right hand side of \eqref{BX},  $M^\qz_\ql$ should be a constant, so it is determined by the leading terms of the right hand sides of \eqref{BH} and \eqref{BI}, which are fixed by the Virasoro symmetry $\diff{}{s_2}$ of the super tau-cover of the Principal Hierarchy. Hence we obtain
\[M^\qz_\ql = \kk{\frac 52+\mu_\ql+c_0}\qd^\qz_\ql.
\]

By using Proposition \ref{BM} we have
\begin{equation}
\label{CB}
\kk{\frac 12+\mu_\qa}\diff{f_{\qa,1}}{\qt_0} = \qs_{\qa,1}+loc.
\end{equation}
Then by using this equation, the recursion relations \eqref{2-11} and the identity \eqref{BV}, we see that the vanishing of the coefficient of $\qs_{\qz,1}$ in the right hand side of \eqref{BX} gives the equation
\begin{equation}
\label{CG}
\qp_xN^\qz_{\ql} = \kk{\frac 12+\mu_\qa}\eta^{\qa\qz}\kk{\mu_\qz-\mu_\ql-1}\diff{v_\ql}{t^{\qa,0}}.
\end{equation}
Both sides of \eqref{CG} are total $x$-derivatives, so we can integrate \eqref{CG} to obtain $N^\qz_{\ql}$ upto a constant,  which has differential degree zero. Since the differential degree zero part of $N^\qz_{\ql}$ is fixed by genus zero Virasoro symmetry, $N^\qz_{\ql}$ is uniquely determined. The proposition is proved.
\end{proof}

Now by a direct computation of the condition
\[
\fk{\diff{}{\qt_0}}{\diff{}{s_2}} = 0,
\]
we obtain the following explicit expression for $I_0$ which is defined by \eqref{BD}:
\begin{align}
\label{BZ}
I_0v_\ql =\, &a^{\qa\qb}\kk{f_{\qb,0}'\diff{\qs_{\ql,0}}{t^{\qa,1}}+f_{\qa,1}'\diff{\qs_{\ql,0}}{t^{\qb,0}}}-a^{\qa\qb}\kk{\diff{f_{\qb,0}}{\qt_0}\diff{v_\ql}{t^{\qa,1}}+\diff{f_{\qa,1}}{\qt_0}\diff{v_\ql}{t^{\qb,0}}}\\
\notag
&+b^{\qa\qb}\kk{f_{\qb,0}'\diff{\qs_{\ql,0}}{t^{\qa,0}}+f_{\qa,0}'\diff{\qs_{\ql,0}}{t^{\qb,0}}}-b^{\qa\qb}\kk{\diff{f_{\qb,0}}{\qt_0}\diff{v_\ql}{t^{\qa,0}}+\diff{f_{\qa,0}}{\qt_0}\diff{v_\ql}{t^{\qb,0}}}\\
\notag
&+\kk{\frac 52+\mu_\ql}\qs_{\ql,2}^1+W\qs_{\ql,0}+\qp_x\kk{N^\qz_\ql\qs_{\qz,1}},\\
\label{CA}
I_0\qs_{\ql,0} =\, &-a^{\qa\qb}\kk{\diff{f_{\qb,0}}{\qt_0}\diff{\qs_{\ql,0}}{t^{\qa,1}}+\diff{f_{\qa,1}}{\qt_0}\diff{\qs_{\ql,0}}{t^{\qb,0}}}
-b^{\qa\qb}\kk{\diff{f_{\qb,0}}{\qt_0}\diff{\qs_{\ql,0}}{t^{\qa,0}}+\diff{f_{\qa,0}}{\qt_0}\diff{\qs_{\ql,0}}{t^{\qb,0}}}\\
\notag
&+\kk{\frac 52+\mu_\ql}\diff{\qs_{\ql,0}}{\qt_2}-\diff{}{\qt_0}\kk{N^\qz_\ql\qs_{\qz,1}},
\end{align}
here and henceforth we will use $W$ to denote the coefficient of $t^{1,0}$ of the operator $\mathcal L_2$, more explicitly, we have
\begin{align*}
W =&\, \kk{\frac 12+\mu_1}\kk{\frac 32+\mu_1}\kk{\frac 52+\mu_1}\diff{}{t^{1,2}}\\
&+\sum_{k=1}^2\kk{3\kk{\frac 12+\mu_1}^2+6\kk{\frac 12+\mu_1}+2}\kk{R_k}^\qb_1 \diff{}{t^{\qb,2-k}}\\
&+\kk{\frac 92+3\mu_1}\kk{R_{2,2}}^\qb_1\diff{}{t^{\qb,0}}.
\end{align*}
\begin{Prop}
The derivation $I_0$ given by \eqref{BZ}, \eqref{CA} is local.
\end{Prop}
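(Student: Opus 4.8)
The plan is to prove locality of $I_0$ by a direct inspection of the two formulas \eqref{BZ} and \eqref{CA}, isolating the finitely many summands that a priori lie in $\hm A^+\setminus\hm A$ and checking that they cancel. A first pass over \eqref{BZ} shows that most terms are manifestly differential polynomials in $v$ and $\qs_0$: the deformed flows $\diff{}{t^{\qa,p}}$ act on $v_\ql$ and $\qs_{\ql,0}$ by differential polynomials (Theorem \ref{AC}), so the same holds for $W$, being a linear combination of such flows; the coefficients $M^\qz_\ql$, $N^\qz_\ql$ are differential polynomials by Proposition \ref{DT}; $\diff{f_{\qb,0}}{\qt_0}=\Phi^0_{\qb,0}=\qs_{\qb,0}+D_{P_0}g_\qb\in\hm A$ by \eqref{BL}, and likewise $\diff{f_{\qa,0}}{\qt_0}$; the primed quantities $f'_{\qb,0}$, $f'_{\qa,1}$ are the differential polynomials $h_{\qb,0}$, $h_{\qa,1}$ (being $\qp_x$-derivatives of $f_{\qb,0}$, $f_{\qa,1}$; cf. \eqref{AW} and Theorem \ref{AI}); and $\qs^1_{\qz,1}=\eta_{\qz\qa}\mathcal P_1^{\qa\qb}\qs_{\qb,0}\in\hm A$ by \eqref{2-11}, so that $\qp_x(N^\qz_\ql\qs_{\qz,1})=(\qp_xN^\qz_\ql)\qs_{\qz,1}+N^\qz_\ql\qs^1_{\qz,1}$ is local apart from its first term. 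Hence the only potentially non-local contributions to $I_0v_\ql$ are (i) the term $-a^{\qa\qb}\diff{f_{\qa,1}}{\qt_0}\diff{v_\ql}{t^{\qb,0}}$, (ii) the term $(\tfrac52+\mu_\ql)\qs^1_{\ql,2}$, and (iii) the term $(\qp_xN^\qz_\ql)\qs_{\qz,1}$.

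I would then rewrite each of (i)--(iii) modulo $\hm A$. For (i): by \eqref{CB}, $\diff{f_{\qa,1}}{\qt_0}=\tfrac{1}{1/2+\mu_\qa}\qs_{\qa,1}$ modulo $\hm A$, the correction $\tfrac{1}{1/2+\mu_\qa}\hat T_1(D_{P_0}g_\qa)$ being local because $D_{P_0}g_\qa$ involves only the variables $\qs^s_{\mu,0}$ with $s\ge1$, which $\hat T_1$ sends to $\qs^s_{\mu,1}=\qp_x^{s-1}\qs^1_{\mu,1}\in\hm A$. For (ii): \eqref{BV} gives $\qs^1_{\ql,2}=(1/2+\mu_\qe)\eta^{\qz\qe}\diff{v_\ql}{t^{\qe,0}}\,\qs_{\qz,1}$ modulo $\hm A$. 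For (iii): \eqref{CG} gives $\qp_xN^\qz_\ql=(1/2+\mu_\qa)\eta^{\qa\qz}(\mu_\qz-\mu_\ql-1)\diff{v_\ql}{t^{\qa,0}}$. Substituting $a^{\qa\qb}$ from \eqref{BT}, so that $a^{\qa\qb}/(1/2+\mu_\qa)=\eta^{\qa\qb}(1/2+\mu_\qb)(3/2+\mu_\qa)$, and collecting the coefficient of $\eta^{\qg\qd}(1/2+\mu_\qd)\,\qs_{\qg,1}\diff{v_\ql}{t^{\qd,0}}$ in $I_0v_\ql$, the three contributions add up to
\begin{equation*}
-\kk{\frac32+\mu_\qg}+\kk{\frac52+\mu_\ql}+(\mu_\qg-\mu_\ql-1)=0,
\end{equation*}
so the non-local part of $I_0v_\ql$ vanishes.

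The argument for $I_0\qs_{\ql,0}$ runs in parallel, the only non-local terms of \eqref{CA} being now bilinear in the odd variables: the factor $\diff{f_{\qa,1}}{\qt_0}$ multiplying $\diff{\qs_{\ql,0}}{t^{\qb,0}}$; the term $(\tfrac52+\mu_\ql)\diff{\qs_{\ql,0}}{\qt_2}$, where $\diff{\qs_{\ql,0}}{\qt_2}=T_{0,2}\vard{P_1}{v^\ql}$ introduces no variable of level $>1$, so its only non-local content is its $\qs_{\cdot,1}$-part; and the piece $(D_{P_0}N^\qz_\ql)\qs_{\qz,1}$ of $\diff{}{\qt_0}(N^\qz_\ql\qs_{\qz,1})$ — here $\diff{\qs_{\qz,1}}{\qt_0}=-\vard{P_1}{v^\qz}\in\hm A$ by \eqref{AH}, so $N^\qz_\ql\diff{\qs_{\qz,1}}{\qt_0}$ is local. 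Lemma \ref{BY} plays here the role that \eqref{BV} played above: modulo $\hm A$ it identifies the $\qs_{\cdot,1}$-content of $T_{0,2}\vard{P_1}{v^\ql}$ with $(1/2+\mu_\qd)\eta^{\qg\qd}\qs_{\qg,1}\diff{\qs_{\ql,0}}{t^{\qd,0}}$; and since $N^\qz_\ql$ is a differential polynomial in $v$ and $D_{P_0}$ commutes with $\diff{}{t^{\qa,0}}$, applying $D_{P_0}$ to both sides of \eqref{CG} yields $D_{P_0}N^\qz_\ql=(1/2+\mu_\qa)\eta^{\qa\qz}(\mu_\qz-\mu_\ql-1)\diff{\qs_{\ql,0}}{t^{\qa,0}}$. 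Collecting coefficients exactly as before, and tracking the sign produced when $\qs_{\qg,1}$ is moved past $\diff{\qs_{\ql,0}}{t^{\qd,0}}$, the same identity $-(3/2+\mu_\qg)+(5/2+\mu_\ql)+(\mu_\qg-\mu_\ql-1)=0$ forces the cancellation, so $I_0\qs_{\ql,0}\in\hm A$ as well.

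I expect the main difficulty to be purely computational: one must confirm that every term set aside as ``local'' genuinely represents an element of $\hm A$ (in particular $\qs^s_{\qz,1}$ for $s\ge1$, $\hat T_1(D_{P_0}g_\qa)$, $\diff{\qs_{\qz,1}}{\qt_0}$, and all deformed flows of $v$ and $\qs_0$), pin down precisely the $\qs_{\cdot,1}$-content of $\diff{\qs_{\ql,0}}{\qt_2}$ through Lemma \ref{BY}, and track numerical coefficients and odd-variable signs accurately enough that the $\mu$-dependent cancellation is transparent. Once this is done, and combined with the analogous (and more routine) verification that $I_1\in\der$, the locality step of the program described in Sect.\,\ref{AU} is complete.
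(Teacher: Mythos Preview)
Your proof is correct and follows essentially the same approach as the paper's. The paper is somewhat more concise for $I_0 v_\ql$: it simply observes that the locality of \eqref{BZ} is \emph{equivalent} to the very definition of $N^\qz_\ql$ in Proposition~\ref{DT}, since $N^\qz_\ql$ was chosen precisely so that the non-local part of \eqref{BX} vanishes, and $[\partial/\partial\tau_0,\partial/\partial s_2]v_\ql$ differs from $-I_0 v_\ql$ by the manifestly local term $[\partial/\partial\tau_0,X]v_\ql$; you instead recompute that cancellation explicitly from \eqref{BZ}, arriving at the same identity $-(3/2+\mu_\qg)+(5/2+\mu_\ql)+(\mu_\qg-\mu_\ql-1)=0$. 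For $I_0\qs_{\ql,0}$ the two arguments are essentially identical: both isolate the $\qs_{\cdot,1}$-content of $\partial\qs_{\ql,0}/\partial\tau_2$ via Lemma~\ref{BY}, use \eqref{CB} for $\partial f_{\qa,1}/\partial\tau_0$, and compute $\partial N^\qz_\ql/\partial\tau_0$ (which you obtain by applying $D_{P_0}$ to \eqref{CG} and using $[D_{P_0},\partial/\partial t^{\qa,0}]=0$, exactly as the paper does in its ``obvious fact''). One cosmetic point: writing $\partial f_{\qa,1}/\partial\tau_0=\tfrac{1}{1/2+\mu_\qa}\qs_{\qa,1}+loc$ involves a formal division that is ill-defined when $\mu_\qa=-1/2$; since you immediately multiply by $a^{\qa\qb}$, which carries the factor $(1/2+\mu_\qa)$, nothing goes wrong, but it is cleaner to work with $(1/2+\mu_\qa)\partial f_{\qa,1}/\partial\tau_0$ throughout as in \eqref{CB}.
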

\begin{proof}
The locality of \eqref{BZ} is equivalent to the definition of $N_\ql^\qz$ and we only need to check the locality of \eqref{CA}. By using the definition of the odd flows we know that
\[
\diff{\qs_{\ql,0}}{\qt_2} =T_{0,2}\diff{\qs_{\ql,0}}{\qt_1} = \sum_{s\geq 0}\qs_{\qb,1}^s\diff{}{\qs_{\qb,0}^s}\diff{\qs_{\ql,0}}{\qt_1}.
\]
Therefore it follows from Lemma \ref{BY} that
\[
\diff{\qs_{\ql,0}}{\qt_2} = \qs_{\qb,1}\eta^{\qa\qb}\kk{\frac 12+\mu_\qa}\diff{\qs_{\ql,0}}{t^{\qa,0}}+loc.
\]
Then the locality of \eqref{CA} is proved by using \eqref{CB} and the following obvious fact:
\[
\diff{N^\qz_{\ql}}{\qt_0} = \kk{\frac 12+\mu_\qa}\eta^{\qa\qz}\kk{\mu_\qz-\mu_\ql-1}\diff{\qs_{\ql,0}}{t^{\qa,0}}.
\]
The proposition is proved.
\end{proof}

Next let us prove the locality of $I_1$. Similar to the expression \eqref{BZ} and \eqref{CA}, we can write down the explicit expression for $I_1$, which can be found in the next subsection. However the locality of $I_1$ can not be derived from this expression directly, so we turn to prove the following equivalent conditions:
\begin{equation}
\label{DP}
\fk{\diff{}{\qt_1}}{\diff{}{s_2}}v_\ql\in\hm A,\quad \fk{\diff{}{\qt_1}}{\diff{}{s_2}}\qs_{\ql,0}\in\hm A.
\end{equation}

By a direct computation, we have
\begin{align}
\label{CC}
\fk{\diff{}{\qt_1}}{\diff{}{s_2}}v_\ql =&\ a^{\qa\qb}\kk{\diff{f_{\qb,0}}{\qt_1}\diff{v_\ql}{t^{\qa,1}}+\diff{f_{\qa,1}}{\qt_1}\diff{v_\ql}{t^{\qb,0}}}\\
\notag
&+b^{\qa\qb}\kk{\diff{f_{\qb,0}}{\qt_1}\diff{v_\ql}{t^{\qa,0}}+\diff{f_{\qa,0}}{\qt_1}\diff{v_\ql}{t^{\qb,0}}}+(1+c_0)\diff{v_\ql}{\qt_3}\\
\notag
&-\sum_{s\geq 0}\kk{\kk{\frac 52+c_0+\mu_\qz}\qs_{\qz,2}^s+\qp_x^s\kk{N^\qe_\qz\qs_{\qz,1}}}\diff{}{\qs_{\qz,0}^s}\diff{v_\ql}{\qt_1}+loc.
\end{align}
\begin{Lem}
\label{CE}
There exists a unique differential polynomial $Z^\qb_{\qa}$ such that
\[
\kk{\frac 12+\mu_\qa}\diff{f_{\qa,1}}{\qt_1} = \qs_{\qa,2}+Z^\qb_{\qa}\qs_{\qb,1}+loc,
\]
where $Z^\qb_{\qa}$ satisfies the following equation
\begin{equation}
\label{CD}
\qp_xZ^\qb_{\qa} = -\kk{\frac 12+\mu_\qe}\eta^{\qb\qe}\diff{v_\qa}{t^{\qe,0}}.
\end{equation}
\end{Lem}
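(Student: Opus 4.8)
\textbf{Proof proposal for Lemma \ref{CE}.}
The plan is to mimic the argument used for the analogous relation \eqref{CB} in Proposition \ref{DT}, but now with the flow $\diff{}{\qt_1}$ in place of $\diff{}{\qt_0}$, so that a genuine correction term $Z^\qb_\qa\qs_{\qb,1}$ appears. First I would apply $\diff{}{\qt_1}$ to the identity $H_{\qa,0}=\int h_{\qa,1}$ and use the definition \eqref{AJ} of $\Phi^m_{\qa,p}$, namely $\qp_x\Phi^1_{\qa,1}=\diff{h_{\qa,1}}{\qt_1}$, together with the recursion relation \eqref{DM} of Proposition \ref{BM} at $p=1$, which gives $\kk{\frac 12+\mu_\qa}\Phi^0_{\qa,1}=\Phi^1_{\qa,0}+p_{\qa,1}=\qs_{\qa,1}+loc$. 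Iterating this recursion one step further (or rather, shifting the superscript via Lemma \ref{AL}, $\Phi^1_{\qa,1}=\hat T_1\Phi^0_{\qa,1}$) will express $\kk{\frac 12+\mu_\qa}\Phi^1_{\qa,1}$ in terms of $\qs_{\qa,2}$ plus lower-order odd variables. Since $\diff{f_{\qa,1}}{\qt_1}=\Phi^1_{\qa,1}$ by the defining equations of the super tau-cover, this already produces the leading term $\qs_{\qa,2}$; the remaining task is to identify the coefficient of $\qs_{\qb,1}$.

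To pin down $Z^\qb_\qa$, I would extract the non-local part carefully. The non-localities in $\Phi^1_{\qa,1}$ after applying the recursion come precisely from the term $\hat T_1$ acting on the $\qs_{\qe,m}$-linear piece, i.e. from $\kk{\frac 12+\mu_\ql}\eta^{\ql\qe}(\qp_\ql h_{\qa,1})\qs_{\qe,1}$ in the recursion for $\Phi^m_{\qa,p}$. Using $\qp_\ql h_{\qa,1}=\qp_\ql h_{\qa,1}$ and the flat-coordinate expressions, together with the identity $\qp_\ql h_{\qa,1}=\eta_{\qa\qg}\,\qp_\ql(\eta^{\qg\qe}v_\qe)+loc$ coming from \eqref{AM}, one rewrites the offending coefficient as $-\kk{\frac 12+\mu_\qe}\eta^{\qb\qe}\diff{v_\qa}{t^{\qe,0}}$ after lowering/raising indices with $\eta$. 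This is exactly the claimed equation \eqref{CD} for $\qp_xZ^\qb_\qa$; the right-hand side is a total $x$-derivative because $\diff{v_\qa}{t^{\qe,0}}=\qp_x\Qo_{\qe,0;\qa,0}$ (part (1) of Theorem \ref{AI} combined with \eqref{BJ}), so $Z^\qb_\qa$ exists in $\hm A$ up to an additive constant. Uniqueness then follows because the differential-degree-zero part of $\diff{f_{\qa,1}}{\qt_1}$ is fixed by the genus-zero Virasoro symmetry $\diff{}{s_m^{odd}}$ of the undeformed super tau-cover (Theorem \ref{BA}), exactly as in the proof of Proposition \ref{DT}.

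The step I expect to be the main obstacle is the bookkeeping of the non-local terms: one must verify that \emph{all} non-localities produced by $\diff{}{\qt_1}$ acting through the two-step $\Phi$-recursion are accounted for by the single term $Z^\qb_\qa\qs_{\qb,1}$, with nothing of the form $\qs_{\qb,k}$ for $k\ge 2$ surviving beyond the explicit $\qs_{\qa,2}$, and nothing non-local hiding inside what I am calling $loc$. Concretely this amounts to checking that $\diff{h_{\qa,1}}{\qt_1}$ differs from $\qp_x$ of a differential polynomial only by the controlled non-local piece, which is where the exactness $P_0=[Z,P_1]$ and the bihamiltonian recursion \eqref{AN} at $p=0$ — already exploited in Proposition \ref{BM} — do the real work. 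Once that is in hand, the computation is routine: substitute, collect the $\qs_{\qz,1}$-coefficient, and read off \eqref{CD}.
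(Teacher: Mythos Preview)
Your existence and uniqueness argument via Proposition~\ref{BM} and the shift $\hat T_1$ (Lemma~\ref{AL}) is correct and is exactly what the paper does. In particular, writing $\kk{\frac12+\mu_\qa}\Phi^0_{\qa,1}=\qs_{\qa,1}+q_\qa$ with $q_\qa\in\hm A^1$ and applying $\hat T_1$ immediately gives the shape $\qs_{\qa,2}+Z^\qb_\qa\qs_{\qb,1}+loc$, since in $\hat T_1 q_\qa$ only the $\qs_{\qb,0}$-piece (not the $\qs_{\qb,0}^s$, $s\ge1$ pieces) shifts to something non-local. So the ``bookkeeping'' worry you flag in your last paragraph is in fact easy: no $\qs_{\qb,k}$ with $k\ge2$ can appear besides the explicit $\qs_{\qa,2}$.

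Where your proposal diverges from the paper is in deriving \eqref{CD}. You try to read off $Z^\qb_\qa$ directly by tracing the recursion, and in doing so you invoke the explicit formula $-\kk{\frac{2p-1}{2}+\mu_\qa}\Phi^m_{\qa,p}=\kk{\frac12+\mu_\ql}\eta^{\ql\qe}(\qp_\ql h_{\qa,p})\qs_{\qe,m}+\cdots$ displayed after Theorem~\ref{g0super2}. That formula is stated only for the \emph{undeformed} super tau-cover; in the deformed setting the proof of Proposition~\ref{BM} gives a recursion of the same flavour but does not produce that clean closed form, so your second paragraph is not on solid ground as written (the tautology ``$\qp_\ql h_{\qa,1}=\qp_\ql h_{\qa,1}$'' and the subsequent index manipulation are symptoms of this).

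The paper's argument for \eqref{CD} is shorter and sidesteps all of this. Since $\qp_x\diff{f_{\qa,1}}{\qt_1}=\diff{h_{\qa,1}}{\qt_1}\in\hm A$, the quantity $\qp_x\bigl(\qs_{\qa,2}+Z^\qb_\qa\qs_{\qb,1}\bigr)$ must be local. Expanding gives $\qs_{\qa,2}^1+(\qp_x Z^\qb_\qa)\qs_{\qb,1}+Z^\qb_\qa\qs_{\qb,1}^1$; the last term is local by the recursion \eqref{2-11}, and for the first term one plugs in \eqref{BV}. Collecting the coefficient of $\qs_{\qb,1}$ yields \eqref{CD} in one line. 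This replaces your direct computation entirely and uses only facts already established in the deformed setting.
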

\begin{proof}
The existence and uniqueness of $Z^\qb_\qa$ can be obtained from Proposition \ref{BM}, hence we only need to derive \eqref{CD}, which can be obtained by using the identity \eqref{BV} and the fact that
$
\qp_x\kk{\qs_{\qa,2}+Z^\qb_{\qa}\qs_{\qb,1}}
$
is local. The lemma is proved.
\end{proof}
The following equation can be derived from the definition of the odd flows and the recursion relation \eqref{2-11}:
\[
\diff{v_\ql}{\qt_3} = \sum_{s\geq 0}\qs_{\qz,2}^s\diff{}{\qs_{\qz,0}^s}\diff{v_\ql}{\qt_1}.
\]
This equation together with Lemma \ref{CE} and the identity \eqref{BV} enables us to rewrite \eqref{CC} in the form
\[
\fk{\diff{}{\qt_1}}{\diff{}{s_2}}v_\ql = U_\ql^\qz\qs_{\qz,2}+V_\ql^\qz\qs_{\qz,1}+loc,
\]
where the differential polynomials $U_\ql^\qz$ and $V_\ql^\qz$ are given by
\[
U_\ql^\qz = \eta^{\qz\qb}\kk{\frac 12+\mu_\qb}\kk{\frac 32+\mu_\qz}\diff{v_\ql}{t^{\qb,0}}-\kk{\frac 32+\mu_\qz}\diff{}{\qs_{\qz,0}}\diff{v_\ql}{\qt_1},
\]
\[
V_\ql^\qz = a^{\qa\qz}\diff{v_\ql}{t^{\qa,1}}+(b^{\qa\qz}+b^{\qz\qa})\diff{v_{\ql}}{t^{\qa,0}}-\sum_{s\geq 0}\qp_x^s\kk{N^\qz_{\qg}-\kk{\frac 32+\mu_\qg}Z^\qz_{\qg}}\diff{}{\qs_{\qg,0}^s}\diff{v_\ql}{\qt_1}.
\]
The vanishing of $U_\ql^\qz$ is a consequence of Lemma
\ref{BY} and the vanishing of $V_\ql^\qz$ is due to the following lemma.
\begin{Lem}
\label{CF}
We have $V_\ql^\qz = 0$, that is 
\[
\sum_{s\geq 0}\qp_x^s\kk{N^\qz_{\qg}-\kk{\frac 32+\mu_\qg}Z^\qz_{\qg}}\diff{}{\qs_{\qg,0}^s}\diff{v_\ql}{\qt_1} = a^{\qa\qz}\diff{v_\ql}{t^{\qa,1}}+(b^{\qa\qz}+b^{\qz\qa})\diff{v_{\ql}}{t^{\qa,0}}.
\]
\end{Lem}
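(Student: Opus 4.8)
## Plan for the proof of Lemma \ref{CF}

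The plan is to prove the identity $V_\ql^\qz = 0$ by a direct calculation, exploiting the fact that all three terms appearing in it are built from the ``genus-zero data'' of the Frobenius manifold together with the operator $N^\qz_\qg$ and the differential polynomial $Z^\qz_\qg$, both of which were pinned down explicitly in Proposition \ref{DT} and Lemma \ref{CE}. First I would rewrite the left-hand side by substituting the explicit $\qp_x$-derivatives: we know $\qp_x N^\qz_\qg = \kk{\frac 12+\mu_\qa}\eta^{\qa\qz}\kk{\mu_\qz-\mu_\qg-1}\diff{v_\qg}{t^{\qa,0}}$ from Proposition \ref{DT} and $\qp_x Z^\qz_\qg = -\kk{\frac 12+\mu_\qe}\eta^{\qz\qe}\diff{v_\qg}{t^{\qe,0}}$ from Lemma \ref{CE}. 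The crucial observation is that the operator $\sum_{s\geq 0}\qp_x^s(\cdots)\diff{}{\qs_{\qg,0}^s}$ applied to $\diff{v_\ql}{\qt_1}$ is governed by the recursion relation \eqref{2-11}: since $\diff{v_\ql}{\qt_1} = \vard{P_1}{\qs_{\ql,0}}$ and $\eta^{\qg\qb}\qs_{\qb,2}^1 = \mathcal P_1^{\qg\qb}\qs_{\qb,1}$, differentiating $\diff{v_\ql}{\qt_1}$ with respect to the jet variables $\qs_{\qg,0}^s$ reproduces the Hamiltonian operator $\mathcal P_1$, and contracting with $\qp_x^s$ of a coefficient produces exactly the kind of expression that appears in the recursion.

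The key steps, in order, are: (i) express $\diff{v_\ql}{\qt_1}$ in terms of $\mathcal P_1$ and use $\diff{}{\qs_{\qg,0}^s}\diff{v_\ql}{\qt_1} = (\mathcal P_1)_{\ql\qg,s}$, the $s$-th coefficient of the Hamiltonian operator; (ii) recognize that the combination $N^\qz_\qg - \kk{\frac 32+\mu_\qg}Z^\qz_\qg$ has a controlled total $x$-derivative, namely $\qp_x\big(N^\qz_\qg - (\tfrac32+\mu_\qg)Z^\qz_\qg\big) = \kk{\frac 12+\mu_\qa}\eta^{\qa\qz}\big(\mu_\qz-\mu_\qg-1 + \tfrac32+\mu_\qg\big)\diff{v_\qg}{t^{\qa,0}} = \kk{\frac 12+\mu_\qa}\eta^{\qa\qz}\kk{\frac 12+\mu_\qz}\diff{v_\qg}{t^{\qa,0}}$, so that the dependence on $\qg$ collapses; (iii) feed this into the left-hand side and use integration-by-parts within the operator together with the bihamiltonian recursion \eqref{2-11} to collapse the $\qp_x^s$-sum; (iv) on the right-hand side, use the definitions \eqref{BT}, \eqref{BU} of $a^{\qa\qz}$ and $b^{\qa\qz}$ together with Lemma \ref{BY} (which relates $\diff{v_\ql}{t^{\qa,1}}$ and $\diff{v_\ql}{t^{\qa,0}}$ to variational derivatives of $P_1$) to bring both sides into the same normal form; (v) match the two sides using the symmetry property \eqref{2-3} of $\mu$ and the constraint \eqref{2-4} on $R_1$, which enters through $b^{\qa\qz}+b^{\qz\qa}$.

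I expect the main obstacle to be step (v): controlling the $R_1$-contributions. The term $b^{\qa\qz}+b^{\qz\qa}$ is symmetric and involves $R_1$ via \eqref{BU}, while on the left-hand side the $R_1$-dependence is hidden inside $\diff{v_\ql}{\qt_1}$ (through the first-order part of $\mathcal P_1$ in normal coordinates) and inside $N^\qz_\qg$. Matching these requires using the second relation in \eqref{2-4}, $\eta_{\qa\qg}(R_1)^\qg_\qb = \eta_{\qb\qg}(R_1)^\qg_\qa$, together with the fact that in the chosen coordinates $P_1$ has no $\hm F^2_2$ component, so that the relevant piece of $\mathcal P_1$ is exactly of hydrodynamic type $g^{\qa\qb}\qp_x + \Qg^{\qa\qb}_\qg v^{\qg,1}$ as in \eqref{biham-frob}. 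Once the $R_1$-bookkeeping is settled, the remaining identity is a purely algebraic identity among the scalars $\mu_\qa$, which follows from $a^{\qa\qz} = \eta^{\qa\qz}\kk{\frac 12+\mu_\qz}\kk{\frac 12+\mu_\qa}\kk{\frac 32+\mu_\qa}$ and the collapsed coefficient $\kk{\frac 12+\mu_\qa}\kk{\frac 12+\mu_\qz}$ found in step (ii). I would organize the write-up so that the genus-zero identity is verified first (all differential polynomials replaced by their leading symbols), and then observe that the higher-differential-degree corrections on both sides are determined by the same recursion \eqref{2-11}, so equality propagates degree by degree.
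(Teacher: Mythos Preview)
Your approach differs substantially from the paper's. The paper does not attempt a direct computation; instead it observes that your step (ii) calculation, $\partial_x\big(N^\zeta_\gamma-(\tfrac32+\mu_\gamma)Z^\zeta_\gamma\big)=(\tfrac12+\mu_\alpha)(\tfrac12+\mu_\zeta)\eta^{\alpha\zeta}\tfrac{\partial v_\gamma}{\partial t^{\alpha,0}}$, means precisely that $N^\zeta_\gamma-(\tfrac32+\mu_\gamma)Z^\zeta_\gamma=\tfrac{\delta F^\zeta}{\delta v^\gamma}+C^\zeta_\gamma$ for the functional $F^\zeta=\eta^{\alpha\zeta}(\tfrac12+\mu_\alpha)(\tfrac12+\mu_\zeta)\int h_{\alpha,1}$ and some constants $C^\zeta_\gamma$. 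Then, via \eqref{a2-4}, the left-hand side of the lemma becomes $-\tfrac{\delta}{\delta\sigma_{\lambda,0}}[F^\zeta,P_1]$ plus a contribution from the constants (handled by Lemma \ref{BY}). The bihamiltonian recursion \eqref{AN} for Hamiltonians then shows this is a linear combination of the flows $\tfrac{\partial v_\lambda}{\partial t^{\alpha,p}}$; so is the right-hand side, by inspection. Hence $V^\zeta_\lambda$ is itself such a linear combination. Since the genus-zero Virasoro symmetry forces $V^\zeta_\lambda\in\hat{\mathcal A}^0_{\geq 2}$, the theory of bihamiltonian cohomology \cite{DLZ-1} gives $V^\zeta_\lambda=0$ immediately.

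Your proposal has a genuine gap in the final propagation step. You claim that once the genus-zero identity is verified, ``higher-differential-degree corrections on both sides are determined by the same recursion \eqref{2-11}, so equality propagates degree by degree.'' But \eqref{2-11} only relates the odd variables $\sigma_{\beta,m}$ across different values of $m$; it says nothing about the higher-degree components of $\mathcal P_1$ or of the flows $\tfrac{\partial v_\lambda}{\partial t^{\alpha,p}}$, which depend on the specific (and not explicitly known) deformation of $P_1$. There is no recursion that pins down both sides of the identity degree by degree from the leading term alone. What actually does the work is the cohomological vanishing: once both sides are recognized as linear combinations of bihamiltonian flows, their difference is the $v_\lambda$-component of a bihamiltonian vector field with vanishing leading term, and such vector fields are zero. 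Your step (ii) is exactly the observation needed to set this up; the missing idea is to interpret $N^\zeta_\gamma-(\tfrac32+\mu_\gamma)Z^\zeta_\gamma$ as a variational derivative of a conserved Hamiltonian and then invoke the cohomological argument, rather than attempting to match explicit formulas at every differential degree.
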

\begin{proof}
Consider the functional
\[
F^\qz = \eta^{\qa\qz}\kk{\frac 12+\mu_\qa}\kk{\frac 12+\mu_\qz}\int h_{\qa,1},
\]
then by using \eqref{CG} and \eqref{CD}, we can check that there exist constants $C^\qz_\ql$ such that
\[
N^\qz_{\ql}-\kk{\frac 32+\mu_\qg}Z^\qz_{\ql} = \vard{F^\qz}{v_\ql}+C^\qz_\ql.
\]
Then by using \eqref{a2-4}, we obtain
\[
\sum_{s\geq 0}\qp_x^s\kk{N^\qz_{\qg}-\kk{\frac 32+\mu_\qg}Z^\qz_{\qg}}\diff{}{\qs_{\qg,0}^s}\diff{v_\ql}{\qt_1} = -\vard{}{\qs_{\ql,0}}\fk{F^\qz}{P_1}+C^\qz_\qg\diff{}{\qs_{\qg,0}}\diff{v_\ql}{\qt_1}.
\]
Using the bihamiltonian recursion relations \eqref{AN} and Lemma \ref{BY} we see that the right hand side of the above equation is a linear combination of the flows $\diff{v_\ql}{t^{\qa,p}}$, hence so is $V^\qz_\ql$. The genus zero Virasoro symmetry implies that $V^\qz_\ql\in\hm A^0_{\geq 2}$, so by using the theory of bihamiltonian cohomology \cite{DLZ-1}, we know that $V^\qz_\ql$ must vanish.
The lemma is proved.
\end{proof}

We have verified the first relation in \eqref{DP} and let us proceed to prove the second one. We have
\begin{align}
\label{CH}
\fk{\diff{}{\qt_1}}{\diff{}{s_2}}\qs_{\ql,0} =&\ a^{\qa\qb}\kk{\diff{f_{\qb,0}}{\qt_1}\diff{\qs_{\ql,0}}{t^{\qa,1}}+\diff{f_{\qa,1}}{\qt_1}\diff{\qs_{\ql,0}}{t^{\qb,0}}}\\
\notag
&+b^{\qa\qb}\kk{\diff{f_{\qb,0}}{\qt_1}\diff{\qs_{\ql,0}}{t^{\qa,0}}+\diff{f_{\qa,0}}{\qt_1}\diff{\qs_{\ql,0}}{t^{\qb,0}}}\\
\notag
&+\kk{\frac 52+c_0+\mu_\ql}\diff{\qs_{\ql,2}}{\qt_1}+\diff{N_\ql^\qz}{\qt_1}\qs_{\qz,1}+(1+c_0)\diff{\qs_{\ql,0}}{\qt_3}\\
\notag
&-\sum_{s\geq 0}\kk{\kk{\frac 52+c_0+\mu_\qz}\qs_{\qz,2}^s+\qp_x^s\kk{N^\qe_\qz\qs_{\qz,1}}}\diff{}{\qs_{\qz,0}^s}\diff{\qs_{\ql,0}}{\qt_1}+loc.
\end{align}
By using the equations
\[
\diff{\qs_{\ql,0}}{\qt_3} = \sum_{s\geq 0}\qs_{\qg,2}^s\diff{}{\qs_{\qg,0}^s}\diff{\qs_{\ql,0}}{\qt_1}+\diff{\qs_{\ql,1}}{\qt_2},\quad \diff{\qs_{\ql,1}}{t^{\qa,p}} = \diff{}{\qt_1}\vard{h_{\qa,p+1}}{v^\ql},
\]
\[
\diff{\qs_{\ql,1}}{\qt_2} = \qs_{\qz,1}\kk{\frac 12+\mu_\qe}\eta^{\qe\qz}\diff{}{\qt_1}\vard{h_{\qe,1}}{v^\ql}+loc
\]
we obtain that
\[
\fk{\diff{}{\qt_1}}{\diff{}{s_2}}\qs_{\ql,0} = \qs_{\qz,2}\tilde U_\ql^\qz+\qs_{\qz,1}\tilde V_\ql^\qz+loc,
\]
where the differential polynomials $\tilde U_\ql^\qz$, $\tilde V_\ql^\qz$ are given by
\begin{align*}
\tilde U_\ql^\qz =&\, \eta^{\qz\qb}\kk{\frac 12+\mu_\qb}\kk{\frac 32+\mu_\qz}\diff{\qs_{\ql,0}}{t^{\qb,0}}-\kk{\frac 32+\mu_\qz}\diff{}{\qs_{\qz,0}}\diff{\qs_{\ql,0}}{\qt_1},\\
\tilde V_\ql^\qz =&\, a_{\qa\qz}\diff{\qs_{\ql,0}}{t^{\qa,1}}+(b_{\qa\qz}+b_{\qz\qa})\diff{\qs_{\ql,0}}{t^{\qa,0}}-\diff{}{\qt_1}\kk{N^\qz_{\ql}-\kk{\frac 32+\mu_\qg}Z^\qz_{\ql}}\\&-\sum_{s\geq 0}\qp_x^s\kk{N^\qz_{\qg}-\kk{\frac 32+\mu_\qg}Z^\qz_{\qg}}\diff{}{\qs_{\qg,0}^s}\diff{\qs_{\ql,0}}{\qt_1}.
\end{align*}
These differential polynomials actually vanish, the proof of which is similar to the one for the vanishing of $U_\ql^\qz$ and $V_\ql^\qz$ given above. Hence the locality condition \eqref{BB} is verified.

Finally let us consider the locality condition \eqref{AZ}. The first condition
\[\fk{\diff{}{t^{\qd,j}}}{\diff{}{s_2}}v_\ql\in\hm A\] 
is obvious. On the other hand, we have
\begin{align}
\label{CI}
\fk{\diff{}{t^{\qd,j}}}{\diff{}{s_2}}\qs_{\ql,0} =&\ \kk{\frac 52+c_0+\mu_\ql}\diff{\qs_{\ql,2}}{t^{\qd,j}}+\diff{N_\ql^\qz}{t^{\qd,j}}\qs_{\qz,1}\\
\notag
&-\sum_{s\geq 0}\qp_x^s\kk{N^\qz_{\qg}\qs_{\qz,1}+\kk{\frac 52+c_0+\mu_\qg}\qs_{\qg,2}}\diff{}{\qs_{\qg,0}^s}\diff{\qs_{\ql,0}}{t^{\qd,j}}+loc.
\end{align}

\begin{Prop}
The right hand side of the equation \eqref{CI} is local.
\end{Prop}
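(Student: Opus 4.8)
The plan is to follow the same pattern used above for the locality of $I_0$ (cf.\ \eqref{BZ}, \eqref{CA}) and for the relations \eqref{DP}. First I would expand the right hand side of \eqref{CI} and isolate the terms that are a priori non-local. Using $\diff{\qs_{\ql,2}}{t^{\qd,j}} = T_2\diff{\qs_{\ql,0}}{t^{\qd,j}}$ (which follows from the super extension of Theorem \ref{AC}), the recursion relations \eqref{2-11}, the identity \eqref{BV}, Lemma \ref{BY}, and the explicit formulas $M^\qz_\ql = \kk{\tfrac 52+\mu_\ql+c_0}\qd^\qz_\ql$ and $\qp_x N^\qz_\ql = \kk{\tfrac 12+\mu_\qa}\eta^{\qa\qz}(\mu_\qz-\mu_\ql-1)\diff{v_\ql}{t^{\qa,0}}$ from Proposition \ref{DT}, I expect the right hand side of \eqref{CI} to reduce to the form
\[
\fk{\diff{}{t^{\qd,j}}}{\diff{}{s_2}}\qs_{\ql,0} = \hat U^\qz_\ql\,\qs_{\qz,2}+\hat V^\qz_\ql\,\qs_{\qz,1}+loc,\qquad \hat U^\qz_\ql,\ \hat V^\qz_\ql\in\hm A^0,
\]
no other non-local odd variable surviving. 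It then remains to prove $\hat U^\qz_\ql = 0$ and $\hat V^\qz_\ql = 0$.

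The vanishing of $\hat U^\qz_\ql$ should be routine: the $\qs_{\qz,2}$-terms come from the $\mathcal L_2$-contribution $\kk{\tfrac 52+c_0+\mu_\ql}\diff{\qs_{\ql,2}}{t^{\qd,j}}$ and from $\diff{}{t^{\qd,j}}$ hitting the coefficients of $\diff{\qs_{\ql,0}}{t^{\qd,j}}$ in the last sum of \eqref{CI}; since $M^\qz_\ql$ is constant and $\diff{\qs_{\ql,2}}{t^{\qd,j}} = T_2\diff{\qs_{\ql,0}}{t^{\qd,j}}$, these cancel modulo local terms exactly as in the proofs that $U^\qz_\ql = 0$ and $\tilde U^\qz_\ql = 0$, and the leftover non-local $\qs_{\qz,2}$-terms disappear by Lemma \ref{BY}.

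The core of the argument, and the step I expect to be the main obstacle, is the vanishing of $\hat V^\qz_\ql$. Here I would imitate the proof of Lemma \ref{CF}: introduce a functional such as $\tilde F^\qz = \eta^{\qa\qz}\kk{\tfrac 12+\mu_\qa}\kk{\tfrac 12+\mu_\qz}\int h_{\qa,1}$ (possibly with a refinement adapted to the extra $t^{\qd,j}$-derivative), and use \eqref{CG}, \eqref{CD}, the bihamiltonian recursion relations \eqref{AN}, and the identities \eqref{a2-3}, \eqref{a2-4} to rewrite $\hat V^\qz_\ql$, up to adding constant multiples of the hydrodynamic flows $\diff{\qs_{\ql,0}}{t^{\qa,0}}$, in terms of $\vard{}{\qs_{\ql,0}}\fk{\tilde F^\qz}{P_1}$; applying \eqref{AN} once more exhibits $\hat V^\qz_\ql$ as an $\hm A^0$-linear combination of the flows $\diff{\qs_{\ql,0}}{t^{\qa,p}}$. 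Since the genus-zero part of $\hat V^\qz_\ql$ is fixed by the Virasoro symmetries of the undeformed super tau-cover (Theorem \ref{BA}), where it is known to vanish, a bihamiltonian vector field vanishing to order zero must vanish identically by the bihamiltonian cohomology results of \cite{DLZ-1}, whence $\hat V^\qz_\ql = 0$. This yields the locality of the right hand side of \eqref{CI}.

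I also note a structural shortcut that could replace the explicit computation: by the graded Jacobi identity together with the commutativity of the super-extended hierarchy flows with the odd flows (Theorem \ref{AC}), one has $\fk{\diff{}{\qt_i}}{\fk{\diff{}{t^{\qd,j}}}{\diff{}{s_2}}} = \fk{\diff{}{t^{\qd,j}}}{\fk{\diff{}{\qt_i}}{\diff{}{s_2}}}$ for $i=0,1$, and the right hand side already acts locally on $v_\ql$ and $\qs_{\ql,0}$ by the verifications around \eqref{BZ}--\eqref{CA} and \eqref{DP}; combined with the recursion relations \eqref{2-11} this constrains the non-local part of $\fk{\diff{}{t^{\qd,j}}}{\diff{}{s_2}}\qs_{\ql,0}$ enough to conclude. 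I would nevertheless prefer the direct computation above, since it runs parallel to what has already been carried out and does not require proving an additional locality lemma; the genuine difficulty is the bookkeeping of the many non-local contributions produced by $\diff{\qs_{\ql,2}}{t^{\qd,j}}$ and the $\qp_x^s$-expansions, together with identifying the functional that realizes $\hat V^\qz_\ql$ as a linear combination of hydrodynamic flows.
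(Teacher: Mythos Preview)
Your overall strategy---reducing the right hand side of \eqref{CI} to the form $\hat U^\qz_\ql\,\qs_{\qz,2}+\hat V^\qz_\ql\,\qs_{\qz,1}+loc$ and then killing both coefficients---is exactly what the paper does. But two of your mechanisms are off, and they are precisely where the paper's argument is sharp.

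\textbf{The $\qs_{\qz,2}$-coefficient.} Your claim that ``these cancel \dots\ and the leftover non-local $\qs_{\qz,2}$-terms disappear by Lemma~\ref{BY}'' does not work. Combining $\kk{\tfrac52+c_0+\mu_\ql}\diff{\qs_{\ql,2}}{t^{\qd,j}}$ with the $\qs_{\qg,2}$-part of the last sum in \eqref{CI} leaves, at $s=0$, the coefficient $(\mu_\ql-\mu_\qz)\diff{}{\qs_{\qz,0}}\diff{\qs_{\ql,0}}{t^{\qd,j}}$, and Lemma~\ref{BY} says nothing about $\diff{}{\qs_{\qz,0}}\vard{X_{\qd,j}}{v^\ql}$---it is a statement about $P_1$, not about the flows $X_{\qd,j}$. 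The paper instead invokes the identity $\diff{}{\qs_{\qb,0}}\vard{}{v^\qa}=\vard{}{v^\qa}\vard{}{\qs_{\qb,0}}$ from \cite{liu2011jacobi}, which gives
\[
\diff{}{\qs_{\qb,0}}\diff{\qs_{\ql,0}}{t^{\qd,j}}=\diff{}{\qs_{\qb,0}}\vard{X_{\qd,j}}{v^\ql}=\vard{}{v^\ql}\vard{X_{\qd,j}}{\qs_{\qb,0}}=\vard{}{v^\ql}\diff{v^\qb}{t^{\qd,j}}=0,
\]
the last equality because $\diff{v^\qb}{t^{\qd,j}}$ is a total $x$-derivative. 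This kills \emph{all} $s=0$ contributions at once, so $\diff{\qs_{\ql,2}}{t^{\qd,j}}$ and the $\qs_{\qg,2}$-sum contribute only through $\qs_{\qg,2}^s$ with $s\ge 1$, which are already local in the $\qs_{\qb,1}$'s by the recursion \eqref{2-11}. There is no $\hat U^\qz_\ql$ to kill; it is identically zero.

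\textbf{The $\qs_{\qz,1}$-coefficient.} Your plan to imitate Lemma~\ref{CF} by bracketing a functional with $P_1$ and then appealing to bihamiltonian cohomology is aimed at the wrong bracket: here the extra derivative is $\diff{}{t^{\qd,j}}$, not $\diff{}{\qt_1}$, so the natural partner is $X_{\qd,j}$. The paper takes
\[
F^\qz=\eta^{\qa\qz}\kk{\tfrac12+\mu_\qa}\kk{\tfrac32+c_0+\mu_\qz}\int h_{\qa,1},
\]
checks that $N^\qz_\qg-\kk{\tfrac52+c_0+\mu_\qg}Z^\qz_\qg=\vard{F^\qz}{v^\qg}$ up to constants (using \eqref{CG}, \eqref{CD}), and then applies \eqref{a2-3} to get
\[
V^\qz_{\qd,j;\ql}=\vard{}{v^\ql}\fk{F^\qz}{X_{\qd,j}}.
\]
Since $F^\qz$ is a conserved quantity of the deformed Principal Hierarchy, $\fk{F^\qz}{X_{\qd,j}}=0$ and hence $V^\qz_{\qd,j;\ql}=0$ on the nose---no cohomology argument, no ``refinement'' to handle the $t^{\qd,j}$-derivative. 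Your route via $[\tilde F^\qz,P_1]$ and bihamiltonian cohomology could perhaps be made to work, but it is both more complicated and less direct than what the situation naturally calls for.
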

\begin{proof}
By using the following identity proved in \cite{liu2011jacobi}
\[
\diff{}{\qs_{\qb,0}}\vard{}{v^\qa} = \vard{}{v^\qa}\vard{}{\qs_{\qb,0}}
\]
we immediately obtain that
\[
\diff{}{\qs_{\qb,0}}\diff{\qs_{\ql,0}}{t^{\qa,p}} = \diff{}{\qs_{\qb,0}}\vard{X_{\qa,p}}{v^\ql} = 0.
\]
This means that the flows $\diff{\qs_{\ql,2}}{t^{\qd,j}}$ can be written as
\[
\diff{\qs_{\ql,2}}{t^{\qd,j}} =T_2\diff{\qs_{\ql,0}}{t^{\qd,j}} = \sum_{s\geq 0}\qs_{\qb,2}^s\diff{}{\qs_{\qb,0}^s}\diff{\qs_{\ql,0}}{t^{\qd,j}}=  B^\qz_\ql\qs_{\qz,1}+loc,
\]
where $B_\ql^\qz$ are certain differential polynomials. By using the identity \eqref{BV}, we know that 
\[
B^\qz_\ql = -\diff{Z_\ql^\qz}{t^{\qd,j}}.
\]
By using \eqref{BV} again, we can represent the right hand side of \eqref{CI} in the form
\[
V^\qz_\ql\qs_{\qz,1}+loc,
\]
where $V^\qz_{\qd,j;\ql}$ is the following differential polynomial:
\begin{align*}
V^\qz_{\qd,j;\ql} =&\, \diff{}{t^{\qd,j}}\kk{N^\qz_{\ql}-\kk{\frac 52+c_0+\mu_\qg}Z^\qz_{\ql}}\\
&-\sum_{s\geq 0}\qp_x^s\kk{N^\qz_{\qg}-\kk{\frac 52+c_0+\mu_\qg}Z^\qz_{\qg}}\diff{}{\qs_{\qg,0}^s}\diff{\qs_{\ql,0}}{t^{\qd,j}}.
\end{align*}

Define the following functional
\[
F^\qz = \eta^{\qa\qz}\kk{\frac 12+\mu_\qa}\kk{\frac 32+c_0+\mu_\qz}\int h_{\qa,1},
\]
then by applying \eqref{a2-3} we have the following identity:
\[
V^\qz_{\qd,j;\ql} = \vard{}{v^\ql}\fk{F^\qz}{X_{\qd,j}},
\]
here $X_{\qd,j}$ is the vector field defined in \eqref{AD}. Since the functional $F^\qz$ is a conserved quantity of the deformed Principal hierarchy, hence $\fk{F^\qz}{X_{\qd,j}} = 0$. The proposition is proved.
\end{proof}

\subsection{Closedness conditions}
In this subsection we prove the closedness condition \eqref{BC} and \eqref{BF}. The verification of \eqref{BC} is straightforward by using the explicit expression \eqref{BZ} and \eqref{CA}, so we omit the details here. 

Let us proceed to prove the closedness condition \eqref{BF}. We fix a choice of $X^\circ\in\derx$ such that $X^\circ$ satisfies 
\[
\fk{\diff{}{\qt_0}}{X^\circ} = I_0
\]
and that the differential degree zero part of $X^\circ$ is given by genus zero Virasoro symmetry. 

We first write down the explicit expression for $I_1$ defined in \eqref{BD}. Define a derivation $\hat I_1\in\der^0$ by the  formulae
$
\hat I_1v_\qg = \hat I_1\qs_{\qg,0} = 0
$
and 
\begin{align*}
\hat I_1v_\qg^{(n)} = & n\qp_x^{n-1}W(v_\qg)+a^{\qa\qb}\qp_x^n\kk{f_{\qb,0}\diff{v_\qg}{t^{\qa,1}}+f_{\qa,1}\diff{v_\qg}{t^{\qb,0}}}+b^{\qa\qb}\qp_x^n\kk{f_{\qb,0}\diff{v_\qg}{t^{\qa,0}}+f_{\qa,0}\diff{v_\qg}{t^{\qb,0}}}\\
&-a^{\qa\qb}\kk{f_{\qb,0}\qp_x^n\diff{v_\qg}{t^{\qa,1}}+f_{\qa,1}\qp_x^n\diff{v_\qg}{t^{\qb,0}}}-b^{\qa\qb}\kk{f_{\qb,0}\qp_x^n\diff{v_\qg}{t^{\qa,0}}+f_{\qa,0}\qp_x^n\diff{v_\qg}{t^{\qb,0}}},\\
\hat I_1\qs_{\qg,0}^n = & n\qp_x^{n-1}W(\qs_{\qg,0})+a^{\qa\qb}\qp_x^n\kk{f_{\qb,0}\diff{\qs_{\qg,0}}{t^{\qa,1}}+f_{\qa,1}\diff{\qs_{\qg,0}}{t^{\qb,0}}}+b^{\qa\qb}\qp_x^n\kk{f_{\qb,0}\diff{\qs_{\qg,0}}{t^{\qa,0}}+f_{\qa,0}\diff{\qs_{\qg,0}}{t^{\qb,0}}}\\
&-a^{\qa\qb}\kk{f_{\qb,0}\qp_x^n\diff{\qs_{\qg,0}}{t^{\qa,1}}+f_{\qa,1}\qp_x^n\diff{\qs_{\qg,0}}{t^{\qb,0}}}-b^{\qa\qb}\kk{f_{\qb,0}\qp_x^n\diff{\qs_{\qg,0}}{t^{\qa,0}}+f_{\qa,0}\qp_x^n\diff{\qs_{\qg,0}}{t^{\qb,0}}}\\
&+\kk{\frac 32+\mu_\qg}\kk{\qs_{\qg,2}^n+\qp_x^nZ^\qe_\qg\qs_{\qe,1}}+\qp_x^n\kk{N^\qe_\qg\qs_{\qe,1}}-\qp_x^nN^\qe_\qg\qs_{\qe,1},
\end{align*}
here $n\geq 1$. Note that this derivation does NOT commute with $\qp_x$. It is easy to see that $\hat I_1$ is indeed local. By a careful computation, we obtain the following expression for $I_1$:
\begin{align*}
I_1v_\ql =&\, \hat I_1\kk{\diff{v_\ql}{\qt_1}}+A^\qa\diff{v_\ql}{t^{\qa,1}}+B^\qa\diff{v_\ql}{t^{\qa,0}},\\
I_1\qs_{\ql,0} =&\, \hat I_1\kk{\diff{\qs_{\ql,0}}{\qt_1}}+A^\qa\diff{\qs_{\ql,0}}{t^{\qa,1}}+B^\qa\diff{\qs_{\ql,0}}{t^{\qa,0}}\\
&+\kk{\frac 32+\mu_\ql}\kk{{\diff{\qs_{\ql,1}}{\qt_2}-\qs_{\qb,1}\eta^{\qa\qb}\kk{\frac 12+\mu_\qa}\diff{\qs_{\ql,1}}{t^{\qa,0}}}},
\end{align*}
where $A^\qa$ and $B^\qa$ are local differential polynomials given by
\begin{align*}
A^\qa = & -a^{\qa\qb}\kk{\diff{f_{\qb,0}}{\qt_1}-\qs_{\qb,1}},\\
B^\qa = & -\eta^{\qa\qb}\kk{\frac 12 +\mu_\qa}\kk{\frac 32+\mu_\qb}\kk{\kk{\frac 12+\mu_\qb}\diff{f_{\qb,1}}{\qt_1}-\qs_{\qb,2}-Z^\qe_\qb\qs_{\qe,1}}\\
&-(b^{\qa\qb}+b^{\qb\qa})\kk{\diff{f_{\qb,0}}{\qt_1}-\qs_{\qb,1}}.
\end{align*}

We start by proving the first identity in \eqref{BF}, which can also be written as
\[\fk{\diff{}{\qt_0}}{I_1}+\fk{\diff{}{\qt_1}}{\fk{\diff{}{\qt_0}}{X^\circ}} = 0.\]
The following lemmas are obtained by lengthy but straightforward computations.
\begin{Lem}
\label{CM}
Let $Q\in\hm A$ be any local differential polynomial, then we have
\begin{align*}
\fk{\hat I_1}{\qp_x}Q =&\ W(Q)+a_{\qa\qb}\kk{f_{\qb,0}'\diff{Q}{t^{\qa,1}}+f_{\qa,1}'\diff{Q}{t^{\qb,0}}}+ b_{\qa\qb}\kk{f_{\qb,0}'\diff{Q}{t^{\qa,0}}+f_{\qa,0}'\diff{Q}{t^{\qb,0}}}\\
&+\sum_{n\geq 0}\qs_{\qz,1}^1\qp_x^n\kk{Y^\qz_{\qg}-\kk{\frac 32+\mu_\qg}Z^\qz_{\qg}}\diff{}{\qs_{\qg,0}^n}Q\\&+\kk{\frac 32+\mu_\qg}\qp_x\kk{\qs_{\qg,2}+Z^\qe_\qg\qs_{\qe,1}}\diff{}{\qs_{\qg,0}}Q.
\end{align*}
\end{Lem}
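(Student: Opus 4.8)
The plan is to use that both sides of the claimed identity, regarded as functions of the local differential polynomial $Q$, are graded derivations $\hm A\to\hm A^+$ of super degree $0$, hence are determined by their values on the generators $v_\qg^{(n)}$ and $\qs_{\qg,0}^n$ ($n\ge 0$); the verification then amounts to a telescoping computation on these generators. For the left-hand side, $\hat I_1\in\der^0$ by construction and $\qp_x\in\der^0$, so the graded commutator $\fk{\hat I_1}{\qp_x}=\hat I_1\circ\qp_x-\qp_x\circ\hat I_1$ again lies in $\der^0$, and since $\hat I_1$ (and $\qp_x$) maps $\hm A$ into $\hm A^+$ so does the commutator. For the right-hand side, note that $W$ is a finite linear combination with constant coefficients of the evolutionary derivations $D_{X_{\qa,p}}$, so $Q\mapsto W(Q)$, $Q\mapsto\diff{Q}{t^{\qa,p}}=D_{X_{\qa,p}}(Q)$ and $Q\mapsto\diff{}{\qs_{\qg,0}^n}Q$ are all graded derivations; hence the right-hand side, which we denote $\Psi(Q)$, is a graded derivation $\hm A\to\hm A^+$. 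Since $\hm A=C^\infty(v)[[v^{i,s+1},\qth_i^s]]$ is a completed polynomial ring graded by the differential degree, a graded derivation out of $\hm A$ is determined by its restriction to $C^\infty(v)$ together with its values on the polynomial generators, i.e. by its values on all $v_\qg^{(n)}$ and $\qs_{\qg,0}^n$; thus it suffices to check $\fk{\hat I_1}{\qp_x}Q=\Psi(Q)$ for $Q=v_\qg^{(n)}$ and $Q=\qs_{\qg,0}^n$.

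For $Q=v_\qg$ (the case $n=0$): since $\hat I_1v_\qg=0$ and $\qp_xv_\qg=v_\qg^{(1)}$, the left-hand side equals $\hat I_1v_\qg^{(1)}$; expanding its defining formula and moving $\qp_x$ across the coefficients $f_{\qb,0},f_{\qa,1},f_{\qa,0}$ by means of $\qp_x(fg)-f\,\qp_xg=(\qp_xf)g$ turns it into precisely $\Psi(v_\qg)$, the last two sums of $\Psi$ being absent because $\diff{v_\qg}{\qs_{\qg,0}^n}=0$. For $Q=v_\qg^{(n)}$ with $n\ge 1$ I would compute $\fk{\hat I_1}{\qp_x}v_\qg^{(n)}=\hat I_1v_\qg^{(n+1)}-\qp_x\hat I_1v_\qg^{(n)}$ from the explicit formula for $\hat I_1$. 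The $W$-terms telescope: $(n+1)\qp_x^nW(v_\qg)-n\qp_x^nW(v_\qg)=\qp_x^nW(v_\qg)=W(v_\qg^{(n)})$, using $\fk{W}{\qp_x}=0$. For each coefficient $f\in\hm A^0$ the associated pair of terms has the shape $[\qp_x^{n+1},f]g-\qp_x\bigl([\qp_x^n,f]g\bigr)$, and by the recursion $[\qp_x^{m+1},f]=\qp_x\circ[\qp_x^m,f]+(\qp_xf)\,\qp_x^m$ it collapses to $(\qp_xf)\,\qp_x^ng=f'\,\qp_x^ng$ (writing $f'=\qp_xf$); since $\fk{D_{X_{\qa,p}}}{\qp_x}=0$ this equals $f'\diff{v_\qg^{(n)}}{t^{\qa,p}}$, and assembling the $a$- and $b$-contributions yields exactly $\Psi(v_\qg^{(n)})$.

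The odd generators are handled by the same mechanism, with the additional bookkeeping of the block $\kk{\frac32+\mu_\qg}\bigl(\qs_{\qg,2}^n+(\qp_x^nZ^\qe_\qg)\qs_{\qe,1}\bigr)+\qp_x^n(N^\qe_\qg\qs_{\qe,1})-(\qp_x^nN^\qe_\qg)\qs_{\qe,1}$ occurring in $\hat I_1\qs_{\qg,0}^n$. For $Q=\qs_{\qg,0}$ ($n=0$) one has $\fk{\hat I_1}{\qp_x}\qs_{\qg,0}=\hat I_1\qs_{\qg,0}^1$; using $\qp_x(N^\qe_\qg\qs_{\qe,1})-(\qp_xN^\qe_\qg)\qs_{\qe,1}=N^\qe_\qg\qs_{\qe,1}^1$ and expanding the corresponding terms of $\Psi(\qs_{\qg,0})$ via $\qp_x(\qs_{\qg,2}+Z^\qe_\qg\qs_{\qe,1})=\qs_{\qg,2}^1+(\qp_xZ^\qe_\qg)\qs_{\qe,1}+Z^\qe_\qg\qs_{\qe,1}^1$, the required equality reduces, after the common $\qs_{\qg,2}$- and $Z$-pieces cancel, to an identity $N^\qe_\qg\qs_{\qe,1}^1=Y^\qe_\qg\qs_{\qe,1}^1$ which pins down the coefficient $Y^\qz_\qg$ of the statement (and which, since $\qs_{\qe,1}^1$ is local, is an equality in $\hm A$). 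For $Q=\qs_{\qg,0}^n$ with $n\ge1$ the $W$- and $f$-terms telescope as in the even case; the $\qs_{\qg,2}$-pieces cancel because $\qp_x\qs_{\qg,2}^n=\qs_{\qg,2}^{n+1}$, and the $Z$- and $N$-pieces telescope so that the surviving non-local contribution to $\hat I_1\qs_{\qg,0}^{n+1}-\qp_x\hat I_1\qs_{\qg,0}^n$ is $\qs_{\qe,1}^1\bigl((\qp_x^nN^\qe_\qg)-\kk{\frac32+\mu_\qg}(\qp_x^nZ^\qe_\qg)\bigr)$, which matches the relevant sum in $\Psi(\qs_{\qg,0}^n)$ once that sum is expressed through the identities \eqref{CG}, \eqref{CD} for $N^\qz_\qg$, $Z^\qz_\qg$ and the relation for $Y^\qz_\qg$ obtained in the $n=0$ step.

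I expect the only genuine difficulty to be organizational: arranging the large expression $\hat I_1\qs_{\qg,0}^{n+1}-\qp_x\hat I_1\qs_{\qg,0}^n$ so that all telescoping cancellations are manifest and so that the residual non-local term is transparently identified with $\qs_{\qe,1}^1\,\qp_x^n\bigl(Y^\qe_\qg-\kk{\frac32+\mu_\qg}Z^\qe_\qg\bigr)$. Everything else is routine once one uses $\fk{W}{\qp_x}=0$, $\fk{D_{X_{\qa,p}}}{\qp_x}=0$, the graded Leibniz rule, and the first-order commutator identity $[\qp_x^{m+1},f]=\qp_x\circ[\qp_x^m,f]+(\qp_xf)\,\qp_x^m$ for $f\in\hm A^0$.
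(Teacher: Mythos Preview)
Your approach is correct and is exactly the ``lengthy but straightforward computation'' the paper alludes to without writing out: since both sides are graded derivations of super degree $0$, one reduces to the generators $v_\qg^{(n)}$ and $\qs_{\qg,0}^n$, and the identity then follows by telescoping from the explicit definition of $\hat I_1$ together with $\fk{W}{\qp_x}=\fk{D_{X_{\qa,p}}}{\qp_x}=0$. Your observation that the $n=0$ odd case forces $Y^\qz_\qg=N^\qz_\qg$ is also correct; the symbol $Y^\qz_\qg$ is not defined elsewhere in the paper and should be read as $N^\qz_\qg$.
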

\begin{Lem}
\label{CJ}
The following identities hold true:
\begin{align*}
\fk{\diff{}{\qt_0}}{I_1}v_\ql =&\ \fk{\diff{}{\qt_1}}{\fk{\diff{}{\qt_0}}{\hat I_1}}v_\ql+\diff{A^\qa}{\qt_0}\diff{v_\ql}{t^{\qa,1}}+\diff{B^\qa}{\qt_0}\diff{v_\ql}{t^{\qa,0}},\\
\fk{\diff{}{\qt_0}}{I_1}\qs_{\ql,0} =&\ \fk{\diff{}{\qt_1}}{\fk{\diff{}{\qt_0}}{\hat I_1}}\qs_{\ql,0}+\diff{A^\qa}{\qt_0}\diff{\qs_{\ql,0}}{t^{\qa,1}}+\diff{B^\qa}{\qt_0}\diff{\qs_{\ql,0}}{t^{\qa,0}}\\
&+\kk{\frac 32+\mu_\ql}\kk{{\diff{\qs_{\ql,1}}{\qt_2}-\qs_{\qb,1}\eta^{\qa\qb}\kk{\frac 12+\mu_\qa}\diff{\qs_{\ql,1}}{t^{\qa,0}}}}.
\end{align*}
\end{Lem}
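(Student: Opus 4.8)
The plan is to evaluate the graded commutator $\fk{\diff{}{\qt_0}}{I_1}$ directly on the generators $v_\ql$ and $\qs_{\ql,0}$. Since both $\diff{}{\qt_0}=D_{P_0}$ and $I_1$ commute with $\qp_x$, their commutator is again a $\qp_x$-commuting derivation, so it is determined by its action on these generators. First I would rewrite $I_1$ in commutator form on the generators: because $\hat I_1\in\der^0$ annihilates $v_\ql$ and $\qs_{\ql,0}$ while $\diff{}{\qt_1}=D_{P_1}\in\der^1$, one has $\hat I_1\bigl(\diff{v_\ql}{\qt_1}\bigr)=-\fk{\diff{}{\qt_1}}{\hat I_1}v_\ql$ and $\hat I_1\bigl(\diff{\qs_{\ql,0}}{\qt_1}\bigr)=-\fk{\diff{}{\qt_1}}{\hat I_1}\qs_{\ql,0}$, so that the explicit formulae for $I_1$ read, on generators,
\[
I_1=-\fk{\diff{}{\qt_1}}{\hat I_1}+A^\qa D_{X_{\qa,1}}+B^\qa D_{X_{\qa,0}}+K,
\]
where $D_{X_{\qa,p}}$ is the vector field of the $(\qa,p)$-th flow of the deformed Principal Hierarchy, $A^\qa,B^\qa$ act by multiplication, and $K$ kills $v_\ql$ and sends $\qs_{\ql,0}$ to $\kk{\frac 32+\mu_\ql}\bigl(\diff{\qs_{\ql,1}}{\qt_2}-\qs_{\qb,1}\eta^{\qa\qb}\kk{\frac 12+\mu_\qa}\diff{\qs_{\ql,1}}{t^{\qa,0}}\bigr)$.

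Next I would apply $\fk{\diff{}{\qt_0}}{\cdot}$ summand by summand, using that $I_1$ and $\diff{}{\qt_0}$ are both odd, so $\fk{\diff{}{\qt_0}}{I_1}(\cdot)=\diff{}{\qt_0}(I_1\cdot)+I_1\bigl(\diff{\cdot}{\qt_0}\bigr)$. For $-\fk{\diff{}{\qt_1}}{\hat I_1}$ the graded Jacobi identity together with $\fk{\diff{}{\qt_0}}{\diff{}{\qt_1}}=0$ (from $[P_0,P_1]=0$ and \eqref{dc}) yields $\fk{\diff{}{\qt_0}}{-\fk{\diff{}{\qt_1}}{\hat I_1}}=\fk{\diff{}{\qt_1}}{\fk{\diff{}{\qt_0}}{\hat I_1}}$, the first term of the claimed right-hand side. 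For $A^\qa D_{X_{\qa,1}}$ and $B^\qa D_{X_{\qa,0}}$ the graded Leibniz rule gives $\fk{\diff{}{\qt_0}}{A^\qa D_{X_{\qa,1}}}=\diff{A^\qa}{\qt_0}D_{X_{\qa,1}}-A^\qa\fk{\diff{}{\qt_0}}{D_{X_{\qa,1}}}$, and the last commutator vanishes because each $X_{\qa,p}$ is bihamiltonian, so $[P_0,X_{\qa,p}]=0$ and \eqref{dc} gives $\fk{D_{P_0}}{D_{X_{\qa,p}}}=0$; this produces the terms $\diff{A^\qa}{\qt_0}\diff{\cdot}{t^{\qa,1}}$ and $\diff{B^\qa}{\qt_0}\diff{\cdot}{t^{\qa,0}}$. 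What then remains is to check that the leftover contributions — the $\qp_x$-images of the $A^\qa,B^\qa,K$-pieces of $I_1\qs_{\qb,0}$ coming through $I_1\bigl(\diff{v_\ql}{\qt_0}\bigr)=\eta^{\ql\qb}\qp_x(I_1\qs_{\qb,0})$, the pieces produced by the Leibniz expansion of $\diff{}{\qt_0}(I_1\cdot)$, and the discrepancy between $\fk{\diff{}{\qt_1}}{\fk{\diff{}{\qt_0}}{\hat I_1}}$ and the $\hat I_1$-part of $\diff{}{\qt_0}(I_1\cdot)$ — collapse, for each index $\qb$, to an identity of the schematic shape
\[
(\qp_x A^\qa)\diff{\qs_{\qb,0}}{t^{\qa,1}}+(\qp_x B^\qa)\diff{\qs_{\qb,0}}{t^{\qa,0}}+\qp_x(K\qs_{\qb,0})-\fk{\hat I_1}{\qp_x}\bigl(\diff{\qs_{\qb,0}}{\qt_1}\bigr)=-\diff{}{\qt_1}\bigl(\hat I_1\qs_{\qb,0}^1\bigr),
\]
in which $\fk{\hat I_1}{\qp_x}$ is computed by Lemma \ref{CM} and $\hat I_1\qs_{\qb,0}^1$ by the defining formula for $\hat I_1$ with $n=1$.

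The main obstacle is precisely this last reorganization, which is delicate because $\hat I_1$ does not commute with $\qp_x$. It comes down to substituting the explicit definitions of $A^\qa$, $B^\qa$ and $K\qs_{\ql,0}$, the formula for $\hat I_1$ (in particular the $\qp_x$-corrections in its last two lines), the defining relation \eqref{CG} for $N^\qz_\ql$, Lemma \ref{CE}, and — for the $\qt_2$- and $\qt_3$-flow contributions — the definitions \eqref{AH} of the odd flows, the recursion relations \eqref{2-11} and Lemma \ref{BY}, and then checking that everything cancels; in the $\qs_{\ql,0}$-case one must in addition track the $\mu$-dependent pieces and verify that they survive exactly as the extra term $\kk{\frac 32+\mu_\ql}\bigl(\diff{\qs_{\ql,1}}{\qt_2}-\qs_{\qb,1}\eta^{\qa\qb}\kk{\frac 12+\mu_\qa}\diff{\qs_{\ql,1}}{t^{\qa,0}}\bigr)$ of the statement. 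No cohomological input enters here; the only structural facts used are $\fk{\diff{}{\qt_0}}{\diff{}{\qt_1}}=0$, $\fk{D_{P_0}}{D_{X_{\qa,p}}}=0$, the graded Jacobi and Leibniz identities, and Lemma \ref{CM}. This is the "lengthy but straightforward" computation the paper alludes to.
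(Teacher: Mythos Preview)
Your proposal is essentially correct and matches the paper's approach: the paper offers no argument beyond the remark that the lemma is ``obtained by lengthy but straightforward computations,'' and what you outline is precisely such a computation, organized via the identification $I_1=\hat I_1\circ\diff{}{\qt_1}+A^\qa D_{X_{\qa,1}}+B^\qa D_{X_{\qa,0}}+K$ on generators, together with the graded Jacobi identity, $\fk{D_{P_0}}{D_{X_{\qa,p}}}=0$, and Lemma~\ref{CM} to control the failure of $\hat I_1$ to commute with $\qp_x$.

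One small caution worth making explicit: the equality $I_1=-\fk{\diff{}{\qt_1}}{\hat I_1}+\cdots$ holds only on the generators $v_\ql,\qs_{\ql,0}$, not as derivations, so you cannot literally commute $\diff{}{\qt_0}$ past it as an operator identity. You handle this correctly by evaluating $I_1\bigl(\diff{v_\ql}{\qt_0}\bigr)=\eta^{\ql\qb}\qp_x(I_1\qs_{\qb,0})$ using $[I_1,\qp_x]=0$ and then invoking Lemma~\ref{CM} to reconcile with $-\fk{\diff{}{\qt_1}}{\hat I_1}\qs_{\qb,0}^1$; just be sure in the actual write-up that the schematic ``leftover'' identity is derived from this comparison rather than from a formal operator Jacobi step. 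For the $\qs_{\ql,0}$-case note also that $\diff{\qs_{\ql,0}}{\qt_0}=\vard{P_0}{v^\ql}=0$, so the term $I_1\bigl(\diff{\qs_{\ql,0}}{\qt_0}\bigr)$ drops out and the surviving $K$-contribution must emerge from $\diff{}{\qt_0}(K\qs_{\ql,0})$ together with compensating pieces from the $\hat I_1$-block; tracking this is where the $\mu_\ql$-dependent term in the statement comes from.
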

\begin{Lem}
\label{CK}
We have the following decomposition:
\[\fk{\diff{}{\qt_0}}{\hat I_1+X^\circ} = D+\diff{}{\qt_2},\]
where $D$ is a derivation $\hm A\to\hm A^+$ given by the  formulae
\begin{align*}
&\,Dv_\qg^{(n)} \\=&\, -a^{\qa\qb}\kk{\diff{f_{\qb,0}}{\qt_0}\qp_x^n\diff{v_\qg}{t^{\qa,1}}+\diff{f_{\qa,1}}{\qt_0}\qp_x^n\diff{v_\qg}{t^{\qb,0}}}-b^{\qa\qb}\kk{\diff{f_{\qb,0}}{\qt_0}\qp_x^n\diff{v_\ql}{t^{\qa,0}}+\diff{f_{\qa,0}}{\qt_0}\qp_x^n\diff{v_\ql}{t^{\qb,0}}}\\
&+\qs_{\qb,1}\eta^{\qa\qb}\kk{\frac 12+\mu_\qa}\kk{\frac 12+\mu_\qb}\qp_x^n\diff{v_\qg}{t^{\qa,0}},
\end{align*}
\begin{align*}
&\,D\qs_{\qg,0}^n \\=&\, -a^{\qa\qb}\kk{\diff{f_{\qb,0}}{\qt_0}\qp_x^n\diff{\qs_{\qg,0}}{t^{\qa,1}}+\diff{f_{\qa,1}}{\qt_0}\qp_x^n\diff{\qs_{\qg,0}}{t^{\qb,0}}}-b^{\qa\qb}\kk{\diff{f_{\qb,0}}{\qt_0}\qp_x^n\diff{\qs_{\qg,0}}{t^{\qa,0}}+\diff{f_{\qa,0}}{\qt_0}\qp_x^n\diff{\qs_{\qg,0}}{t^{\qb,0}}}\\
&+\qs_{\qb,1}\eta^{\qa\qb}\kk{\frac 12+\mu_\qa}\kk{\frac 12+\mu_\qb}\qp_x^n\diff{\qs_{\qg,0}}{t^{\qa,0}}+\diff{\qs_{\qb,0}}{\qt_1}\qp_x^n\kk{N^\qb_\qg-\kk{\frac 32+\mu_\qg}Z^\qb_\qg}\\
&-\qd_{n,0}\kk{\frac 32+\mu_\qg}\diff{}{\qt_0}\kk{\qs_{\qg,2}+Z^\qe_\qg\qs_{\qe,1}}.
\end{align*}
\end{Lem}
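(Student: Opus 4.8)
The plan is to check the asserted identity of derivations $\hm A\to\hm A^+$ on a generating set of $\hm A$. Both sides send $\hm A$ to $\hm A^+$, and such a derivation is determined by its values on the generators $v_\qg^{(n)}$ and $\qs_{\qg,0}^n$, $n\ge 0$ (its value on any $f(v)\in C^\infty(v)$ being recovered by the chain rule), so it suffices to verify the identity on these elements. I would first record the elementary facts $\hat I_1v_\qg=\hat I_1\qs_{\qg,0}=0$, $\diff{v_\qg}{\qt_0}=\qs_{\qg,0}^1$, $\diff{\qs_{\qg,0}}{\qt_0}=0$, $\diff{v_\qg}{\qt_2}=\qs_{\qg,2}^1$, the skew-symmetry relations $\diff{\qs_{\qg,k}}{\qt_m}=-\diff{\qs_{\qg,m}}{\qt_k}$ (immediate from $T_{k,m}=-T_{m,k}$ and \eqref{AH}), the fact that the flows $\diff{}{t^{\qa,p}}$ and $\diff{}{\qt_m}$ mutually commute (Theorem \ref{AC}), and that $I_0=\fk{\diff{}{\qt_0}}{X^\circ}$ commutes with $\qp_x$ (by the graded Jacobi identity, since $\diff{}{\qt_0}$, $X^\circ$ do).

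Writing $\mathcal J=\hat I_1+X^\circ$ and using that $\diff{}{\qt_0}$ and $X^\circ$ commute with $\qp_x$, the left-hand sides collapse to $\fk{\diff{}{\qt_0}}{\mathcal J}v_\qg^{(n)}=\qp_x^nI_0v_\qg+\diff{}{\qt_0}(\hat I_1v_\qg^{(n)})-\hat I_1(\qs_{\qg,0}^{n+1})$ and $\fk{\diff{}{\qt_0}}{\mathcal J}\qs_{\qg,0}^n=\qp_x^nI_0\qs_{\qg,0}+\diff{}{\qt_0}(\hat I_1\qs_{\qg,0}^n)$. Into these I substitute the explicit formulas \eqref{BZ}, \eqref{CA} for $I_0$ and the defining formulas for $\hat I_1v_\qg^{(n)}$, $\hat I_1\qs_{\qg,0}^n$; after applying $\diff{}{\qt_0}$ to the latter and expanding by the Leibniz rule, I sort the resulting terms by type. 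The $W$-contributions cancel with combined coefficient $1-(n+1)+n=0$; the $\qs_{\qg,2}$-terms add up to $\qs_{\qg,2}^{n+1}=\diff{v_\qg^{(n)}}{\qt_2}$ in the even case (using $(\tfrac52+\mu_\qg)-(\tfrac32+\mu_\qg)=1$), and in the odd case reorganize, via $\diff{}{\qt_0}\qs_{\qg,2}=-\diff{\qs_{\qg,0}}{\qt_2}$ and $\diff{}{\qt_0}\qs_{\qg,1}=-\diff{\qs_{\qg,0}}{\qt_1}$, into $\diff{\qs_{\qg,0}^n}{\qt_2}$ together with the $\qd_{n,0}$-term of $D\qs_{\qg,0}^n$; and the terms that remain — those carrying the prefactors $\diff{f_{\qb,0}}{\qt_0}$, $\diff{f_{\qa,1}}{\qt_0}$, $\diff{f_{\qa,0}}{\qt_0}$ sitting outside $\qp_x^n$, and the $\qs_{\qz,1}$-combinations — are exactly $Dv_\qg^{(n)}$, resp.\ $D\qs_{\qg,0}^n$. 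The fact that the prefactors in the formula for $D$ sit outside $\qp_x^n$ is precisely what commuting $\hat I_1$ through powers of $\qp_x$ produces, and this is controlled by Lemma \ref{CM}; collapsing the $\qs_{\qz,1}$-combinations uses the relation \eqref{BV}, Lemma \ref{BY}, and the defining equations \eqref{CG}, \eqref{CD} of $N_\ql^\qz$ and $Z_\ql^\qz$.

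The only genuine difficulty is the bookkeeping of the general-$n$ case: one must track all non-local contributions, separate the piece that constitutes $\diff{}{\qt_2}$ from the piece that constitutes $D$, and repeatedly invoke \eqref{BV}, \eqref{CG}, \eqref{CD}, Lemma \ref{BY} and the skew-symmetry of the odd-flow relations to collapse the many terms generated by the Leibniz expansion and by Lemma \ref{CM}; no new conceptual ingredient is needed. A slightly more economical organization of the same computation is to verify instead that the two sides have the same bracket with $\qp_x$ — by the graded Jacobi identity (and $\fk{\diff{}{\qt_0}}{\qp_x}=\fk{\diff{}{\qt_2}}{\qp_x}=0$) this reduces to comparing $\fk{\diff{}{\qt_0}}{\fk{\hat I_1}{\qp_x}}$ with $\fk{D}{\qp_x}$, again computed from Lemma \ref{CM} and the explicit formula for $D$ — and that they agree on $v_\qg$ and $\qs_{\qg,0}$, which is the $n=0$ instance treated above.
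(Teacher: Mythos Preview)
Your proposal is correct and follows essentially the same route as the paper, which merely records Lemma~\ref{CK} among those ``obtained by lengthy but straightforward computations'' without spelling out a proof. Your reduction $\fk{\diff{}{\qt_0}}{\hat I_1+X^\circ}v_\qg^{(n)}=\qp_x^nI_0v_\qg+\diff{}{\qt_0}(\hat I_1v_\qg^{(n)})-\hat I_1(\qs_{\qg,0}^{n+1})$ and its odd analogue are exactly the kind of direct expansion the paper has in mind, and your alternative organization via $\fk{-}{\qp_x}$ plus the $n=0$ case is a legitimate (and slightly cleaner) repackaging of the same computation.
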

\begin{Prop}
The first identity of the closedness condition \eqref{BF} holds true, i.e., 
\[\fk{\diff{}{\qt_0}}{I_1}+\fk{\diff{}{\qt_1}}{\fk{\diff{}{\qt_0}}{X^\circ}} = 0.\]
\end{Prop}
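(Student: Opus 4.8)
The plan is to reduce the assertion, via the structural Lemmas~\ref{CJ} and \ref{CK}, to a single identity for the derivation $D$, and then to verify that identity by a direct computation. Since $X^\circ$ is chosen so that $\fk{\diff{}{\qt_0}}{X^\circ}=I_0$, the claim is equivalent to $\fk{\diff{}{\qt_0}}{I_1}+\fk{\diff{}{\qt_1}}{I_0}=0$. Conceptually this is nothing but the graded Jacobi identity: if a derivation $X$ with $\fk{\diff{}{\qt_0}}{X}=I_0$ and $\fk{\diff{}{\qt_1}}{X}=I_1$ were available, then $\fk{\diff{}{\qt_0}}{I_1}+\fk{\diff{}{\qt_1}}{I_0}=\fk{\fk{\diff{}{\qt_0}}{\diff{}{\qt_1}}}{X}=0$ because the odd flows commute; the task is to establish this relation while working only with the explicit local representatives $\hat I_1$, $X^\circ$, $D$, since $X$ has not yet been constructed.

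First I would apply Lemma~\ref{CJ} to replace $\fk{\diff{}{\qt_0}}{I_1}$ by $\fk{\diff{}{\qt_1}}{\fk{\diff{}{\qt_0}}{\hat I_1}}$ plus the explicit correction terms built from $\diff{A^\qa}{\qt_0}$, $\diff{B^\qa}{\qt_0}$ and the $\kk{\frac 32+\mu_\ql}$-term. Adding $\fk{\diff{}{\qt_1}}{I_0}=\fk{\diff{}{\qt_1}}{\fk{\diff{}{\qt_0}}{X^\circ}}$ and using linearity of the inner bracket in its second slot, the first two pieces combine into $\fk{\diff{}{\qt_1}}{\fk{\diff{}{\qt_0}}{\hat I_1+X^\circ}}$. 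By Lemma~\ref{CK} this equals $\fk{\diff{}{\qt_1}}{D+\diff{}{\qt_2}}$, and since $\fk{\diff{}{\qt_1}}{\diff{}{\qt_2}}=0$ (the deformed odd flows mutually commute), the whole statement is reduced to showing that $\fk{\diff{}{\qt_1}}{D}$ cancels the correction terms coming from Lemma~\ref{CJ}.

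The remaining verification I would carry out by evaluating both sides on the generators $v_\ql$ and $\qs_{\ql,0}$. Using the explicit formulae for $Dv_\qg^{(n)}$ and $D\qs_{\qg,0}^n$ in Lemma~\ref{CK}, the recursion relation \eqref{2-11}, Lemma~\ref{BY} (which expresses $\diff{}{\qs_{\qb,0}}\vard{P_1}{\qs_{\ql,0}}$ and $\diff{}{\qs_{\qb,0}}\vard{P_1}{v^\ql}$ through the genus-zero flows $\diff{}{t^{\qa,0}}$), and the defining relations \eqref{CG}, \eqref{CD} of $N^\qz_\ql$ and $Z^\qz_\ql$, one expands $\fk{\diff{}{\qt_1}}{D}v_\ql$ and $\fk{\diff{}{\qt_1}}{D}\qs_{\ql,0}$. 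The non-local contributions, i.e.\ the terms in $\qs_{\qz,1}$ and $\qs_{\qz,2}$, reorganize into combinations of the genus-zero flows $\diff{v_\ql}{t^{\qa,p}}$ and $\diff{\qs_{\ql,0}}{t^{\qa,p}}$ by the same mechanism used in the proof of Lemma~\ref{CF} --- via the identities \eqref{a2-3}, \eqref{a2-4} and the bihamiltonian recursion \eqref{AN} --- and these are precisely the correction terms produced by Lemma~\ref{CJ}. The hard part will be the bookkeeping in this last step: because $\hat I_1$ and hence $D$ do not commute with $\qp_x$, pushing $\diff{}{\qt_1}$ past the powers of $\qp_x$ appearing in the formulae of Lemma~\ref{CK} generates extra commutator terms that must be controlled with Lemma~\ref{CM}, and one has to check that all the numerical coefficients --- the $a^{\qa\qb}$, $b^{\qa\qb}$ and the factors $\kk{\frac 12+\mu_\qa}$, $\kk{\frac 32+\mu_\qa}$ --- match exactly. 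Apart from this calculational effort there is no conceptual obstacle.
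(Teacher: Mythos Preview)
Your proposal is correct and follows essentially the same route as the paper: apply Lemma~\ref{CJ}, combine with $\fk{\diff{}{\qt_1}}{\fk{\diff{}{\qt_0}}{X^\circ}}$ to produce $\fk{\diff{}{\qt_1}}{\fk{\diff{}{\qt_0}}{\hat I_1+X^\circ}}$, invoke Lemma~\ref{CK} and the commutativity of the odd flows to reduce everything to a direct verification of $\fk{\diff{}{\qt_1}}{D}$ against the correction terms on the generators $v_\ql$ and $\qs_{\ql,0}$. The only difference is that the paper treats this final check as immediate from the explicit formulae for $D$, whereas you anticipate needing the machinery of Lemmas~\ref{CM} and~\ref{CF}; in fact the computation is more routine than you suggest, since the structure of $D$ already packages the non-local pieces in a form that cancels directly.
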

\begin{proof}
It follows from Lemma \ref{CJ} that
\begin{align*}
&\,\fk{\diff{}{\qt_0}}{I_1}v_\ql+\fk{\diff{}{\qt_1}}{\fk{\diff{}{\qt_0}}{X^\circ}}v_\ql\\=&\,
\fk{\diff{}{\qt_1}}{\fk{\diff{}{\qt_0}}{\hat I_1+X^\circ}}v_\ql+\diff{A^\qa}{\qt_0}\diff{v_\ql}{t^{\qa,1}}+\diff{B^\qa}{\qt_0}\diff{v_\ql}{t^{\qa,0}}.
\end{align*}
Due to Lemma \ref{CK} we only need to verify the following identity: 
\[
\fk{\diff{}{\qt_1}}{D}v_\ql+\diff{A^\qa}{\qt_0}\diff{v_\ql}{t^{\qa,1}}+\diff{B^\qa}{\qt_0}\diff{v_\ql}{t^{\qa,0}} = 0,
\]
which can be checked directly from the definition of $D$. Similarly we can prove that
\[
\fk{\diff{}{\qt_0}}{I_1}\qs_{\ql,0}+\fk{\diff{}{\qt_1}}{\fk{\diff{}{\qt_0}}{X^\circ}}\qs_{\ql,0} = 0.
\]
The proposition is proved.
\end{proof}

It remains to prove the second identity of \eqref{BF}.
\begin{Lem}
\label{CL}
The following identities hold true:
\begin{align*}
\fk{I_1}{\diff{}{\qt_1}}v_\ql =&\, \sum_{s\geq 0}G^\qg_s\diff{}{v_\qg^{(s)}}\diff{v_\ql}{\qt_1}+F^\qg_s\diff{}{\qs_{\qg,0}^{s}}\diff{v_\ql}{\qt_1}\\
&-A^\qa\diff{}{\qt_1}\diff{v_\ql}{t^{\qa,1}}-B^\qa\diff{}{\qt_1}\diff{v_\ql}{t^{\qa,0}}+\diff{B^\qa}{\qt_1}\diff{v_\ql}{t^{\qa,0}},\\
\fk{I_1}{\diff{}{\qt_1}}\qs_{\ql,0} =&\, \sum_{s\geq 0}G^\qg_s\diff{}{v_\qg^{(s)}}\diff{\qs_{\ql,0}}{\qt_1}+F^\qg_s\diff{}{\qs_{\qg,0}^{s}}\diff{\qs_{\ql,0}}{\qt_1}\\
&-A^\qa\diff{}{\qt_1}\diff{\qs_{\ql,0}}{t^{\qa,1}}-B^\qa\diff{}{\qt_1}\diff{\qs_{\ql,0}}{t^{\qa,0}}+\diff{B^\qa}{\qt_1}\diff{\qs_{\ql,0}}{t^{\qa,0}},
\end{align*}
where $G^\qg_s$ and $F^\qg_s$ are differential polynomials defined by
\[
G_s^\qg = \fk{\diff{}{\qt_1}}{\hat I_1}v_\qg^{(s)}+\qp_x^s\kk{I_1v_\qg},\quad F_s^\qg = \fk{\diff{}{\qt_1}}{\hat I_1}\qs_{\qg,0}^s+\qp_x^s\kk{I_1\qs_{\qg,0}}.
\]
\end{Lem}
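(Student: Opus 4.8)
The plan is to compute the graded commutator $\fk{I_1}{\diff{}{\qt_1}}=I_1\circ\diff{}{\qt_1}+\diff{}{\qt_1}\circ I_1$ (both $I_1$ and $\diff{}{\qt_1}$ being odd derivations) directly on the generators $v_\ql$ and $\qs_{\ql,0}$, using the explicit formulae for $I_1$, $\hat I_1$, $A^\qa$ and $B^\qa$ recorded above together with the Leibniz rule. Thus $\fk{I_1}{\diff{}{\qt_1}}v_\ql=I_1\kk{\diff{v_\ql}{\qt_1}}+\diff{}{\qt_1}\kk{I_1v_\ql}$, and I would substitute $I_1v_\ql=\hat I_1\kk{\diff{v_\ql}{\qt_1}}+A^\qa\diff{v_\ql}{t^{\qa,1}}+B^\qa\diff{v_\ql}{t^{\qa,0}}$ and treat the three pieces separately. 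The structural fact to exploit is that $\diff{v_\ql}{\qt_1}$ is a \emph{local} differential polynomial in the jet coordinates $v_\qg^{(s)}$ and $\qs_{\qg,0}^{s}$, so that both $I_1$ and the odd derivation $\fk{\diff{}{\qt_1}}{\hat I_1}$ act on it through the jet chain rule.

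The key step is the $\hat I_1$-piece. Writing $\diff{}{\qt_1}\circ\hat I_1=\fk{\diff{}{\qt_1}}{\hat I_1}+\hat I_1\circ\diff{}{\qt_1}$ and using $\diff{}{\qt_1}\kk{\diff{v_\ql}{\qt_1}}=\kk{\diff{}{\qt_1}}^2v_\ql=0$ (which holds since $[P_1,P_1]=0$, via \eqref{dc}), one obtains $\diff{}{\qt_1}\kk{\hat I_1\diff{v_\ql}{\qt_1}}=\fk{\diff{}{\qt_1}}{\hat I_1}\diff{v_\ql}{\qt_1}$. Adding $I_1\diff{v_\ql}{\qt_1}$ and expanding both by the chain rule, the coefficient of $\diff{}{v_\qg^{(s)}}\diff{v_\ql}{\qt_1}$ becomes $I_1v_\qg^{(s)}+\fk{\diff{}{\qt_1}}{\hat I_1}v_\qg^{(s)}=\qp_x^{s}\kk{I_1v_\qg}+\fk{\diff{}{\qt_1}}{\hat I_1}v_\qg^{(s)}=G^\qg_s$ (using $[I_1,\qp_x]=0$), and likewise the coefficient of $\diff{}{\qs_{\qg,0}^{s}}\diff{v_\ql}{\qt_1}$ becomes $F^\qg_s$; this is exactly how the combinations defining $G^\qg_s$ and $F^\qg_s$ arise. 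For the $A^\qa$- and $B^\qa$-pieces the odd Leibniz rule gives $\diff{}{\qt_1}\kk{A^\qa\diff{v_\ql}{t^{\qa,1}}}=\diff{A^\qa}{\qt_1}\diff{v_\ql}{t^{\qa,1}}-A^\qa\diff{}{\qt_1}\diff{v_\ql}{t^{\qa,1}}$, and I would check that $\diff{A^\qa}{\qt_1}=0$ (because $\diff{}{\qt_1}\diff{f_{\qb,0}}{\qt_1}=\kk{\diff{}{\qt_1}}^2f_{\qb,0}=0$ and $\diff{\qs_{\qb,1}}{\qt_1}=T_{1,1}(\cdots)=0$), while $\diff{B^\qa}{\qt_1}$ does not vanish and supplies the last term of the asserted identity.

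The identity for $\qs_{\ql,0}$ proceeds verbatim with $\diff{v_\ql}{\qt_1}$ replaced by $\diff{\qs_{\ql,0}}{\qt_1}$; the only new ingredient is the extra summand $\kk{\frac32+\mu_\ql}\kk{\diff{\qs_{\ql,1}}{\qt_2}-\qs_{\qb,1}\eta^{\qa\qb}\kk{\frac12+\mu_\qa}\diff{\qs_{\ql,1}}{t^{\qa,0}}}$ appearing in $I_1\qs_{\ql,0}$, whose $\diff{}{\qt_1}$-derivative I would show vanishes using $\fk{\diff{}{\qt_1}}{\diff{}{\qt_2}}=0$, $\fk{\diff{}{\qt_1}}{\diff{}{t^{\qa,0}}}=0$ and $\diff{\qs_{\ql,1}}{\qt_1}=0$, so that it contributes nothing extra. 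I expect the genuine difficulty to be organizational rather than conceptual: keeping track of the signs generated by the odd derivations $I_1$, $\diff{}{\qt_1}$ and the odd quantities $A^\qa$, $B^\qa$, $\qs_{\qg,m}$, and verifying that the several superficially distinct ways in which the $\hat I_1$-contributions regroup really do coalesce into the single combinations $G^\qg_s$ and $F^\qg_s$ — a coalescence made possible by $\kk{\diff{}{\qt_1}}^2=0$. This is the same flavour of calculation as in Lemmas \ref{CM}, \ref{CJ} and \ref{CK}.
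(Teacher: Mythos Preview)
Your approach is correct and is precisely the direct computation the paper has in mind; the paper itself offers no argument beyond grouping this lemma with the preceding ones as ``obtained by lengthy but straightforward computations'', and your outline supplies exactly those steps. The key identifications you make --- that $\hat I_1$ is even so that $\diff{}{\qt_1}\circ\hat I_1=\fk{\diff{}{\qt_1}}{\hat I_1}+\hat I_1\circ\diff{}{\qt_1}$, that $\kk{\diff{}{\qt_1}}^2=0$ kills the second piece, that $\diff{A^\qa}{\qt_1}=0$ via $T_{1,1}=0$, and that the extra term in $I_1\qs_{\ql,0}$ is annihilated by $\diff{}{\qt_1}$ --- are all the ingredients needed, and they are the same ones implicit in the paper's derivation of Lemma \ref{CN} from Lemma \ref{CL}.
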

\begin{Lem}
\label{CN}
The differential polynomials $G^\qg_s$ and $F^\qg_s$ defined in Lemma \ref{CL} have the following expressions:
\begin{align*}
G_s^\qg =&\, A^\qa\qp_x^s\diff{v_\qg}{t^{\qa,1}}+B^\qa\qp_x^s\diff{v_{\qg}}{t^{\qa,0}},\\
F_s^\qg =&\, A^\qa\qp_x^s\diff{\qs_{\qg,0}}{t^{\qa,1}}+B^\qa\qp_x^s\diff{\qs_{\qg,0}}{t^{\qa,0}}\\
&+\qd_{s,0}\kk{\frac 32+\mu_\qg}\kk{\diff{\qs_{\qg,1}}{\qt_2}-\qs_{\qb,1}\eta^{\qa\qb}\kk{\frac 12+\mu_\qa}\diff{\qs_{\qg,1}}{t^{\qa,0}}}.
\end{align*}
\end{Lem}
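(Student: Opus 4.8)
The plan is to prove both identities by induction on $s$, working directly from the explicit formulae for $\hat I_1$ and $I_1$ displayed above together with the commutator formula of Lemma \ref{CM}. For the base case $s=0$, the identity for $G^\qg_0$ is immediate: since $\hat I_1 v_\qg = 0$ one has $G^\qg_0 = \fk{\diff{}{\qt_1}}{\hat I_1}v_\qg + I_1v_\qg = -\hat I_1\kk{\diff{v_\qg}{\qt_1}} + I_1 v_\qg$, and the summand $\hat I_1\kk{\diff{v_\qg}{\qt_1}}$ occurring in $I_1 v_\qg$ cancels the first term, leaving exactly $A^\qa\diff{v_\qg}{t^{\qa,1}}+B^\qa\diff{v_\qg}{t^{\qa,0}}$. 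Running the same cancellation for $\qs_{\qg,0}$, the surviving extra summand of $I_1\qs_{\qg,0}$ is $\kk{\frac 32+\mu_\qg}\kk{\diff{\qs_{\qg,1}}{\qt_2}-\qs_{\qb,1}\eta^{\qa\qb}\kk{\frac 12+\mu_\qa}\diff{\qs_{\qg,1}}{t^{\qa,0}}}$, which is precisely the $\qd_{s,0}$-term of the claimed formula for $F^\qg_s$.

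For the inductive step I would use that $\diff{}{\qt_1}$ commutes with $\qp_x$ while $\hat I_1$ does not. From the Jacobi identity together with $\fk{\diff{}{\qt_1}}{\qp_x}=0$ one gets, for any local $Q$,
\[
\fk{\diff{}{\qt_1}}{\hat I_1}\kk{\qp_x Q}=\qp_x\fk{\diff{}{\qt_1}}{\hat I_1}Q+\fk{\diff{}{\qt_1}}{\fk{\hat I_1}{\qp_x}}Q.
\]
Applying this with $Q=v_\qg^{(s-1)}$ and $Q=\qs_{\qg,0}^{s-1}$, and using the definitions of $G^\qg_s$ and $F^\qg_s$ (in which the terms $\qp_x^s\kk{I_1v_\qg}$ and $\qp_x^s\kk{I_1\qs_{\qg,0}}$ telescope against $\qp_x G^\qg_{s-1}$, $\qp_x F^\qg_{s-1}$), one obtains
\[
G^\qg_s-\qp_x G^\qg_{s-1}=\fk{\diff{}{\qt_1}}{\fk{\hat I_1}{\qp_x}}v_\qg^{(s-1)},\qquad F^\qg_s-\qp_x F^\qg_{s-1}=\fk{\diff{}{\qt_1}}{\fk{\hat I_1}{\qp_x}}\qs_{\qg,0}^{s-1}.
\]
Since the claimed right-hand sides satisfy $A^\qa\qp_x^s\diff{v_\qg}{t^{\qa,1}}+B^\qa\qp_x^s\diff{v_\qg}{t^{\qa,0}}=\qp_x\kk{A^\qa\qp_x^{s-1}\diff{v_\qg}{t^{\qa,1}}+B^\qa\qp_x^{s-1}\diff{v_\qg}{t^{\qa,0}}}-\kk{\qp_x A^\qa}\qp_x^{s-1}\diff{v_\qg}{t^{\qa,1}}-\kk{\qp_x B^\qa}\qp_x^{s-1}\diff{v_\qg}{t^{\qa,0}}$, the induction reduces to the single identity
\[
\fk{\diff{}{\qt_1}}{\fk{\hat I_1}{\qp_x}}v_\qg^{(s-1)}=-\kk{\qp_x A^\qa}\qp_x^{s-1}\diff{v_\qg}{t^{\qa,1}}-\kk{\qp_x B^\qa}\qp_x^{s-1}\diff{v_\qg}{t^{\qa,0}}
\]
and its $\qs$-analogue. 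In the latter, the term $\kk{\frac 32+\mu_\qg}\qp_x\kk{\qs_{\qg,2}+Z^\qe_\qg\qs_{\qe,1}}\diff{}{\qs_{\qg,0}}Q$ of Lemma \ref{CM} acts nontrivially only when $Q=\qs_{\qg,0}$, hence only for $s=1$; applying $\diff{}{\qt_1}$ to it and using $\diff{\qs_{\qg,2}}{\qt_1}=T_{2,1}\vard{P_1}{v^\qg}=-\diff{\qs_{\qg,1}}{\qt_2}$ reproduces, after one further $\qp_x$, precisely the $\qd_{s,0}$-term, which is why the correction is present for $F^\qg_0$ but disappears for $s\geq 1$.

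To close the step one substitutes the formula for $\fk{\hat I_1}{\qp_x}$ from Lemma \ref{CM} into the left-hand side and applies the derivation $\diff{}{\qt_1}$ term by term: the $W$-part commutes with $\diff{}{\qt_1}$ because the flows $D_{X_{\qb,k}}$ are bihamiltonian, the $f'$-terms and the $\qs_{\qz,1}$-terms are handled by the Leibniz rule, and the resulting expression is simplified using the recursion relations \eqref{2-11}, the defining equations for $N^\qz_\ql$ and $Z^\qz_\ql$ from Proposition \ref{DT} and Lemma \ref{CE}, the bihamiltonian recursion \eqref{AN}, and Lemma \ref{BY}. I expect this last step to be the main obstacle: it is conceptually routine but computationally heavy, and the required cancellations succeed precisely because $A^\qa$ and $B^\qa$ were \emph{defined} as the combinations of $\diff{f_{\qb,0}}{\qt_1}-\qs_{\qb,1}$ and $\kk{\frac 12+\mu_\qb}\diff{f_{\qb,1}}{\qt_1}-\qs_{\qb,2}-Z^\qe_\qb\qs_{\qe,1}$ measuring the failure of the recursion relations to hold locally, so that $\qp_x A^\qa$ and $\qp_x B^\qa$ can be traded, via \eqref{2-11} and \eqref{AN}, back into linear combinations of the flow derivatives $\diff{}{t^{\qa,0}}$ and $\diff{}{t^{\qa,1}}$ appearing on the right.
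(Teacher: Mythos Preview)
Your approach is essentially the same as the paper's: both establish the base case $s=0$ directly from $\hat I_1 v_\qg=\hat I_1\qs_{\qg,0}=0$, derive the one-step recursion $G^\qg_{s+1}-\qp_xG^\qg_s$ (and its $F$-analogue) from the failure of $\hat I_1$ to commute with $\qp_x$, and then evaluate that recursion via Lemma~\ref{CM}. The paper records the recursion in the equivalent form $G^\qg_{s+1}-\qp_xG^\qg_s=\diff{}{\qt_1}\bigl(\fk{\hat I_1}{\qp_x}v_\qg^{(s)}\bigr)+\fk{\qp_x}{\hat I_1}\diff{v_\qg^{(s)}}{\qt_1}$ and then simply asserts that Lemma~\ref{CM} yields $-\,\qp_xA^\qa\,\qp_x^s\diff{v_\qg}{t^{\qa,1}}-\qp_xB^\qa\,\qp_x^s\diff{v_\qg}{t^{\qa,0}}$, treating the $F^\qg_s$ case with the phrase ``computed similarly''; your write-up expands exactly this step and is at least as detailed.
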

\begin{proof}
By using the definition of $G^\qg_s$ we can obtain the identity
\[
G^\qg_{s+1}-\qp_xG^\qg_s = \diff{}{\qt_1}\kk{\fk{\hat I_1}{\qp_x}v_\qg^{(s)}}+\fk{\qp_x}{\hat I_1}\diff{v_\qg^{(s)}}{\qt_1}.
\]
Therefore it follows from Lemma \ref{CM} that
\[
G^\qg_{s+1}-\qp_xG^\qg_s = -\qp_xA^\qa\qp_x^s\diff{v_\qg}{t^{\qa,1}}-\qp_xB^\qa\qp_x^s\diff{v_\qg}{t^{\qa,0}}.
\]
Hence $G^\qg_s$ can be solved recursively starting from the initial condition
\[
G^\qg_0 = I_1v_\qg-\hat I_1\diff{v_\qg}{\qt_1} = A^\qa\diff{v_\qg}{t^{\qa,1}}+B^\qa\diff{v_\qg}{t^{\qa,0}}.
\]
The differential polynomials $F_s^\qg$ can be computed similarly. The lemma is proved.
\end{proof}
\begin{Prop}
The second identity of the closedness condition \eqref{BF} holds true, i.e., 
\[\fk{\diff{}{\qt_1}}{I_1} = 0.\]
\end{Prop}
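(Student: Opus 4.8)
## Proof Proposal

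The plan is to mimic the structure of the proof of the first identity of \eqref{BF}, now exploiting the computations already assembled in Lemmas \ref{CL} and \ref{CN}. Starting point: by Lemma \ref{CL}, the action of $\fk{\diff{}{\qt_1}}{I_1}$ on $v_\ql$ and on $\qs_{\ql,0}$ is expressed as a sum of three kinds of terms --- the "connection" terms $\sum_s G^\qg_s\diff{}{v^{(s)}_\qg}(\,\cdot\,)+F^\qg_s\diff{}{\qs^s_{\qg,0}}(\,\cdot\,)$, the terms $-A^\qa\diff{}{\qt_1}\diff{(\cdot)}{t^{\qa,1}}-B^\qa\diff{}{\qt_1}\diff{(\cdot)}{t^{\qa,0}}$, and the extra term $\diff{B^\qa}{\qt_1}\diff{(\cdot)}{t^{\qa,0}}$ (plus, in the odd case, the $\bigl(\tfrac32+\mu_\ql\bigr)$-corrections). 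First I would substitute the explicit formulas for $G^\qg_s$ and $F^\qg_s$ from Lemma \ref{CN} into these expressions. The key cancellation is that the connection terms, after substitution, become $A^\qa$ times $\sum_s \qp_x^s\diff{(\cdot)}{t^{\qa,1}}\diff{}{v^{(s)}_\qg}\diff{(\cdot)}{\qt_1}$-type sums, which are precisely $A^\qa\diff{}{\qt_1}\diff{(\cdot)}{t^{\qa,1}}$ and $B^\qa\diff{}{\qt_1}\diff{(\cdot)}{t^{\qa,0}}$ rewritten via the chain rule (using $\fk{\diff{}{\qt_1}}{\qp_x}=0$ and that the flows $\diff{}{t^{\qa,p}}$ are differential polynomials). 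Hence the first and second groups of terms cancel, leaving only $\diff{B^\qa}{\qt_1}\diff{(\cdot)}{t^{\qa,0}}$ and, in the odd case, the $\delta_{s,0}$-contribution from $F^\qg_0$ in Lemma \ref{CN} combined with the $\bigl(\tfrac32+\mu_\ql\bigr)$-term.

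Next I would show these residual terms vanish. For the $\diff{B^\qa}{\qt_1}$ term: the point is that $B^\qa$ is built from the locality combinations $\diff{f_{\qb,0}}{\qt_1}-\qs_{\qb,1}$, $\bigl(\tfrac12+\mu_\qb\bigr)\diff{f_{\qb,1}}{\qt_1}-\qs_{\qb,2}-Z^\qe_\qb\qs_{\qe,1}$, which are \emph{local} by Proposition \ref{BM} and Lemma \ref{CE}. More importantly, applying $\diff{}{\qt_1}$ to each of these combinations gives a differential-polynomial multiple of the $\diff{}{\qt_1}^2$-type object, and $\fk{\diff{}{\qt_1}}{\diff{}{\qt_1}}=0$ since the odd flows mutually commute (Proposition proved in Sect.\,\ref{AV}); so $\diff{B^\qa}{\qt_1}$ acting through $\diff{(\cdot)}{t^{\qa,0}}$ collapses. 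Concretely I would use that $\int D_{P_1}D_{P_1}h_{\qa,p}=\int[P_1,[P_1,H_{\qa,p-1}]]=0$, exactly the analogue of the argument in the Lemma preceding the deformed super tau-cover. For the odd-case residual term one uses Lemma \ref{BY} together with the recursion \eqref{2-11} to see that $\diff{\qs_{\ql,1}}{\qt_2}-\qs_{\qb,1}\eta^{\qa\qb}\bigl(\tfrac12+\mu_\qa\bigr)\diff{\qs_{\ql,1}}{t^{\qa,0}}$ is itself the obstruction to $\diff{}{\qt_1}$ and $\diff{}{\qt_2}$ commuting on $\qs_{\ql,1}$, which vanishes because all the odd flows $\diff{}{\qt_m}$ commute.

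An alternative, perhaps cleaner, route: observe that $\fk{\diff{}{\qt_1}}{I_1}$ is, by definition of $I_1=\fk{\diff{}{\qt_1}}{X}$ (where $X$ is the sought derivation before we solve for it), formally equal to $\fk{\diff{}{\qt_1}}{\fk{\diff{}{\qt_1}}{X}}=\tfrac12\fk{\fk{\diff{}{\qt_1}}{\diff{}{\qt_1}}}{X}$ by the graded Jacobi identity, and $\fk{\diff{}{\qt_1}}{\diff{}{\qt_1}}=0$. The catch --- and this is the reason one cannot simply invoke that one-liner --- is that at this stage $X$ is not yet known to exist as an element of $\derx$; $I_1$ has been \emph{defined} by the explicit formula above, and one must verify the identity at the level of that formula. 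So the honest proof is the term-by-term cancellation of the previous paragraphs.

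The main obstacle I anticipate is bookkeeping in the odd sector: the $\bigl(\tfrac32+\mu_\ql\bigr)\bigl(\diff{\qs_{\ql,1}}{\qt_2}-\qs_{\qb,1}\eta^{\qa\qb}(\tfrac12+\mu_\qa)\diff{\qs_{\ql,1}}{t^{\qa,0}}\bigr)$ correction term in $I_1\qs_{\ql,0}$ interacts with the $F^\qg_0$ correction term in Lemma \ref{CN}, and one must check that applying $\diff{}{\qt_1}$ to the former matches exactly what the connection terms $F^\qg_0\diff{}{\qs_{\qg,0}}\diff{\qs_{\ql,0}}{\qt_1}$ produce. This requires the identity $\diff{\qs_{\ql,1}}{t^{\qa,p}}=\diff{}{\qt_1}\vard{h_{\qa,p+1}}{v^\ql}$ (already used in the derivation of \eqref{CH}) and a further application of Lemma \ref{BY}. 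Everything else --- the even sector and the bulk of the odd sector --- is a routine, if lengthy, computation parallel to the proof of the first identity in \eqref{BF}, so I would state the even-sector cancellation briefly and concentrate the written details on the odd-sector correction terms.

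\begin{proof}
By Lemma \ref{CL}, we have
\begin{align*}
\fk{\diff{}{\qt_1}}{I_1}v_\ql =&\, -\sum_{s\geq 0}G^\qg_s\diff{}{v_\qg^{(s)}}\diff{v_\ql}{\qt_1}-F^\qg_s\diff{}{\qs_{\qg,0}^{s}}\diff{v_\ql}{\qt_1}\\
&+A^\qa\diff{}{\qt_1}\diff{v_\ql}{t^{\qa,1}}+B^\qa\diff{}{\qt_1}\diff{v_\ql}{t^{\qa,0}}-\diff{B^\qa}{\qt_1}\diff{v_\ql}{t^{\qa,0}},\\
\fk{\diff{}{\qt_1}}{I_1}\qs_{\ql,0} =&\, -\sum_{s\geq 0}G^\qg_s\diff{}{v_\qg^{(s)}}\diff{\qs_{\ql,0}}{\qt_1}-F^\qg_s\diff{}{\qs_{\qg,0}^{s}}\diff{\qs_{\ql,0}}{\qt_1}\\
&+A^\qa\diff{}{\qt_1}\diff{\qs_{\ql,0}}{t^{\qa,1}}+B^\qa\diff{}{\qt_1}\diff{\qs_{\ql,0}}{t^{\qa,0}}-\diff{B^\qa}{\qt_1}\diff{\qs_{\ql,0}}{t^{\qa,0}}.
\end{align*}
Substituting the expressions for $G^\qg_s$ and $F^\qg_s$ from Lemma \ref{CN} and using $\fk{\diff{}{\qt_1}}{\qp_x} = 0$ together with the chain rule, we see that
\[
\sum_{s\geq 0}G^\qg_s\diff{}{v_\qg^{(s)}}\diff{v_\ql}{\qt_1} = A^\qa\diff{}{\qt_1}\diff{v_\ql}{t^{\qa,1}}+B^\qa\diff{}{\qt_1}\diff{v_\ql}{t^{\qa,0}},
\]
and similarly for the $F^\qg_s$ terms, with an extra contribution coming from the $\qd_{s,0}$ term in $F^\qg_0$. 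Hence
\[
\fk{\diff{}{\qt_1}}{I_1}v_\ql = -\diff{B^\qa}{\qt_1}\diff{v_\ql}{t^{\qa,0}},
\]
\[
\fk{\diff{}{\qt_1}}{I_1}\qs_{\ql,0} = -\diff{B^\qa}{\qt_1}\diff{\qs_{\ql,0}}{t^{\qa,0}}-\kk{\frac 32+\mu_\ql}\diff{}{\qt_1}\kk{\diff{\qs_{\ql,1}}{\qt_2}-\qs_{\qb,1}\eta^{\qa\qb}\kk{\frac 12+\mu_\qa}\diff{\qs_{\ql,1}}{t^{\qa,0}}}.
\]

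It remains to show that both right-hand sides vanish. For $\diff{B^\qa}{\qt_1}$: recall that
\[
B^\qa = -\eta^{\qa\qb}\kk{\frac 12 +\mu_\qa}\kk{\frac 32+\mu_\qb}\kk{\kk{\frac 12+\mu_\qb}\diff{f_{\qb,1}}{\qt_1}-\qs_{\qb,2}-Z^\qe_\qb\qs_{\qe,1}}-(b^{\qa\qb}+b^{\qb\qa})\kk{\diff{f_{\qb,0}}{\qt_1}-\qs_{\qb,1}}.
\]
Since the odd flows $\diff{}{\qt_m}$ mutually commute, $\fk{\diff{}{\qt_1}}{\diff{}{\qt_1}}f_{\qb,p} = 0$ for all $\qb,p$; combined with the facts that $\qp_xZ^\qe_\qb$ is a differential polynomial multiple of $\diff{v_\qb}{t^{\qe,0}}$ by \eqref{CD} and that $\diff{}{\qt_1}$ annihilates the recursion-defining combinations, we conclude $\diff{B^\qa}{\qt_1}$ is a total $x$-derivative of a combination that vanishes after applying the variational derivative; more directly, $\diff{B^\qa}{\qt_1}\diff{v_\ql}{t^{\qa,0}}$ and $\diff{B^\qa}{\qt_1}\diff{\qs_{\ql,0}}{t^{\qa,0}}$ are cocycles in $\derx^0_{\geq 2}$, hence vanish by $BH^0_{\geq 2}\bigl(\derx\bigr) = 0$.

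For the last term in $\fk{\diff{}{\qt_1}}{I_1}\qs_{\ql,0}$: by the definition of the odd flows and the recursion \eqref{2-11}, together with Lemma \ref{BY},
\[
\diff{\qs_{\ql,1}}{\qt_2}-\qs_{\qb,1}\eta^{\qa\qb}\kk{\frac 12+\mu_\qa}\diff{\qs_{\ql,1}}{t^{\qa,0}}
\]
is, up to local terms, the obstruction to the commutativity of $\diff{}{\qt_1}$ and $\diff{}{\qt_2}$ acting on $\qs_{\ql,1}$, which vanishes. Therefore its $\diff{}{\qt_1}$-derivative is also zero. Thus $\fk{\diff{}{\qt_1}}{I_1} = 0$. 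The proposition is proved.
\end{proof}
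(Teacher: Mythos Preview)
Your approach via Lemmas \ref{CL} and \ref{CN} is exactly the one the paper takes --- its proof is literally ``straightforward by combining the results of Lemma \ref{CL} and Lemma \ref{CN}'' --- and your plan paragraph describes the mechanism correctly. The problem is in your execution of the final cancellation step.

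First, in the even sector you write
\[
\fk{\diff{}{\qt_1}}{I_1}v_\ql = -\diff{B^\qa}{\qt_1}\diff{v_\ql}{t^{\qa,0}},
\]
but you have silently dropped the $\qd_{s,0}$--contribution from $F^\qg_0$ that you yourself flagged one sentence earlier. Since $\diff{v_\ql}{\qt_1}\in\hm A^1$ depends on $\qs_{\qg,0}$ (indeed $\diff{}{\qs_{\qg,0}}\diff{v_\ql}{\qt_1}$ is computed by Lemma \ref{BY}), the extra term
\[
\kk{\frac 32+\mu_\qg}\kk{\diff{\qs_{\qg,1}}{\qt_2}-\qs_{\qb,1}\eta^{\qa\qb}\kk{\frac 12+\mu_\qa}\diff{\qs_{\qg,1}}{t^{\qa,0}}}\,\diff{}{\qs_{\qg,0}}\diff{v_\ql}{\qt_1}
\]
is genuinely present in the residual for $v_\ql$. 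It is this term that cancels against $\diff{B^\qa}{\qt_1}\diff{v_\ql}{t^{\qa,0}}$, not a cohomological argument; the same pairing occurs (with an additional piece) in the odd sector. So the cancellation is a direct computation, and the paper's ``straightforward'' is meant literally.

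Second, your fallback argument --- that $\diff{B^\qa}{\qt_1}\diff{v_\ql}{t^{\qa,0}}$ and its odd analogue ``are cocycles in $\derx^0_{\geq 2}$, hence vanish by $BH^0_{\geq 2}\bigl(\derx\bigr)=0$'' --- does not apply: the derivation $\fk{\diff{}{\qt_1}}{I_1}$ has super degree $2$, not $0$, so its components live in $\hm A^{+,2}$ and $\hm A^{+,3}$, and the vanishing result for $BH^0$ is not relevant here. Likewise, your claim that the bracketed combination in the odd sector is ``the obstruction to $\diff{}{\qt_1}$ and $\diff{}{\qt_2}$ commuting on $\qs_{\ql,1}$'' is not justified; mutual commutativity of the odd flows does not by itself kill that particular combination.

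In short: the right move is the one you already identified --- substitute Lemma \ref{CN} into Lemma \ref{CL} --- but then carry out the cancellation \emph{in full}, keeping the $\qd_{s,0}$--terms in both sectors and matching them against $\diff{B^\qa}{\qt_1}$ using Lemma \ref{BY} and the definition of $B^\qa$. No cohomological input is needed beyond Lemmas \ref{CL} and \ref{CN}.
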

\begin{proof}
The proof of the proposition is straightforward by combining the results of Lemma \ref{CL} and Lemma \ref{CN}.
\end{proof}

\subsection{Vanishing of the genus one obstruction}In this subsection, we will work with the canonical coordinates of the Frobenius manifold. Let us start by recalling some useful formulae related to the canonical coordinates. For the details one may refer to the work \cite{dubrovin1996geometry,dubrovin2001normal}.

We denote by $u^1,\cdots,u^n$ the local canonical coordinates of a semisimple Frobenius manifold $M$ and denote by $(u^i;\qth_i)$ the corresponding coordinates of $\hat M$. Under this system of local coordinates, the bihamiltonian structure \eqref{biham-frob} takes the form \eqref{CW}, i.e.,
\[
P_0^{[0]} = \frac 12\int \sum_{i,j=1}^n\kk{\qd_{i,j} f^i\qth_i\qth_i^1+ A^{ij}\qth_i\qth_j},\quad P_1^{[0]} = \frac 12\int \sum_{i,j=1}^n\kk{\qd_{i,j} u^if^i\qth_i\qth_i^1+ B^{ij}\qth_i\qth_j}.
\]
Introduce functions 
\begin{equation}
\label{CP}
\psi_{i1} = \frac{1}{\sqrt{f^i}},
\end{equation}
where the sign of the square root can be arbitrarily chosen, and define
\[
\psi_{i\qa} = \frac{1}{\psi_{i1}}\diff{v_\qa}{u^i},\quad \qg_{ij} = \frac{1}{\psi_{j1}}\diff{\psi_{i1}}{u^j},\quad i\neq j.
\]
Then it is proved in \cite{dubrovin1996geometry} that
\begin{equation}
\label{CQ}
\diff{v_\qa}{u^i} = \psi_{i1}\psi_{i\qa},\quad \diff{u^i}{v^\qa} = \frac{\psi_{i\qa}}{\psi_{i1}},
\end{equation}
\begin{equation}
\label{CO}
\diff{\psi_{i\qa}}{u^k} = \qg_{ik}\psi_{k\qa},\quad i\neq k;\quad \diff{\psi_{i\qa}}{u^i} = -\sum_{k\neq i}\qg_{ik}\psi_{k\qa}.
\end{equation}
Let 
$
V_{ij} = \mu_\qa\eta^{\qa\qb}\psi_{i\qa}\psi_{j\qb},
$
then we have the identity 
\begin{equation}
\label{CR}
\qg_{ij}(u^j-u^i) = V_{ij}.
\end{equation}

Now we are going to describe the deformation problem given in Sect. \ref{AU} in terms of the canonical coordinates. Introduce the odd variables $\qth_{i,m}$ by the recursion relations \eqref{2-9}, then it follows from the the relation \eqref{2-10} that the equations \eqref{BH} and \eqref{BI} can be represented in the form
\begin{align}
\label{DR}
\diff{u^i}{s_2} =\ & a^{\qa\qb}\kk{f_{\qb,0}\diff{u^i}{t^{\qa,1}}+f_{\qa,1}\diff{u^i}{t^{\qb,0}}}+ b^{\qa\qb}\kk{f_{\qb,0}\diff{u^i}{t^{\qa,0}}+f_{\qa,0}\diff{u^i}{t^{\qb,0}}}\\\notag&+Xu^i+\mathcal L_2u^i,\\
\label{DS}
\diff{\qth_{i,0}}{s_2} =\ & a^{\qa\qb}\kk{f_{\qb,0}\diff{\qth_{i,0}}{t^{\qa,1}}+f_{\qa,1}\diff{\qth_{i,0}}{t^{\qb,0}}}+ b^{\qa\qb}\kk{f_{\qb,0}\diff{\qth_{i,0}}{t^{\qa,0}}+f_{\qa,0}\diff{\qth_{i,0}}{t^{\qb,0}}}\\\notag&+X\qth_{i,0}
+\sum_jA^j_i\qth_{j,2}+B^j_i\qth_{j,1}+\mathcal L_2\qth_{i,0}.
\end{align}

In the canonical coordinates, the Hamiltonian operator $\mathcal P_1$ of $P_1$ has the form
\[
\mathcal P_1^{ij} =u^if^i\qd_{ij}\qp_x+\frac 12 \qp_x\kk{u^if^i}\qd_{ij}+B^{ij}+Q^{ij}\qp_x^3+\cdots
\]
where $B^{ij}$ is defined in \eqref{DQ} and $Q^{ij}$ is given by the following formula (c.f., e.g., \cite{falqui2012exact})
\begin{equation}
\label{CV}
Q^{ij} = 3c_i\kk{f^i}^2\qd_{ij}+\frac 12\kk{u^i-u^j}\kk{f^j\qp_jf^ic_i-f^i\qp_if^jc_j},
\end{equation}
here $c_i$ is the $i$-th central invariant. Let us also denote the evolutions of $\qth_i$ along the flows $\diff{}{t^{\qa,p}}$ by the following equations:
\[
\diff{\qth_i}{t^{\qa,p}} = T_{\qa,p}^i\qth_i^1+\cdots+K_{\qa,p;j}^i\qth_j^3+\cdots,\quad  T_{\qa,p}\in\hm A^0_0,\  K_{\qa,p;j}^i\in\hm A^0_3,
\]
here and henceforth we omit the terms that do not contribute to the relevant computations given later. Note that the coefficients of $\qth_j^1$ of the leading term of $\diff{\qth_i}{t^{\qa,p}}$ are zero for $j\neq i$, this is due to the fact that the leading term of the flow $\diff{}{t^{\qa,p}}$ is diagonal, one may refer to \cite{dubrovin2018bihamiltonian} for details.

Let us turn to the proof of the vanishing of the genus one obstruction. Due to the closedness condition \eqref{BC}, we can  choose a derivation $X^\circ$ such that when we take $X = X^\circ$ in \eqref{DR} and \eqref{DS} we have
\[
\fk{\diff{}{\qt_0}}{\diff{}{s_2}} = 0.
\]
We require that the leading term of $X^\circ$ is given by the genus zero Virasoro symmetry, hence it follows from the result of \cite{dubrovin1999frobenius} that the actions of $X^\circ$ on $u^i$ and $\qth_i$ take the form
\begin{align*}
X^\circ u^i &= (u^i)^3+\sum_jL^i_ju^{j,2}+\cdots,\quad L^i_j\in\hm A^0_0, \\
 X^\circ \qth_i &= \sum_jM^j_i\qth_j+\sum_jJ^j_i\qth_j^2+\cdots,\quad M^j_i,J^j_i\in\hm A^0_0.
\end{align*}
\begin{Lem}
\label{CS}
We have the following identity:
\begin{align*}
J^i_i-L^i_i =&\, \frac{1}{f^i}2c_0(u^i)^2Q^{ii}-\sum_j\frac{A^j_i}{f^j}(u^i+u^j)Q^{ij}-\sum_j\frac{B^j_i}{f^j}Q^{ij}\\
&-b^{\qb,q}_{2;1,0}K^i_{\qb,q;i}-a^{\qa,p;\qb,q}_2\kk{\kk{f_{\qb,q}'}_0K^i_{\qa,p;i}+\kk{f_{\qa,p}'}_0K^i_{\qb,q;i}},
\end{align*}
where $b^{\qb,q}_{2;\qa,p}$ and $a^{\qa,p;\qb,q}_2$ are the constants that appear in the operator $\mathcal L_2$, and $\kk{f_{\qa,p}'}_0$ denotes the differential degree zero component of $f_{\qa,p}'$.
\end{Lem}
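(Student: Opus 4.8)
The plan is to read the identity off from the single relation $\fk{\diff{}{\qt_0}}{\diff{}{s_2}}=0$, which becomes available precisely once $X^\circ$ has been fixed — this is what the closedness condition \eqref{BC} buys us, together with the prescription that the leading term of $X^\circ$ be the genus zero Virasoro symmetry. Working in the canonical coordinates $(u^i;\qth_i)$, I would apply the derivation $\fk{\diff{}{\qt_0}}{\diff{}{s_2}}$ to $u^i$, expand everything by the graded Leibniz rule, and then extract the coefficient of the monomial $\qth_i^3$ carried by a differential-degree-zero function. The point is that $\diff{u^i}{\qt_0}=\vard{P_0}{\qth_i}=f^i\qth_i^1+(\text{lower differential degree})$, so that $\diff{}{s_2}\kk{\diff{u^i}{\qt_0}}=f^i\qp_x\kk{\diff{\qth_{i,0}}{s_2}}+(\text{terms not reaching }\qth_i^3)$, i.e.\ the whole right-hand side of \eqref{DS} gets multiplied by $f^i\qp_x$, whereas $\diff{}{\qt_0}\kk{\diff{u^i}{s_2}}$ differentiates \eqref{DR}; comparing the two slots and dividing by $f^i$ should give exactly the asserted formula.

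Concretely, I expect the contributions to the $\qth_i^3$-slot to split as follows. Acting with $\diff{}{\qt_0}$ on the genus one part $L^i_ju^{j,2}$ of $X^\circ u^i$ and using $\qp_x^2\kk{\diff{u^i}{\qt_0}}=f^i\qth_i^3+\cdots$ gives $f^iL^i_i$; acting with $\diff{}{s_2}$ on $f^i\qth_i^1$ through the $J^i_j\qth_j^2$ part of $X^\circ\qth_{i,0}$ gives $-f^iJ^i_i$; after division by $f^i$ these furnish the left-hand side. The remaining terms come from the other pieces of \eqref{DS} hit by $f^i\qp_x$. The $\sum_jA^j_i\qth_{j,2}+B^j_i\qth_{j,1}$ terms, once the recursion relations \eqref{2-9} are used to rewrite $\qth_{j,1}$ and $\qth_{j,2}$ in canonical coordinates, bring in the $\qp_x^3$-symbol $Q^{ij}$ of $\mathcal P_1$, given explicitly by \eqref{CV}, paired with the leading symbol $u^j$ of $\mathcal P_0^{-1}\mathcal P_1$: one application of the recursion contributes $\sum_j\frac{B^j_i}{f^j}Q^{ij}$, and the second application hidden inside $\qth_{j,2}$ contributes $\sum_j\frac{A^j_i}{f^j}(u^i+u^j)Q^{ij}$, the factor $u^i+u^j$ being the symbol of the square of the recursion operator. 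The piece $c_0\qt_0\diff{}{\qt_2}$ of $\mathcal L_2$, after $\diff{}{\qt_0}$ removes the $\qt_0$, leaves $c_0\diff{u^i}{\qt_2}$, and the same double use of the recursion produces the term $\frac{2c_0(u^i)^2Q^{ii}}{f^i}$. Finally, when $f^i\qp_x$ hits the $a^{\qa\qb}$-, $b^{\qa\qb}$- and first-order-in-$t$ parts of $\mathcal L_2^{even}$ in \eqref{DS}, it differentiates the tau-variables $f_{\qa,p}$ sitting in those terms; since $\qp_xf_{\qa,p}=\diff{f_{\qa,p}}{t^{1,0}}=\Qo_{\qa,p;1,0}=h_{\qa,p}$, they collapse to their differential-degree-zero parts $\kk{f_{\qa,p}'}_0=(h_{\qa,p})_0$, which pair with the dispersive corrections $K^i_{\qa,p;j}\qth_j^3$ of the flows $\diff{\qth_{i,0}}{t^{\qa,p}}$, and the Virasoro constants assemble into $a^{\qa,p;\qb,q}_2$; the term $b^{\qb,q}_{2;1,0}K^i_{\qb,q;i}$ likewise comes from the $t^{1,0}$-linear part of $\mathcal L_2^{even}$, where $f^i\qp_x$ acts on $t^{1,0}$ itself (the same mechanism encoded in the operator $W$ above).

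The main obstacle is the bookkeeping inside the $\qth_i^3$-slot: one has to verify that all the unwanted contributions cancel, and these come from three sources. First, the pieces still depending on the tau-variables $f_{\qa,p}$ produced when $\diff{}{\qt_0}$ and $\diff{}{s_2}$ hit the remaining explicit $f_{\qa,p}$'s in \eqref{DR} and \eqref{DS}: these I would show to vanish exactly as in the proof of Lemma \ref{CF}, by recognising the offending combination as a bihamiltonian vector field of positive differential degree and invoking the genus zero Virasoro relations of Theorem \ref{AT} together with the bihamiltonian recursion \eqref{AN}. Second, the would-be non-local ($\qp_x^{-1}$) residues and the lower-differential-degree pieces that arise when inverting $\mathcal P_0$ in solving \eqref{2-9} for $\qth_{j,1}$ and $\qth_{j,2}$: one must check that none of them deposits a degree-zero coefficient on $\qth_i^3$. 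Third, the contributions to $\qth_j^3$ with $j\neq i$, which live in different monomials and can be set aside for the present identity. All of this is finite and routine given the explicit canonical-coordinate data \eqref{CW}, \eqref{CV}, the rotation-coefficient identities \eqref{CQ}--\eqref{CR}, and the expressions for $A^j_i$, $B^j_i$, $N^\qz_\ql$ and $Z^\qz_\ql$ found in the previous subsection, but it is lengthy, and I would carry it out monomial by monomial.
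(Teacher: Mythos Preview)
Your approach is exactly the one the paper uses: the paper's proof is the single sentence ``compute the coefficient of $\qth_i^3$ of the differential degree $3$ component of the left hand side of $\fk{\diff{}{\qt_0}}{\diff{}{s_2}}u^i=0$,'' and your proposal is a detailed unpacking of precisely that computation, correctly identifying the origin of each term (the $X^\circ$ pieces give $J^i_i-L^i_i$, the recursion \eqref{2-9} applied once and twice to $B^j_i\qth_{j,1}$ and $A^j_i\qth_{j,2}$ produces the $Q^{ij}$ terms, the $c_0\qt_0\diff{}{\qt_2}$ piece of $\mathcal L_2$ gives the $2c_0(u^i)^2Q^{ii}/f^i$ term, and $f^i\qp_x$ acting through the $f_{\qa,p}$-dependent and $t^{1,0}$-linear parts of $\mathcal L_2^{even}$ yields the $K^i_{\qa,p;i}$ terms).
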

\begin{proof}
We can prove this lemma by computing the coefficient of $\qth_i^3$ of differential degree 3 component of the left hand side of the equation
\[
\fk{\diff{}{\qt_0}}{\diff{}{s_2}}u^i = 0.
\]
The lemma is proved.
\end{proof}
\begin{Rem}
Note that $c_0$ is the arbitrary constant that appears in the operator $\mathcal L_2$, and $c_1,\cdots,c_n$ are the central invariants.
\end{Rem}
According to the discussion given in Sect.\,\ref{AU}, we need to compute the cohomology class of the differential degree 3 component of the derivation $\fk{\diff{}{\qt_1}}{\mathcal C} = I_1-\fk{\diff{}{\qt_1}}{X^\circ}$. Due to Theorem \ref{AR}, it suffices to prove that the coefficient of  $\qth_i^3$ of the differential degree 3 component of
\[
I_1u^i-\fk{\diff{}{\qt_1}}{X^\circ}u^i
\]
vanishes.
By using Lemma \ref{CS} and a straightforward computation, we obtain the following lemma.
\begin{Lem}
The coefficient of  $\qth_i^3$ of the differential degree 3 component of
$
I_1u^i-\fk{\diff{}{\qt_1}}{X^\circ}u^i
$ reads
\begin{align}
\label{CT}
&\sum_{j}Q^{ij}\kk{M^i_j+A^i_j(u^i)^2+B^i_ju^i}+E^3Q^{ii}-(6+c_0)(u^i)^2Q^{ii}\\
\notag
&+3Q^{ii}b^{\qb,q}_{2;1,0}T_{\qb,q}^i+3Q^{ii}a^{\qa,p;\qb,q}_2\kk{\kk{f_{\qb,q}'}_0T_{\qa,p}^i+\kk{f_{\qa,p}'}_0T_{\qb,q}^i},
\end{align}
here $E^3$ is the cubic of the Euler vector field, which is given by
\[
E^3 = \sum_i(u^i)^3\diff{}{u^i}.
\]
\end{Lem}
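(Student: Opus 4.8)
The statement is an explicit identity, so the plan is a direct computation in which one isolates, term by term, the differential degree $3$ component of each contribution and then the coefficient of $\qth_i^3$ inside it. First I would assemble the canonical-coordinate form of all ingredients: the expression for $I_1$ obtained in the previous subsection, $I_1 u^i = \hat I_1\kk{\diff{u^i}{\qt_1}} + A^\qa\diff{u^i}{t^{\qa,1}} + B^\qa\diff{u^i}{t^{\qa,0}}$; the commutator written as $\fk{\diff{}{\qt_1}}{X^\circ}u^i = \diff{}{\qt_1}\kk{X^\circ u^i} - X^\circ\kk{\diff{u^i}{\qt_1}}$; the ansätze $X^\circ u^i = (u^i)^3 + \sum_j L^i_j u^{j,2} + \cdots$ and $X^\circ\qth_i = \sum_j M^j_i\qth_j + \sum_j J^j_i\qth_j^2 + \cdots$ for the leading part of $X^\circ$; the expansion $\diff{u^i}{\qt_1} = \vard{P_1}{\qth_i} = u^if^i\qth_i^1 + \tfrac12\qp_x(u^if^i)\qth_i + B^{ij}\qth_j + Q^{ij}\qth_j^3 + \cdots$ coming from the shape of $\mathcal P_1$ recalled above; and the leading behaviour $\diff{\qth_i}{t^{\qa,p}} = T^i_{\qa,p}\qth_i^1 + \cdots + K^i_{\qa,p;j}\qth_j^3 + \cdots$ of the $t$-flows, together with the fact that the leading term of each $t$-flow is diagonal. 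I would also use the recursion relations \eqref{2-9} to express $\qth_{j,1}$, $\qth_{j,2}$ (which enter through $\hat I_1$) back in terms of the $\qth_j^s$ modulo higher differential degree, noting that at leading order $\qth_{j,m}$ equals $(u^j)^m\qth_{j,0}$ up to total $x$-derivatives.

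The crucial bookkeeping observation is that, at differential degree $3$, a coefficient of $\qth_i^3$ can be produced only in a few controlled ways: (i) from the degree-zero, order-three coefficient $Q^{ij}$ of $\mathcal P_1$ sitting inside $\diff{u^i}{\qt_1}$, acted on by the differential-degree-zero parts of the operators involved — and the differential-degree-zero part of $X^\circ$ acting on functions is precisely the cubic of the Euler field, $E^3 = \sum_i(u^i)^3\diff{}{u^i}$, which is what produces the term $E^3 Q^{ii}$; (ii) from the order-three coefficients $K^i_{\qa,p;j}$ of the $t$-flows, which enter through $\hat I_1$ and through the $t$-flow part of $\mathcal L_2^{even}$ — since $\hat I_1$ annihilates $u^j$ and $\qth_{j,0}$ these contributions are short to list, and one checks that they cancel completely in the combination $I_1 u^i - \fk{\diff{}{\qt_1}}{X^\circ}u^i$, leaving only the genus-zero coefficients $T^i_{\qa,p}$ displayed in the last line of \eqref{CT}; and (iii) from the differential-degree-zero coefficients $M^j_i$, $J^j_i$, $L^i_j$ of $X^\circ$, and the coefficients $A^j_i$, $B^j_i$ of the $\qth_{j,\geq 1}$ corrections in \eqref{DS}, each of which acquires a factor $\qp_x^2$ or $\qp_x^3$ when $\diff{}{\qt_1}$ or $X^\circ$ acts through the hydrodynamic part of $P_1$ or through the recursion relations. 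All other terms are either of the wrong differential degree or do not contain $\qth_i^3$.

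Collecting the surviving contributions, the coefficient of $\qth_i^3$ at differential degree $3$ takes the form $\sum_j Q^{ij}\kk{M^i_j + A^i_j(u^i)^2 + B^i_j u^i} + E^3 Q^{ii} + u^if^i\kk{J^i_i - L^i_i} + (\text{a polynomial in }(u^i)^2)\,Q^{ii}$, plus the $K$-contributions already accounted for and seen to vanish. At this stage the only coefficient not determined earlier is the diagonal combination $J^i_i - L^i_i$; I would substitute for it using Lemma \ref{CS}, which trades it for the central invariants hidden in $Q^{ii}$, $A^j_i$ and $B^j_i$, and then combine the various $(u^i)^2 Q^{ii}$ terms — including the $-3(u^i)^2 Q^{ii}$ produced by $\diff{}{\qt_1}$ differentiating the leading term $(u^i)^3$ of $X^\circ u^i$ and the contribution of the odd part of the operator $\mathcal L_2$ — into the single coefficient $-(6+c_0)(u^i)^2 Q^{ii}$. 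This is exactly the expression \eqref{CT}.

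The main difficulty is purely the bookkeeping: many terms contribute at differential degree $3$, and the argument is only clean if one uses systematically (a) that $\hat I_1$ kills differential degree zero, so that $\hat I_1\kk{\diff{u^i}{\qt_1}}$ reduces to $\hat I_1$ applied to the odd-variable slots of $\diff{u^i}{\qt_1}$; (b) the diagonality of the leading term of the $t$-flows, which is what collapses the off-diagonal $K$- and $T$-coefficients where a single index $i$ is required and explains the factor $Q^{ii}$ in the last line of \eqref{CT}; (c) the precise shape of $\mathcal P_1$ in canonical coordinates, in particular that its only order-three differential-operator term is $Q^{ij}\qp_x^3$; and (d) Lemma \ref{CS} to absorb the a priori unknown $J^i_i - L^i_i$. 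I expect the cancellation of all the $K^i_{\qa,p;j}$ contributions to be the most delicate point, since it requires matching the order-three pieces coming from $\hat I_1$ and from the recursion relations against those coming from $\diff{}{\qt_1}\kk{X^\circ u^i}$ and from $\mathcal L_2^{even}$.
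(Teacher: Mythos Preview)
Your proposal is correct and matches the paper's approach: the paper simply states that the lemma follows ``by using Lemma~\ref{CS} and a straightforward computation,'' and your outline is precisely a careful unpacking of that straightforward computation, using the explicit canonical-coordinate forms of $\mathcal P_1$, $X^\circ$, $\hat I_1$, the $t$-flows, the recursion relations, and then substituting $J^i_i-L^i_i$ via Lemma~\ref{CS} to assemble \eqref{CT}.
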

\begin{Lem}
\label{CU}
We have the identities
\begin{align*}
&M^i_i+A^i_i(u^i)^2+B^i_iu^i\\ =&\, (3+c_0)(u^i)^2-\frac{1}{f^i}E^3f^i-b^{\qb,q}_{2;1,0}T_{\qb,q}^i-a^{\qa,p;\qb,q}_2\kk{\kk{f_{\qb,q}'}_0T_{\qa,p}^i+\kk{f_{\qa,p}'}_0T_{\qb,q}^i},
\end{align*}
and $M^j_i+A^j_i(u^j)^2+B^j_iu^j = 0$ for $i\neq j$.
\end{Lem}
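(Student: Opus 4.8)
The plan is to read off both identities from the leading part of the relation $\fk{\diff{}{\qt_0}}{\diff{}{s_2}}\qth_{i,0}=0$, which holds once we take $X=X^\circ$, by the choice of $X^\circ$ fixed in the previous subsection. Since $\diff{}{\qt_0}=D_{P_0^{[0]}}$ and $P_0^{[0]}$ has the form \eqref{CW}, one has $\diff{\qth_{i,0}}{\qt_0}=\vard{P_0^{[0]}}{u^i}$, whose leading component (super degree two, differential degree one) contains $\frac12\sum_j\qp_if^j\,\qth_j\qth_j^1$ together with further $\qth_j\qth_k$-type terms coming from the off-diagonal part of $P_0^{[0]}$. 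A first observation is that $M^j_i+A^j_i(u^j)^2+B^j_iu^j$ is precisely the differential degree zero coefficient of $\qth_j$ in $\diff{\qth_{i,0}}{s_2}$: the tau-type terms $a^{\qa\qb}\kk{f_{\qb,0}\diff{}{t^{\qa,1}}+f_{\qa,1}\diff{}{t^{\qb,0}}}$ and $b^{\qa\qb}\kk{\cdots}$ contribute nothing there because the leading terms of the flows $\diff{}{t^{\qa,p}}$ have differential degree one; the recursion \eqref{2-9} gives $\qth_{j,m}=(u^j)^m\qth_j+(\text{differential degree}\geq 1)$; the operator $\mathcal L_2$ produces no differential degree zero $\qth_j$-term when acting on $\qth_{i,0}$; and $\mathcal C=X-X^\circ\in\derx^0_{\geq 2}$; so only $X^\circ\qth_i=\sum_jM^j_i\qth_j+\cdots$ and $\sum_jA^j_i\qth_{j,2}+B^j_i\qth_{j,1}$ contribute.

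The main step is then to expand $\diff{}{\qt_0}\diff{\qth_{i,0}}{s_2}=\diff{}{s_2}\vard{P_0^{[0]}}{u^i}$ and to compare the coefficients of $\qth_i\qth_i^1$ and of $\qth_j\qth_j^1$ (with $j\neq i$) in the super degree two, differential degree one component. On the left, $\diff{}{\qt_0}$ applied to $\kk{M^j_i+A^j_i(u^j)^2+B^j_iu^j}\qth_j$ produces $\kk{M^j_i+A^j_i(u^j)^2+B^j_iu^j}\vard{P_0^{[0]}}{u^j}$, which feeds $\frac12\qp_jf^k\kk{M^j_i+A^j_i(u^j)^2+B^j_iu^j}$ into the coefficient of $\qth_k\qth_k^1$, together with the derivative-of-coefficient terms $\qp_l\kk{\cdots}\vard{P_0^{[0]}}{\qth_l}\,\qth_j$ and with the contribution of $\diff{}{\qt_0}$ acting on the higher differential degree part of $\diff{\qth_{i,0}}{s_2}$, the latter being reorganized by the recursion relations \eqref{2-9} exactly as in the proof of Lemma \ref{lem2-8}. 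On the right, $\diff{}{s_2}$ acts on $\frac12\qp_if^j\,\qth_j\qth_j^1$: its leading action on differential-degree-zero polynomials is the cubic $E^3$ of the Euler vector field (because $X^\circ u^i=(u^i)^3+\cdots$ and the remaining pieces act through the genus-zero flows), its action on $\qth_j$ again reproduces $\sum_k\kk{M^k_j+A^k_j(u^k)^2+B^k_ju^k}\qth_k$, and its action on $\qth_j^1$ brings in the commutator $\fk{\qp_x}{\diff{}{s_2}}$, whose leading part is---as in Lemma \ref{CM}---the source of the terms $b^{\qb,q}_{2;1,0}T^i_{\qb,q}$ and $a^{\qa,p;\qb,q}_2\kk{\kk{f_{\qb,q}'}_0T^i_{\qa,p}+\kk{f_{\qa,p}'}_0T^i_{\qb,q}}$, through the characteristic speeds $T^i_{\qa,p}$ of the flows $\diff{}{t^{\qa,p}}$. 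Comparing the coefficients of $\qth_j\qth_j^1$ for $j\neq i$ first, and using the canonical-coordinate identities \eqref{CQ}, \eqref{CO} and \eqref{CR}---in particular the diagonality of the leading flows---one obtains that the off-diagonal combinations $M^j_i+A^j_i(u^j)^2+B^j_iu^j$ vanish; substituting this back, the coefficient of $\qth_i\qth_i^1$ then yields the displayed diagonal formula, the term $-\frac1{f^i}E^3f^i$ arising from the $f^i$-dependence of $\vard{P_0^{[0]}}{u^i}$ combined with the $E^3$-leading action of $X^\circ$.

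The hard part will be the bookkeeping: $\diff{}{s_2}\vard{P_0^{[0]}}{u^i}$ unfolds into many terms generated by the $a^{\qa\qb}$- and $b^{\qa\qb}$-pieces as well as by all of $\mathcal L_2$ and $X^\circ$, and one must check that each term not appearing in the statement either has differential degree $\geq 2$, or cancels against its counterpart on the left, or combines---via the genus-zero Virasoro identities of \cite{dubrovin1999frobenius}---into the $E^3$- and $T^i$-terms; pinning down the constant $3+c_0$, in particular, requires following the $c_0$-dependence both through $\mathcal L_2$ and through the expression $M^\qz_\ql=\kk{\frac52+\mu_\ql+c_0}\qd^\qz_\ql$ of Proposition \ref{DT}. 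As with Lemmas \ref{CM}--\ref{CN}, once the leading-order structure above is settled these are lengthy but purely mechanical calculations.
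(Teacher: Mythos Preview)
Your approach differs from the paper's, and there is a genuine gap in your argument.

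The paper extracts both identities from the coefficient of $\qth_j^1$ in the differential degree $1$ component of
\[
\fk{\diff{}{\qt_0}}{\diff{}{s_2}}u^i=0,
\]
i.e.\ from the \emph{even} equation, not from the odd one. The reason this choice matters is precisely your ``first observation''. You claim that the recursion \eqref{2-9} gives $\qth_{j,m}=(u^j)^m\qth_j+(\text{differential degree}\geq 1)$ and hence that $M^j_i+A^j_i(u^j)^2+B^j_iu^j$ is the differential degree zero coefficient of $\qth_j$ in $\diff{\qth_{i,0}}{s_2}$. This is not correct: the relations \eqref{2-9} only constrain the $x$-derivatives, giving at leading order
\[
\qth_{j,m}^1=(u^j)^m\,\qth_j^1+(\text{higher degree and off-diagonal terms}),
\]
while the undifferentiated $\qth_{j,m}$ remain genuinely independent generators of $\hm A^+$. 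So the differential degree zero coefficient of $\qth_j$ in $\diff{\qth_{i,0}}{s_2}$ is just $M^j_i$, and the terms $A^j_i\qth_{j,2}$, $B^j_i\qth_{j,1}$ stay non-local; your proposed extraction of the combination $M^j_i+A^j_i(u^j)^2+B^j_iu^j$ at that stage does not work.

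In the paper's computation the combination appears for the right reason. In $\diff{}{s_2}\bigl(\diff{u^i}{\qt_0}\bigr)$ the leading piece $f^i\qth_i^1$ of $\diff{u^i}{\qt_0}$ forces you to evaluate $\diff{}{s_2}\qth_i^1$, which contains $\qp_x\bigl(\sum_j A^j_i\qth_{j,2}+B^j_i\qth_{j,1}+M^j_i\qth_j\bigr)$; now the recursion \emph{does} apply, converting $\qth_{j,2}^1$ and $\qth_{j,1}^1$ into $(u^j)^2\qth_j^1$ and $u^j\qth_j^1$ at leading order, and the coefficient of $\qth_j^1$ is exactly $M^j_i+A^j_i(u^j)^2+B^j_iu^j$ times $f^i$ (for $j=i$; the off-diagonal recursion terms feed into the $j\neq i$ case). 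Balancing this against the contributions from $\diff{}{s_2}f^i$, from the commutator $[\qp_x,\diff{}{s_2}]$ (which produces the $b^{\qb,q}_{2;1,0}T^i_{\qb,q}$ and $a^{\qa,p;\qb,q}_2(\cdots)$ terms), and from $\diff{}{\qt_0}\bigl(\diff{u^i}{s_2}\bigr)$ yields the stated identities. Your outline of how the $E^3$, $c_0$, and $T^i_{\qa,p}$ terms arise is otherwise on target; the fix is to move the whole computation to the $u^i$-equation and read off the coefficient of $\qth_j^1$ rather than of $\qth_j$.
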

\begin{proof}
We can prove this lemma by computing the coefficient of $\qth_j^1$ of the differential degree 1 component of the left hand side of the equation
\[
\fk{\diff{}{\qt_0}}{\diff{}{s_2}}u^i = 0.
\]
The lemma is proved.
\end{proof}
In order to prove the vanishing of the genus one obstruction, we need to verify that the expression \eqref{CT} vanishes. To this end, we only need to check, due to Lemma \ref{CU} and the expression \eqref{CV} for $Q^{ii}$, the following identity:
\[
E^3f^i+2f^i\kk{b^{\qb,q}_{2;1,0}T_{\qb,q}^i+a^{\qa,p;\qb,q}_2\kk{\kk{f_{\qb,q}'}_0T_{\qa,p}^i+\kk{f_{\qa,p}'}_0T_{\qb,q}^i}} = 3(u^i)^2f^i.
\]
\begin{Prop}
For $m\geq -1$, we have
\[
E^{m+1}f^i+2f^i\kk{b^{\qb,q}_{m;1,0}T_{\qb,q}^i+a^{\qa,p;\qb,q}_m\kk{\kk{f_{\qb,q}'}_0T_{\qa,p}^i+\kk{f_{\qa,p}'}_0T_{\qb,q}^i}} = (1+m)(u^i)^mf^i,
\]
here $E^{m+1}$ is the $(m+1)$-th power of the Euler vector field $E$ which is given by
\[
E^{m+1} = \sum_i(u^i)^{m+1}\diff{}{u^i}.
\]
\end{Prop}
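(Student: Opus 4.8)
The identity involves only the genus-zero structure of $M$, and the plan is to read it off from the fact that the genus-zero Virasoro symmetry $\diff{}{s_m^{even}}$ of Theorem~\ref{AT} acts on the dispersionless Principal Hierarchy with leading part governed by the $(m+1)$-st power of the Euler vector field. First I would rewrite the left-hand side in purely dispersionless terms: since $\qp_x f_{\qa,p}=\Qo_{\qa,p;1,0}=h_{\qa,p}$ one has $(f'_{\qa,p})_0=h^{[0]}_{\qa,p}$, and since the leading term of the Principal Hierarchy is diagonal in the canonical coordinates, $T^i_{\qa,p}$ is exactly the characteristic velocity $V^i_{\qa,p}$ of $\diff{}{t^{\qa,p}}$ (so $\diff{u^i}{t^{\qa,p}}=V^i_{\qa,p}u^{i,1}+(\text{higher differential degree})$ and $V^i_{1,0}=1$). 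Hence the left-hand side equals $E^{m+1}f^i+2f^i C^i_m$, where
\[
C^i_m=b^{\qb,q}_{m;1,0}V^i_{\qb,q}+a^{\qa,p;\qb,q}_m\kk{h^{[0]}_{\qb,q}V^i_{\qa,p}+h^{[0]}_{\qa,p}V^i_{\qb,q}};
\]
using the tau-structure \eqref{BK} and the explicit shape of $L_m^{even}$, $\mathcal L_m^{even}$ (see \eqref{L-1}, \eqref{L2}, \eqref{AS}) one checks that $E^{m+1}f^i+2f^iC^i_m$ is $f^i$ times the coefficient of $u^{i,1}$ in the differential-degree-one part of $\diff{u^i}{s_m^{even}}$, the remaining data in $L_m^{even}$ (the constants $c_{m;\qa,p;\qb,q}$, the general $b_m$ terms, and the matrices $R$, $R_{k,l}$) entering this coefficient only through $E^{m+1}$ and $C^i_m$. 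So the assertion is equivalent to saying that the leading term of $\diff{}{s_m^{even}}$ acts on $u^i$ with characteristic velocity $(1+m)(u^i)^m$, modulo the flows of the Principal Hierarchy.

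Next I would prove this by induction on $m$. The Euler powers $E^{m+1}=\sum_i(u^i)^{m+1}\diff{}{u^i}$ satisfy $\fk{E^{k+1}}{E^{l+1}}=(l-k)E^{k+l+1}$, which is exactly the bracket of the flows $\diff{}{s_m^{even}}$ in Theorem~\ref{AT}, so the bracket relations among the $L_m^{even}$ (equivalently, the recursion for the constants $a_m$, $b_m$ recorded in \cite{dubrovin1999frobenius}) propagate the identity once it is known on generators. For $m=-1$, \eqref{L-1} gives $a_{-1}=0$ and $b^{\qb,q}_{-1;1,0}=0$, so the identity reduces to $E^0f^i=\sum_k\diff{f^i}{u^k}=0$; this holds because $E^0$ is the unit vector field $e$ of $M$ in canonical coordinates, $\mathcal L_e\eta=0$, and $f^i=\eta^{ii}$. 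For $m=0$ one computes $a_0=0$ and $b^{1,0}_{0;1,0}=\tfrac12+\mu_1=\tfrac{1-d}{2}$ from the formula for $L_0^{even}$, so with $V^i_{1,0}=1$ the identity becomes $Ef^i+(1-d)f^i=f^i$, i.e. $Ef^i=d\,f^i$, which is the quasi-homogeneity $\mathcal L_E\eta=(2-d)\eta$. The case $m=2$ is the identity actually needed to kill the genus-one obstruction and is verified directly from \eqref{L2}, \eqref{AS}; since $L_{-1}^{even}$ and $L_2^{even}$ generate the whole family $\{L_m^{even}:m\ge -1\}$, all remaining values of $m$ follow.

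\textbf{Main obstacle.} The substantive point is the reduction of the first paragraph: tracing, through the tau-cover \eqref{BK} and the operator $L_m^{even}$, that the only contributions to the coefficient of $u^{i,1}$ in the differential-degree-one part of $\diff{u^i}{s_m^{even}}$ are $E^{m+1}f^i$ and $2f^iC^i_m$ — so that the $R$-, $R_{k,l}$- and $c_{m;\qa,p;\qb,q}$-terms do not leak in — and, equivalently, identifying the leading characteristic velocity of the genus-zero Virasoro flow with $(1+m)(u^i)^m$. This uses the explicit construction of the $L_m^{even}$ in \cite{dubrovin1999frobenius} and the recursion relations \eqref{CO}, \eqref{CR} for the rotation coefficients under the calibration-to-canonical-coordinate change, which control both $V^i_{\qa,p}$ and $h^{[0]}_{\qa,p}$. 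The bracket/recursion bookkeeping in the induction, and the base cases, are then routine.
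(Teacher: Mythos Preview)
Your route differs from the paper's, which is a short generating-function argument: the paper packages all three terms into series in $\lambda$,
\[
\sum_{m\ge -1}\frac{E^{m+1}f^i}{\lambda^{m+2}}=\sum_j\frac{\partial_{u^j} f^i}{\lambda-u^j},\qquad
\sum_{m\ge -1}\frac{(1+m)(u^i)^m f^i}{\lambda^{m+2}}=\frac{f^i}{(\lambda-u^i)^2},
\]
and for the $C^i_m$-term quotes a closed form already computed in Lemma~3.10.18 and Theorem~3.10.29 of \cite{dubrovin2001normal}; the identity then reduces to an elementary check with \eqref{CP}, \eqref{CO}, \eqref{CR}. No induction is used.

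Your base cases $m=-1,0$ are correct, and the rewriting $(f'_{\qa,p})_0=h^{[0]}_{\qa,p}$, $T^i_{\qa,p}=V^i_{\qa,p}$ is fine. The gap is the induction step. The bracket $[L_k^{even},L_l^{even}]=(k-l)L_{k+l}^{even}$ does give recursions for the full arrays $a_m^{\qa,p;\qb,q}$ and $b_{m;\qa,p}^{\qb,q}$, but the recursion for $b_{m;1,0}^{\qb,q}$ at the \emph{specific} lower index $(1,0)$ involves all of $b_k$, $b_l$ (not just their $(1,0)$-rows) together with $a_k,a_l,c_k,c_l$; it is not a closed recursion in the combinations $C^i_m$ alone. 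To close it you must feed in the bihamiltonian recursion for the velocities $V^i_{\qa,p}$ and the quasi-homogeneity of $h^{[0]}_{\qa,p}$ --- precisely the structure that the generating-function computation in \cite{dubrovin2001normal} encodes. So the step you call ``routine'' is where the content lives. Your ``main obstacle'' paragraph essentially concedes this: the claimed identification of $E^{m+1}f^i+2f^iC^i_m$ with ``$f^i$ times the coefficient of $u^{i,1}$ in $\diff{u^i}{s_m^{even}}$'' is neither made precise (that coefficient depends on the nonlocal $f_{\qa,p}$ and the explicit times $t^{\qa,p}$, not just on functions of $u$) nor proved, yet your bracket-propagation argument tacitly relies on it. Without establishing that reduction, the inductive passage from $m\in\{-1,0,2\}$ to all $m$ does not go through.
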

\begin{proof}
Let us consider the following generating functions:
\begin{align*}
&\sum_{m\geq -1}\frac{1}{\ql^{m+2}}E^{m+1}f^i = \sum_j\frac{1}{\ql-u^j}\diff{f^i}{u^j};\\
&\sum_{m\geq -1}\frac{1}{\ql^{m+2}}(1+m)(u^i)^mf^i = \frac{f^i}{(\ql-u^i)^2};\\
&\sum_{m\geq -1}\frac{1}{\ql^{m+2}}\kk{b^{\qb,q}_{m;1,0}T_{\qb,q}^i+a^{\qa,p;\qb,q}_m\kk{\kk{f_{\qb,q}'}_0T_{\qa,p}^i+\kk{f_{\qa,p}'}_0T_{\qb,q}^i}}\\ 
=&\, \frac 12\frac{1}{(\ql-u^i)^2}+\sum_{j\neq i}\frac{\psi_{j1}}{\psi_{i1}}\frac{V_{ij}}{(\ql-u^i)(\ql-u^j)},
\end{align*}
where the last generating function was computed in \cite{dubrovin2001normal}, one may refer to Lemma 3.10.18 and the proof of Theorem 3.10.29 of \cite{dubrovin2001normal} for details. Then the proposition is proved by using \eqref{CP}, \eqref{CO} and \eqref{CR}.
\end{proof}
Finally we have the following theorem.
\begin{Th}
\label{DD}
There exists a unique $X\in\derx^0$ such that  the following flow defines a symmetry of the deformed Principal Hierarchy which is a deformation of the genus zero Virasoro symmetry $\diff{}{s_2}$ of the Principal Hierarchy:
\begin{align}
\label{CX}
\diff{v_\ql}{s_2} =&\,  a^{\qa\qb}\kk{f_{\qb,0}\diff{v_\ql}{t^{\qa,1}}+f_{\qa,1}\diff{v_\ql}{t^{\qb,0}}}+ b^{\qa\qb}\kk{f_{\qb,0}\diff{v_\ql}{t^{\qa,0}}+f_{\qa,0}\diff{v_\ql}{t^{\qb,0}}}\\\notag&+Xv_{\ql}+\mathcal L_2v_\ql,\\
\label{CY}
\diff{\qs_{\ql,0}}{s_2} =&\,  a^{\qa\qb}\kk{f_{\qb,0}\diff{\qs_{\ql,0}}{t^{\qa,1}}+f_{\qa,1}\diff{\qs_{\ql,0}}{t^{\qb,0}}}+ b^{\qa\qb}\kk{f_{\qb,0}\diff{\qs_{\ql,0}}{t^{\qa,0}}+f_{\qa,0}\diff{\qs_{\ql,0}}{t^{\qb,0}}}\\\notag&+X\qs_{\ql,0}
+\kk{\frac 52+c_0+\mu_\ql}\qs_{\ql,2}+N^\qz_{\ql}\qs_{\qz,1}+\mathcal L_2\qs_{\ql,0},
\end{align}
where $N^\qz_{\ql}\in\hm A^0_{\geq 0}$ is the differential polynomial described in Lemma \ref{DT}.
\end{Th}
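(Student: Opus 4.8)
The plan is to prove this statement by putting together the results established above along the five-step scheme of Section \ref{AU}, the only extra ingredients being the cohomology vanishing statements of Theorem \ref{AR}. First I would fix a derivation $X^\circ\in\derx$ with $\fk{\diff{}{\qt_0}}{X^\circ}=I_0$ whose differential degree zero component is the genus zero Virasoro derivation: such an $X^\circ$ exists because $I_0$ is $D_{P_0}$-closed (the closedness condition $\fk{\diff{}{\qt_0}}{I_0}=0$ was checked above) and $H^1_{>0}\bigl(\derx,P_0^{[0]}\bigr)=0$ by Theorem \ref{AR}(1), and its positive-degree components are then essentially forced by $H^0_{>0}\bigl(\derx,P_0^{[0]}\bigr)=0$.

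Next I would set $\mathcal C=X-X^\circ\in\derx^0_{\ge2}$, so that the equations \eqref{BD} for $X$ become the pair \eqref{BE} for $\mathcal C$. From the first equation in \eqref{BE} together with $H^0_{>0}\bigl(\derx\bigr)=0$ one obtains a unique $\mathcal T\in\derx^{-1}_{\ge1}$ with $\mathcal C=\fk{\diff{}{\qt_0}}{\mathcal T}$, and the remaining requirement on $X$ reduces to \eqref{BG}. I would then solve \eqref{BG} for $\mathcal T$ recursively in the differential degree. In the lowest relevant degree $3$ the obstruction is precisely the cohomology class of the degree $3$ part of $I_1-\fk{\diff{}{\qt_1}}{X^\circ}$ in $BH^1_3\bigl(\derx,P_0^{[0]},P_1^{[0]}\bigr)$, which vanishes by the computation carried out above; for the degrees $\ge4$ the closedness conditions \eqref{BF} make the right-hand side of \eqref{BG} a cocycle of the double complex, and $BH^1_{\ge4}\bigl(\derx\bigr)=0$ (Theorem \ref{AR}(3)) allows the recursion to continue indefinitely. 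This produces $\mathcal T$, hence $\mathcal C$, hence $X=X^\circ+\mathcal C$; together with Proposition \ref{DT}, which identifies the coefficients $M^\qz_\ql=\kk{\frac52+c_0+\mu_\ql}\qd^\qz_\ql$ and $N^\qz_\ql$, the resulting flow takes the form \eqref{CX}--\eqref{CY} and satisfies $\fk{\diff{}{s_2}}{\diff{}{\qt_0}}=\fk{\diff{}{s_2}}{\diff{}{\qt_1}}=0$.

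It then remains to deduce that $\diff{}{s_2}$ is a symmetry of the deformed Principal Hierarchy and that $X$ is unique. For the symmetry, $Y_{\qa,p}:=\fk{\diff{}{s_2}}{\diff{}{t^{\qa,p}}}$ commutes with $D_{P_0}=\diff{}{\qt_0}$ and $D_{P_1}=\diff{}{\qt_1}$ by the graded Jacobi identity, since each $\diff{}{t^{\qa,p}}=D_{X_{\qa,p}}$ does; hence $Y_{\qa,p}$ is a cocycle of $\bigl(\derx,D_{P_0},D_{P_1}\bigr)$. By the locality conditions \eqref{AZ} it restricts to $\derx^0$, and its differential degree $0$ and $1$ parts vanish because the hydrodynamic flows commute and the low-degree part of $\diff{}{s_2}$ is the genus zero Virasoro symmetry; thus $Y_{\qa,p}\in\derx^0_{\ge2}\cap\ker D_{P_0}\cap\ker D_{P_1}=BH^0_{\ge2}\bigl(\derx,P_0^{[0]},P_1^{[0]}\bigr)$, which is $0$ by Theorem \ref{AR}(2), so $Y_{\qa,p}=0$ for all $\qa,p$. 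Uniqueness follows the same way: two admissible choices of $X$ with the prescribed leading term differ by an element of $\derx^0_{\ge2}$ annihilated by both $D_{P_0}$ and $D_{P_1}$, hence by an element of $BH^0_{\ge2}\bigl(\derx\bigr)=0$.

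I expect the genuine difficulties to lie not in this final assembly but in the three ingredients proved above: the locality of $I_0$, $I_1$ and of the commutators \eqref{AZ}; the closedness conditions \eqref{BC} and \eqref{BF}; and, hardest of all, the vanishing of the genus one obstruction, which forced a passage to canonical coordinates and the identity expressing $E^{m+1}f^i$ in terms of the Virasoro structure constants of $M$. Within the present proof the one point needing care is the differential-degree bookkeeping: one must be sure that the recursion for $\mathcal T$ never meets a nonzero cohomology class beyond degree $3$, and that $Y_{\qa,p}$ genuinely falls in differential degree $\ge2$ so that the vanishing $BH^0_{\ge2}=0$ can be invoked — both of which hinge precisely on the locality estimates already in hand.
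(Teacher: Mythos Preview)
Your proposal is correct and follows essentially the same route as the paper: the theorem is assembled from the locality results of Sect.~4.1, the closedness conditions of Sect.~4.2, and the vanishing of the genus one obstruction in Sect.~4.3, exactly along the five-step scheme of Sect.~\ref{AU}, with the cohomology vanishing statements of Theorem~\ref{AR} supplying existence of $X^\circ$, the recursive construction of $\mathcal T$, and both the symmetry property and uniqueness of $X$. Your identification $\derx^0_{\ge2}\cap\ker D_{P_0}\cap\ker D_{P_1}=BH^0_{\ge2}$ is legitimate because $\derx^{-2}=0$, so there are no coboundaries in super degree $0$.
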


\subsection{Lifting to the tau-covers}
In order to lift the symmetry \eqref{CX} to the tau-cover of the deformed Principal Hierarchy, we first need to rewrite \eqref{CX} in terms of the normal coordinates $w^1,\cdots, w^n$ of $M$. We start by proving the following lemmas.
\begin{Lem}
\label{CZ}
Let $g_\ql\in\hm A^0_{\geq 1}$ be the differential polynomials given in \eqref{AM} which satisfy the identities
\[
h_{\ql,0} = v_\ql+\qp_xg_\ql,\quad \ql = 1,\cdots,n.
\]
Then we have:
\begin{enumerate}
\item $g_1 = 0$.
\item For any $\qa = 1,\cdots,n$ and $p\geq 0$,
\[
\Qo_{\qa,p;\ql,0} = \vard{h_{\qa,p+1}}{v^\ql}+\diff{g_\ql}{t^{\qa,p}}.
\]
\end{enumerate}
\end{Lem}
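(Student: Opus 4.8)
The plan is to establish item (2) first — it is the substantive part, and it does not rely on item (1) — and then to read off item (1) as a normalization statement.

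As a preliminary for (2), I would record the Hamiltonian form of the deformed flows. From $X_{\qa,p}=-[H_{\qa,p},P_0]$ (equation \eqref{AD}), $H_{\qa,p}=\int h_{\qa,p+1}$ (Proposition \ref{AF}), and $P_0=\frac12\int\eta^{\mu\nu}\qs_\mu\qs_\nu^1$, a short computation of the Schouten--Nijenhuis bracket — using that $H_{\qa,p}$ has super degree $0$, so $\vard{H_{\qa,p}}{\qs_\mu}=0$ — gives $[H_{\qa,p},P_0]=\int\eta^{\mu\nu}\bigl(\vard{h_{\qa,p+1}}{v^\mu}\bigr)\qs_\nu^1$, hence, after one integration by parts and using $X_{\qa,p}=-[H_{\qa,p},P_0]$,
\[
\diff{v_\ql}{t^{\qa,p}}=\qp_x\vard{h_{\qa,p+1}}{v^\ql},\qquad \qa=1,\dots,n,\ p\geq0.
\]
Now for (2) I would show the two sides of the asserted identity have the same $\qp_x$-derivative. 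By Theorem \ref{AI}(1), $\qp_x\Qo_{\qa,p;\ql,0}=\diff{h_{\ql,0}}{t^{\qa,p}}$; on the other hand, the preliminary identity together with $h_{\ql,0}=v_\ql+\qp_x g_\ql$ (formula \eqref{AM}) and $\fk{\diff{}{t^{\qa,p}}}{\qp_x}=0$ gives
\[
\qp_x\Bigl(\vard{h_{\qa,p+1}}{v^\ql}+\diff{g_\ql}{t^{\qa,p}}\Bigr)=\diff{v_\ql}{t^{\qa,p}}+\qp_x\diff{g_\ql}{t^{\qa,p}}=\diff{h_{\ql,0}}{t^{\qa,p}}.
\]
Therefore $\Qo_{\qa,p;\ql,0}-\vard{h_{\qa,p+1}}{v^\ql}-\diff{g_\ql}{t^{\qa,p}}$ lies in $\ker\qp_x$, that is, it is a function of $v$ alone.

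To finish (2) I would match differential-degree-$0$ parts of this remainder. Since $\Qo_{\qa,p;\ql,0}$ is a deformation of $\Qo^{[0]}_{\qa,p;\ql,0}=\qp_\ql h^{[0]}_{\qa,p+1}$, its degree-$0$ part is $\qp_\ql h^{[0]}_{\qa,p+1}$; since the variational derivative preserves differential degree, the degree-$0$ part of $\vard{h_{\qa,p+1}}{v^\ql}$ equals $\qp_\ql h^{[0]}_{\qa,p+1}$ too; and since $g_\ql\in\hm A^0_{\geq1}$ while the flow $\diff{}{t^{\qa,p}}$ raises the differential degree, $\diff{g_\ql}{t^{\qa,p}}$ has no degree-$0$ component. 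The three contributions cancel, so the remainder vanishes and (2) holds. For (1), Proposition \ref{AF} and \eqref{AK} give $\int h_{1,0}=H_{1,-1}=\int v_1$, so $h_{1,0}=v_1+\qp_x g_1$ with $g_1\in\hm A^0_{\geq1}$; that $g_1=0$ then follows from the normalization of the Hamiltonian densities used in \cite{dubrovin2018bihamiltonian}, namely $h_{\qa,p}=D_ZH_{\qa,p}$ with $Z=\int\qs_1$ (so $D_Z$ is differentiation by $v^1$) together with $D_{X_{1,0}}=\qp_x$, which forces the chosen density $h_{1,0}$ to equal $v_1$ with no $\qp_x$-exact tail.

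The only genuinely delicate point is the degree-$0$ bookkeeping in (2): upgrading ``the $\qp_x$-derivatives agree'' to ``the differential polynomials agree'' rests on the precise sense in which $\Qo_{\qa,p;\ql,0}$ deforms $\Qo^{[0]}_{\qa,p;\ql,0}$ and on the fact that the Hamiltonian flows raise, rather than lower, the differential degree. Everything else is a routine unwinding of the definitions of $H_{\qa,p}$, $h_{\qa,p}$, $g_\ql$ and the Schouten--Nijenhuis bracket; no input from bihamiltonian cohomology is needed for this lemma.
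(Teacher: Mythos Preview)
Your proof is correct and follows essentially the same route as the paper's: for (1) both you and the paper invoke $D_{X_{1,0}}=\qp_x$ together with the definition \eqref{AE} (the paper making this sharper by first deriving $v_\qa=\vard{H_{1,0}}{v^\qa}$ and then setting $\qa=1$), and for (2) you spell out in detail what the paper calls ``obvious due to Theorem~\ref{AI}.'' One small slip, harmless for the argument: $\ker\qp_x\cap\hm A^0$ consists of constants only, not of arbitrary functions of $v$---your degree-$0$ matching then simply says that this constant is zero.
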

\begin{proof}
Due to Proposition \ref{AF}, we have $D_{X_{1,0}} = \qp_x$ and $X_{1,0} = -[H_{1,0},P_0]$, from which it follows that
\[
v_\qa  = \vard{H_{1,0}}{v^\qa}.
\]
By taking $\qa = 1$, we obtain the first property by using the definition \eqref{AE}. The second one is obvious due to Theorem \ref{AI}. The lemma is proved.
\end{proof}
\begin{Lem}
There exists a derivation $X^\circ\in\derx$ such that its leading term is given by the Virasoro symmetry $\diff{}{s_2}$ of the Principal Hierarchy, and it satisfies the equations $\fk{\diff{}{\qt_0}}{X^\circ} = I_0$ and 
\begin{align}
\label{DA}
X^\circ h_{\ql,0}+\hat I_1h_{\ql,0} =&\, a^{\qa\qb}\kk{f_{\qa,1}'\Qo_{\qb,0;\ql,0}+f_{\qb,0}'\Qo_{\qa,1;\ql,0}+\frac{\qp^2\Qo_{\qa,1;\qb,0}}{\qp t^{\ql,0}\qp t^{1,0}}}\\\notag
&+b^{\qa\qb}\kk{f_{\qa,0}'\Qo_{\qb,0;\ql,0}+f_{\qb,0}'\Qo_{\qa,0;\ql,0}+\frac{\qp^2\Qo_{\qa,0;\qb,0}}{\qp t^{\ql,0}\qp t^{1,0}}}\\\notag
&+b^{\qb,q}_{2;1,0}\Qo_{\qb,q;\ql,0}+b^{\qb,q}_{2;\ql,0}\Qo_{\qb,q;1,0}+2c_{2;\ql,0;1,0}.
\end{align}
\end{Lem}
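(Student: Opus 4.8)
The plan is to obtain $X^\circ$ in two stages: a cohomological argument that produces a derivation solving $\fk{\diff{}{\qt_0}}{X^\circ}=I_0$ with the prescribed leading term, followed by an adjustment inside the remaining cohomological freedom that forces the normalization \eqref{DA}. For the first stage, note that by the closedness condition \eqref{BC} the derivation $I_0\in{\derx}^1_{\geq 1}$ is a $D_{P_0}$-cocycle of positive differential degree; since $P_0=P_0^{[0]}$, part~(1) of Theorem \ref{AR} gives $H^1_{>0}\bigl(\derx,P_0\bigr)=0$, so there is $\tilde X\in{\derx}^0_{\geq 0}$ with $\fk{\diff{}{\qt_0}}{\tilde X}=I_0$, and we may arrange, using the genus-zero theory, that the leading term of $\tilde X$ equals the genus-zero Virasoro symmetry $\diff{}{s_2}$ of Theorem \ref{BA}. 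Once the leading term is fixed, any two such derivations differ by an element of ${\derx}^0_{\geq 1}\cap\ker D_{P_0}$, which by $H^0_{>0}\bigl(\derx,P_0\bigr)=0$ equals $D_{P_0}\bigl({\derx}^{-1}\bigr)$, where here $D_{P_0}(Z)$ denotes $\fk{\diff{}{\qt_0}}{Z}$; so the admissible choices of $X^\circ$ form the coset $\tilde X+D_{P_0}\bigl({\derx}^{-1}\bigr)$.

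For the second stage I would reduce \eqref{DA} to a single identity in $\hm F^0$. If $Z\in{\derx}^{-1}$ then $Z(v^\qa)=0$, and using $\fk{Z}{\qp_x}=0$, the relation \eqref{AJ} and the expansion $\Phi_{\ql,0}^0=\qs_{\ql,0}+\diff{g_\ql}{\qt_0}$ from \eqref{BL} one obtains
\[
D_{P_0}(Z)\,h_{\ql,0}=Z\Bigl(\diff{h_{\ql,0}}{\qt_0}\Bigr)=\qp_x\bigl(Z\,\Phi_{\ql,0}^0\bigr).
\]
Since $g_\ql\in\hm A^0_{\geq 1}$, the assignment $(Z(\qs_{\qz,0}))_\qz\mapsto(Z\,\Phi_{\ql,0}^0)_\ql$ is triangular with identity leading part in the differential degree, hence invertible, so $\bigl\{(D_{P_0}(Z)h_{\ql,0})_\ql:Z\in{\derx}^{-1}\bigr\}=(\qp_x\hm A^0)^{\oplus n}$. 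Replacing $\tilde X$ by $\tilde X+D_{P_0}(Z)$ changes the left-hand side of \eqref{DA} exactly by $D_{P_0}(Z)h_{\ql,0}\in\qp_x\hm A^0$ and keeps the leading term fixed, so \eqref{DA} can be solved for $X^\circ$ — recursively in the differential degree, determining $Z$ — if and only if the discrepancy
\[
\Theta_\ql:=\tilde X h_{\ql,0}+\hat I_1 h_{\ql,0}-\bigl(\text{right-hand side of \eqref{DA}}\bigr)
\]
lies in $\qp_x\hm A^0$, that is $\int\Theta_\ql=0$ in $\hm F^0$.

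The main obstacle is this last vanishing. Its differential degree $0$ part is automatic: there $h_{\ql,0}=\eta_{\ql\qb}v^\qb$, $\hat I_1$ annihilates $v^\qb$, $\tilde X$ acts through the genus-zero Virasoro symmetry, and the two-point functions $\Qo$ and densities $h$ in \eqref{DA} degenerate to their genus-zero values, so \eqref{DA} reduces to the normalization of the operators $L_m$ proved in \cite{dubrovin1999frobenius}; hence $\Theta_\ql\in\hm A^0_{\geq 1}$. To obtain $\int\Theta_\ql=0$ I would use $\int h_{\ql,0}=H_{\ql,-1}=\int\eta_{\ql\qb}v^\qb$ (Proposition \ref{AF} and \eqref{AK}), which gives $\int\tilde X h_{\ql,0}=\tilde X(H_{\ql,-1})=\int\eta_{\ql\qb}\tilde X(v^\qb)$, together with the explicit expression \eqref{CX} for $\diff{v_\ql}{s_2}$ from Theorem \ref{DD}, the identities $\qp_x f_{\qa,p}=h_{\qa,p}$ and $\qp_x\Qo_{\qa,p;\qb,q}=\diff{h_{\qb,q}}{t^{\qa,p}}$ from Proposition \ref{AF} and Theorem \ref{AI}, the bihamiltonian recursion \eqref{AN}, and Lemma \ref{CZ}; after these substitutions $\int\Theta_\ql$ collapses to a relation among genus-zero two-point functions and monodromy data contained in the construction of the Virasoro symmetries of the tau-cover in \cite{dubrovin1999frobenius,dubrovin2001normal}. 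In effect this step asserts that the Miura-transformed flow \eqref{CX}, already a symmetry of the hierarchy by Theorem \ref{DD}, lifts to the normal-coordinate level as expected, and verifying this matching is the only nonroutine point; the remainder of the argument is formal homological algebra.
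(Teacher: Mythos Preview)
Your overall cohomological strategy is sound and parallels the paper's: produce some solution of $\fk{\diff{}{\qt_0}}{X^\circ}=I_0$, identify the remaining freedom as $D_{P_0}\bigl({\derx}^{-1}\bigr)$, and reduce \eqref{DA} to the single obstruction $\int\Theta_\ql=0$. Your computation $D_{P_0}(Z)h_{\ql,0}=\qp_x\bigl(Z\Phi_{\ql,0}^0\bigr)$ and the resulting surjectivity onto $(\qp_x\hm A^0)^{\oplus n}$ are correct, and match exactly what the paper encodes in Lemma~\ref{DC}.

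The difference, and the gap, is in how the obstruction is discharged. The paper does not verify $\int\Theta_\ql=0$ abstractly; instead it writes down an \emph{explicit} candidate $\tilde X^\circ$ (with concrete formulas for $\tilde X^\circ v_\ql$ and $\tilde X^\circ\qs_{\ql,0}$ built from $\vard{h_{\qb,q+1}}{v^\ql}$, $\diff{f_{\qa,p}}{\qt_0}$, and the non-local pieces $\qs_{\ql,2}$, $N^\qz_\ql\qs_{\qz,1}$), checks $\fk{\diff{}{\qt_0}}{\tilde X^\circ}=I_0$ by direct computation, and then uses Lemma~\ref{CM} and Lemma~\ref{CZ} to compute the needed correction $\tilde{\mathcal C}(h_{\ql,0})$ explicitly as in \eqref{DB}. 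Each term in \eqref{DB} is visibly a total $x$-derivative (being either $\frac{\qp^2}{\qp t^{\ql,0}\qp t^{1,0}}$ of something, or $\qp_x$ of something), so $\int\tilde{\mathcal C}v_\ql=0$ is immediate and Lemma~\ref{DC} finishes the argument. The explicit choice of $\tilde X^\circ$ is precisely what makes the obstruction collapse.

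Your attempt to check $\int\Theta_\ql=0$ for an unspecified $\tilde X$ does not go through as written. The appeal to \eqref{CX} is not useful: that equation only defines $Xv_\ql$ implicitly as the difference between $\diff{v_\ql}{s_2}$ and the other listed terms, and neither side has an explicit expression at this stage. The claim that ``after these substitutions $\int\Theta_\ql$ collapses to a relation among genus-zero two-point functions'' is not justified---$\Theta_\ql$ lives in $\hm A^0_{\geq 1}$, and nothing you cite forces its higher-degree components to be $\qp_x$-exact. Since $\int\Theta_\ql$ is independent of the coset representative (as you correctly observe), the clean fix is to evaluate it on the paper's explicit $\tilde X^\circ$; but then you are doing exactly the paper's computation leading to \eqref{DB}, and the abstract framework adds nothing. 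In short: your reduction to the obstruction is correct, but the obstruction itself is not verified, and the paper's explicit construction is what actually kills it.
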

\begin{proof}
We first find a particular solution $\tilde X^\circ$ of the equation $\fk{\diff{}{\qt_0}}{X} = I_0$, then we modify it by a solution $\tilde{\mathcal C}$ of the homogeneous equation $\fk{\diff{}{\qt_0}}{{\mathcal C}} = 0$ such that $X^\circ:=\tilde X^\circ+\tilde{\mathcal C}$ satisfies \eqref{DA}.

Let us define $\tilde X^\circ\in\derx$ as follows:
\begin{align*}
\tilde X^\circ v_\ql =&\ a^{\qa\qb}\kk{f_{\qa,1}'\vard{h_{\qb,1}}{v^\ql}+f_{\qb,0}'\vard{h_{\qa,2}}{v^\ql}}+b^{\qa\qb}\kk{f_{\qa,0}'\vard{h_{\qb,1}}{v^\ql}+f_{\qb,0}'\vard{h_{\qa,1}}{v^\ql}}\\
&+b^{\qb,q}_{2;1,0}\vard{h_{\qb,q+1}}{v^\ql}+b^{\qb,q}_{2;\ql,0}\vard{h_{\qb,q+1}}{v^1}+2c_{2;\ql,0;1,0};\\
\tilde X^\circ \qs_{\ql,0} =&\ a^{\qa\qb}\kk{\diff{f_{\qa,1}}{\qt_0}\vard{h_{\qb,1}}{v^\ql}+\diff{f_{\qb,0}}{\qt_0}\vard{h_{\qa,2}}{v^\ql}}+b^{\qa\qb}\kk{\diff{f_{\qa,0}}{\qt_0}\vard{h_{\qb,1}}{v^\ql}+\diff{f_{\qb,0}}{\qt_0}\vard{h_{\qa,1}}{v^\ql}}\\
&+b^{\qb,q}_{2;\ql,0}\diff{f_{\qb,q}}{\qt_0}-\kk{\frac 52+\mu_\ql}\qs_{\ql,2}-N^\qz_\ql\qs_{\qz,1}.
\end{align*}

Firstly, from the definition of $N^\qz_\ql$ it follows that $\tilde X^\circ$ is indeed local. We also note that the leading terms of $\tilde X^\circ v_\ql$ and $\tilde X^\circ \qs_{\ql,0}$ coincide with the local terms of the Virasoro symmetry $\diff{v_\ql}{s_2}$ and $\diff{\qs_{\ql,0}}{s_2}$ of the Principal Hierarchy. By a direct computation, it is easy to check that $\fk{\diff{}{\qt_0}}{\tilde X^\circ} = I_0$.

Next we want to determine $\tilde{\mathcal C}\in\derx$ such that $X^\circ = \tilde X^\circ+\tilde{\mathcal C}$ satisfies \eqref{DA}. By using the definition of $\tilde X^\circ v_\ql$ and Lemma \ref{CM}, \ref{CZ}, it is straightforward to show that
\begin{equation}
\label{DB}
\tilde{\mathcal C}(h_{\ql,0}) = a^{\qa\qb}\frac{\qp^2\Qo_{\qa,1;\qb,0}}{\qp t^{\ql,0}\qp t^{1,0}} + b^{\qa\qb}\frac{\qp^2\Qo_{\qa,0;\qb,0}}{\qp t^{\ql,0}\qp t^{1,0}}-\qp_x\hat I_1(g_\ql)-\qp_x\tilde X^\circ(g_\ql),
\end{equation}
which uniquely determines the actions $\tilde{\mathcal C}v_\ql$.

Finally we need to check such $\tilde{\mathcal C}$ satisfies $\fk{\diff{}{\qt_0}}{\tilde{\mathcal C}} = 0$. By using the next lemma, we know that it suffices to show that $\int \tilde{\mathcal C} v_\ql$ = 0, which is obvious from \eqref{DB} and \eqref{AM}. The lemma is proved.
\end{proof}
\begin{Lem}
\label{DC}
Let $U_1,\cdots, U_n$ be differential polynomials with $U_\ql\in\hm A^0_{\geq 2}$. Then there exists $\mathcal C\in\derx^0_{\geq 1}$ such that $\fk{\diff{}{\qt_0}}{\mathcal C} = 0$ and $\mathcal C v_\ql = U_\ql$ if and only if $\int U_\ql = 0$.
\end{Lem}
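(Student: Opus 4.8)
The plan is to translate the existence of $\mathcal C$ into two explicit equations on the generators $v^\qa$ and $\qs_{\qa,0}$, and then to exploit that, in the chosen coordinates, $P_0=\frac 12\int\eta^{\qa\qb}\qs_\qa\qs_\qb^1$ is a constant-coefficient Hamiltonian structure. Since $\derx$ consists of derivations commuting with $\qp_x$, a derivation $\mathcal C\in\derx^0$ with $\mathcal Cv_\ql=U_\ql$ is completely determined by the even differential polynomials $\mathcal Cv^\qa=\eta^{\qa\ql}U_\ql$ together with the odd differential polynomials $\mathcal C\qs_{\qa,0}\in\hm A^1$; and evaluating $\fk{\diff{}{\qt_0}}{\mathcal C}$ on these generators, using $\diff{v^\qa}{\qt_0}=D_{P_0}v^\qa=\eta^{\qa\qb}\qs_\qb^1$ and $\diff{\qs_{\qa,0}}{\qt_0}=0$, the condition $\fk{\diff{}{\qt_0}}{\mathcal C}=0$ becomes the pair of identities
\[
D_{P_0}U_\ql=\qp_x\kk{\mathcal C\qs_{\ql,0}},\qquad D_{P_0}\kk{\mathcal C\qs_{\ql,0}}=0,
\]
using that $\fk{\diff{}{\qt_0}}{\mathcal C}$ again commutes with $\qp_x$, hence vanishes identically once it vanishes on the generators.

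For the ``only if'' part I would argue as follows. If such a $\mathcal C$ exists, the first identity exhibits $D_{P_0}U_\ql$ as an element of $\qp_x\hm A^1$, so $\int D_{P_0}U_\ql=0$ in $\hm F^1$. Since $U_\ql\in\hm A^0$ we have $\vard{(\int U_\ql)}{\qth_\qa}=0$, and then the definition of the Schouten--Nijenhuis bracket gives $\int D_{P_0}U_\ql=\fk{\int U_\ql}{P_0}$; hence $\int U_\ql$ is a Casimir of $P_0$. But $P_0$ is the constant-coefficient Hamiltonian structure $\frac 12\int\eta^{\qa\qb}\qs_\qa\qs_\qb^1$, whose Casimirs are exactly the affine functionals $\int\kk{a_\qa v^\qa+b}$ with $a_\qa,b$ constant, all of differential degree $0$; since $\int U_\ql$ has differential degree $\geq 2$, it vanishes.

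For the converse, assume $\int U_\ql=0$. Then $\int D_{P_0}U_\ql=\fk{\int U_\ql}{P_0}=0$, so there is $R_\ql\in\hm A^1$ with $\qp_xR_\ql=D_{P_0}U_\ql$; as $\ker\qp_x$ is trivial on elements of positive super degree $R_\ql$ is unique, and since $D_{P_0}$ raises differential degree by one we get $R_\ql\in\hm A^1_{\geq 2}$. Define $\mathcal C\in\derx^0$ by $\mathcal Cv^\qa=\eta^{\qa\ql}U_\ql$ and $\mathcal C\qs_{\qa,0}=R_\qa$; then $\mathcal C\in\derx^0_{\geq 2}\subseteq\derx^0_{\geq 1}$ and $\mathcal Cv_\ql=U_\ql$. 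The first identity above holds by construction; for the second, apply $D_{P_0}$ to $\qp_xR_\qa=D_{P_0}U_\qa$ and use that $D_{P_0}$ commutes with $\qp_x$ together with $D_{P_0}^2=\tfrac12\fk{D_{P_0}}{D_{P_0}}=0$ (from $\fk{P_0}{P_0}=0$ and \eqref{dc}) to get $\qp_x\kk{D_{P_0}R_\qa}=0$, whence $D_{P_0}R_\qa=0$ by triviality of $\ker\qp_x$ in super degree $2$. Thus $\fk{\diff{}{\qt_0}}{\mathcal C}$ vanishes on $v^\qa$ and $\qs_{\qa,0}$, hence identically.

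The only step requiring genuine care --- and the heart of the ``only if'' direction --- is the identification $\int D_{P_0}U_\ql=\fk{\int U_\ql}{P_0}$ together with the absence of Casimirs of $P_0$ in positive differential degree, since this is precisely what upgrades the a priori weak vanishing $\int D_{P_0}U_\ql=0$ to the sharp condition $\int U_\ql=0$.
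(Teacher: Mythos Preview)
Your proof is correct, and the converse direction is essentially the same construction as the paper's: writing $U_\ql=\qp_xV_\ql$, your $R_\ql$ is forced to be $D_{P_0}V_\ql=\diff{V_\ql}{\qt_0}$, which is exactly what the paper takes for $\mathcal C\qs_{\ql,0}$.

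The ``only if'' direction, however, genuinely differs. The paper invokes the vanishing of the variational Hamiltonian cohomology $H^0_{\geq 1}\bigl(\derx,P_0^{[0]}\bigr)$ from Theorem~\ref{AR} to write $\mathcal C=\fk{\diff{}{\qt_0}}{\mathcal K}$ for some $\mathcal K\in\derx^{-1}$, and then reads off $\mathcal Cv_\ql=\qp_x(\mathcal K\qs_{\ql,0})$ directly. You instead compute $\fk{\diff{}{\qt_0}}{\mathcal C}$ on generators to obtain $D_{P_0}U_\ql=\qp_x(\mathcal C\qs_{\ql,0})$, deduce that $\int U_\ql$ is a Casimir of $P_0$, and use that $P_0=\frac12\int\eta^{\qa\qb}\qs_\qa\qs_\qb^1$ has no Casimirs in $\hm F^0_{\geq 2}$. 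Your route is more elementary and self-contained: it avoids appealing to the cohomology machinery of \cite{liu2021variational} and uses only the explicit constant-coefficient form of $P_0$. The paper's route, on the other hand, fits the lemma into the general cohomological framework being used throughout Section~\ref{sec4}, and makes transparent \emph{why} $U_\ql$ is a total $x$-derivative (namely, $V_\ql=\mathcal K\qs_{\ql,0}$).
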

\begin{proof}
Let $\mathcal C\in\derx^0$ be a derivation such that $\fk{\diff{}{\qt_0}}{\mathcal C} = 0$. Then by using the triviality of the variational Hamiltonian cohomology $H^0_{\geq 1}\bigl(\derx\bigr)$, we know the existence of a certain $\mathcal K\in\derx^{-1}$ such that $\fk{\diff{}{\qt_0}}{\mathcal K} = \mathcal C$. Let us denote $\mathcal K\qs_{\ql,0} = V_\ql\in\hm A^0$. Then it is easy to see that
\[
\mathcal Cv_\ql = \fk{\diff{}{\qt_0}}{\mathcal K}v_\ql = \mathcal K\qs_{\ql,0}^1 = \qp_xV_\ql.
\]
Therefore we have $\int U_\ql = \int \qp_xV_\ql = 0$.

Conversely, if $U_\ql = \qp_xV_\ql$ for some $V_\ql\in\hm A^0$, we can define a derivation $\mathcal C\in\derx^0$ by
\[
\mathcal C v_\ql = \qp_xV_\ql,\quad \mathcal C\qs_{\ql,0} = \diff{V_\ql}{\qt_0},
\]
then it is easy to check that $\fk{\diff{}{\qt_0}}{\mathcal C} = 0$ and the lemma is proved.
\end{proof}

Now let us denote $\mathcal C = X - X^\circ$, where the derivation $X$ is described in Theorem \ref{DD} and $X^\circ$ satisfies \eqref{DA}. Then we can rewrite the Virasoro symmetry in terms of the normal coordinates as follows:
\begin{align*}
\diff{w_\ql}{s_2} =&\  a^{\qa\qb}\kk{f_{\qb,0}\diff{w_\ql}{t^{\qa,1}}+f_{\qa,1}\diff{w_\ql}{t^{\qb,0}}}+ b^{\qa\qb}\kk{f_{\qb,0}\diff{w_\ql}{t^{\qa,0}}+f_{\qa,0}\diff{w_\ql}{t^{\qb,0}}}\\
&+a^{\qa\qb}\kk{f_{\qa,1}'\Qo_{\qb,0;\ql,0}+f_{\qb,0}'\Qo_{\qa,1;\ql,0}+\frac{\qp^2\Qo_{\qa,1;\qb,0}}{\qp t^{\ql,0}\qp t^{1,0}}}\\
&+b^{\qa\qb}\kk{f_{\qa,0}'\Qo_{\qb,0;\ql,0}+f_{\qb,0}'\Qo_{\qa,0;\ql,0}+\frac{\qp^2\Qo_{\qa,0;\qb,0}}{\qp t^{\ql,0}\qp t^{1,0}}}\\
&+b^{\qb,q}_{2;1,0}\Qo_{\qb,q;\ql,0}+b^{\qb,q}_{2;\ql,0}\Qo_{\qb,q;1,0}+2c_{2;\ql,0;1,0}+\mathcal C(h_{\ql,0})+\mathcal L_2 w_\ql.
\end{align*}
According to Lemma \ref{DC}, there exists $Q_\ql\in\hm A^0_{\geq 1}$ such that $\mathcal C(h_{\ql,0}) = \qp_x Q_\ql$. In particular, the fact that
\[
\diff{}{t^{\qg,0}}\diff{w_\ql}{s_2} = \diff{}{t^{\ql,0}}\diff{w_\qg}{s_2}
\]
gives the equation
\[
\diff{Q_\ql}{t^{\qg,0}} = \diff{Q_\qg}{t^{\ql,0}},
\]
which means that $\int Q_1$ is a conserved quantity of the flows $\diff{}{t^{\ql,0}}$. If the dimension of the Frobenius manifold $\dim M\geq 2$, then it follows from Lemma 4.12 and Theorem A.2 of \cite{dubrovin2018bihamiltonian} that there exists $Q\in\hm A^0_{\geq 0}$ such that $Q_\ql = \diff{Q}{t^{\ql,0}}$. 
\begin{Th}
\label{DF}
Let $\mathcal Z$ be a tau-function of the tau-cover \eqref{AW} of the deformed Principal Hierarchy. Then there exists a differential polynomial $O_2\in\hm A^0_{\geq 0}$ which yields a symmetry of the tau-cover given by
\[
\diff{\mathcal Z}{s_2} = L_2^{even}\mathcal Z+O_2\mathcal Z,
\]
where the evolutions of $f_{\qa,p}$ and $w_\ql$ along the flow $\diff{}{s_2}$ are defined by
\[
\diff{f_{\qa,p}}{s_2} = \diff{}{t^{\qa,p}}\diff{\log Z}{s_2},\quad \diff{w_\ql}{s_2} = \diff{}{t^{1,0}}\diff{f_{\ql,0}}{s_2}.
\]
\end{Th}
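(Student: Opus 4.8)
The plan is to write down the differential polynomial $O_2$ explicitly, to define the $s_2$-flow on the tau-cover \eqref{AW} by
\[
\frac{\partial\log\mathcal Z}{\partial s_2}=\frac{L_2^{even}\mathcal Z}{\mathcal Z}+O_2,\qquad \frac{\partial f_{\alpha,p}}{\partial s_2}=\frac{\partial}{\partial t^{\alpha,p}}\frac{\partial\log\mathcal Z}{\partial s_2},\qquad \frac{\partial w^\lambda}{\partial s_2}=\eta^{\lambda\mu}\frac{\partial}{\partial t^{1,0}}\frac{\partial}{\partial t^{\mu,0}}\frac{\partial\log\mathcal Z}{\partial s_2},
\]
and then to check that this flow reduces to the symmetry of the deformed Principal Hierarchy built in Theorem \ref{DD} and commutes with every $\partial/\partial t^{\beta,q}$. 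Throughout, $L_2^{even}\mathcal Z/\mathcal Z$ is read as a function of the times by substituting $f_{\alpha,p}=\partial_{t^{\alpha,p}}\log\mathcal Z$ and $\partial_{t^{\alpha,p}}f_{\beta,q}=\Omega_{\alpha,p;\beta,q}$; by the form \eqref{L2} of $L_2^{even}$ this gives
\[
\frac{L_2^{even}\mathcal Z}{\mathcal Z}=a^{\alpha\beta}\bigl(\Omega_{\alpha,1;\beta,0}+f_{\alpha,1}f_{\beta,0}\bigr)+b^{\alpha\beta}\bigl(\Omega_{\alpha,0;\beta,0}+f_{\alpha,0}f_{\beta,0}\bigr)+\mathcal L_2^{even}\log\mathcal Z+c_{2;\alpha,p;\beta,q}t^{\alpha,p}t^{\beta,q},
\]
which differs from the genus-zero Virasoro expression only by the differential polynomials $a^{\alpha\beta}\Omega_{\alpha,1;\beta,0}+b^{\alpha\beta}\Omega_{\alpha,0;\beta,0}$.

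\emph{Construction of $O_2$.} First I would use $\partial/\partial t^{1,0}=\partial_x$ (Proposition \ref{AF}(1)) and $w_\lambda=\Omega_{1,0;\lambda,0}=\partial_{t^{1,0}}f_{\lambda,0}$ (Theorem \ref{AI}(2) and \eqref{AW}), so that the requirement that the flow above reduce to the symmetry of Theorem \ref{DD} becomes the single identity $\partial w_\lambda/\partial s_2=\partial_{t^{1,0}}\partial_{t^{\lambda,0}}\bigl(L_2^{even}\mathcal Z/\mathcal Z+O_2\bigr)$. Expanding the right-hand side with the symmetry properties of the two-point functions in Theorem \ref{AI}, the part not involving $O_2$ reproduces, by a direct computation (the dispersive dressing of Theorem \ref{AT}), exactly the ``genus-zero-type'' and two-point-function terms in the expression for $\partial w_\lambda/\partial s_2$ obtained above; hence the identity reduces to $\partial_x\partial_{t^{\lambda,0}}O_2=\mathcal C(h_{\lambda,0})+\partial_x\partial_{t^{\lambda,0}}\tilde O_2$ for an explicit $\tilde O_2\in\hm A^0_{\ge 0}$ assembled from those $\Omega$'s. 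Since $\mathcal C(h_{\lambda,0})=\partial_x Q_\lambda$ with $Q_\lambda=\partial_{t^{\lambda,0}}Q$ and $Q\in\hm A^0_{\ge 0}$ (obtained just before the statement when $\dim M\ge 2$, and read off from \eqref{DE} of Section \ref{B0} when $\dim M=1$), one sets $O_2:=Q+\tilde O_2\in\hm A^0_{\ge 0}$; the remaining ambiguity by a $\partial_x$-constant is killed by requiring the differential-degree-zero part of $O_2$ to match the genus-zero Virasoro symmetry of Theorem \ref{AT}.

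\emph{Lifting to $f_{\alpha,p}$ and the tau-symmetry.} With $O_2$ fixed, put $R:=L_2^{even}\mathcal Z/\mathcal Z+O_2$ and $\partial f_{\alpha,p}/\partial s_2:=\partial_{t^{\alpha,p}}R$. What remains is to show that $\partial/\partial s_2$ commutes with every $\partial/\partial t^{\beta,q}$ on the tau-cover, which amounts to the identities
\[
\frac{\partial}{\partial s_2}\Omega_{\alpha,p;\beta,q}=\partial_{t^{\beta,q}}\partial_{t^{\alpha,p}}R,\qquad p,q\ge 0,
\]
the left-hand side being the derivative of the differential polynomial $\Omega_{\alpha,p;\beta,q}$ along the already-verified flow $\partial w/\partial s_2$. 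Both sides are symmetric under $(\alpha,p)\leftrightarrow(\beta,q)$ (the right-hand side trivially, the left-hand side since $\Omega_{\alpha,p;\beta,q}=\Omega_{\beta,q;\alpha,p}$); differentiating in $t^{\gamma,r}$ and using that $\partial/\partial s_2$ commutes with $\partial/\partial t^{\gamma,r}$ on $w$ (Theorem \ref{DD}) together with the full symmetry of $\partial_{t^{\gamma,r}}\Omega_{\alpha,p;\beta,q}$ in its three index-pairs (a consequence of Theorem \ref{AI}(2),(3)) shows that both sides satisfy the same first-order evolution equations in all the times, so that, since they agree for $p=q=0$ by the construction of $O_2$, an induction on $p+q$ through the recursion relations for the $h_{\alpha,p}$ yields them in general. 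In particular the $1$-form $\{\partial f_{\alpha,p}/\partial s_2\}$ is closed, so $R$ is a legitimate potential, $\partial w^\lambda/\partial s_2$ coincides with the symmetry of Theorem \ref{DD}, and the representation $\partial\mathcal Z/\partial s_2=L_2^{even}\mathcal Z+O_2\mathcal Z$ follows.

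\emph{Main obstacle.} The delicate point is the locality of $O_2$: a priori $R=\partial_{s_2}\log\mathcal Z$ is a non-local object, and the whole construction hinges on $\mathcal C(h_{\lambda,0})$ being $\partial_x\partial_{t^{\lambda,0}}$ of a genuine differential polynomial. For $\dim M\ge 2$ this is precisely where the vanishing-of-obstruction results of the previous subsections, together with Lemma~4.12 and Theorem~A.2 of \cite{dubrovin2018bihamiltonian}, are used; for $\dim M=1$ the locality of $O_2$ must instead be imported from the explicit computation of Section \ref{B0}. Once this is in hand, the rest of the proof is bookkeeping with the two-point functions $\Omega_{\alpha,p;\beta,q}$ and their symmetry properties.
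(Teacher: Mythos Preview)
Your approach is essentially the paper's: construct $O_2$ so that the induced $\partial w_\lambda/\partial s_2$ matches the symmetry of Theorem~\ref{DD}, then lift to $f_{\alpha,p}$ by $\partial f_{\alpha,p}/\partial s_2:=\partial_{t^{\alpha,p}}R$. The paper's own proof is one line (``$O_2=Q$'') because the real work was done in the unnamed lemma establishing \eqref{DA}: with that specific choice of $X^\circ$, the normal-coordinate expression for $\partial w_\lambda/\partial s_2$ already coincides with $\partial_{t^{1,0}}\partial_{t^{\lambda,0}}\bigl(L_2^{even}\mathcal Z/\mathcal Z\bigr)$ up to exactly $\mathcal C(h_{\lambda,0})$, so no correction term $\tilde O_2$ is needed. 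Your $\tilde O_2$ is either zero (if you use the paper's $X^\circ$) or an unexplained fudge factor whose locality you have not justified; as written, the sentence ``the part not involving $O_2$ reproduces \ldots\ exactly \ldots; hence the identity reduces to $\partial_x\partial_{t^{\lambda,0}}O_2=\mathcal C(h_{\lambda,0})+\partial_x\partial_{t^{\lambda,0}}\tilde O_2$'' is internally inconsistent.

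Your lifting step addresses something the paper leaves implicit, which is good, but the argument you give is not complete. The base case ``they agree for $p=q=0$ by the construction of $O_2$'' is not what was established: the construction gives only the case $(\beta,q)=(1,0)$, namely $\partial_s w_\lambda=\partial_{t^{1,0}}\partial_{t^{\lambda,0}}R$. And the ``induction on $p+q$ through the recursion relations for the $h_{\alpha,p}$'' is not spelled out; it is unclear which recursion you mean or how it closes. A cleaner route: set $G_{\alpha,p}^{\beta,q}:=\partial_{s_2}\Omega_{\alpha,p;\beta,q}-\partial_{t^{\beta,q}}\partial_{t^{\alpha,p}}R$. Since $[\partial_{s_2},\partial_{t^{\beta,q}}]$ vanishes on differential polynomials in $w$ (Theorem~\ref{DD}), one gets $\partial_x G_{\alpha,p}^{\beta,q}=0$. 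By inspection $G$ lives in the extension of $\hm A^0$ by terms at most linear in the $f_{\mu,k}$ and the $t^{\mu,k}$; analysing $\ker\partial_x$ in that space (using $\partial_x f_{\mu,k}=h_{\mu,k}$ and the linear independence of the $H_{\mu,k}$) forces $G$ to be a constant, and Theorem~\ref{AT} kills the constant at leading order. This replaces your induction and completes the verification the paper omits.
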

\begin{proof}
When $\dim M\geq 2$ we take $O_2 = Q$ which is just given above. When $\dim M = 1$, $O_2$ is given in \eqref{DE}. The theorem is proved.
\end{proof}

Now let us proceed to determine all other Virasoro symmetries $\diff{}{s_m}$ of the tau-cover of the deformed Principal Hierarchy. In what follows we will denote by $\mathcal F = \log\mathcal Z$ and denote
\begin{align*}
&\mathcal G_m = a_m^{\qa,p;\qb,q}\kk{\diff{\mathcal F}{t^{\qa,p}}\diff{\mathcal F}{t^{\qb,q}}+\frac{\qp^2\mathcal F}{\qp t^{\qa,p}\qp t^{\qb,q}}}+c_{m;\qa,p;\qb,q}t^{\qa,p}t^{\qb,q};\\&\mathcal L_m^{even} = b^{\qb,q}_{m;\qa,p}t^{\qa,p}\diff{}{t^{\qb,q}},\quad m\geq -1.
\end{align*}
It is proved in \cite{dubrovin2018bihamiltonian} that
\[
\diff{\mathcal F}{s_{-1}} = \mathcal G_{-1}+\mathcal L_{-1}^{even}\mathcal F
\]
induces a symmetry of the tau-cover of the deformed Principal Hierarchy. On the other hand, by using Theorem \ref{DF} we know that
\[
\diff{\mathcal F}{s_{2}} = \mathcal G_{2}+O_2+\mathcal L_{2}^{even}\mathcal F
\]
also induces a symmetry of the tau-cover of the deformed Principal Hierarchy. From the commutation relation
\[
[L_{-1}^{even},L_2^{even}] = -3 L_1^{even},
\]
it follows that
\[
\fk{\diff{}{s_{-1}}}{\diff{}{s_2}}\mathcal F = 3\kk{\mathcal G_1+\mathcal L_1^{even}\mathcal F}+\diff{O_2}{s_{-1}}-\mathcal L_{-1}^{even}O_2.
\]
Since 
$
\diff{O_2}{s_{-1}}-\mathcal L_{-1}^{even}O_2
$
is a differential polynomial, we can define
\[
\diff{\mathcal F}{s_1} =\frac 13\fk{\diff{}{s_{-1}}}{\diff{}{s_2}}\mathcal F =  \mathcal G_1+O_1+\mathcal L_1^{even}\mathcal F,\quad O_1 =\frac 13\kk{\diff{O_2}{s_{-1}}-\mathcal L_{-1}^{even}O_2}.
\]
Then it is obvious that $\diff{\mathcal F}{s_1}$ induces a symmetry of the tau-cover of the deformed Principal Hierarchy. Similarly we define the symmetry
\[
\diff{\mathcal F}{s_0} = \frac12 \fk{\diff{}{s_{-1}}}{\diff{}{s_1}}\mathcal F = \mathcal G_0+O_0+\mathcal L_0^{even}\mathcal F,\quad O_0 =\frac 12\kk{\diff{O_1}{s_{-1}}-\mathcal L_{-1}^{even}O_1}.
\]
Now let us define $\diff{\mathcal F}{s_m}$ for $m\geq 3$ recursively in the following way. Assume we have defined 
\[
\diff{\mathcal F}{s_m} = \mathcal G_m+O_m+\mathcal L_m^{even}\mathcal F
\]
such that it induces a symmetry of the tau-cover of the deformed Principal Hierarchy for $m\geq 2$. Then we have
\[
\diff{}{s_1}\diff{\mathcal F}{s_m} = \diff{\mathcal G_m}{s_1}+\diff{O_m}{s_1}+\mathcal L_m^{even}\kk{\mathcal G_1+O_1+\mathcal L_1^{even}\mathcal F}.
\]
It follows from the definition of $\mathcal G_m$ that
\begin{align*}
\diff{\mathcal G_m}{s_1} =&\, \diff{}{s_1}\kk{a_m^{\qa,p;\qb,q}\kk{\diff{\mathcal F}{t^{\qa,p}}\diff{\mathcal F}{t^{\qb,q}}+\frac{\qp^2\mathcal F}{\qp t^{\qa,p}\qp t^{\qb,q}}}+c_{m;\qa,p;\qb,q}t^{\qa,p}t^{\qb,q}}\\
=&\, a_m^{\qa,p;\qb,q}\kk{f_{\qa,p}\diff{}{t^{\qb,q}}(\mathcal G_1+O_1+\mathcal L_1^{even}\mathcal F)+f_{\qb,q}\diff{}{t^{\qa,p}}(\mathcal G_1+O_1+\mathcal L_1^{even}\mathcal F)}\\
&+a_m^{\qa,p;\qb,q}\frac{\qp^2}{\qp t^{\qa,p}\qp t^{\qb,q}}(\mathcal G_1+O_1+\mathcal L_1^{even}\mathcal F)\\
=&\, a_m^{\qa,p;\qb,q}\kk{f_{\qa,p}\diff{O_1}{t^{\qb,q}}+f_{\qb,q}\diff{O_1}{t^{\qa,p}}+\frac{\qp^2O_1}{\qp t^{\qa,p}\qp t^{\qb,q}}}+\cdots,
\end{align*}
here $\cdots$ stands for remaining terms that do not contain $O_1$. In a similar way we can compute $\diff{}{s_m}\diff{\mathcal F}{s_1}$. By using the commutation relation
\[
[L_1^{even},L_m^{even}] = (1-m)L_{m+1}^{even}
\]
we obtain the following result:
\begin{align*}
\fk{\diff{}{s_1}}{\diff{}{s_m}}\mathcal F =&\ (m-1)\kk{\mathcal G_{m+1}+\mathcal L_{m+1}^{even}\mathcal F}\\
&+\diff{O_m}{s_1}-a_1^{\qa,p;\qb,q}\kk{f_{\qa,p}\diff{O_m}{t^{\qb,q}}+f_{\qb,q}\diff{O_m}{t^{\qa,p}}+\frac{\qp^2O_m}{\qp t^{\qa,p}\qp t^{\qb,q}}}-\mathcal L_1^{even} O_m\\
&-\diff{O_1}{s_m}+a_m^{\qa,p;\qb,q}\kk{f_{\qa,p}\diff{O_1}{t^{\qb,q}}+f_{\qb,q}\diff{O_1}{t^{\qa,p}}+\frac{\qp^2O_1}{\qp t^{\qa,p}\qp t^{\qb,q}}}+\mathcal L_m^{even} O_1.
\end{align*}
Therefore we obtain the symmetry $\diff{\mathcal F}{s_{m+1}}$ by defining
\[
\diff{\mathcal F}{s_{m+1}}=\frac{1}{m-1}\fk{\diff{}{s_1}}{\diff{}{s_m}}\mathcal F = \mathcal G_{m+1}+O_{m+1}+\mathcal L_{m+1}^{even}\mathcal F,
\]
where $O_{m+1}$ is the differential polynomial given by
\begin{align*}
O_{m+1} =&\ \frac{1}{m-1}\kk{\diff{O_m}{s_1}-a_1^{\qa,p;\qb,q}\kk{f_{\qa,p}\diff{O_m}{t^{\qb,q}}+f_{\qb,q}\diff{O_m}{t^{\qa,p}}+\frac{\qp^2O_m}{\qp t^{\qa,p}\qp t^{\qb,q}}}-\mathcal L_1^{even} O_m}\\
&-\frac{1}{m-1}\kk{\diff{O_1}{s_m}-a_m^{\qa,p;\qb,q}\kk{f_{\qa,p}\diff{O_1}{t^{\qb,q}}+f_{\qb,q}\diff{O_1}{t^{\qa,p}}+\frac{\qp^2O_1}{\qp t^{\qa,p}\qp t^{\qb,q}}}-\mathcal L_m^{even} O_1}.
\end{align*}
Thus we have defined recursively an infinite set of symmetries of the tau-cover of the deformed Principal Hierarchy. Their actions on $\mathcal F$ can be represented by
\[
\diff{\mathcal F}{s_m} = \mathcal G_m+O_m+\mathcal L_m^{even}\mathcal F.
\]

Next we show that we can further adjust $O_m$ by adding certain  constants such that these symmetries satisfy the Virasoro commutation relation
\[
\fk{\diff{}{s_k}}{\diff{}{s_l}} = (l-k)\diff{}{s_{k+l}},\quad k,l\geq -1.
\]

\begin{Lem}
There is a unique choice of constants $\kappa_m$ for $m\geq -1$ such that the flows 
\[
\diff{\mathcal F}{s_m} = \mathcal G_m+O_m+\kappa_m+\mathcal L_m^{even}\mathcal F
\]
satisfy the Virasoro commutation relation
\begin{equation}
\label{DH}
\fk{\diff{}{s_k}}{\diff{}{s_l}} = (l-k)\diff{}{s_{k+l}},\quad k,l\geq -1.
\end{equation}
\end{Lem}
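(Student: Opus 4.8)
The plan is to reduce the whole statement to the assertion that for all $k,l\geq -1$ the operator
\[
V_{kl}:=\fk{\diff{}{s_k}}{\diff{}{s_l}}-(l-k)\diff{}{s_{k+l}}
\]
acts on $\mathcal F=\log\mathcal Z$ by a constant $c_{kl}\in\mathbb R$, and then to absorb the $c_{kl}$ into the $\kappa_m$. First I would record that every $\diff{}{s_m}$ with $m\geq -1$ is a symmetry of the tau-cover of the deformed Principal Hierarchy: for $m=-1$ this is \cite{dubrovin2018bihamiltonian}, for $m=2$ it is Theorem~\ref{DF}, and for all other $m$ it follows inductively because they were defined as normalized commutators of symmetries and, by the Jacobi identity, the commutator of two symmetries of a hierarchy is again a symmetry. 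Hence each $V_{kl}$ is a symmetry, and by construction $V_{-1,1}=V_{-1,2}=V_{1,m}=0$ for $m\geq 2$; the remaining relations are what must be checked.

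Granting the key claim, I would finish by a cocycle argument. The array $(c_{kl})$ is antisymmetric, and applying the Jacobi identity to $\diff{}{s_k},\diff{}{s_l},\diff{}{s_{k'}}$ and comparing the $\mathcal F$-constant parts shows that $(c_{kl})$ is a $2$-cocycle on the Witt algebra $\mathrm{span}\{\diff{}{s_m}\}_{m\geq -1}$. Since the only nontrivial Virasoro $2$-cocycle, proportional to $(m^3-m)\qd_{m+n,0}$, vanishes identically when $m,n\geq -1$, every such cocycle is a coboundary; concretely, putting $\kappa_m:=c_{-1,m+1}/(m+2)$, the cocycle identity forces $c_{kl}=(l-k)\kappa_{k+l}$ for all $k,l$. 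Replacing $\diff{\mathcal F}{s_m}$ by $\diff{\mathcal F}{s_m}+\kappa_m$ changes none of the flows on the variables $w^\qa$ and $f_{\qa,p}$, hence no commutator, while it turns $(l-k)\diff{\mathcal F}{s_{k+l}}$ into $(l-k)\diff{\mathcal F}{s_{k+l}}+c_{kl}$; thus \eqref{DH} holds after the shift, and uniqueness of the $\kappa_m$ is immediate (take $k=-1$).

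It remains to prove the claim, and the main step is to show that $V_{kl}\mathcal F$ is a $t$-independent differential polynomial. Write $\diff{}{s_m}=\diff{}{s_m^{(0)}}+\delta_m$, where $\diff{\mathcal F}{s_m^{(0)}}=\mathcal G_m+\mathcal L_m^{even}\mathcal F$ (equivalently $\diff{\mathcal Z}{s_m^{(0)}}=L_m^{even}\mathcal Z$) and $\delta_m\mathcal F=O_m\in\hm A^0$, so that $\delta_m$ acts on all the remaining tau-cover variables by differential polynomials. Since $L_m^{even}$ involves only the $t$-variables, $\diff{}{s_m^{(0)}}$ commutes as an operator with each $L_l^{even}$, and from $[L_k^{even},L_l^{even}]=(k-l)L_{k+l}^{even}$ one gets that the flows $\diff{}{s_m^{(0)}}$ satisfy the Virasoro relations \emph{exactly}. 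Hence
\[
V_{kl}=\fk{\diff{}{s_k^{(0)}}}{\delta_l}-\fk{\diff{}{s_l^{(0)}}}{\delta_k}+\fk{\delta_k}{\delta_l}-(l-k)\delta_{k+l},
\]
and the task is to see that the two mixed brackets contribute to $V_{kl}\mathcal F$ only differential-polynomial terms, i.e. that the $f_{\qa,p}$-linear and $t^{\qa,p}$-linear pieces cancel. This is precisely the cancellation mechanism already observed in the recursion that produced $O_{m+1}\in\hm A^0$ from $O_1,O_m$, now carried out for an arbitrary pair $(k,l)$; the identities $\qp_xf_{\qa,p}=h_{\qa,p}\in\hm A$ and $\diff{f_{\qa,p}}{t^{\qb,q}}=\Qo_{\qa,p;\qb,q}$ (cf. \eqref{AW}), together with the exactness of the genus-zero relations, are what make it work. \emph{This is the hard part of the proof.} Once $P:=V_{kl}\mathcal F\in\hm A^0$ is known to be a $t$-independent differential polynomial, I would conclude as follows: since $V_{kl}$ commutes with all $\diff{}{t^{\qc,q}}$ and kills the $t^{\qc,q}$, it is a $t$-independent symmetry of the tau-cover with $V_{kl}(w^\qa)=\eta^{\qa\qb}\qp_x\qp_{t^{\qb,0}}P\in\hm A^0_{\geq 2}$ (using \eqref{BK} and $\qp_{t^{1,0}}=\qp_x$ from Proposition~\ref{AF}); so its restriction to the $w$-variables is a $t$-independent symmetry of the deformed Principal Hierarchy lying in $\hm A^0_{\geq 2}$, and by $BH^0_{\geq 2}(\derx)=0$ (the same vanishing used for the bihamiltonian recursion relations in \cite{DLZ-1}) it must vanish. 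Hence $\qp_x^2P=0$; then $\qp_xP$ lies in $\ker\qp_x$, which consists only of constants, while $\qp_xP$ is $\qp_x$-exact and so has no differential-degree-$0$ component, forcing $\qp_xP=0$; therefore $P\in\ker\qp_x$ is a constant $c_{kl}$, which establishes the claim and completes the proof.

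The only genuinely delicate point is thus the locality assertion in the third paragraph: showing, for every pair $(k,l)$ and not merely for the pairs hard-wired into the recursion, that the transcendental ($f_{\qa,p}$- and $t^{\qa,p}$-linear) contributions of $\fk{\diff{}{s_k^{(0)}}}{\delta_l}-\fk{\diff{}{s_l^{(0)}}}{\delta_k}$ to $V_{kl}\mathcal F$ cancel. After that, the reduction to a constant and the cohomological bookkeeping that pins down the $\kappa_m$ are routine.
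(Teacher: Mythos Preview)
Your argument follows essentially the same route as the paper's: reduce to showing that each $V_{kl}$ acts on $\mathcal F$ by a constant $c_{kl}$ (via the bihamiltonian-cohomology result that kills symmetries of the hierarchy with trivial leading term), then absorb the resulting $2$-cocycle on the half-Witt algebra $\mathfrak W_1=\mathrm{span}\{e_m\}_{m\geq -1}$ using $H^2(\mathfrak W_1,\mathbb R)=0$, with uniqueness coming from $H^1(\mathfrak W_1,\mathbb R)=0$. Two minor caveats are worth noting. First, the vanishing you invoke should be the classical result from \cite{DLZ-1} for ordinary evolutionary vector fields, not $BH^0_{\geq 2}(\derx)$: the latter lives on the super side and would require $[\diff{}{s_m},\diff{}{\qt_i}]=0$ for all $m$, which has only been arranged for $m=2$ in the paper. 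Second, your cocycle step is a bit fast: the fact that the standard Virasoro central cocycle $(m^3-m)\qd_{m+n,0}$ restricts to zero on indices $\geq -1$ does not by itself give $H^2(\mathfrak W_1,\mathbb R)=0$, since a priori there could be cocycles on the half that do not extend to the full Witt algebra; the paper handles this by citing Gelfand--Fuks, and if you prefer your explicit formula $\kappa_m=c_{-1,m+1}/(m+2)$ you should run the short induction (using the Jacobi identity on triples containing $e_{-1}$ and $e_0$) that actually shows $c_{kl}=(l-k)\kappa_{k+l}$ for all $k,l\geq -1$.
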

\begin{proof}
Let us first fix an arbitrary choice of $O_m$ and denote
\[
\diff{\mathcal F}{\hat s_m} = \mathcal G_m+O_m+\mathcal L_m^{even}\mathcal F.
\]
Then we obtain the differential polynomials $\tilde O_{k+l}$ such that
\[
\fk{\diff{}{\hat s_k}}{\diff{}{\hat s_l}} \mathcal F = (l-k)\kk{\mathcal G_{l+k}+\tilde O_{l+k}+\mathcal L_{l+k}^{even}\mathcal F}.
\]
But both $\diff{}{\hat s_{l+k}}$ and $\fk{\diff{}{\hat s_k}}{\diff{}{\hat s_l}}$ are symmetries of the tau-cover of the deformed Principal Hierarchy, so we conclude that
\[
\diff{\mathcal F}{s}:= \diff{\mathcal F}{\hat s_{k+l}}-\frac{1}{l-k}\fk{\diff{}{\hat s_k}}{\diff{}{\hat s_l}}\mathcal F = O_{k+l}-\tilde O_{k+l}
\]
is also a symmetry of the tau-cover of the deformed Principal Hierarchy. The action of this symmetry on the normal coordinates has the expression
\[
\diff{w_\ql}{s} = \frac{\qp^2}{\qp t^{\ql,0}\qp t^{1,0}}\kk{O_{k+l}-\tilde O_{k+l}}.
\]
Thus $\diff{w_\ql}{s}\in\hm A_{\geq 2}$, and therefore such a symmetry must vanish due to the result of the bihamiltonian cohomology \cite{DLZ-1}. Hence we conclude that
\[
 O_{k+l}-\tilde O_{k+l} = c_{k,l}
\]
for some constant $c_{k,l}$, and this means that
\begin{equation}
\label{DG}
\fk{\diff{}{\hat s_k}}{\diff{}{\hat s_l}} = (l-k)\diff{}{\hat s_{k+l}} + c_{k,l},\quad k,l\geq -1.
\end{equation}

Let us denote by $\mathfrak W_1$ the Lie algebra of formal vector fields on a line, which is an infinite dimensional Lie algebra with a basis
\[
e_m = z^{m+1}\frac{d}{dz},\quad m\geq -1.
\]
Then the relation \eqref{DG} implies that the Lie algebra $\{\diff{}{\hat s_m}\}$ defines a central extension of $\mathfrak W_1$. It is computed in \cite{Gelfand1970} that $H^2(\mathfrak W_1,\mathbb R) = 0$ and hence every central extension is trivial. Therefore we can modify each $O_m$ by adding an appropriate constant $\kappa_m$ such that the modified flows
\[
\diff{\mathcal F}{s_m} = \mathcal G_m+O_m+\kappa_m+\mathcal L_m^{even}\mathcal F
\] 
satisfy the commutation relations \eqref{DH}. Moreover, the choice of $\kappa_m$ is unique since $H^1(\mathfrak W_1,\mathbb R) = 0$ (see \cite{Gelfand1970}). The lemma is proved.
\end{proof}
Thus we have proved the following theorem.
\begin{Th}
For every tau-symmetric bihamiltonian deformation of the Principal Hierarchy associated with a semisimple Frobenius manifold, the deformed integrable hierarchy possesses an infinite set of Virasoro symmetries. The actions of these symmetries on the tau function $\mathcal Z$ are represented by
\[
\diff{\mathcal Z}{s_m} = L_m^{even}\mathcal Z+O_m\mathcal Z,\quad m\geq -1,
\]
where $O_m\in \hm A$ are certain differential polynomials, and the flows $\diff{}{s_m}$ satisfy the commutation relations
\[
\fk{\diff{}{s_k}}{\diff{}{s_l}} = (l-k)\diff{}{s_{k+l}},\quad k,l\geq -1.
\]
\end{Th}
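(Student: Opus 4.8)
The plan is to bootstrap the entire algebra of Virasoro symmetries from just two of its generators: the flow $\diff{}{s_{-1}}$, already constructed in \cite{dubrovin2018bihamiltonian}, and the flow $\diff{}{s_2}$, which is the one genuinely new object to be produced. Since the operators $L_m^{even}$ of \cite{dubrovin1999frobenius} satisfy $[L_k^{even},L_l^{even}]=(k-l)L_{k+l}^{even}$, once $\diff{}{s_{-1}}$ and $\diff{}{s_2}$ are symmetries of the tau-cover of the deformed Principal Hierarchy, iterated commutators of them will furnish candidates for all the remaining $\diff{}{s_m}$; the content of the theorem is then to verify that these candidates are again symmetries, that their action on $\mathcal Z$ has the stated form $L_m^{even}\mathcal Z+O_m\mathcal Z$, and that after a constant correction they close into an honest Virasoro algebra.

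\textbf{Step A: the flow $\diff{}{s_2}$.} Following Sect.\,\ref{AU}, I would first produce the derivation $X\in{\derx}^0_{\geq 0}$ of \eqref{CX}--\eqref{CY}. Rewriting $\fk{\diff{}{s_2}}{\diff{}{\qt_i}}=0$, $i=0,1$, as $\fk{\diff{}{\qt_0}}{X}=I_0$ and $\fk{\diff{}{\qt_1}}{X}=I_1$, I would: (i) choose the local correction terms $M^\qz_\ql,N^\qz_\ql$ so that $I_0,I_1\in\der$ and verify the locality condition \eqref{AZ}; (ii) check $\fk{\diff{}{\qt_0}}{I_0}=0$ and, using $H^1_{>0}\bigl(\derx\bigr)=0$ from Theorem \ref{AR}, solve $I_0=\fk{\diff{}{\qt_0}}{X^\circ}$ with the leading term of $X^\circ$ pinned down by the genus zero Virasoro symmetry; (iii) verify the two closedness conditions \eqref{BF}; (iv) show the differential degree $3$ part of $I_1-\fk{\diff{}{\qt_1}}{X^\circ}$ vanishes in $BH^1_3\bigl(\derx\bigr)$; (v) use $BH^1_{\geq 4}\bigl(\derx\bigr)=0$ together with $BH^0_{\geq 2}\bigl(\derx\bigr)=0$ to solve for the higher-degree correction, obtaining $X$, unique because $BH^0_{\geq 2}\bigl(\derx\bigr)=0$. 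This is Theorem \ref{DD}. Then I would pass to the normal coordinates $w^\qa$: using Lemma \ref{CZ} and a tailored $X^\circ$ satisfying \eqref{DA}, the symmetry \eqref{CX} becomes a total $x$-derivative plus a genus zero tail, which by Lemma \ref{DC} and the exactness-type results of \cite{dubrovin2018bihamiltonian} integrates to a differential polynomial $O_2$ with $\diff{\mathcal Z}{s_2}=L_2^{even}\mathcal Z+O_2\mathcal Z$ a symmetry of the tau-cover \eqref{AW}; this is Theorem \ref{DF}.

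\textbf{Step B: generating the rest and closing the algebra.} With $\diff{}{s_{-1}}$ and $\diff{}{s_2}$ in hand I would set $\diff{}{s_1}=\tfrac13\fk{\diff{}{s_{-1}}}{\diff{}{s_2}}$, $\diff{}{s_0}=\tfrac12\fk{\diff{}{s_{-1}}}{\diff{}{s_1}}$, and recursively $\diff{}{s_{m+1}}=\tfrac1{m-1}\fk{\diff{}{s_1}}{\diff{}{s_m}}$ for $m\geq 2$. At each stage the commutator of two symmetries of the tau-cover is again one, and a direct computation using $[L_k^{even},L_l^{even}]=(k-l)L_{k+l}^{even}$ and the tau-cover relations \eqref{BK} shows its action on $\mathcal F=\log\mathcal Z$ has the shape $\mathcal G_m+O_m+\mathcal L_m^{even}\mathcal F$ with $O_m$ a differential polynomial, which simultaneously defines $O_m$ and gives $\diff{\mathcal Z}{s_m}=L_m^{even}\mathcal Z+O_m\mathcal Z$. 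A priori these flows only satisfy $\fk{\diff{}{s_k}}{\diff{}{s_l}}=(l-k)\diff{}{s_{k+l}}+c_{k,l}$ for constants $c_{k,l}$, since the difference of the two sides acts on $w_\ql$ through an element of $\hm A_{\geq 2}$ and hence vanishes by the bihamiltonian cohomology of \cite{DLZ-1}. Thus $\{\diff{}{s_m}\}$ is a central extension of the Lie algebra $\mathfrak W_1$ of formal vector fields on the line; since $H^2(\mathfrak W_1,\mathbb R)=0$ the extension is trivial, and since $H^1(\mathfrak W_1,\mathbb R)=0$ there is a unique shift $O_m\mapsto O_m+\kappa_m$ by constants producing the exact Virasoro relations.

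The hard part will be Step A(iv), the vanishing of the genus one obstruction. By Theorem \ref{AR}(4) the class of a cocycle in ${\derx}^1_3$ is detected by the $n$ one-variable functions obtained from the $\qth_i^3$-coefficients in the $i$-th canonical coordinate divided by $(f^i)^2$, so in the canonical coordinates of $M$ the obstruction collapses, after substituting Lemmas \ref{CS} and \ref{CU} into the degree $3$ part of $I_1u^i-\fk{\diff{}{\qt_1}}{X^\circ}u^i$ and using \eqref{CV}, to the single identity $E^3f^i+2f^i\kk{b^{\qb,q}_{2;1,0}T_{\qb,q}^i+a^{\qa,p;\qb,q}_2\kk{\kk{f_{\qb,q}'}_0T_{\qa,p}^i+\kk{f_{\qa,p}'}_0T_{\qb,q}^i}}=3(u^i)^2f^i$. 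I would prove this --- and, more economically, its whole $m$-indexed family at once --- by a generating-function argument: forming $\sum_{m\geq-1}\ql^{-m-2}(\cdots)$ turns the left-hand side into $\sum_j(\ql-u^j)^{-1}\diff{f^i}{u^j}$ plus the quantity $\tfrac12(\ql-u^i)^{-2}+\sum_{j\neq i}\tfrac{\psi_{j1}}{\psi_{i1}}V_{ij}(\ql-u^i)^{-1}(\ql-u^j)^{-1}$ already computed in \cite{dubrovin2001normal}, and the right-hand side into $f^i(\ql-u^i)^{-2}$, after which \eqref{CP}, \eqref{CO}, \eqref{CR} match the two sides. Everything else in the argument is either a cohomological vanishing imported from \cite{liu2021variational,DLZ-1} or routine bookkeeping with the shift operators $T_k$, $T_{k,l}$.
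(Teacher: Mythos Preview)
Your proposal is correct and follows essentially the same approach as the paper: you construct $\diff{}{s_2}$ via the four-step programme of Sect.\,\ref{AU} (locality, closedness, genus one vanishing via the generating-function identity against the data of \cite{dubrovin2001normal}, and solving degree by degree using Theorem \ref{AR}), lift it to the tau-cover through normal coordinates to get Theorem \ref{DF}, then bootstrap all other $\diff{}{s_m}$ by the same iterated commutators $\tfrac13[\diff{}{s_{-1}},\diff{}{s_2}]$, $\tfrac12[\diff{}{s_{-1}},\diff{}{s_1}]$, $\tfrac1{m-1}[\diff{}{s_1},\diff{}{s_m}]$, and close the algebra via $H^2(\mathfrak W_1,\mathbb R)=H^1(\mathfrak W_1,\mathbb R)=0$. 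This is exactly the paper's argument, and your identification of Step A(iv) as the crux, with its resolution by the generating-function computation in canonical coordinates, matches the paper's Proposition on the $m$-indexed family of identities $E^{m+1}f^i+\cdots=(1+m)(u^i)^mf^i$.
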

\begin{Ex}
\label{DX}
Let $M$ be the 2-dimensional Frobenius manifold defined on the orbit space of the Weyl group of type $B_2$. Its potential and  Euler vector field are given by
\[
F = \frac 12 v^2u+\frac{4}{15}u^5,\quad E = v\qp_v+\frac 12u\qp_u.
\]
Here $v = v^1$ and $u = v^2$ are the flat coordinates of $M$. We denote by $\qs_1$ and $\qs_2$ the dual coordinates of the fiber of $\hat M$, then the bihamiltonian structure $(P_0^{[0]},P_1^{[0]})$ associated with $M$ is given by
\[
P_0^{[0]} = \frac 12\int \qs_1\qs_2^1+\qs_2\qs_1^1,\quad P_1^{[0]} = \frac 12\int 8u^3\qs_1\qs_1^1+\frac 12 u\qs_2\qs_2^1+2v\qs_1\qs_2^1-\frac 12v_x\qs_1\qs_2.
\]

Let us first write down the Virasoro symmetry $\diff{}{s_1}$ of the tau-cover of the Principal Hierarchy associated with $M$.
The Virasoro operator $L_1^{even}$ has the expression
\[
L_1^{even} = \frac{3}{16}\frac{\qp^2}{\qp t^{1,0}\qp t^{2,0}}+ \mathcal L_1^{even},\]
where\[
\mathcal L_1^{even}=\sum_{p\geq 0}\kk{p+\frac 14}\kk{p+\frac 54}t^{1,p}\diff{}{t^{1,p+1}}+\kk{p+\frac 34}\kk{p+\frac 74}t^{2,p}\diff{}{t^{2,p+1}}.
\]
Then the action of $\diff{}{s_1}$ on the genus zero free energy $\mathcal F^{[0]}$ of the tau-cover of the Principal Hierarchy is given by
\[
\diff{\mathcal F^{[0]}}{s_1} = \frac{3}{16}f_{1,0}f_{2,0}+\mathcal L_1^{even}\mathcal F^{[0]}.
\]

Consider the bihamiltonian structure $(P_0,P_1)$ of the Drinfel'd-Sokolov hierarchy \cite{drinfel1985lie} associated with the untwisted affine Kac-Moody algebra $B_2^{(1)}$. After performing a suitable Miura type transformation we have $P_0 = P_0^{[0]}$, and the Hamiltonian operator $\mathcal P_1$ of $P_1$ has the expression
\[
\mathcal P_1 = \begin{pmatrix}8u^3\qp_x+12u^2u_x & v\qp_x+\frac 14v_x\\ v\qp_x+\frac 34 v_x& \frac 12u\qp_x+\frac 14u_x\end{pmatrix}+\qe^2\begin{pmatrix}D_1 & D_2\\D_3& D_4\end{pmatrix}+O(\qe^4),
\]
where the differential operators $D_i$ are given by $D_2 = u\qp_x^3+\frac 34 u_x\qp_x^2$, $D_4 = \frac 58 \qp_x^3$ and
\begin{align*}
D_1 =&\, 14u^2\qp_x^3+42uu_x\qp_x^2+\kk{20u_x^2+16uu_{xx}+\frac 12v_{xx}}\qp_x+12u_xu_{xx}+6uu^{(3)}+\frac 14 v^{(3)},\\
D_3 =&\,u\qp_x^3+\frac 94u_x\qp_x^2+\frac 32u_{xx}\qp_x+\frac 14u^{(3)}.
\end{align*}
The bihamiltonian structure $(P_0,P_1)$ is a deformation of $(P_0^{[0]},P_1^{[0]})$ with central invariants $c_1 = \frac 16$, $c_2 = \frac{1}{12}$ (see \cite{dubrovin2008frobenius,liu2005deformations}), and it determines a unique deformation of the Principal Hierarchy associated with $M$. 
We can find the Virasoro symmetry $\diff{}{s_1}$ of the tau-cover of the deformed Principal Hierarchy by using the results developed in the present paper. It turns out that the action of $\diff{}{s_1}$ on the tau-function $\mathcal Z$ can be represented by
\begin{equation}
\label{DU}
\diff{\mathcal Z}{s_1} = L_1^{even}\mathcal Z+\kk{\frac 12 u^2+\frac 14v+\frac 14\qe^2 u_{xx}}\mathcal Z.
\end{equation}
A similar result is also given in the Example 5.5 of  \cite{wu2017tau} by using the Kac-Moody-Virasoro algebra.
\end{Ex}
\section{Conclusion}\label{sec5}
In the present paper, we prove the existence of an infinite set of Virasoro symmetries for a given tau-symmetric bihamiltonian deformation of the Principal Hierarchy associated with a semisimple Frobenius manifold. These symmetries can be represented in terms of the tau-function $Z$ of the integrable hierarchy in the form
\begin{equation}
\label{DW}
\diff{Z}{s_m} = L_mZ+O_mZ,\quad m\geq -1.
\end{equation}

Note that the differential polynomials $O_m$ depend on the choice of the representative, in the equivalence class of  Miura type transformations, of the deformations of the bihamiltonian structure of hydrodynamic type $(P_0^{[0]},P_1^{[0]})$. It is proved in \cite{dubrovin2018bihamiltonian} that, for two different choices of representatives $(P_0,P_1)$ and $(\tilde P_0,\tilde P_1)$, the corresponding normal coordinates 
\[
w^\qa =\eta^{\qa\qb}\frac{\qp^2 \log Z}{\qp t^{1,0}\qp t^{\qb,0}},\quad \tilde w^\qa =\eta^{\qa\qb}\frac{\qp^2 \log\tilde Z}{\qp t^{1,0}\qp t^{\qb,0}},\quad \qa=1,\dots,n
\]
of the deformed Principal Hierarchy are related by a Miura type transformation
\[
\tilde w^\qa =w^\qa+\eta^{\qa\qb}\frac{\qp^2 G}{\qp t^{1,0}\qp t^{\qb,0}},
\]
where $G\in\hm A$ is a differential polynomial, and the tau-functions are related by the equation
\begin{equation}\label{DV}\tilde Z = \exp(G)Z.\end{equation}
Conversely, any differential polynomial $G\in\hm A$ defines a Miura type transformation for the deformed bihamiltonian structure and the integrable hierarchy in the manner described above. 

After a Miura type transformation induced from \eqref{DV}, the Virasoro symmetries \eqref{DW} are transformed to the form
\[
\diff{\tilde Z}{s_m} = L_m\tilde Z+\tilde O_m\tilde Z,\quad m\geq -1,
\]
where the differential polynomials $\tilde O_m$ can be computed from $ O_m$ and $G$.

We are going to study the problem of linearization of Virasoro symmetries in subsequent work, i.e., to study whether it is possible to find a suitable differential polynomial $G$ such that all the functuons $\tilde O_m$ vanish. 

Let us exam the possibility of linearizing the Virasoro symmetries given in the example of Sect.\,\ref{B0} for the one-dimensional Frobenius manifold. We want to find a certain Miura type transformation given by \eqref{DV} which linearizes the Virasoro symmetry \eqref{DE} and leaves the expression of the Virasoro symmetry
\[
\diff{Z}{s_{-1}} =  L_{-1}^{even}Z
\]
unchanged. It follows from these requirements that the differential degree zero component $G_0$ of $G$ must satisfy the equations
\[
\diff{G_0}{v} = 0,\quad v^3\diff{G_0}{v} = -\kk{3c-\frac 38}\frac{v^2}{2},
\]
which do not possess any solution unless $c = \frac 18$. When $c = \frac 18$, the linearized Virasoro symmetries for this example are well known \cite{witten1990two}, and the central invariant of the corresponding deformed bihamiltonian structure is $\frac13 c=\frac1{24}$.

We can do a similar computation for Example \ref{DX}. We want to find a Miura type transformation given by \eqref{DV} to linearize the Virasoro symmetry \eqref{DU} and to preserve the expression of the Virasoro symmetry
\[
\diff{Z}{s_{-1}} =  L_{-1}^{even}Z.
\]
Then the differential degree zero component $G_0$ of $G$ must satisfy the equations
\[
\diff{G_0}{v} = 0,\quad uv\diff{G_0}{u} = -\kk{\frac 12 u^2+\frac 14v},
\]
which have no solution. Therefore the Virasoro symmetries given by the bihamiltonian structure $(P_0,P_1)$ in this example cannot be linearized.

In general we have the following conjecture.
\begin{Conj}
The Virasoro symmetries for a given tau-symmetric bihamiltonian deformation of the Principal Hierarchy associated with a semisimple Frobenius manifold is linearizable if and only if the central invariants of the corresponding deformed bihamiltonian structure are all equal to $\frac{1}{24}$.
\end{Conj}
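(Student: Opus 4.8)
The plan is to recast linearizability as the vanishing of an explicit obstruction class attached to the deformed bihamiltonian structure, and to identify that class with the central invariants. First I would record the transformation law of the differential polynomials $O_m$ under a change of tau-function $\tilde{\mathcal Z} = \exp(G)\mathcal Z$ with $G \in \hm A^0$. Writing $\mathcal F = \log\mathcal Z$ and $f_{\qa,p} = \diff{\mathcal F}{t^{\qa,p}}$, both the original and transformed symmetries have the form $\diff{\mathcal F}{s_m} = \mathcal G_m + O_m + \mathcal L_m^{even}\mathcal F$, and subtracting gives
\[
\tilde O_m - O_m = \diff{G}{s_m} - \mathcal L_m^{even}G - a_m^{\qa,p;\qb,q}\kk{\diff{G}{t^{\qa,p}}\diff{G}{t^{\qb,q}} + 2 f_{\qa,p}\diff{G}{t^{\qb,q}} + \frac{\qp^2 G}{\qp t^{\qa,p}\qp t^{\qb,q}}},
\]
in which the explicit $f_{\qa,p}$- and $t^{\qa,p}$-dependent contributions cancel, leaving a differential polynomial. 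Linearizability is exactly the solvability of $\tilde O_m = 0$ for all $m$ by a single $G$. Since the Virasoro relations force $\{O_m\}$ to obey the $\mathfrak W_1$-cocycle identity, and since $s_{-1}$ and $s_2$ generate $\mathfrak W_1$ under iterated brackets, it suffices to find one $G$ that simultaneously achieves $\tilde O_{-1} = 0$ (the string symmetry $\diff{\mathcal Z}{s_{-1}} = L_{-1}^{even}\mathcal Z$ of \cite{dubrovin2018bihamiltonian} is already linear, so $O_{-1} = 0$ must be preserved) and $\tilde O_2 = 0$; all other $\tilde O_m$ then vanish automatically.

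Next I would analyze these two equations order by order in the differential degree. Because $L_{-1}^{even}$ in \eqref{L-1} has no second-order part, $\tilde O_{-1} = 0$ reduces to $\diff{G}{s_{-1}} = \mathcal L_{-1}^{even}G$, which confines $G$ to the subspace of string-invariant differential polynomials; on this subspace $\tilde O_2 = 0$ becomes a single equation whose leading obstruction I would compute in the canonical coordinates $u^i$. The crucial input is that the central invariants $c_i$ enter the deformed data linearly and diagonally, through the term $Q^{ij} = 3c_i(f^i)^2\qd_{ij} + \cdots$ of \eqref{CV} in the Hamiltonian operator $\mathcal P_1$; propagating this through the construction of Sect.\,\ref{sec4} I expect the residual obstruction to $\tilde O_2 = 0$ to be a cocycle whose class lies in $BH^1_3 \cong \oplus_{i=1}^n C^\infty(\mathbb R)$ of Theorem \ref{AR}(4), with $i$-th component a universal nonzero constant times $\kk{c_i - \tfrac{1}{24}}$. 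This yields the forward implication: if the symmetries are linearizable the class vanishes, forcing $c_i = \tfrac{1}{24}$ for every $i$. The one-dimensional case of Sect.\,\ref{B0} is precisely this computation, where the string constraint makes the differential degree zero part of $G$ constant and $\tilde O_2 = 0$ then demands $3c - \tfrac38 = 0$, i.e.\ central invariant $\tfrac{1}{24}$, as in \eqref{DE}.

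For the converse I would use the uniqueness of tau-symmetric deformations with prescribed central invariants from \cite{DLZ-1} to identify the case $c_i \equiv \tfrac{1}{24}$ with the Dubrovin-Zhang topological deformation of $M$, whose partition function carries linear Virasoro symmetries by construction, so that a suitable $G$ removes every $O_m$. To exhibit this within the present framework rather than importing it, I would run the argument of Sect.\,\ref{sec4} in reverse: granting $c_i \equiv \tfrac{1}{24}$ the genus-one class just computed vanishes, and then solve $\tilde O_2 = 0$ inductively in the differential degree. At degree $2g+1 \geq 5$ the residual obstruction is a cocycle in $BH^1_{2g+1}$, which vanishes by Theorem \ref{AR}(3), while solvability of the homogeneous step at each stage is guaranteed by $BH^0_{\geq 2} = 0$ of Theorem \ref{AR}(2) together with the cocycle identity, exactly the mechanism used to construct $X$. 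The unique surviving obstruction is the one at $BH^1_3$, which we have forced to vanish.

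The hardest point will be the universal genus-one identification in the second paragraph: showing that the $BH^1_3$-component of the obstruction is a nonzero multiple of $c_i - \tfrac{1}{24}$ for every canonical direction $i$ and every semisimple Frobenius manifold $M$. Two features must be controlled simultaneously. First, the coboundary operator on the string-invariant subspace mixes genus-zero two-point data $\Qo_{\qa,p;\qb,q}$ into the equation through the term $2 f_{\qa,p}\diff{G}{t^{\qb,q}}$, so one must verify that its cokernel is exactly the $\oplus_i C^\infty(\mathbb R)$ of Theorem \ref{AR}(4) and not larger. Second, the proportionality constant linking this class to $Q^{ii} = 3c_i(f^i)^2$ must be shown never to vanish; the examples of Sect.\,\ref{B0} and Example \ref{DX} suggest it is governed by the same combination $3c_i - \tfrac38$ visible in \eqref{DE}, but converting this low-dimensional evidence into a coordinate-free computation valid across all semisimple $M$, where the full monodromy data $\mu$ and $R$ enter $L_2^{even}$, is where the real work lies, and is the reason the statement remains a conjecture.
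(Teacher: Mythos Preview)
The paper does not prove this statement; it is stated as a \emph{conjecture} in the concluding section, supported only by the one-dimensional computation of Sect.\,\ref{B0} and the $B_2$ example of Example \ref{DX}, with the proof explicitly deferred to the companion paper \cite{linearVir}. There is therefore no proof in this paper to compare your proposal against.

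That said, your outline is a plausible strategy and is broadly in the spirit of the methods here: reduce to the pair $(s_{-1},s_2)$ via the Virasoro relations, write the transformation law for $O_m$ under $\tilde{\mathcal Z}=\exp(G)\mathcal Z$, and isolate a canonical-coordinate obstruction proportional to $c_i-\tfrac{1}{24}$. You correctly flag the decisive gap yourself. The cohomology group $BH^1_3(\derx)$ of Theorem \ref{AR}(4) classifies obstructions to the \emph{existence} of the derivation $X$ in \eqref{BH}--\eqref{BI}; the linearization problem is instead an equation for a differential polynomial $G\in\hm A^0$, and it is not a priori clear that its obstruction theory is governed by the same bicomplex, or that the relevant cokernel is exactly $\oplus_i C^\infty(\mathbb R)$ with the asserted dependence on $c_i$. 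The paper's two worked examples show that the \emph{differential-degree-zero} part of $G$ is already overdetermined once one imposes both $\tilde O_{-1}=0$ and $\tilde O_2=0$, and that this degree-zero obstruction alone detects the central invariants; your plan routes everything through $BH^1_3$, which sits at differential degree three, so you would need to explain how the degree-zero incompatibility seen in those examples manifests as a class there. Your honest closing sentence, that making this identification uniform across all semisimple $M$ is where the real work lies, is exactly right and is why the authors left it as a conjecture.
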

We will study this conjecture and give a proof of it in the paper\cite{linearVir}.


\medskip

\noindent Si-Qi Liu,

\noindent Department of Mathematical Sciences, Tsinghua University \\ 
Beijing 100084, P.R.~China\\
liusq@tsinghua.edu.cn
\medskip

\noindent Zhe Wang,

\noindent Department of Mathematical Sciences, Tsinghua University \\ 
Beijing 100084, P.R.~China\\
zhe-wang17@mails.tsinghua.edu.cn
\medskip

\noindent Youjin Zhang,

\noindent Department of Mathematical Sciences, Tsinghua University \\ 
Beijing 100084, P.R.~China\\
youjin@tsinghua.edu.cn

\end{document}